\documentclass[10pt,english,draftclsnofoot,onecolumn]{IEEEtran}
\usepackage[T1]{fontenc}
\usepackage[latin9]{inputenc}
\usepackage{amstext}
\usepackage{amssymb}
\usepackage{amsmath}
\usepackage{amsthm}
\usepackage{graphicx}
\usepackage[font={small,it}]{caption}
\usepackage{subfig}
\usepackage{verbatim} 
\usepackage{cite}
\usepackage{url}
\usepackage[lined,boxed,commentsnumbered]{algorithm2e}
\usepackage{array}
\newcolumntype{L}[1]{>{\raggedright\let\newline\\\arraybackslash\hspace{0pt}}m{#1}}
\newcolumntype{C}[1]{>{\centering\let\newline\\\arraybackslash\hspace{0pt}}m{#1}}
\newcolumntype{R}[1]{>{\raggedleft\let\newline\\\arraybackslash\hspace{0pt}}m{#1}}
\theoremstyle{definition}

\makeatletter
\def\namedlabel#1#2{\begingroup
    #2%
    \def\@currentlabel{#2}%
    \phantomsection\label{#1}\endgroup
}
\makeatother


\usepackage{myPack}\usepackage{myMathPack}\usepackage{Steven}

\cset{\Integers}{Z}
\cset{\Reals}{R}

\cset{\Matroids}{M}

\vect{\matroidRankVec}{r}
\scalar{\matroidRankFunc}{r}

\newcommand{\sinkDefSet}{\mathcal{W}}
\newcommand{\edgeDefSet}{\mathcal{Q}}
\newcommand{\orbit}{\mathcal{O}}
\newcommand{\symmGroup}{\boldsymbol{\mathsf{S}}}
\newcommand{\groupG}{\boldsymbol{\mathsf{G}}}

\makeatother
\newtheorem{remark}{Remark}
\newtheorem{definition}{Definition}
\newtheorem{theorem}{Theorem}
\newtheorem{corollary}{Corollary}
\newtheorem{example}{Example}
\begin{document}

\title{On Multi-source Networks: Enumeration, Rate Region Computation, and Hierarchy}

\author{%
Congduan Li,~\IEEEmembership{Student~Member, IEEE,} %
Steven Weber,~\IEEEmembership{Senior~Member, IEEE,} and %

John MacLaren Walsh, ~\IEEEmembership{Member, IEEE.}
\thanks{%
Support under National Science Foundation awards CCF--1016588 and 1421828 is gratefully acknowledged.
}%
\thanks{%
C.~Li, S.~Weber and J.~W.~Walsh are with the Department of Electrical and Computer Engineering, Drexel University, Philadelphia, PA USA (email: \textsf{congduan.li@drexel.edu}, \textsf{sweber@coe.drexel.edu}, and \textsf{jwalsh@coe.drexel.edu}).
Preliminary results were presented at Allerton 2014 \cite{CongduanAllerton2014}, NetCod 2015 \cite{CongduanNetCod2015}, and {ITW 2015 \cite{CongduanITW2015}}.
}%
}

\maketitle
\begin{abstract}
This paper investigates the enumeration, rate region computation, and hierarchy of general multi-source multi-sink hyperedge networks under network coding, which includes multiple network models, such as independent distributed storage systems and index coding problems, as special cases.  A notion of minimal networks and a notion of network equivalence under group action are defined.  An efficient algorithm capable of directly listing single minimal canonical representatives from each network equivalence class is presented and utilized to list all minimal canonical networks with up to 5 sources and hyperedges.  Computational tools are then applied to obtain the rate regions of all of these canonical networks, providing exact expressions for 744,119 newly solved network coding rate regions corresponding to more than 2 trillion isomorphic network coding problems.  In order to better understand and analyze the huge repository of rate regions through hierarchy, several embedding and combination operations are defined so that the rate region of the network after operation can be derived from the rate regions of networks involved in the operation.  The embedding operations enable the definition and determination of a list of forbidden network minors for the sufficiency of classes of linear codes.  The combination operations enable the rate regions of some larger networks to be obtained as the combination of the rate regions of smaller networks.  The integration of both the combinations and embedding operators is then shown to enable the calculation of rate regions for many networks not reachable via combination operations alone.
\end{abstract}
\section{Introduction}

Many important practical problems, including efficient information transfer over networks \cite{NetworkInfoFlow2000,DFZMatroidNetworks}, the design of efficient distributed information storage systems \cite{DimakisTranIT2010,Tian433Journalversion}, and the design of streaming media systems \cite{WalshWeberTranIT2009,CISS2012Paper,HoISIT2013Streaming}, have been shown to involve determining the rate region of an abstracted network under network coding.
Yan \emph{et al.}'s celebrated paper \cite{YanYeungTranIT2012} has provided an exact representation of these rate regions of networks under network coding.  Their essential result is that the rate region of a network can be expressed as the intersection of the region of entropic vectors \cite{ZhangYeungTranIT1998Entropy,YeungBook} with a series of linear (in)equality constraints created by the network's topology and the sink-source requirements, followed by a projection of the result onto the entropies of the sources and edge variables.  However, this is only an implicit description of the rate region, because the region
of entropic vectors $\bar{\Gamma}_N^*$ is still unknown for $N\geq 4$.

Nevertheless, as we have previously demonstrated in \cite{CongduanAllerton2012,CongduanNetCod2013,CongduanTranIT2014}, through the use of appropriate inner and outer bounds to $\bar{\Gamma}_N^*$ that we will review in \S\ref{sec:bounds}, this implicit formulation can be used to develop algorithms by which a computer can very rapidly calculate the rate region, its proof, and the class of capacity achieving codes, for small networks, each of which would previously have taken a trained information theorist hours or longer to derive.   While the development of this rate region calculation, code selection, and converse proof generation algorithm \cite{CongduanTranIT2014} is not the focus of the present paper, it involves developing techniques to derive and project polyhedral inner and outer bound descriptions for constrained regions of entropic vectors.  When it comes to rate regions, \S \ref{sec:resultssmall} and \S \ref{sec:bounds} of the paper will focus more on exactly what was calculated, what can be calculated, and, later in the paper, what can be learned from the resulting rate regions, rather than the exact computations by which the rate regions were reached.  The rate region algorithm design and specialization, which involve a separate and parallel line of investigation, are left to discussion by another series of papers \cite{JayantISIT2014,JayantNetCod2015,JayantISIT2015}.

The ability to calculate network coding rate regions for small networks rapidly with a computer motivates an alternative, more computationally thinking oriented, agenda to the study of network coding rate regions.  At the beginning, one's goal is to demonstrate the method's power by applying the algorithm to derive the rate region of as many networks and applications of network coding as possible.    To do this, in \S\ref{sec:model} we first slightly generalize Yeung's labeled directed acyclic graph (DAG)  model for a network coding problem (\cite{YeungBook}, Ch. 21) to a directed acyclic hypergraph context, then demonstrate how the enlarged model handles as special cases the wide variety of other models in applications in which network coding is being employed, 
including, but not limited to, index coding, multilevel diversity coding systems (MDCS),  and distributed storage. 

With the slightly more general model in hand, the first issue in the computationally thinking oriented agenda is \emph{network generation and enumeration}, i.e., how to list all of the networks falling in this model class.  In order to avoid repetitive work, thereby reaching the largest number of networks possible with a constant amount of computation, it is desirable to understand precisely when two instances of this model (i.e., two network coding problems) are equivalent to one another, in the sense that the solution to one directly provides a solution to another.  This notion of network coding problem equivalence, which provides a very different approach but is in the same high level spirit as the transformation of network channel coding problems to network coding problems in \cite{KoetterEffrosMedardNetworkEquivalenceII2013} and the transformation of network coding problems to index coding problems in \cite{EffrosRouayhebLangbergTranIT2015NetEqu}, will be revisited at multiple points of the paper, beginning in this present context of enumeration, but also playing an important role in the discussion of hierarchy later.  

The first notion of equivalence we develop is that of \emph{minimality}, by removing any redundant or unnecessary parts in the network instance.  Partially owing to the generality of the network coding problem model, many valid instances of it include within a network parts which can be immediately detected as extraneous to the determination of the instance's rate region.  In this sense, an instance is directly reducible to another smaller instance by removing completely unnecessary and unhelpful sources, nodes, or edges.  In order to provide the smallest possible instance by not including these extraneous components, we formalize in \S \ref{sec:minimality} the notion of \emph{network minimality}, listing a series of conditions which a network coding problem description must obey to not contain any obviously extraneous sources, nodes, or edges.

The next notion of {\it equivalence} looks to symmetry or isomorphism between problem descriptions.  Beginning by observing that a network must be labeled to specify its graph, source availability, and demands to the computer, yet the underlying network coding problem is insensitive to the selection of these labels, we define in \S \ref{sec:netEquivGroupAct} a notion of network coding problem equivalence through the isomorphism associated with the selection of these labels.  We review that the proper way to formalize this notion is through identifying equivalence classes with orbits under group actions.  A na\"{i}ve algorithm to provide the list of network coding problems to the rate region computation software would simply list all possible labeled network coding problems, test for isomorphism, then narrow the list down to only those which are inequivalent to one another, keeping only one element, the canonical representative, of the equivalence class.  However, the key reason for formalizing this notion of equivalence is that the number of labeled network coding problem instances explodes far faster than the number of network coding problem equivalence classes.  Hence, we develop a better technique for generating lists of canonical network coding problem instances by harnessing techniques and algorithms from computational group theory that enable us to directly list the minimal canonical representatives of the network coding problem equivalence classes as described in \S \ref{net:nonisoalg}.

With the list of all minimal canonical network coding problems up to a certain size in hand, we can utilize our algorithm and software to calculate the rate region bounds, the Pareto optimal network codes, and the converse proofs, for each, building a very large database of rate regions of network coding problems up to this size.  Owing to the variety of the model, even for tiny problems, this database quickly grows very large relative to what a human would want to read through.  For instance, our previous paper applying this computational agenda to the narrower class of MDCS problems \cite{CongduanTranIT2014}, yielded the rate regions of 6,868 equivalence classes of MDCS problems and bounds for 492 more MDCS problem equivalence classes, while the database developed in this paper contains the rate regions of 744,119
equivalence classes of network coding problems.  These equivalence classes of networks correspond to solutions for 9,050,490 network coding problems with graphs specified via edge dependences and 2,381,624,632,119 network coding problems specified in the typical node representation of a  graph.  While it is possible to use the database to report statistics regarding the sufficiency of certain classes of codes as will be done in \S \ref{sec:resultssmall}, in order to more meaningfully enable humans to learn from the database, as well as from the computational research, one must utilize some notion of network structure to organize it for analysis.  

Our method of endowing structure on the set of network coding problems is through \emph{hierarchy}, in which we explain the properties and/or rate regions of larger networks as being inherited from smaller networks (or vice-versa).  Of course, part of a network coding problem is the network graph, and further, network coding and entropy is related to matroids, and these nearby fields of graph theory and matroid theory have both undergone a thorough study of hierarchy which directly inspires our approach to it.  In graph theory, this notion of hierarchy is achieved by recognizing smaller graphs within large graphs which can be created by deleting or contracting the larger graph's edges, called \emph{minors}, and is directly associated with a crowning achievement.  Namely, the celebrated well-quasi-ordering result of graph theory \cite{RobertsonSeymour1983,RobertsonSeymour2004}, showed that any minor closed family of graphs (i.e., ones for which any minor of a graph in that family is also in the family) has at most a finite list of forbidden minors, which no graphs in that family can contain.  While the families of minor closed graphs are typically infinite, they are then, in a sense, capable of being studied through a finite object which is their forbidden minors: if a graph does not have one of these forbidden minors, it is then in the family.
In matroid theory, which in a certain sense extends graph theory, one has a similar notion of hierarchy endowed through matroid minors, generated through matroid contraction and deletion.  While one can generate minor closed families of matroids with an infinite series of forbidden minors, the celebrated, and possibly recently proved, Rota's conjecture \cite{OxleyMatroidBook,solvingrota}, stated that those matroids capable of being represented over a particular finite field have at most a finite list of forbidden minors.  In this paper, inspired by these hierarchy theories in graphs and matroids, we aim to derive a notion of network minors created from a series of contraction and deletion-like operators that shrink network coding problems, called \emph{embedding operators}, as well as operations for building larger network coding problems from smaller ones, called \emph{combination operators}.  These operators work together to build our notion of minors and a sense of hierarchy among network coding problems.

Developing a notion of network coding problem hierarchy is important for several reasons.  First of all, as explained above, even after one has calculated the rate regions of all networks up to a certain size, it is of interest to make sense of this very large quantity of information by studying its structure, and hierarchy is one way of creating a notion of structure.  Second of all, the computational techniques for proving network rate regions can only handle networks with tens of ``variables'', the sum of the number of sources and number of hyperedges in the graph, and hence are limited to direct computation of fairly small problem instances.  If one wants to be able to utilize the information gathered about these small networks to understand the rate regions of networks at scale, one needs methods for putting the smaller networks together into larger networks in a way such that the rate region of the larger network can be directly calculated from those of the smaller networks.

Our embedding operators, defined and discussed in \S\ref{sec:embedding}, extend the series of embedding operations we had for MDCS problems in \cite{CongduanTranIT2014}, and augment them, to provide methods for obtaining small networks from big networks in such a way that the rate region of the smaller network, and its properties, are directly inherited from the larger network.  Our combinations operators, discussed in \S\ref{sec:combination}, work in the opposite direction: they provide methods for putting together smaller networks to make larger networks in such a way that the rate region of the larger network can be directly calculated from the rate region of the smaller networks.  Both of our lists of operators are small and somewhat simple, however, when they work together, they provide a very powerful way of endowing hierarchical structure in network coding problems.  In particular, the joint use of the combination and embedding operators provide a very powerful way of obtaining rate regions of large networks from small ones, as well as describing the properties of families of network coding problems, as we demonstrate in \S\ref{sec:resultsoperators}.  They open the door to many new avenues of network coding research, and we shall describe briefly some of the related future problems for investigation in \S \ref{sec:conclusion}.

\section{Background: Network Coding Problem Model, Capacity Region, and Bounds}\label{sec:model}
\begin{figure}
\centerline{\includegraphics[scale=0.5]{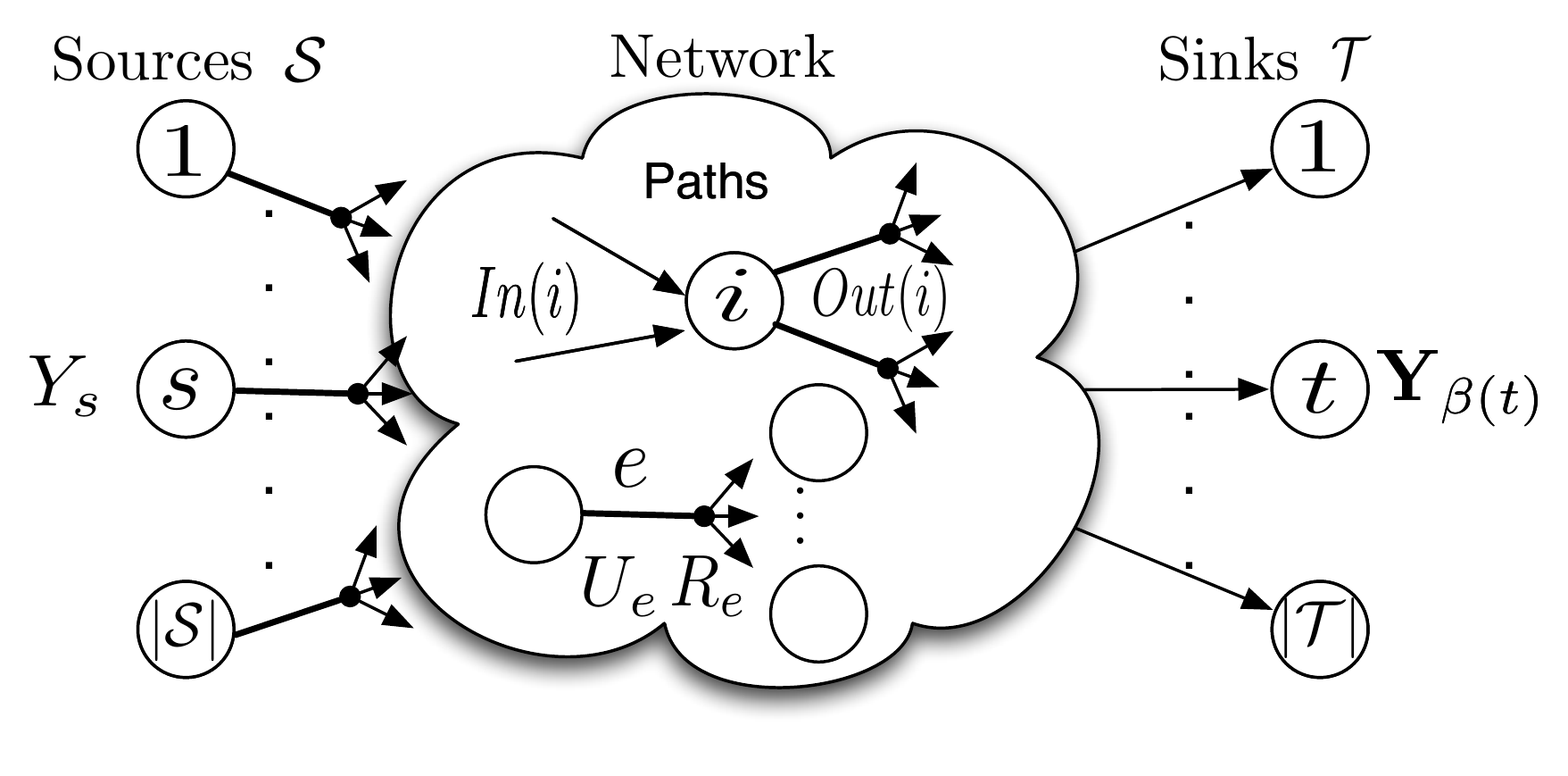}}
\caption{\label{fig:generalnetwork}A general network model $\Asf$.}
\end{figure}

\begin{table}
\caption{Notation table}\label{tab:notation}
\centering
\begin{tabular}{p{.35\textwidth} p{.6\textwidth}}
\hline
$\Asf$ & a network instance (\S\ref{sec:model})\\
$\Amc,\Bmc,\Cmc,\Dmc$ & general sets \\
$[x_a | a \in \Amc]$ & vector with elements $x_a$ indexed by/ for each $a \in \Amc$ \\
$\beta$ & demands of sink nodes (\S\ref{sec:model})\\
$\drm,d,f,\pi$ & mappings or functions \\
$e,\Emc$ & an (hyper)edge or encoder, set of all (hyper)edges or encoders (\S\ref{sec:model})\\
$\Fmc$ & head nodes of a hyperedge (\S\ref{sec:model})\\
$\Fbb_q$ & finite field of order $q$ (\S\ref{sec:rrexp})\\
$g,\Gmc$ & an intermediate node, set of intermediate nodes (\S\ref{sec:model})\\
$\groupG ,\symmGroup,\langle g_1,\cdots,g_k\rangle$ & acting group, symmetric group, group generated by $g_1,\ldots,g_k$ (\S\ref{net:probInstance})\\
$H(\cdot), \hbf,h_{\Amc}$ & entropy function, an entropy vector, coordinate in $\hbf$ associated with $\Amc$ (\S\ref{sec:model})\\
$\rm{Hd}(e),\rm{Tl}(e)$ & head nodes, tail node of edge $e$ (\S\ref{sec:model})\\
$\rm{In}(g),\rm{Out}(g)$ & incoming, outgoing edges of node $g$ (\S\ref{sec:model})\\
$i,j,k,l$ & general index terms\\
$\Imc$ & independent set (\S\ref{sec:bounds})\\
$K,L,N$ & number of sources, intermediate (hyper)edges/encoders, and total variables in a network ($N=K+L$) (\S\ref{sec:model})\\
$\Lmc_i,\Lmc_{\Asf}$ & sets associated with network constraints (\S\ref{sec:rrexp})\\
$\Msf,\Mmc$ & a matroid, ground set of a matroid (\S\ref{sec:bounds})\\
$M$ & increased dimension of space in calculating rate region (\S\ref{sec:rrexp})\\
$\mathrm{minimal}(\Asf'),\mathrm{minimal}_{\Asf' \rightarrow \Asf}(\Rmc_*(\Asf'))$ & function to reduce a network $\Asf'$ to its minimal representation and associated rate region operators (\S\ref{sec:minimality})\\
$\Nmc$ & collection of variables in a network (\S\ref{sec:rrexp})\\
$\Omc$ & collection of networks in an equivalence class under our definition (\S\ref{sec:netEquivGroupAct})\\
$p,\Pbb$ & error probability, probability function (\S\ref{sec:rrexp})\\
$\Pmc_i(\Xmc)$ & all size $i$ subsets of a given set $\Xmc$ (\S\ref{net:nonisoalg})\\
$\Qmc,\Wmc:$ & edge encodings for a network topology and sink demands (\S\ref{net:probInstance})\\
$r_{\Msf}$ & rank function of a matroid $\Msf$ (\S\ref{sec:bounds})\\
$\Gamma_N^*,\bar{\Gamma}_N^*,\Gamma_N,\Gamma_N^q,\Gamma_{N,N'}^q,\Gamma_{N,\infty}^q, \Gamma_N^{{\rm linear}} $& region of entropic vectors on $N$ variables, its closure, Shannon outer bound, inner bound from $\Fbb_q$-representable matroids on $N$ elements, inner bound from $\Fbb_q$-representable matroids on $N'$ elements,inner bound from $\Fbb_q$-representable matroids on infinity number of  elements, inner bound from linear subspace arrangement (\S\ref{sec:bounds})\\
$R_e, \Rbf$ & edge capacity on edge $e$, rate vector (\S\ref{sec:rrexp})\\
$\boldsymbol{r},\boldsymbol{\omega}, {\rm Proj}_{\boldsymbol{r},\boldsymbol{\omega}}(\cdot)$ & dimensions associated with all edge capacities and all source entropies, projection operator with the projecting dimensions are associated with $\boldsymbol{r},\boldsymbol{\omega}$ (\S\ref{sec:rrexp})\\
$\Rmc_c(\Asf),\Rmc_*(\Asf),\Rmc_{o}(\Asf),\Rmc_{s,q},\Rmc_{q}^{N'},\Rmc_{q}, \Rmc_{{\rm linear}}$ & rate region or bounds associated with $\Gamma_N^*,\Gamma_N,\Gamma_N^q,\Gamma_{N,N'}^q,\Gamma_{N,\infty}^q,\Gamma_N^{{\rm linear}} $ (\S\ref{sec:rrexp},\S\ref{sec:bounds})\\
$s,\Smc$ & a source node, set of all sources (\S\ref{sec:model})\\
$t,\Tmc$ & a sink node, set of all sinks (\S\ref{sec:model})\\
$T$ & canonical representatives, i.e., transversal, output from Leiterspiel algorithm (\S\ref{net:nonisoalg})\\
$U,\Umc$ & edge variable, support set of $U$ (\S\ref{sec:rrexp})\\
$\Vmc$ & set of all nodes in a network (\S\ref{sec:model})\\
$V,\Vbf$ & a vector space, multiple vector spaces\\
$X,Y$ & random variables (\S\ref{sec:model})\\
$\Xbf,\Ybf$ & vectors of variables (\S\ref{sec:rrexp})\\
$\Xmc,\Ymc$ & general set, support set on $Y$ (\S\ref{sec:rrexp}, \S\ref{sec:enumeration})\\
$\Zmc$ & collection of network instances (\S\ref{net:nonisoalg})\\
\hline
\end{tabular}
\end{table}

The class of problems under study in this paper are the rate regions of multi-source multi-sink network coding problems with hyperedges, which we hereafter refer to as the hyperedge MSNC problems.  For ease of reading the paper, a notation table is presented in Table \ref{tab:notation}.  A network coding problem in this class, denoted by the symbol $\Asf$, includes a directed acyclic hypergraph $(\Vmc,\Emc)$ \cite{gallo1993directed} as in Fig.\,\ref{fig:generalnetwork}, consisting of a set of nodes $\Vmc$ and a set $\Emc$ of directed hyperedges in the form of ordered pairs $e=(v,\Amc)$ with $v\in \Vmc$ and $\Amc \subseteq \Vmc \setminus v$.  The nodes $\Vmc$ in the graph are partitioned into the set of source nodes $\Smc$, intermediate nodes $\Gmc$, and sink nodes $\Tmc$, i.e., $\Vmc=\Smc\cup\Gmc\cup\Tmc$.  Each of the source nodes $s\in\Smc$ will have a single outgoing edge $(s,\mathcal{A}) \in \Emc$.  The source nodes in $\Smc$ have no incoming edges, the sink nodes $\Tmc$ have no outgoing edges, and the intermediate nodes $\Gmc$ have both incoming and outgoing edges.  The number of sources will be denoted by $|\Smc|=K$, and each source node $s\in\Smc$ will be associated with an independent random variable $Y_s$, $s\in\Smc$, with entropy $H(Y_s)$, and an associated independent and identically distributed (IID) temporal sequence of random values.  For every source $s\in\Smc$, define ${\rm Out}(s)$ to be its single outgoing edge, which is connected to a subset of intermediate nodes and sink nodes.   A hyperedge $e\in\Emc$ connects a source, or an intermediate node to a subset of non-source nodes, i.e., $e=(i,\Fmc)$, where $i\in\Smc\cup\Gmc$ and $\Fmc\subseteq (\Gmc\cup\Tmc\setminus i)$.  For brevity, we will refer to hyperedges as edges if there is no confusion.  For an intermediate node $g\in\Gmc$, we denote its incoming edges as ${\rm In}(g)$ and outgoing edges
as ${\rm Out}(g)$.   For each edge $e= (i,\Fmc)$, the associated random variable 
$U_{e}=f_e({\rm In}(i))$ is a function of all the inputs of node $i$, obeying the edge capacity constraint $R_{e} \geq H(U_e)$.  The tail (head) node of edge $e$ is denoted as $\text{Tl}(e)$ ($\text{Hd}(e)$).  For notational simplicity, the unique outgoing edge of each source node will be the source random variable, $U_{e}=Y_s$ if $\text{Tl}(e)=s$, denoting $\Emc_S=\{e\in\Emc|\text{Tl}(e)=s,s\in\Smc\}$ to be the variables associated with outgoing edges of sources, and $\Emc_U=\Emc\setminus \Emc_S$ to be the non-source edge random variables.     For each sink $t\in\mathcal{T}$, 
the collection of sources this sink will demand will be labeled by the non-empty set $\beta(t)\subseteq \Smc$.  Thus, a network can be represented as a tuple $\Asf=(\Smc,\Gmc,\Tmc,\Emc,\beta)$, where $\beta=(\beta(t),t\in\Tmc)$.  Note that, though this commonly used node-representation of a network is convenient for understanding the network topology, we will use a more concrete representation in \S\ref{sec:enumeration} for enumeration of network instances.  For convenience, networks with $K$ sources and $L=|\Emc_U|$ edges are referred as $(K,L)$ instances.

As our focus in the manuscript will be on rate regions of a very similar form to those in \cite{YeungBook,YanYeungTranIT2012},
this network coding problem model is as close to the original one in \cite{YeungBook,YanYeungTranIT2012} as possible while covering the multiple instances of applications in network coding in which the same message can be overheard by multiple parties.  These applications include index coding, wireless network coding, independent distributed source coding, and distributed storage.  The simplest and most direct model change to incorporate this capability is to switch from directed acyclic graphs to directed acyclic hypergraphs.  As we shall see in \S\ref{sec:rrexp}, this small change to the model is easily reconciled with the network coding rate region expression and its proof from \cite{YeungBook,YanYeungTranIT2012}.

\subsection{Special Network Classes}\label{sec:specialclasses}
The network coding problem model just described has been selected to be general enough to include a variety of models from the  applications of network coding as special cases.  A few of these special cases that will be of interest in examples later in the paper are reviewed here, including a description of the extra restrictions on the model to fall into this subclass of problems.

\begin{figure}
\centerline{\includegraphics[scale=0.6]{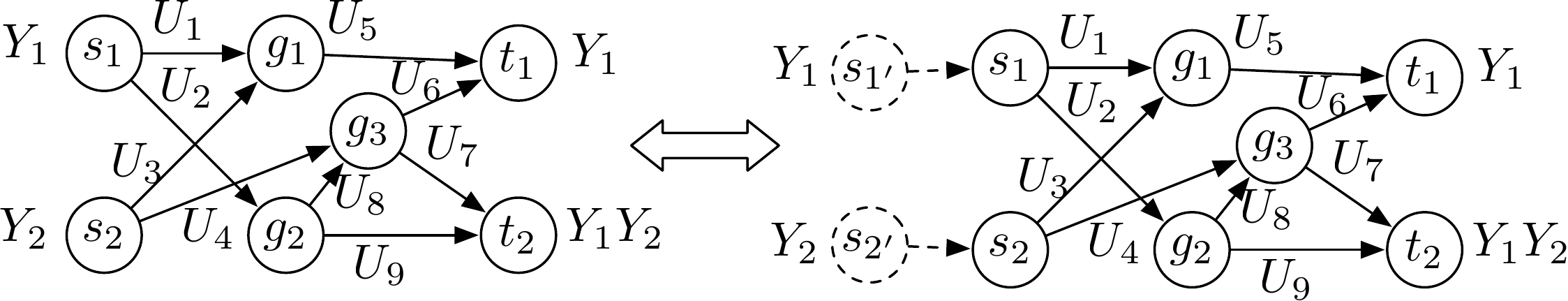}}
\caption{\label{fig:MSNC_general}A normal MSNC in \cite{YeungBook} can be viewed as a special instance of the hyperedge MSNC.}
\end{figure}
\begin{example}[Yan Yeung Zhang MSNC] The network model in \cite{YeungBook}, where the edges are not hyperedges and the outputs of a source node can be multiple functions of the source, can be viewed as a special class of networks in our model.  This is because the sources  can be viewed as intermediate nodes and a virtual source node connecting to each of them can be added.  For instance, a small network instance of the model in \cite{YeungBook}, as shown in Fig.\,\ref{fig:MSNC_general}, can be viewed as a hyperedge network instance introduced in this paper, by adding virtual source nodes $s_{1'},s_{2'}$ to sources $s_1,s_2$, respectively.
\end{example}



\begin{example}[Independent Distributed Source Coding] \label{ex:IDSC}
The independent distributed source coding (IDSC) problems, which were motivated from satellite communication systems \cite{YeungZhang99SourceCoding}, can be viewed as a special class of networks in our model.  They are three-layer networks, where sources are connected with some intermediate nodes and those intermediate nodes will transmit coded messages to sinks.  For instance, the IDSC problem in Fig.\,\ref{fig:IDSC_general} can be converted to a hyperedge multi-source network coding problem.  As a special class of IDSC problems with decoding priorities among sources, the multi-level diversity coding systems \cite{HauThesis1995,CongduanTranIT2014} are naturally a class of networks in our current general model.  In the experimental results section \S\ref{sec:resultssmall}, we will not only show results on general hyperedge networks, but also some results on IDSC problems.
\end{example}
\begin{figure}
\centerline{\includegraphics[scale=0.6]{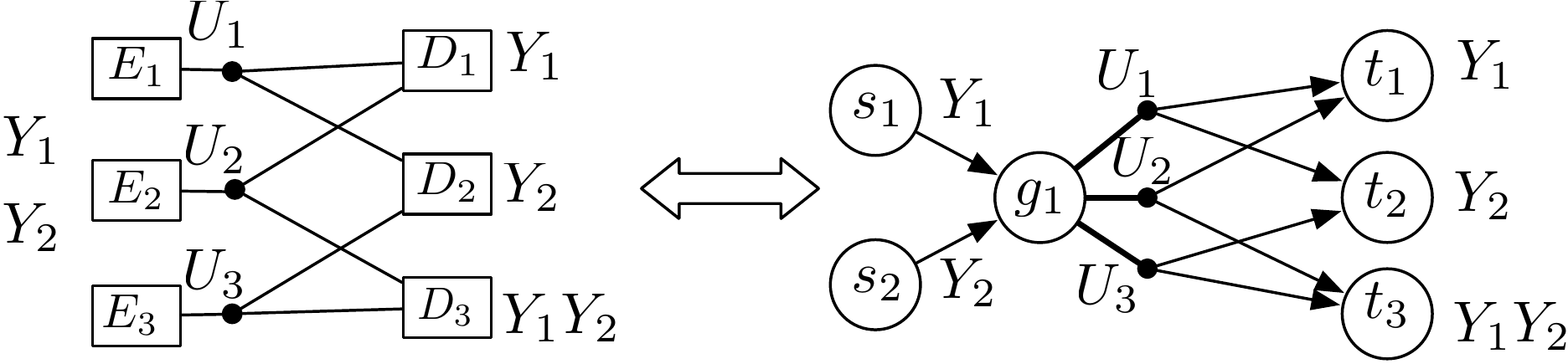}}
\caption{\label{fig:IDSC_general}An IDSC problem can be viewed as a hyperedge MSNC.}
\end{figure}

\begin{example}[Index Coding]
Since direct access to sources as side information is allowed in our network model,  index coding problems are also a special class of our model with only one intermediate edge.  That is, a $K$-source index coding problem can be viewed as a $(K,1)$ hyperedge MSNC and vice versa.  For instance, an index coding problem with $3$ sources, as shown in Fig.\,\ref{fig:indexcodingexample}, is a $(3,1)$ instance in our model. 
\end{example}

\begin{figure}
\centerline{\includegraphics[scale=0.6]{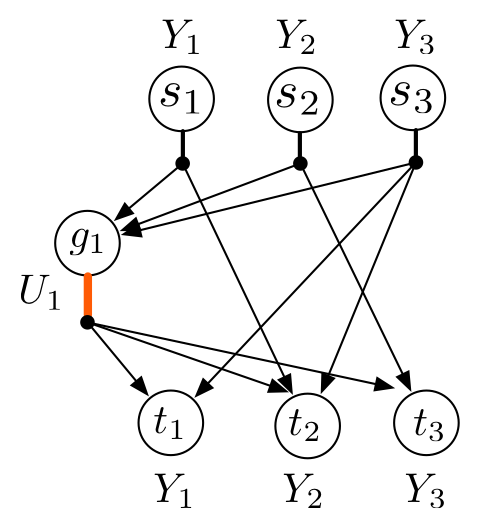}}
\caption{\label{fig:indexcodingexample}An index coding problem is a three-layer hyperedge MSNC with only one intermediate edge.}
\end{figure}

\subsection{Rate Region}\label{sec:rrexp}
Having defined a network coding problem, we now define a network code and the network coding rate region.


\begin{definition}
An $(n,\Rbf)$ block code, with $\Rbf=[\tau_1,\ldots,\tau_K,R_{1},\ldots,R_{L}]\in\Rbb_{+}^{|\Emc|}$, consists of a series of mutually independent sources $Y_s^{(n)}$, uniformly distributed in $\Ymc_s=\{1,\ldots,\lceil 2^{n\tau_s}\rceil\}$, and block encoders and decoders.  

$i)$ The block encoders, one for each $e\in\Emc_U$, are functions that map a block of $n$ source observations from all sources in $\Emc_{\Smc} \cap \textrm{In}(\textrm{Tl}(e))$, and the incoming messages associated with the edges $\Emc_U \cap \textrm{In}(\textrm{Tl}(e))$, to one of $\lceil 2^{nR_{e}} \rceil$ different descriptions in $\Umc_e=\{0,1,\ldots,\eta_e-1\}$, where $\eta_e=\lceil 2^{nR_{e}} \rceil$,
\begin{equation}
f_{e}^{(n)}:\prod_{s\in\Emc_{\Smc}\cap{\rm In}({\rm Tl}(e))}\Ymc_{s}\times\prod_{i\in\Emc_U\cap{\rm In}({\rm Tl}(e))} \Umc_i \to\Umc_e,~e\in \Emc_U.
\end{equation}
$ii)$ The block decoders, one for each sink $t\in\Tmc$, are functions
\begin{equation} \label{eq:decfun}
d_{t}^{(n)}:\prod_{e\in {\rm In}(t)\cap\Emc_U}\Umc_e\times\prod_{s\in{\rm In}(t)\cap\Emc_S}\Ymc_{s}\to\prod_{s\in\beta(t)}\Ymc_{s},~t\in \Tmc.
\end{equation}
\end{definition}

Denote by $U_{e}^{(n)}\in \Umc_e$ the random message on edge $e\in \Emc_U$, which is the result of the encoding function $f_e^{(n)}$.

Further, we can define the
probability of error for each sink $t\in\Tmc$ as
\begin{equation}
p_{t}^{(n),{\rm err}}(\Rbf)=\Pbb\left[ d_{t}^{(n)}(U_{\textrm{In}(t)}^{(n)})\neq [Y_s^{(n)}|s\in\beta(t)] \right],
\end{equation}
and the maximum over these as 
\begin{equation}
p^{(n),{\rm err}}(\Rbf)=\max_{t\in \Tmc} p_{t}^{(n),err}.
\end{equation}


\begin{definition}
The rate region of a network $\Asf$, denoted as $\Rmc_{c}(\Asf)$, is the closure of the set of all achievable rate vectors $\Rbf$, where a rate vector $\Rbf\in\Rmc_c(\Asf)$ is achievable if there
exist a sequence of encoders $\{f^{(n)}=[f_{e}^{(n)}|e\in\Emc]\}$ and decoders $\{d^{(n)}=[d_{t}^{(n)}|t\in\Tmc]\}$ such that $p^{(n),{\rm err}}(\Rbf)\to 0$
as $n\to\infty$.
\end{definition}

The rate region $\mathcal{R}_{c}(\Asf)$ can be expressed in terms of the region
of entropic vectors, $\Gamma_{N}^{*}$, as in \cite{YeungBook,YanYeungTranIT2012}.   The discussion on $\Gamma_N^*$ and its bounds is deferred to \S\ref{subsec:entropicregion}.  For the
hyperedge MSNC problem, define a set  
$\mathcal{N}=\left\{Y_{s},U_{e} | s\in \Smc,e\in \Emc_U\right\}$ with single letter random variables associated with sources and edges, respectively,  and define $N=|\mathcal{N}|=K+L$.  Then, if we collect joint entropies of all non-empty subsets of $\Nmc$ into a vector $\hbf=[h_{\Amc}|\Amc\subseteq 2^{\Nmc}]$, we have $\hbf\in\Gamma_N^*$.

As will be shown in \S\ref{subsec:entropicregion}, $\Gamma_N^*$ is in the space of $\Rbb^{2^N-1}$.  Note that the edge capacities, $\boldsymbol{r}=[R_e | e\in\Emc_U]$, are extra variables associated with each edge.  Therefore, we will consider the space in $\mathbb{R}^M$, where $M=2^N-1+L,\ L=|\Emc_U|$.  We define $\mathcal{L}_{i},i=1,3,4',5$ as network constraints representing source
independence, coding by intermediate nodes, edge capacity constraints, and sink nodes decoding constraints respectively:
\begin{eqnarray}
\mathcal{L}_{1} & = & \{\mathbf{h}\in\mathbb{R}^M:h_{\Ybf_{\mathcal{S}}}=\Sigma_{s\in\mathcal{S}}h_{Y_{s}}\}  \label{eq:rrcondef1} \\
\mathcal{L}_{3} & = & \{\mathbf{h}\in\mathbb{R}^M:h_{\Ubf_{{\rm Out}(g)}|(\Ybf_{\Smc\cap{\rm In}(g)}\cup \Ubf_{\Emc_U\cap{\rm In}(g)})} =0,g\in \Gmc\} \label{eq:rrcondef2} \\
\Lmc_{4'}&=&\{[\mathbf{h}^T,\boldsymbol{r}^T]^T\in\mathbb{R}_{+}^{M}:R_e\geq h_{U_{e}},e\in\Emc_U\} \label{eq:Lratefree}\\
\mathcal{L}_{5} & = & \{\mathbf{h}\in\mathbb{R}^M:h_{\Ybf_{\beta(t)}|\Ubf_{\text{In}(t)}}=0,\forall t\in\Tmc\} \label{eq:rrcondef4}.\vspace{-0.2cm}
\end{eqnarray}
and we will denote $\mathcal{L}_{13} = \mathcal{L}_1 \cap \mathcal{L}_3$, $\mathcal{L}_{4'5} = \mathcal{L}_{4'}\cap\mathcal{L}_5$ and $\mathcal{L}_{\Asf} = \mathcal{L}_1  \cap \mathcal{L}_3 \cap \mathcal{L}_{4'} \cap \mathcal{L}_5$.  Note that we do not have $\Lmc_2$ constraints (which represent the coding function at each source) as in \cite{YanYeungTranIT2012}, due to our different notation with $U_{e}=Y_s$ if $\text{Tl}(e)=s$.
Further, $\Gamma^*_N$ and $\mathcal{L}_{i},i=1,3,5$
are viewed as subsets of $\mathbb{R}^M$ with indexed by $\boldsymbol{r}^T$ unconstrained, since they actually are in the space of $\Rbb^{2^N-1}$.  $\Lmc_{4'}$ is also viewed as subset of $\mathbb{R}^M$, with the unreferenced dimensions (i.e. all non-singleton entropies) left unconstrained.  The following extension of the rate region from \cite{YanYeungTranIT2012} characterizes our slightly different rate region formulation $\Rmc_c(\Asf)$ for our slightly different problem.

\begin{theorem}
\label{thm:rateregion}
The expression of the rate region of a network $\Asf$ is
\begin{equation}
\Rmc_{c}(\Asf)=\mathrm{Proj}_{\boldsymbol{r},\boldsymbol{\omega}}(\overline{\rm{con}(\Gamma_{N}^{*}\cap\mathcal{L}_{13})}\cap\mathcal{L}_{4'5}),\label{eq:generalrateregionfree}
\end{equation}
where ${\rm con}(\mathcal{B})$ is the conic
hull of $\mathcal{B}$, and $\mathrm{Proj}_{\boldsymbol{r},\boldsymbol{\omega}}(\mathcal{B})$
is the projection of the set $\mathcal{B}$ on the coordinates $\left[\boldsymbol{r}^T,\boldsymbol{\omega}^T \right]^T$ where $\boldsymbol{r} = \left[ R_e | e\in\Emc_U\right]$ and $\boldsymbol{\omega} = \left[H(Y_s) | s\in\Smc\right]$.  
\end{theorem}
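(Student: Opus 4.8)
The plan is to follow the strategy of Yan--Yeung--Zhang \cite{YanYeungTranIT2012} almost verbatim, pointing out at each stage that the only structural change---replacing directed edges out of a node by a single hyperedge carrying one random variable to several heads---does not disturb any of the steps, and that our bookkeeping convention $U_e = Y_s$ for $\mathrm{Tl}(e)=s$ merely absorbs the old $\mathcal{L}_2$ constraints into the definition of $\mathcal{N}$. Concretely, I would split the argument into the two standard inclusions: (achievability) every rate vector in the right-hand side of \eqref{eq:generalrateregionfree} is achievable, and (converse) every achievable rate vector lies in that set.

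For the converse, first I would take any achievable $\Rbf$, so there is a sequence of $(n,\Rbf)$ codes with $p^{(n),\mathrm{err}}(\Rbf)\to 0$. For each $n$, the joint distribution of $\{Y_s^{(n)}\}_{s\in\Smc}$ and the induced edge variables $\{U_e^{(n)}\}_{e\in\Emc_U}$ yields an entropic vector $\hbf^{(n)}\in\Gamma_N^*$. The source-independence of the $Y_s^{(n)}$ gives $\hbf^{(n)}\in\mathcal{L}_1$ exactly; the fact that $U_e^{(n)}=f_e^{(n)}(\mathrm{In}(\mathrm{Tl}(e)))$ is a deterministic function of the incoming source and edge variables gives $\hbf^{(n)}\in\mathcal{L}_3$ exactly (here the hyperedge model changes nothing, since $U_e$ is still a single variable determined by the inputs of its tail). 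Normalizing by $n$ and invoking the rate constraints $\tfrac1n\log|\Umc_e|\le R_e + o(1)$ puts the scaled vector in (an $o(1)$-neighborhood of) $\mathcal{L}_{4'}$. For the decoding constraint I would apply Fano's inequality to each sink $t$: $p_t^{(n),\mathrm{err}}\to 0$ forces $\tfrac1n H(Y_{\beta(t)}^{(n)}\mid U_{\mathrm{In}(t)}^{(n)}) \to 0$, i.e. the scaled vector approaches $\mathcal{L}_5$. Thus $\tfrac1n\hbf^{(n)}$ lies within vanishing distance of $\Gamma_N^*\cap\mathcal{L}_{13}$ scaled into the cone, and its rate coordinates are dominated by $\Rbf$; taking $n\to\infty$ and using that $\overline{\mathrm{con}(\Gamma_N^*\cap\mathcal{L}_{13})}$ is closed, together with the closure in the definition of $\Rmc_c(\Asf)$, places $\Rbf$ in $\mathrm{Proj}_{\boldsymbol{r},\boldsymbol{\omega}}(\overline{\mathrm{con}(\Gamma_N^*\cap\mathcal{L}_{13})}\cap\mathcal{L}_{4'5})$.

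For achievability, I would start from a point $\hbf$ in $\mathrm{con}(\Gamma_N^*\cap\mathcal{L}_{13})\cap\mathcal{L}_{4'5}$ with $\hbf\in\Gamma_N^*$ exactly (the general case follows by scaling and a density/continuity argument, exactly as in \cite{YanYeungTranIT2012}), realized by some actual random variables. Using the conditional independence/functional structure encoded by $\mathcal{L}_{13}$, a standard random-binning or typicality-based block code argument produces, for each $e\in\Emc_U$, an encoder into roughly $2^{nh_{U_e}}$ messages such that the heads of $e$ can all recover $U_e^{(n)}$ with vanishing error; the hyperedge structure is actually helpful here, since the same description is simply delivered to every head. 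The decodability constraint $\mathcal{L}_5$, $h_{Y_{\beta(t)}\mid U_{\mathrm{In}(t)}}=0$, then guarantees each sink can recover its demanded sources from its incoming variables with vanishing error. This realizes any rate vector with $R_e \ge h_{U_e}$, i.e. any point of the projection; a standard diagonalization over a sequence of such codes and the closure operation then deliver the full region.

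The main obstacle I anticipate is not any single inclusion---each is routine given \cite{YanYeungTranIT2012}---but rather making airtight the claim that the hyperedge generalization and the $U_e=Y_s$ reindexing genuinely require \emph{no} new ideas: I would need to check carefully that (i) in the converse, the absence of an explicit $\mathcal{L}_2$ does not drop any necessary constraint, because source-edge identification is built into $\mathcal{N}$ and $\mathcal{L}_1$; and (ii) in achievability, the simultaneous delivery of one edge variable to multiple heads is compatible with the layered/topological-order code construction on the acyclic hypergraph, which it is precisely because acyclicity lets us process nodes in a topological order and a hyperedge is, for coding purposes, one message replicated to its head set. Once these two points are dispatched, the theorem follows by citing the corresponding lemmas of \cite{YanYeungTranIT2012} essentially line by line.
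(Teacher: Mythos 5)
Your proposal is correct and follows essentially the same route as the paper's proof (given in Appendix A): a converse via Fano's inequality and membership of the normalized entropy vector in $\Gamma_N^*\cap\mathcal{L}_{13}$ followed by a limiting argument, and achievability via the Yan--Yeung--Zhang scaling lemma (that $\overline{\mathrm{con}(\Gamma_N^*\cap\mathcal{L}_{13})}$ equals the closure of its scaled-down set) together with the standard code construction. The two delicate points you flag---the absorption of $\mathcal{L}_2$ into the convention $U_e=Y_s$ and the fact that a hyperedge is one message replicated to all heads---are exactly the modifications the paper identifies as the only departures from \cite{YanYeungTranIT2012}.
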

\begin{IEEEproof}
We present a sketch of the proof here and a detailed proof in Appendix \ref{app:thm1proof}.  First observe that the proof of Theorem 1 in \cite{YanYeungTranIT2012} can be extended to networks presented above, with hyperedges and intermediate nodes having direct access to sources.  Some differences include: I) the hyperedge model potentially makes one edge variable connected with more than one node and thus be involved in more than one intermediate node constraint ($\Lmc_3$).  Therefore, it may constrain more on the edge variable in the region of entropic vectors; II) the coding function for each intermediate (hyper)edge may encode some source edges with some other non-source edges together; III) the decoding at sink nodes may be a function of some source edges and non-source edges as well; IV) there is only one outgoing edge for each source and it carries the source variable itself.  The differences will not destroy the essence of the proofs in \cite{YanYeungTranIT2012}.  For the converse and achievability proof, we  view the edge capacities as constant (recall that our rate vector include both source entropies and edge capacities), and then consider the converse and achievability of the associated source entropies, which becomes essentially the proof in \cite{YanYeungTranIT2012}.
\end{IEEEproof}

While the analytical expression determines, in principle, the rate region of any network
under network coding, it is only an implicit characterization.  This is because $\Gamma_{N}^{*}$ is unknown and even non-polyhedral for $N\geq4$.  Further, while $\bar{\Gamma}^*_N$ is a convex cone for all $N$, $\Gamma^*_N$ is already non-convex by $N=3$, though it is also known that the closure only adds points at the boundary of $\bar{\Gamma}^*_N$.  Thus, the direct calculation of rate regions from  \eqref{eq:generalrateregionfree}
for a network with 4 or more variables is infeasible.  On a related note, at the time of writing, it appears to be unknown by the community whether or not the closure after the conic hull is actually necessary\footnote{The closure would be unnecessary if $\bar{\Gamma}^*_N = \textrm{con}(\Gamma^*_N)$, i.e. if every extreme ray in $\bar{\Gamma}^*_N$ had at least one point along it that was entropic (i.e. in $\Gamma^*_N$).  At present, all that is known is that $\Gamma^*_N$ has a solid core, i.e. that the closure only adds points on the boundary of $\bar{\Gamma}^*_N$.} in (\ref{eq:generalrateregionfree}), and the uncertainty that necessitates its inclusion muddles a number of otherwise simple proofs and ideas.  For this reason, some of the discussion in the remainder of the manuscript will study a closely related inner bound to $\Rmc_c(\Asf)$ described in the following corollary.  In all of the cases where the rate region has been computed to date these two regions are equivalent to one another. 

\begin{corollary}
The rate region $\Rmc_{c}(\Asf)$ of a network $\Asf$ is inner bounded by the region
\begin{equation}
\Rmc_{\ast}(\Asf) = \textrm{Proj}_{\boldsymbol{r},\boldsymbol{\omega}} \textrm{con} ( \Gamma^*_N )\cap \Lmc_{\Asf} 
\end{equation}
\end{corollary}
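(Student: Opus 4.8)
The plan is to deduce the corollary directly from Theorem~\ref{thm:rateregion} by observing that the only difference between $\Rmc_*(\Asf)$ and $\Rmc_c(\Asf)$ is a closure operation, which can only enlarge a set. Since $\Lmc_{\Asf}=\mathcal{L}_{13}\cap\mathcal{L}_{4'5}$ and $\mathrm{Proj}_{\boldsymbol{r},\boldsymbol{\omega}}$ is monotone, comparing $\Rmc_*(\Asf)$ with the expression for $\Rmc_c(\Asf)$ in Theorem~\ref{thm:rateregion} reduces the claim to the single inclusion
\[
\textrm{con}(\Gamma_N^*)\cap\mathcal{L}_{13}\ \subseteq\ \overline{\textrm{con}(\Gamma_N^*\cap\mathcal{L}_{13})} ,
\]
after which intersecting both sides with $\mathcal{L}_{4'5}$ and applying $\mathrm{Proj}_{\boldsymbol{r},\boldsymbol{\omega}}$ yields $\Rmc_*(\Asf)\subseteq\Rmc_c(\Asf)$, i.e. the claimed inner bound.

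To prove that inclusion I would first record that $\mathcal{L}_1$ and $\mathcal{L}_3$ are each cut out by equalities of the form $\phi(\mathbf{h})=0$ in which $\phi$ is a Shannon-type functional that is \emph{nonnegative} on the Shannon outer bound $\Gamma_N$: the functional defining $\mathcal{L}_1$ is $\sum_{s\in\Smc}h_{Y_s}-h_{\Ybf_{\Smc}}$, nonnegative by subadditivity, and each functional defining $\mathcal{L}_3$ is a conditional entropy $h_{\Ubf_{\mathrm{Out}(g)}\mid\cdots}$ of the outgoing messages of $g$ given its inputs, nonnegative by monotonicity. Since $\Gamma_N^*\subseteq\Gamma_N$ and $\Gamma_N$ is a convex cone, $\textrm{con}(\Gamma_N^*)\subseteq\Gamma_N$, so every such $\phi$ is $\ge 0$ on all of $\textrm{con}(\Gamma_N^*)$; note also that $\mathcal{L}_{13}$ is a linear subspace, so $\textrm{con}(\mathcal{L}_{13})=\mathcal{L}_{13}$ and the reverse containment $\textrm{con}(\Gamma_N^*\cap\mathcal{L}_{13})\subseteq\textrm{con}(\Gamma_N^*)\cap\mathcal{L}_{13}$ is immediate, so the two sides are in fact equal and no closure is even needed.

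The one nontrivial step is then a sign argument: given $x\in\textrm{con}(\Gamma_N^*)\cap\mathcal{L}_{13}$, write $x=\sum_{i=1}^k\lambda_i\mathbf{h}_i$ with $\lambda_i>0$ and $\mathbf{h}_i\in\Gamma_N^*$; for any defining functional $\phi$ of $\mathcal{L}_{13}$ we get $0=\phi(x)=\sum_i\lambda_i\phi(\mathbf{h}_i)$ with every term $\ge 0$, which forces $\phi(\mathbf{h}_i)=0$, hence $\mathbf{h}_i\in\mathcal{L}_{13}$, for each $i$; therefore $x\in\textrm{con}(\Gamma_N^*\cap\mathcal{L}_{13})$. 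The $\boldsymbol{r}$-coordinates, left unconstrained in $\Gamma_N^*,\mathcal{L}_1,\mathcal{L}_3$, play no role here and enter only through $\mathcal{L}_{4'}$, which sits outside the conic hull in both $\Rmc_*(\Asf)$ and $\Rmc_c(\Asf)$. I expect this boundary step — that a conic combination lying on a hyperplane $\{\phi=0\}$ has each summand on it — to be the main (indeed, only) obstacle; it works precisely because every constraint bundled into $\mathcal{L}_{13}$ is a Shannon inequality with a definite sign on $\Gamma_N$. (The decoding constraints $\mathcal{L}_5$ share this property, so they could be folded in as well, but they are not needed.) The chain $\Rmc_*(\Asf)=\mathrm{Proj}_{\boldsymbol{r},\boldsymbol{\omega}}(\textrm{con}(\Gamma_N^*\cap\mathcal{L}_{13})\cap\mathcal{L}_{4'5})\subseteq\mathrm{Proj}_{\boldsymbol{r},\boldsymbol{\omega}}(\overline{\textrm{con}(\Gamma_N^*\cap\mathcal{L}_{13})}\cap\mathcal{L}_{4'5})=\Rmc_c(\Asf)$ then completes the argument, the last inclusion holding simply because $\overline{(\cdot)}\supseteq(\cdot)$.
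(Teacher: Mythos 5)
Your proposal is correct and follows essentially the same route as the paper's own proof: both reduce the claim to the identity $\textrm{con}(\Gamma^*_N \cap \Lmc_{13}) = \textrm{con}(\Gamma^*_N) \cap \Lmc_{13}$, established by observing that the equalities defining $\Lmc_{13}$ set nonnegative Shannon-type functionals to zero, so a conic combination vanishing on such a functional forces every summand to vanish on it. Your write-up merely makes explicit which functionals are involved and why they are nonnegative, which the paper leaves implicit.
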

\begin{IEEEproof}
Clearly $\Rmc_c = \mathrm{Proj}_{\boldsymbol{r},\boldsymbol{\omega}}(\overline{\rm{con}(\Gamma_{N}^{*}\cap\mathcal{L}_{13})}\cap\mathcal{L}_{4'5}) \supseteq \mathrm{Proj}_{\boldsymbol{r},\boldsymbol{\omega}}(\rm{con}(\Gamma_{N}^{*}\cap\mathcal{L}_{13})\cap\mathcal{L}_{4'5})$.  Next, observe that intersecting with $\Lmc_{13}$ is equivalent to requiring certain information inequalities (which are non-negative for all entropic vectors) to be identically zero, and a conic combination of such entropic vectors thus can only yield such an information inequality identically zero if the same information inequality was identically zero for each entropic vector. Hence $\textrm{con}(\Gamma^*_N \cap \Lmc_{13}) = \textrm{con}(\Gamma^*_N) \cap \Lmc_{13}$, and thus, $\mathrm{Proj}_{\boldsymbol{r},\boldsymbol{\omega}}(\rm{con}(\Gamma_{N}^{*}\cap\mathcal{L}_{13})\cap\mathcal{L}_{4'5}) = \mathrm{Proj}_{\boldsymbol{r},\boldsymbol{\omega}}(\rm{con}(\Gamma_{N}^{*}) \cap\mathcal{L}_{\Asf})$.  This completes the proof.
\end{IEEEproof}

Again, both $\Rmc_c(\Asf)$ and its closely related inner bound $\Rmc_{\ast}(\Asf)$ are not directly computable because they depend on the unknown region of entropic vectors and its closure.  However,
replacing $\Gamma_{N}^{*}$ with finitely generated inner and outer bounds, as described in the following corollaries, transforms (\ref{eq:generalrateregionfree}) into a polyhedral computation problem,
which involves applying some linear constraints onto a polyhedron and then
projecting down onto some coordinates.  

\begin{corollary}
Let $\Amc \subset \Gamma^*_N$ be some finite set of entropic vectors, then a polyhedral inner bound to the rate region is given by
\begin{equation}
\Rmc_{c}(\Asf) \supseteq \Rmc_{\ast}(\Asf)  \supseteq \mathrm{Proj}_{\boldsymbol{r},\boldsymbol{\omega}}(\textrm{con}(\Amc) \cap \Lmc_{\Asf}).
\end{equation}
\end{corollary}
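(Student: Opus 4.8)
The plan is to obtain the claimed chain by appending a single monotonicity step to the inner bound $\Rmc_{\ast}(\Asf)$ already established in the preceding corollary, so that the only genuinely new content is (i) a set inclusion and (ii) the verification that the resulting set deserves the adjective ``polyhedral.'' First I would recall that the previous corollary gives $\Rmc_c(\Asf) \supseteq \Rmc_{\ast}(\Asf) = \mathrm{Proj}_{\boldsymbol{r},\boldsymbol{\omega}}(\textrm{con}(\Gamma_N^*) \cap \Lmc_{\Asf})$; hence it suffices to show $\mathrm{Proj}_{\boldsymbol{r},\boldsymbol{\omega}}(\textrm{con}(\Gamma_N^*) \cap \Lmc_{\Asf}) \supseteq \mathrm{Proj}_{\boldsymbol{r},\boldsymbol{\omega}}(\textrm{con}(\Amc) \cap \Lmc_{\Asf})$, which then closes the chain.

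The inclusion is immediate and proceeds in three monotonicity moves. Since $\Amc \subseteq \Gamma_N^*$, every conic combination of elements of $\Amc$ is a conic combination of elements of $\Gamma_N^*$, so $\textrm{con}(\Amc) \subseteq \textrm{con}(\Gamma_N^*)$. Intersecting both sides with the fixed set $\Lmc_{\Asf}$ preserves containment, giving $\textrm{con}(\Amc) \cap \Lmc_{\Asf} \subseteq \textrm{con}(\Gamma_N^*) \cap \Lmc_{\Asf}$. Finally, $\mathrm{Proj}_{\boldsymbol{r},\boldsymbol{\omega}}$ is (the restriction of) a linear map and hence monotone with respect to set inclusion, which yields the desired inequality and therefore the corollary's chain $\Rmc_c(\Asf) \supseteq \Rmc_{\ast}(\Asf) \supseteq \mathrm{Proj}_{\boldsymbol{r},\boldsymbol{\omega}}(\textrm{con}(\Amc) \cap \Lmc_{\Asf})$.

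To justify that the right-hand set is polyhedral, I would argue as follows. Because $\Amc$ is finite, $\textrm{con}(\Amc)$ is a finitely generated cone, hence a polyhedron by the Minkowski--Weyl theorem. The set $\Lmc_{\Asf} = \Lmc_1 \cap \Lmc_3 \cap \Lmc_{4'} \cap \Lmc_5$ is cut out by finitely many linear equalities and inequalities, namely the defining relations \eqref{eq:rrcondef1}, \eqref{eq:rrcondef2}, \eqref{eq:Lratefree}, \eqref{eq:rrcondef4}, so $\textrm{con}(\Amc) \cap \Lmc_{\Asf}$ is again a polyhedron, and the coordinate projection of a polyhedron onto the $(\boldsymbol{r},\boldsymbol{\omega})$ coordinates is a polyhedron by Fourier--Motzkin elimination. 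Consequently $\mathrm{Proj}_{\boldsymbol{r},\boldsymbol{\omega}}(\textrm{con}(\Amc) \cap \Lmc_{\Asf})$ is a polyhedron, and in fact one that can be computed explicitly from any generating set of $\Amc$, which is the point of stating the bound in this form.

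I expect no real obstacle here: the statement is a soft consequence of the exact rate-region expression of Theorem~\ref{thm:rateregion} together with the monotonicity of conic hull, intersection, and linear projection. The only places asking for a line of care are the two classical passages ``finitely generated cone $\Rightarrow$ polyhedron'' and ``polyhedron $\Rightarrow$ its projection is a polyhedron.'' Notably, none of the subtleties surrounding $\Gamma_N^*$ — its non-convexity, or whether the closure after the conic hull is needed — intervene, precisely because replacing $\Gamma_N^*$ by a finite inner-bounding set $\Amc$ of entropic vectors sidesteps them entirely.
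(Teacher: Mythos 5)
Your proposal is correct, but it takes a genuinely different route from the paper's. You treat the claim as a pure monotonicity statement: since the preceding corollary already establishes $\Rmc_{\ast}(\Asf) = \mathrm{Proj}_{\boldsymbol{r},\boldsymbol{\omega}}(\textrm{con}(\Gamma_N^*)\cap\Lmc_{\Asf})$, the chain follows from $\textrm{con}(\Amc)\subseteq\textrm{con}(\Gamma_N^*)$ plus the monotonicity of intersection and projection, and polyhedrality is a separate bookkeeping step via Minkowski--Weyl and Fourier--Motzkin. The paper instead starts from the exact rate-region formula of Theorem~\ref{thm:rateregion}, substitutes $\Amc$ for $\Gamma_N^*$ to get $\mathrm{Proj}_{\boldsymbol{r},\boldsymbol{\omega}}(\overline{\textrm{con}(\Amc\cap\Lmc_{13})}\cap\Lmc_{4'5})$ as an inner bound, and then must do two pieces of work to bring this into the stated form: (i) finiteness of $\Amc\cap\Lmc_{13}$ makes the closure redundant, and (ii) the argument that each equality in $\Lmc_{13}$ zeroes out a non-negative information quantity, so that a conic combination can satisfy it only if every summand does, giving $\textrm{con}(\Amc\cap\Lmc_{13}) = \textrm{con}(\Amc)\cap\Lmc_{13}$. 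Your route is shorter because you inherit that commutation argument (applied to $\Gamma_N^*$) from the previous corollary rather than re-proving it for $\Amc$; what the paper's version buys is self-containment and an explicit demonstration that the natural ``substitute a finite $\Amc$ into the rate-region formula'' construction lands exactly on the computable polyhedral expression $\mathrm{Proj}_{\boldsymbol{r},\boldsymbol{\omega}}(\textrm{con}(\Amc)\cap\Lmc_{\Asf})$, which is the form the software actually evaluates. Both arguments are sound.
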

\begin{IEEEproof}
It is clear that $\mathrm{Proj}_{\boldsymbol{r},\boldsymbol{\omega}}(\overline{\rm{con}(\Amc \cap\mathcal{L}_{13})}\cap\mathcal{L}_{4'5})$ will be an inner bound to $\mathrm{Proj}_{\boldsymbol{r},\boldsymbol{\omega}}(\overline{\rm{con}(\Gamma_{N}^{*}\cap\mathcal{L}_{13})}\cap\mathcal{L}_{4'5})$ and hence $\Rmc_{\ast}(\Asf)$.  Furthermore, for such a finite set $\Amc \cap\mathcal{L}_{13}$ must also be a finite set, and hence $\overline{\rm{con}(\Amc \cap\mathcal{L}_{13})} = \rm{con}(\Amc \cap\mathcal{L}_{13})$ is a closed polyhedral cone.  Additionally, observe that every equality in $\mathcal{L}_{13}$ can be viewed as setting a non-negative definite information inequality quantity to zero, and since every point in $\Amc$ must thus lie in only the non-negative half spaces these equalities generate, the extreme rays of $\rm{con}(\Amc \cap\mathcal{L}_{13})$ must be those extreme rays of $\rm{con}(\Amc)$ in $\mathcal{L}_{13}$, implying $\rm{con}(\Amc \cap\mathcal{L}_{13}) = \rm{con}(\Amc) \cap \mathcal{L}_{13}$.  Putting these facts together we observe that the inner bound $\mathrm{Proj}_{\boldsymbol{r},\boldsymbol{\omega}}(\overline{\rm{con}(\Amc \cap\mathcal{L}_{13})}\cap\mathcal{L}_{4'5}) = \mathrm{Proj}_{\boldsymbol{r},\boldsymbol{\omega}} (\rm{con}(\Amc) \cap \mathcal{L}_{13} \cap \mathcal{L}_{4'5} )= \mathrm{Proj}_{\boldsymbol{r},\boldsymbol{\omega}}(\rm{con}(\Amc) \cap \mathcal{L}_{\Asf})$.
\end{IEEEproof}

Similarly, polyhedral cones $\Gamma_N^{\rm out}$ outer bounding the convex cone $\bar{\Gamma}_N^*$ yield polyhedral outer bounds to the rate region.

\begin{corollary}
Let $\Gamma_N^{\rm out}$ be a closed polyhedral cone that contains $\bar{\Gamma}^*_N$, then a polyhedral outer bound to the rate rate region is given by
\begin{equation}
\Rmc_{c}(\Asf) \subseteq \mathrm{Proj}_{\boldsymbol{r},\boldsymbol{\omega}}(\Gamma_N^{\rm out} \cap \Lmc_{\Asf})
\end{equation}
\end{corollary}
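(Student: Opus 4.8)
The plan is to derive this outer bound directly from the exact rate region expression \eqref{eq:generalrateregionfree} of Theorem~\ref{thm:rateregion}, using nothing more than monotonicity of the set operations involved; unlike the inner-bound corollaries, which rely on the nontrivial identity $\mathrm{con}(\Gamma_N^*\cap\Lmc_{13})=\mathrm{con}(\Gamma_N^*)\cap\Lmc_{13}$, here only containments are needed. First I would recall that $\Lmc_{13}=\Lmc_1\cap\Lmc_3$ is a homogeneous linear subspace of $\mathbb{R}^M$, since each defining relation in \eqref{eq:rrcondef1}--\eqref{eq:rrcondef2} is a linear equation with zero constant term, and that $\Gamma_N^{\rm out}$ is by hypothesis a closed convex (polyhedral) cone with $\bar{\Gamma}_N^*\subseteq\Gamma_N^{\rm out}$. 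As with $\Gamma_N^*$ and the $\Lmc_i$, I would regard $\Gamma_N^{\rm out}$ as a subset of $\mathbb{R}^M$ with the $L$ edge-capacity coordinates $\boldsymbol{r}$ left unconstrained; then $\Gamma_N^{\rm out}\cap\Lmc_{13}$ is again a closed convex cone.

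The first key step is the containment $\overline{\mathrm{con}(\Gamma_N^*\cap\Lmc_{13})}\subseteq\Gamma_N^{\rm out}\cap\Lmc_{13}$. Indeed, $\Gamma_N^*\cap\Lmc_{13}\subseteq\bar{\Gamma}_N^*\cap\Lmc_{13}\subseteq\Gamma_N^{\rm out}\cap\Lmc_{13}$, and since the right-hand side is a convex cone it contains the conic hull $\mathrm{con}(\Gamma_N^*\cap\Lmc_{13})$, and since it is moreover closed it contains the closure of that conic hull as well. This is precisely the place where the closure bar in \eqref{eq:generalrateregionfree} is harmless: closure and conic hull cannot escape an already closed convex cone.

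Next I would intersect both sides with $\Lmc_{4'5}=\Lmc_{4'}\cap\Lmc_5$, obtaining
\[
\overline{\mathrm{con}(\Gamma_N^*\cap\Lmc_{13})}\cap\Lmc_{4'5}\ \subseteq\ \Gamma_N^{\rm out}\cap\Lmc_{13}\cap\Lmc_{4'5}\ =\ \Gamma_N^{\rm out}\cap\Lmc_{\Asf},
\]
using $\Lmc_{\Asf}=\Lmc_1\cap\Lmc_3\cap\Lmc_{4'}\cap\Lmc_5=\Lmc_{13}\cap\Lmc_{4'5}$. Because the coordinate projection $\mathrm{Proj}_{\boldsymbol{r},\boldsymbol{\omega}}$ is monotone with respect to set inclusion, applying it to both sides and invoking Theorem~\ref{thm:rateregion} to identify the left-hand side as $\Rmc_c(\Asf)$ yields $\Rmc_c(\Asf)\subseteq\mathrm{Proj}_{\boldsymbol{r},\boldsymbol{\omega}}(\Gamma_N^{\rm out}\cap\Lmc_{\Asf})$, which is the claim.

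I do not anticipate a substantive obstacle: this is the ``easy direction'' and the argument is essentially a chain of monotone inclusions. The only points demanding care are the bookkeeping on ambient dimension --- making sure $\Gamma_N^{\rm out}$, naturally a cone in $\mathbb{R}^{2^N-1}$, is lifted to $\mathbb{R}^M$ with the rate coordinates unconstrained exactly as $\Gamma_N^*$ and the subspaces $\Lmc_1,\Lmc_3,\Lmc_5$ are --- and the elementary but crucial observation, noted above, that $\Gamma_N^{\rm out}\cap\Lmc_{13}$ is a closed convex cone, which is what permits dropping the closure in passing from $\overline{\mathrm{con}(\Gamma_N^*\cap\Lmc_{13})}$ to $\Gamma_N^{\rm out}$.
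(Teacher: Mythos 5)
Your proof is correct and follows essentially the same route as the paper's: the chain $\Gamma_N^*\cap\Lmc_{13}\subseteq\bar{\Gamma}_N^*\cap\Lmc_{13}\subseteq\Gamma_N^{\rm out}\cap\Lmc_{13}$, the observation that the last set is a closed convex cone so that $\overline{\mathrm{con}(\Gamma_N^*\cap\Lmc_{13})}\subseteq\Gamma_N^{\rm out}\cap\Lmc_{13}$, and then intersection with $\Lmc_{4'5}$ followed by the monotone projection. Your explicit remarks about lifting $\Gamma_N^{\rm out}$ to $\mathbb{R}^M$ and about why the closure bar is harmless are just slightly more careful renderings of steps the paper takes implicitly.
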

\begin{IEEEproof}
Since $\Gamma^*_N \subset \bar{\Gamma}^*_N \subset \Gamma_N^{\rm out}$, $\Gamma^*_N \cap \Lmc_{13} \subseteq \bar{\Gamma}^*_N\cap \Lmc_{13}  \subseteq \Gamma_N^{\rm out}\cap \Lmc_{13} $.  Thus, $\overline{\rm{con}(\Gamma^*_N \cap\mathcal{L}_{13})} \subseteq \overline{\rm{con}(\Gamma^{\rm out}_N \cap\mathcal{L}_{13})}  = \Gamma^{\rm out}_N \cap\mathcal{L}_{13}$.  Hence $\Rmc_{\ast}(\Asf)=\mathrm{Proj}_{\boldsymbol{r},\boldsymbol{\omega}}(\overline{\rm{con}(\Gamma_{N}^{*}\cap\mathcal{L}_{13})}\cap\mathcal{L}_{4'5}) \subseteq \mathrm{Proj}_{\boldsymbol{r},\boldsymbol{\omega}}(\Gamma^{\rm out}_N \cap \Lmc_{\Asf})$.
\end{IEEEproof}

These corollaries inspire us to substitute
$\Gamma_{N}^{*}$ with such closed polyhedral outer and inner bounds $\Gamma_N^{\rm out}$,and $\Gamma_N^{\rm in}= \textrm{con}(\mathcal{A}), \mathcal{A} \subset \Gamma_N^*$ to $\bar{\Gamma}^*_N$,
respectively, to obtain an outer and inner bound on the rate region:
\vspace{-0.3cm}
\begin{eqnarray}\label{eq:regionout}
\mathcal{R}_{{\rm out}}(\Asf)&=&\mathrm{proj}_{\boldsymbol{r},\boldsymbol{\omega}}(\Gamma_{N}^{{\rm out}}\cap\mathcal{L}_{\Asf}),\label{eq:regionoutfree}\\
\label{eq:regionin}
\mathcal{R}_{{\rm in}}(\Asf)&=&\mathrm{proj}_{\boldsymbol{r},\boldsymbol{\omega}}(\Gamma_N^{{\rm in}}\cap\mathcal{L}_{\Asf}).\label{eq:regioninfree}
\end{eqnarray}

If $\mathcal{R}_{{\rm out}}(\Asf)=\mathcal{R}_{{\rm in}}(\Asf)$, we know $\mathcal{R}_c(\Asf)=\Rmc_{\ast}(\Asf)=\mathcal{R}_{{\rm out}}(\Asf)=\mathcal{R}_{{\rm in}}(\Asf)$. Otherwise, tighter bounds are necessary.

In this work, we will use \eqref{eq:regionoutfree} and \eqref{eq:regioninfree} to calculate the rate region. Typically the Shannon outer bound $\Gamma_N$ and some inner bounds obtained from matroids, especially representable matroids, are used. We will briefly review the definition of these bounds in the next subsection, while details on the polyhedral computation methods with these bounds are available in \cite{CongduanNetCod2013,CongduanAllerton2012,JayantISIT2014,JayantNetCod2015}. 

\subsection{Construction of bounds on rate region}\label{sec:bounds}
An introduction to the region of entropic vectors and the polyhedral inner and outer bounds we will utilize from it can be found in greater detail in \cite{CongduanTranIT2014}.  Here we briefly review their definitions for accuracy, completeness, and convenience.

\subsubsection{Region of entropic vectors $\Gamma_{N}^{*}$}
\label{subsec:entropicregion}
Consider an arbitrary collection $\mathbf{X}=[X_{1},\ldots,X_{N}]$
of $N$ discrete random variables with joint probability mass function
$p_{X}$. To each of the $2^{N}-1$ non-empty subsets of the collection
of random variables, $X_{\Amc}:=[X_{i} | i\in\Amc]$
with $\Amc\subseteq\{1,\ldots,N\}$, there is associated
a joint Shannon entropy $H(X_{\Amc})$. Stacking these subset
entropies for different subsets into a $2^{N}-1$ dimensional vector
we form an entropy vector 
\begin{equation}
\mathbf{h}=[H(X_{\Amc}) | \Amc\subseteq\{1,\ldots,N\},\Amc\neq\emptyset].\end{equation}
 By virtue of having been created in this manner, the vector $\mathbf{h}$
must live in some subset of $\mathbb{R}_{+}^{2^{N}-1}$, and is said
to be \emph{entropic} due to the existence of $p_{X}$. However, not
every point in $\mathbb{R}_{+}^{2^{N}-1}$ is entropic since, for many points, there does not exist an associated valid distribution $p_{X}$.
All entropic vectors form a region denoted as $\Gamma_{N}^{*}$.
It is known that the closure of the region of entropic vectors $\bar{\Gamma}_{N}^{*}$ is a convex cone \cite{YanYeungTranIT2012}.  Elementary inequalities on Shannon entropies should form a fundamental outer bound on $\bar{\Gamma}_N^*$, named the {\it Shannon outer bound $\Gamma_N$}.

\subsubsection{Shannon outer bound $\Gamma_{N}$}

We observe that elementary properties of Shannon entropies indicate
that $H(X_{\Amc})$ is a non-decreasing submodular function,
so that $\forall\Amc\subseteq\Bmc\subseteq\{1,\ldots,N\},\forall\mathcal{C},\mathcal{D}\subseteq\{1,\ldots,N\}$\vspace{-0.2cm}
\begin{eqnarray}
H(X_{\Amc}) & \leq & H(X_{\Bmc})\label{eq:nonDec}\\
H(X_{\mathcal{C}\cup\mathcal{D}})+H(X_{\mathcal{C}\cap\mathcal{D}}) & \leq & H(X_{\mathcal{C}})+H(X_{\mathcal{D}}).\label{eq:submod}
\end{eqnarray}
 Since they are true for any collection of subset entropies, these
linear inequalities (\ref{eq:nonDec}), (\ref{eq:submod}) can be
viewed as supporting halfspaces for $\Gamma_{N}^{*}$.

Thus, the intersection of all such inequalities form a polyhedral
outer bound $\Gamma_{N}$ for $\Gamma_{N}^{*}$ and $\bar{\Gamma}_{N}^{*}$,
where 
\[
\Gamma_{N}:=\left\{ \mathbf{h}\in\mathbb{R}^{2^{N}-1}\left|\begin{array}{c}
h_{\Amc}\leq h_{\Bmc}\quad\forall\Amc\subseteq\Bmc\\
h_{\mathcal{C}\cup\mathcal{D}}+h_{\mathcal{C}\cap\mathcal{D}}\leq h_{\mathcal{C}}+h_{\mathcal{D}}\quad\forall\mathcal{C},\mathcal{D}
\end{array}\right.\right\} .
\]
 This outer bound $\Gamma_{N}$ is known as the \emph{Shannon outer
bound}, as it can be thought of as the set of all inequalities resulting
from the positivity of Shannon's information measures among the random
variables. 
While $\Gamma_{2}=\Gamma_{2}^{*}$ and $\Gamma_{3}=\bar{\Gamma}_{3}^{*}$,
$\bar{\Gamma}_{N}^{*}\subsetneq\Gamma_{N}$ for all $N\geq4$ \cite{YanYeungTranIT2012},
and indeed it is known \cite{Matus_ISIT_2007} that $\bar{\Gamma}_{N}^{*}$
is non-polyhedral for $N\geq4$.

The inner bounds on $\bar{\Gamma}_N^*$ we consider are based on representable matroids.  We briefly review the basic definitions of matroids and representable matroids.

\subsubsection{\label{sec:Matroid-theory}Matroid basics}

Matroid theory \cite{OxleyMatroidBook} is an abstract generalization
of independence in the context of linear algebra and graphs to the more general
setting of set systems.  There are numerous equivalent
definitions of matroids, however, we will present the definition of matroids utilizing {\em{rank
functions}} as this is best matched to our purposes. 
\begin{definition} \label{def:mtrk}
A set function on a ground set $\Mmc$, $r_{\Msf}:2^{\mathcal{M}}\to\{0,\ldots,|\Mmc|\}$, is a rank function of a matroid $\Msf$
if it obeys the following axioms: 
\begin{enumerate}
\item Cardinality: $r_{\Msf}(\Amc)\leq|\Amc|$; 
\item Monotonicity: if $\Amc\subseteq \Bmc\subseteq \mathcal{M}$ then $r_{\Msf}(\Amc)\leq r_{\Msf}(\Bmc)$; 
\item Submodularity: if $\Amc,\Bmc\subseteq \mathcal{M}$ then $r_{\Msf}(\Amc\cup \Bmc)+r_{\Msf}(\Amc\cap \Bmc)\leq r_{\Msf}(\Amc)+r_{\Msf}(\Bmc)$. 
\end{enumerate}
\end{definition} 

A subset with rank function $r_{\Msf}(\Amc)=|\Amc|$ is called an independent set of the matroid.  Though there are many classes of matroids, we are especially interested
in one of them, {\em{representable matroids}}, because they can
be related to linear codes to solve network coding problems as discussed
in \cite{CongduanNetCod2013,CongduanAllerton2012}.

\subsubsection{Representable matroids}

Representable matroids are an important class of matroids which connect
the independent sets to the notion of independence in
a vector space. 
\begin{definition} A matroid $\Msf$ with ground set
$\Mmc$ of size $|\Mmc|=N$ and rank $r_{\Msf}(\Mmc)=r$ is representable over a
field $\mathbb{F}$ if there exists a matrix $\Abb\in\mathbb{F}^{r\times N}$
such that for each set $\Amc\subseteq\Mmc$ the rank $r_{\Msf}(\Amc)$ equals the linear rank of the corresponding
columns in $\Abb$, viewed as vectors in $\mathbb{F}^{r}$. 
\end{definition} 

Note that, for an independent set $\Imc$, the corresponding columns in the matrix $\Abb$ are linearly independent.  There has been significant effort towards
characterizing the set of matroids that are representable over various
field sizes, with a complete answer only available for fields of sizes
two, three, and four. For example, a matroid $\Msf$ is binary representable (representable over a binary
field) iff it does not have the matroid $U_{2,4}$ as a minor. 
Here, a minor is obtained by series of operations of contraction and deletion \cite{OxleyMatroidBook}. $U_{k,N}$ is the {\em uniform} matroid on the ground set
$\Mmc=\{1,\ldots,N\}$ with independent sets $\mathcal{I}$ equal to all subsets
of $\{1,\ldots,N\}$ of size at most $k$. For example, $U_{2,4}$ has as its
independent sets \vspace{-0.2cm}
\begin{equation}
\mathcal{I}=\{\emptyset,1,2,3,4,\{1,2\},\{1,3\},\{1,4\},\{2,3\},\{2,4\},\{3,4\}\}.
\end{equation}
Another important observation is that the first non-representable
matroid is the so-called {\em{Vámos}} matroid, a well known matroid on ground
set of size $8$. That is to say, all matroids are representable,
at least in some field, for $N\leq7$.

\subsubsection{Inner bounds from representable matroids}

Suppose a matroid $\Msf$ with ground set $\Mmc$ of size $|\Mmc|=N$
and rank $r_{\Msf}(\Mmc)=r$ is representable over the finite field $\Fbb_{q}$
of size $q$ and the representing matrix is $\Abb\in\Fbb_{q}^{r\times N}$
such that  $r_{\Msf}(\Bmc)=\textrm{rank}(\Abb_{:,\Bmc}), \forall \Bmc\subseteq\Mmc$,
the matrix rank of the columns of $\Abb$ indexed by $\Bmc$. Let $\Gamma_{N}^{q}$
be the conic hull of all rank functions of matroids with $N$ elements
and representable in $\Fbb_{q}$. This provides an inner bound $\Gamma_{N}^{q}\subseteq\bar{\Gamma}_{N}^{*}$,
because any extremal rank function of $\Gamma_{N}^{q}$ is by
definition representable and hence is associated with a matrix representation
$\Abb\in\Fbb_{q}^{r\times N}$, from which and $r$ random variables $\ubf$ uniformly distributed in $\Fbb_q$, we can create the random
variables 
\begin{equation}
[X_{1},\ldots,X_{N}]=\ubf\Abb,~~\ubf\sim\mathrm{Uniform}(\Fbb_{q}^{r}),\label{eq:entVecRepMat}
\end{equation}
whose elements have joint entropies $h_{\Amc}=r_{\Msf}(\Amc)\log_{2}q,\ \forall \Amc\subseteq\Mmc$.  
Hence, all extreme rays of $\Gamma_{N}^{q}$ are entropic, and $\Gamma_{N}^{q}\subseteq\bar{\Gamma}_{N}^{*}$. Further, if a vector in the rate region of a network is (a projection of) an $\Fbb_q$-representable matroid rank, the representation $\Abb$ can be used as a linear code to achieve that rate vector, and this code is denoted a {\it basic scalar $\Fbb_q$ code}. For an interior point in the rate region, which is the conic hull of projections of $\Fbb_q$-representable matroid ranks, the code to achieve it can be constructed by time-sharing between the basic scalar codes associated with the ranks involved in the conic combination. This code is denoted a {\it scalar $\Fbb_q$ code}. Details on construction of such a code can be found in \cite{CongduanNetCod2013} and \cite{CongduanTranIT2014}.

One can further generalize the relationship between representable matroids and entropic vectors established by (\ref{eq:entVecRepMat}).  Suppose the ground set $\Mmc'=\{1',\ldots,N'\}$ and a partition $\Mmc=\{1,\ldots,N\}$.  We define a partition mapping $\pi:\Mmc'\rightarrow\Mmc$ such that $\cup_{i'\in{\Mmc'}}\pi(i')=\Mmc'$, and $\pi(i')\cap \pi(j')=\emptyset,i',j'\in\Mmc', i'\neq j'$.  That is, the set $\Mmc'$ is partitioned into $N$ disjoint sets.  Suppose the variables associated with $\Mmc'$ are $X_{1'},\ldots,X_{N'}$.  Now we define for $n\in\Mmc$ the new vector-valued random variables $\Ybf_{n}=[X_{i'}|i'\in \pi^{-1}(n)]$.  The associated entropic vector will have entropies $h_{\Amc}=r_{\Msf}(\cup_{n\in \Amc}\pi^{-1}(n))\log_{2}q,\Amc\subseteq \Mmc$, and is thus proportional to a \emph{projection} of the original rank vector $\rbf$ keeping only those elements corresponding to all elements in a set in the partition appearing together.  Thus, such a projection of $\Gamma_{N'}^{q}$ forms an inner bound to $\bar{\Gamma}_{N}^{*}$, which we will refer to as a \emph{vector representable matroid inner bound} $\Gamma_{N,N'}^{q}$. As $N' \to \infty$, $\Gamma_{N,\infty}^q$ is the conic hull of all ranks of subspaces on $\Fbb_q$. The union over all field sizes for $\Gamma_{N,\infty}^q$ is the conic hull of the set of ranks of subspaces. Similarly, if a vector in the rate region of a network is (a projection of) a vector $\Fbb_q$-representable matroid rank, the representation $\Abb$ can be used as a linear code to achieve that rate vector, and this code is denoted as a {\it basic vector $\Fbb_q$ code}. The time-sharing between such basic vector codes can achieve any point inside the rate region \cite{CongduanNetCod2013,CongduanTranIT2014}.

\subsubsection{Dimension function of linear subspace arrangements}

As stated above, $\Gamma_{N,N'}^q$ becomes a tighter and tighter inner bound on $\bar{\Gamma}_N^*$ as $N'\to\infty$.  This considers the increase in dimension but does not consider the fields other than $\Fbb_q$.  Actually, if we consider all possible $\Fbb_q$ fields and let $N'\to \infty$, we will get the inner bound associated with all linear codes, denoted by $\Gamma_N^{{\rm linear}}$, which is tighter than $\Gamma_{N,\infty}^q$ for a fixed $\Fbb_q$.  Specifically, consider a collection of $N$ linear vector subspaces $\Vbf=(V_{1},\ldots,V_{N})$ of a finite dimensional vector space, and define the set function $\drm:2^{\Vbf}\to\Nbb_{+}$, where $\drm(\Amc)=\mathrm{dim}\left(\sum_{i\in \Amc}V_{i}\right)$ for each $\Amc\subseteq \{1,\ldots,N\}$ is the dimension of the vector space generated by the union of subspaces indexed by $\Amc$. For any collection of subspaces $\Vbf$, the function $\drm$ is integer valued, and obeys monotonicity and submodularity. Additionally, for every subspace dimension function $\drm$, there is an associated entropic vector. Indeed, one can place the vectors forming a basis for each $V_i$, over all $i$, side by side into a matrix $\Abb$, which when utilized in (\ref{eq:entVecRepMat}), will yield random subvectors having the desired entropies.  Thus, the conic hull of dimensions of linear subspace arrangements forms an inner bound on $\bar{\Gamma}_{N}^{*}$, we denote it by $\Gamma_{N}^{{\rm linear}}$. 

Integrality, monotonicity, and submodularity are necessary but insufficient for for a given set function $\drm : 2^{\Vbf}\to\Nbb_{+}$ to be dimension function of subspace arrangements. That is, there exist additional inequalities that are necessary to describe the conic hull of all possible subspace dimension set functions. As discussed in \cite{Hammer2000ShannEntr}, Ingleton's inequality \cite{Ingleton1971} together with the Shannon outer bound $\Gamma_4$, completely characterizes $\Gamma_{4}^{{\rm linear}}$.

For $N=5$ subspaces \cite{DFZ2009Ineqfor5var} found $24$ new inequalities in addition to the Ingleton inequalities that hold, and prove this set is irreducible and complete in that all inequalities are necessary and no additional non-redundant inequalities exist. For $N\geq6$, \cite{DFZ2009Ineqfor5var,Kinser2011NewIneqSubspaceArra} there are new inequalities from $N-1$ to $N$, and $\Gamma_N^{{\rm linear}}$ remains unknown.

All the bounds discussed in this section could be used in \eqref{eq:regionoutfree} and \eqref{eq:regioninfree} to calculate bounds on rate regions for a network $\Asf$. If we substitute the Shannon outer bound $\Gamma_N$ into \eqref{eq:regionoutfree}, we get
\begin{equation}
\mathcal{R}_{o}(\Asf)=\mathrm{proj}_{\boldsymbol{r},\boldsymbol{\omega}}(\Gamma_{N}\cap\mathcal{L}_{\Asf}).
\end{equation}

Similarly, we substitute the representable matroid inner bound $\Gamma_N^q$, the vector representable matroid inner bound $\Gamma_{N,N'}^q$, $\Gamma_{N,\infty}^q$ and the linear inner bound $\Gamma_{N}^{{\rm linear}}$ into \eqref{eq:regioninfree}, to obtain
\begin{eqnarray}
\mathcal{R}_{s,q}(\Asf)&=&\mathrm{proj}_{\boldsymbol{r},\boldsymbol{\omega}}(\Gamma_{N}^{q}\cap\mathcal{L}_{\Asf}),\label{eq:rsq}\\
\mathcal{R}_{q}^{N'}(\Asf)&=&\mathrm{proj}_{\boldsymbol{r},\boldsymbol{\omega}}(\Gamma_{N,N'}^{q}\cap\mathcal{L}_{\Asf}),\\
\mathcal{R}_{q}(\Asf)&=&\mathrm{proj}_{\boldsymbol{r},\boldsymbol{\omega}}(\Gamma_{N,\infty}^{q}\cap\mathcal{L}_{\Asf}),\\
\mathcal{R}_{{\rm linear}}(\Asf)&=&\mathrm{proj}_{\boldsymbol{r},\boldsymbol{\omega}}(\Gamma_{N}^{{\rm linear}}\cap\mathcal{L}_{\Asf}).
\end{eqnarray}

We will present the experimental results utilizing these bounds to calculate the rate regions of various networks in \S\ref{sec:resultssmall}.  However, before we do this, we will first aim to generate a list of network coding problems to which we may apply our computations and thereby calculate their rate regions.  In order to tackle the largest collection of networks possible in this study, in the next section we will seek to obtain a minimal problem description for each network coding problem instance, removing any redundant sources, edges, or nodes.

\section{Minimality Reductions on Networks}\label{sec:minimality}
Though in principle, any network coding problem as described in \S \ref{sec:model} forms a valid network coding problem, such a problem can include networks with nodes, edges, and sources which are completely extraneous and unnecessary from the standpoint of determining the rate region.  To deal with this, in this section, we show how to form a network instance with equal or fewer number of sources, edges, or nodes, from an instance with extraneous components.  We will show the rate region of the instance with the extraneous components is trivial to calculate from the rate region of the reduced network.  Network coding problems without such extraneous and unnecessary components will be called {\it minimal}.  We first define a minimal network coding problem, then show, via Theorem \ref{thm:minimality}, how to map a non-minimal network to a minimal network, and then form the rate region of the non-minimal network directly from the minimal one.

\begin{definition}\label{def:minimal}
An acyclic network instance $\Asf=(\Smc,\Gmc,\Tmc,\Emc,\beta)$ is {\em minimal} if it obeys the following constraints:
\begin{enumerate}
\item[] {\em Source minimality}:
\item [\namedlabel{c1}{(\textbf{C1})}] all sources cannot be only directly connected with sinks: $\forall s\in \Smc$, ${\rm Hd}({\rm Out}(s)) \cap \Gmc \neq \emptyset$;
\item [\namedlabel{c2}{(\textbf{C2})}] sinks do not demand sources to which they are directly connected: $\forall s\in \Smc,\ t\in\Tmc$, if $t \in {\rm Hd}({\rm Out}(s))$ then $s \notin \beta(t)$;
\item [\namedlabel{c3}{(\textbf{C3})}] every source is demanded by at least one sink: $\forall s\in \Smc$, $\exists\,t\in \Tmc$ such that $s\in\beta(t)$ ;
\item [\namedlabel{c4}{(\textbf{C4})}] sources connected to the same intermediate node and demanded by the same set of sinks should be merged: $\nexists s,s'\in\Smc$ such that ${\rm Hd}({\rm Out}(s))={\rm Hd}({\rm Out}(s'))$ and $\gamma(s)=\gamma(s')$, where $\gamma(s)=\{t\in\Tmc|s\in\beta(t)\}$;
\item[] {\em Node minimality}:
\item [\namedlabel{c5}{(\textbf{C5})}] intermediate nodes with identical inputs should be merged: $\nexists\,k,l\in \Gmc$ such that ${\rm In}(k)={\rm In}(l)$;
\item [\namedlabel{c6}{(\textbf{C6})}] intermediate nodes should have nonempty inputs and outputs, and sink nodes should have nonempty inputs: $\forall g\in\Gmc, t\in\Tmc, {\rm In}(g)\neq\emptyset,{\rm Out}(g)\neq\emptyset,{\rm In}(t)\neq\emptyset$;
\item[] {\em Edge minimality}:
\item [\namedlabel{c7}{(\textbf{C7})}] all hyperedges must have at least one head: $\nexists e\in\Emc$ such that $\text{Hd}(e)=\emptyset$;
\item [\namedlabel{c8}{(\textbf{C8})}] identical edges should be merged: $\nexists e,e'\in\Emc$ with ${\rm Tl}(e)={\rm Tl}(e')$,  ${\rm Hd}(e)={\rm Hd}(e')$;
\item [\namedlabel{c9}{(\textbf{C9})}] intermediate nodes with unit in and out degree, and whose in edge is not a hyperedge, should be removed: $\nexists e,e'\in\Emc,g\in\Gmc$ such that ${\rm In}(g)=e$,  ${\rm Hd}(e)=g$, ${\rm Out}(g)=e'$;
\item[] {\em Sink minimality}:
\item [\namedlabel{c10}{(\textbf{C10})}] there must exist a path to a sink from every source wanted by that sink: $\forall t\in\Tmc, \beta(t)\subseteq \sigma(t)$, where $\sigma(t)=\{k\in\Smc|\exists \text{ a path from }k \text{ to } t \}$;
\item [\namedlabel{c11}{(\textbf{C11})}] every pair of sinks must have a distinct set of incoming edges: $\forall t,t'\in\Tmc,i\neq j$, $\mathrm{In}(t)\neq \mathrm{In}(t')$;
\item [\namedlabel{c12}{(\textbf{C12})}] if one sink receives a superset of inputs of a second sink, then the two sinks should have no common sources in demand: If $\mathrm{In}(t)\subseteq \mathrm{In}(t')$, then $\beta(t)\cap\beta(t')=\emptyset$;
\item [\namedlabel{c13}{(\textbf{C13})}]  if one sink receives a superset of inputs of a second sink, then the sink with superset input should not have direct access to the sources that demanded by the sink with subset input: If $\mathrm{In}(t)\subseteq \mathrm{In}(t')$ then $t'\notin \textrm{Hd}(\textrm{Out}(s))$ for all $s \in\beta(t)$.
\item[] {\em Connectivity}:
\item [\namedlabel{c14}{(\textbf{C14})}] the direct graph associated with the network  $\Asf$ is weakly connected.
\end{enumerate}
\end{definition}

To better highlight this definition of network minimality, we explain the conditions involved in greater detail.  The first condition \ref{c1} requires that a source cannot be only directly connected with some sinks, for otherwise no sink needs to demand it, according to \ref{c2} and \ref{c10}.  Therefore, this source is extraneous.   The condition \ref{c2} holds because otherwise the demand of this sink will always be trivially satisfied, hence removing this reconstruction constraint will not alter the rate region.  Note that other sources not demanded by a given sink can be directly available to that sink as side information (e.g., as in index coding problems), as long as condition \ref{c13} is satisfied.  The condition \ref{c3} indicates that each source must be demanded by some sink nodes, for otherwise it is extraneous and can be removed.   The condition \ref{c4} says that no two sources have exactly the same paths and set of demanders (sinks requesting the source), because in that case the network can be simplified by combining the two sources as a super-source.  The condition \ref{c5} requires that no two intermediate nodes have exactly the same input, for otherwise the two nodes can be combined.   The condition \ref{c6} requires that no nodes have empty input except the source nodes, for otherwise these nodes are useless and extraneous from the standpoint of satisfying the network coding problem.  The condition \ref{c7} requires that every edge variable must be in use in the network, for otherwise it is also extraneous and can be removed.   The condition \ref{c8} guarantees that there is no duplication of hyperedges, for otherwise they can be merged with one another.  The condition \ref{c9} says that there is no trivial relay node with only one non-hyperedge input and output, for otherwise the head of the input edge can be replaced with the head of the output edge.    The condition \ref{c10} reflects the fact that a sink can only decode the sources to which it has at least one path of access,  and any demanded source not meeting this constraint will be forced to have entropy rate of zero.  The condition \ref{c11} indicates the trivial requirement that no two decoders should have the same input, for otherwise these two decoders can be combined.  The condition \ref{c12} simply stipulates that implied capabilities of sink nodes are not to be stated, but rather inferred from the implications.  In particular, if $\mathrm{In}(t)\subseteq \mathrm{In}(t')$, and $\beta(t)\cap\beta(t')\neq\emptyset$, the decoding ability of $\beta(t)$ is implied at $t'$: pursuing minimality, we only let $t'$ demand extra sources, if any.  The condition \ref{c13} is also necessary because the availability of $s$ is already implied by having access to ${\rm In}(t)$, hence, there is no need to have direct access to $s$.


We next show that the rate region of the network with extraneous components can be easily derived from the network without extraneous components, and vice versa.  Following the same order of the constraints \ref{c1}--\ref{c14}, we give the actions on each reduction and how the rate region of the network with those extraneous components can be derived.

\begin{figure}
\captionsetup[subfigure]{labelformat=empty}
\centering 
\subfloat [
\ref{c1}: source $s_3$ does not connected with any intermediate node, and thus is extraneous.]{\includegraphics[scale=0.4]{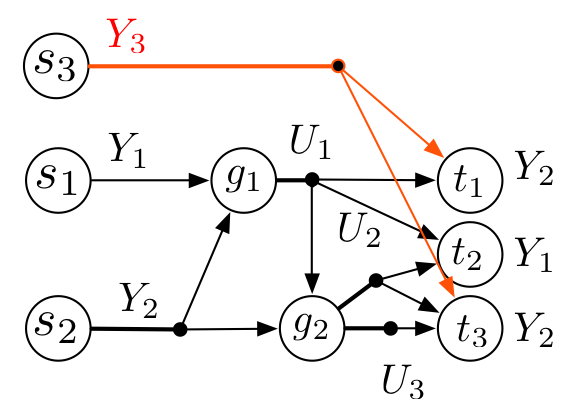} }\hspace{2mm}
\subfloat [\label{fig:C2examplesc}\ref{c2}: sink $t_3$ has direct access to $Y_2$, the demand of $Y_2$ is trivially satisfied and thus $t_3$ is redundant.]{\includegraphics[scale=0.4]{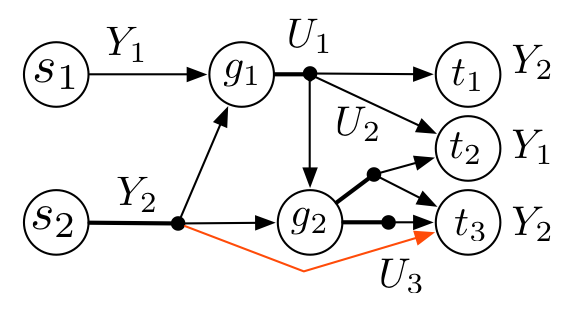} }\hspace{2mm}
\subfloat [\label{fig:C4exampleec}\ref{c3}: source $Y_3$ is not demanded by any sink, and thus is redundant.]{\includegraphics[scale=0.4]{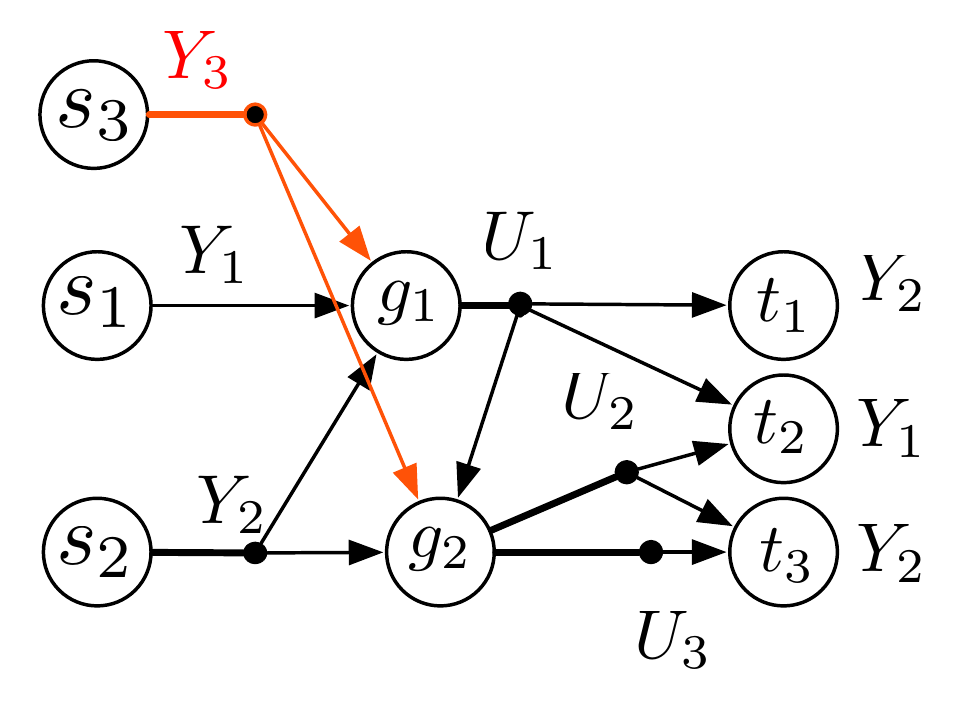} }\hspace{2mm}
\subfloat [\label{fig:C5exampleec}\ref{c4}: sources $Y_1,Y_3$ have exactly the same output and demanders, and thus can be combined.]{\includegraphics[scale=0.4]{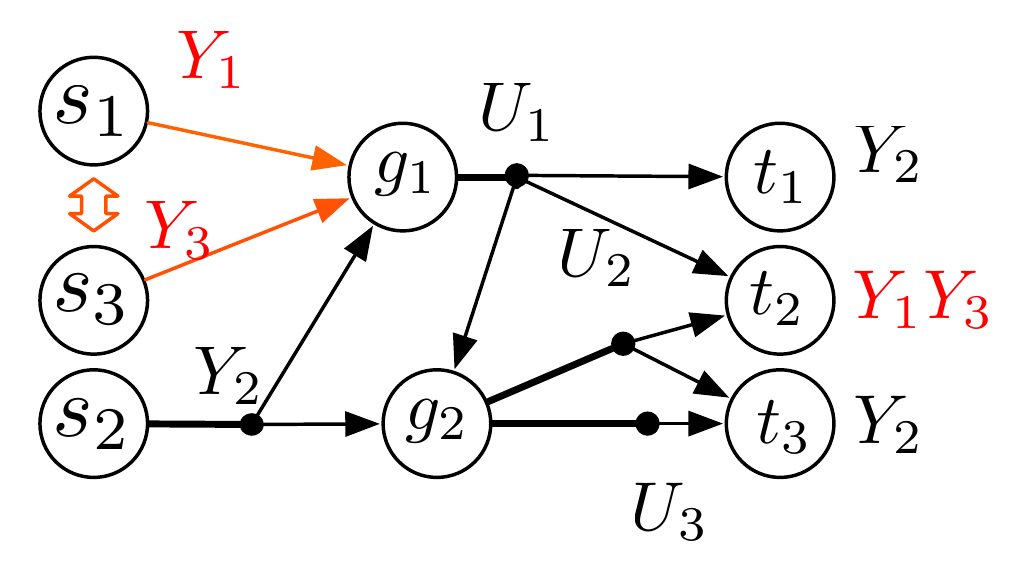} }\hspace{2mm}
\subfloat [\label{fig:C3examplenc}\ref{c5}: node $g_1,g_2$ have same input, and thus can be combined.]{\includegraphics[scale=0.4]{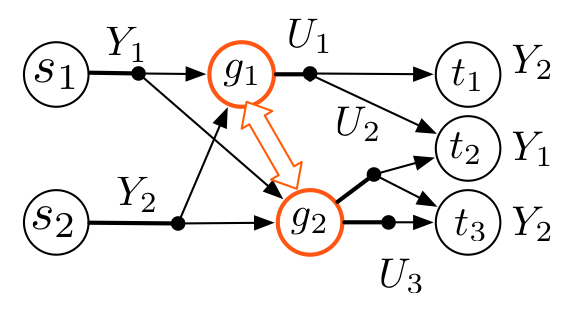} }\hspace{2mm}
\subfloat [\label{fig:C9exampleec}\ref{c6}: node $g_3,g_4$ and sink $t_1$ have empty input/ output, and thus are redundant.]{\includegraphics[scale=0.4]{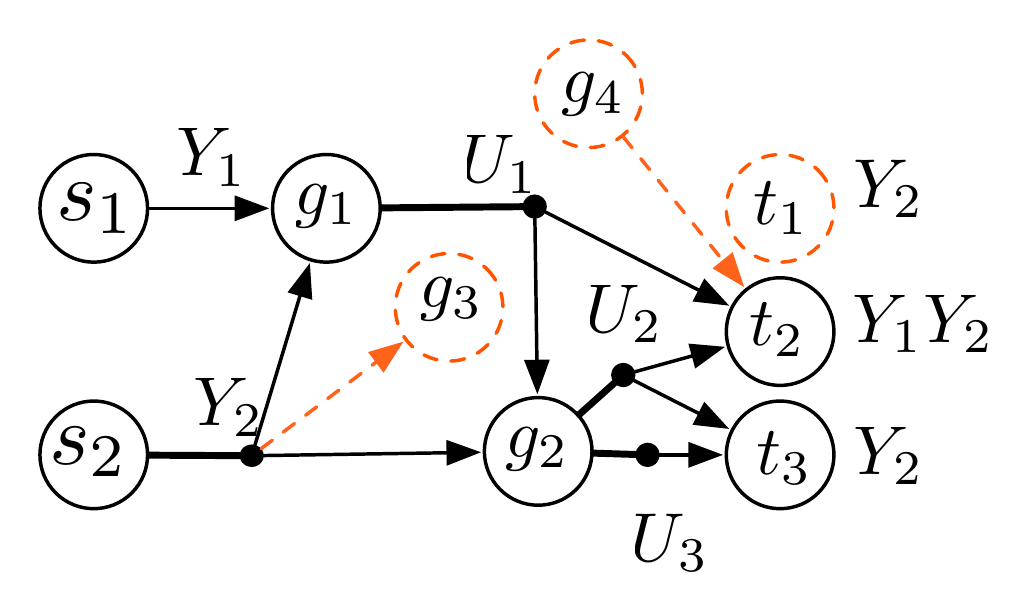} }\hspace{2mm}
\subfloat [\label{fig:C6exampleec}\ref{c7}: edge $U_2$ is not connected to any other nodes, and thus is redundant.]{\includegraphics[scale=0.4]{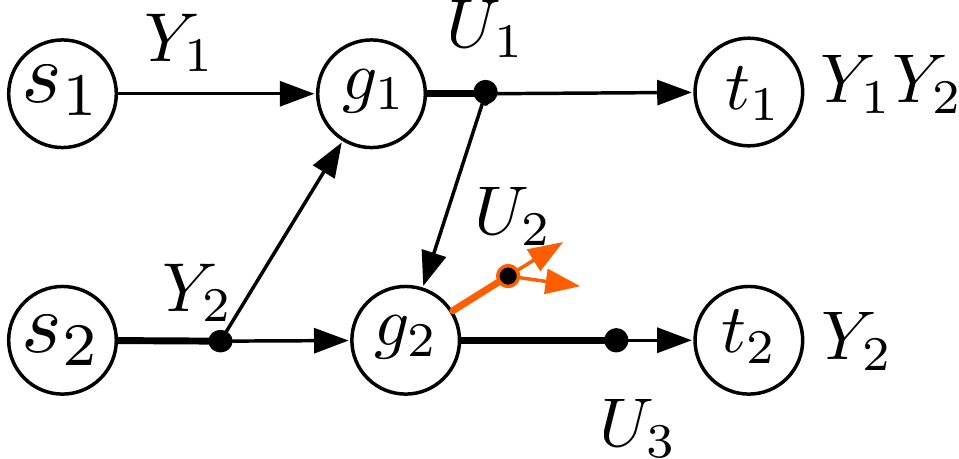} }\hspace{2mm}
\subfloat [\label{fig:C7exampleec}\ref{c8}: edges $U_2,U_3$ have exactly the same input and output nodes, and thus can be combined.]{\includegraphics[scale=0.4]{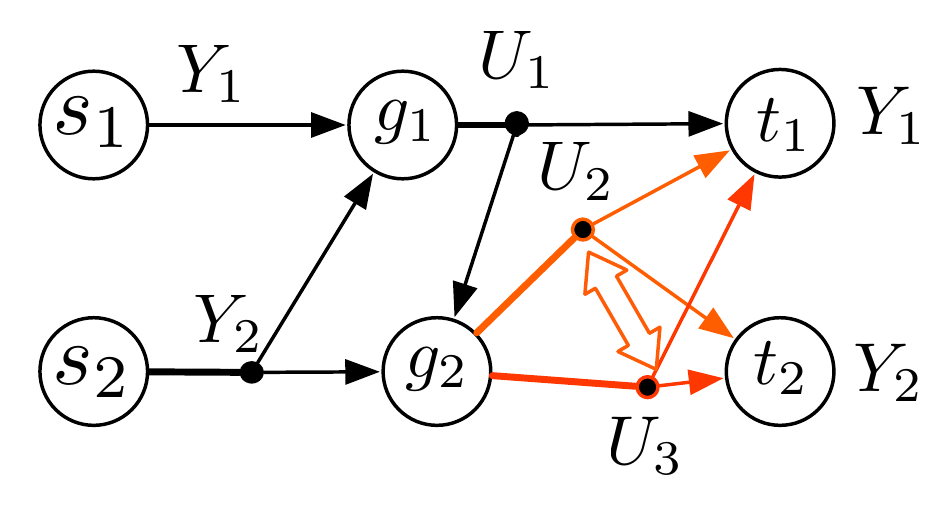} }\hspace{2mm}
\subfloat [\label{fig:C8exampleec}\ref{c9}: node $g_{1'}$ has exactly one input and one output, and they can be combined.]{\includegraphics[scale=0.4]{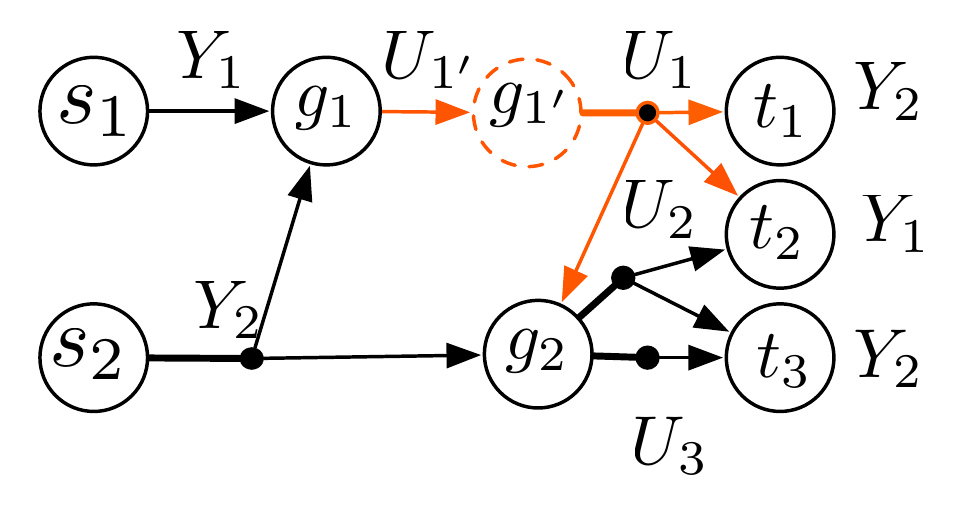} }\hspace{2mm}
\subfloat [\label{fig:C10exampleec}C10: sink $t_3$ has no access to $s_1$ but demands $Y_1$, so the only way to satisfy it is $s_1$ is sending no information.]{\includegraphics[scale=0.4]{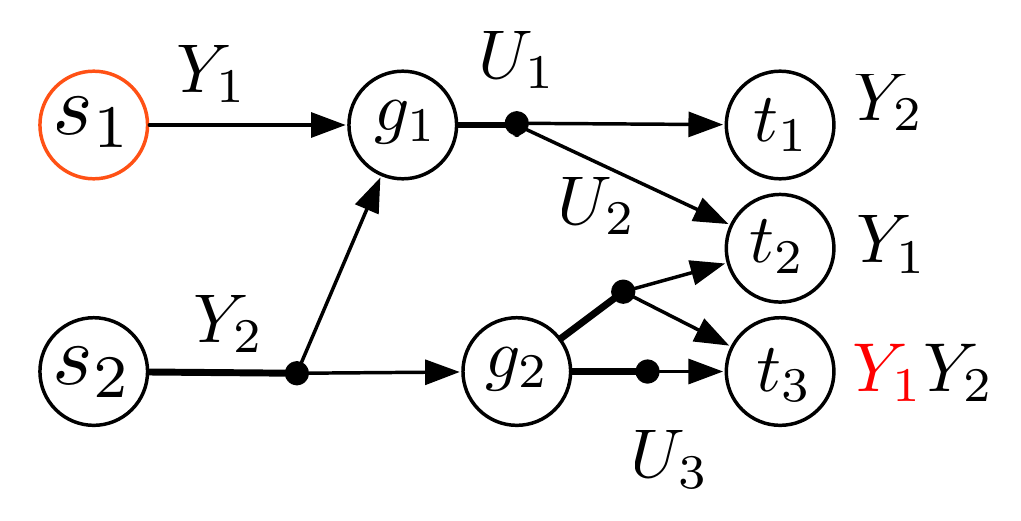} }\hspace{2mm}
\subfloat [\label{fig:C11exampleec}\ref{c11}: sinks $t_1,t_2$ have exactly the same input and thus can be combined into one sink node.]{\includegraphics[scale=0.4]{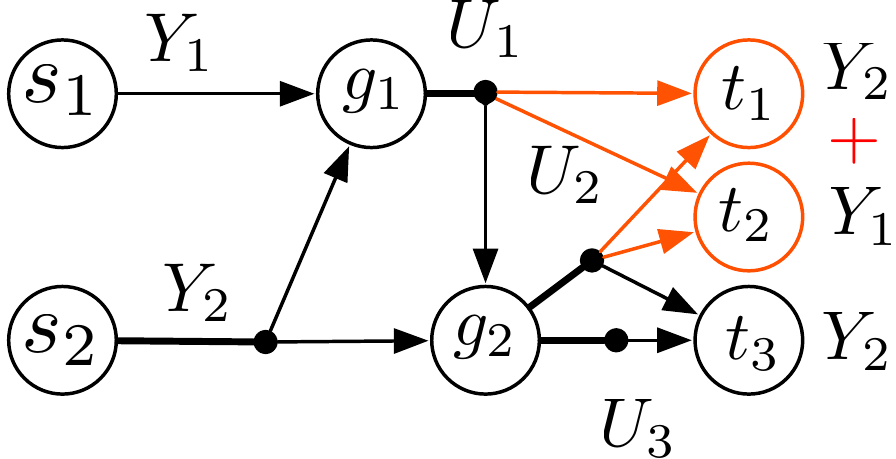} }\hspace{2mm}
\subfloat [\label{fig:C12exampleec}\ref{c12}: $t_1$ decodes $Y_2$ from $U_1$, hence $t_2$ also can decode $Y_2$, thus there is no need to list $Y_2$ in $\beta(t_2)$ .]{\includegraphics[scale=0.4]{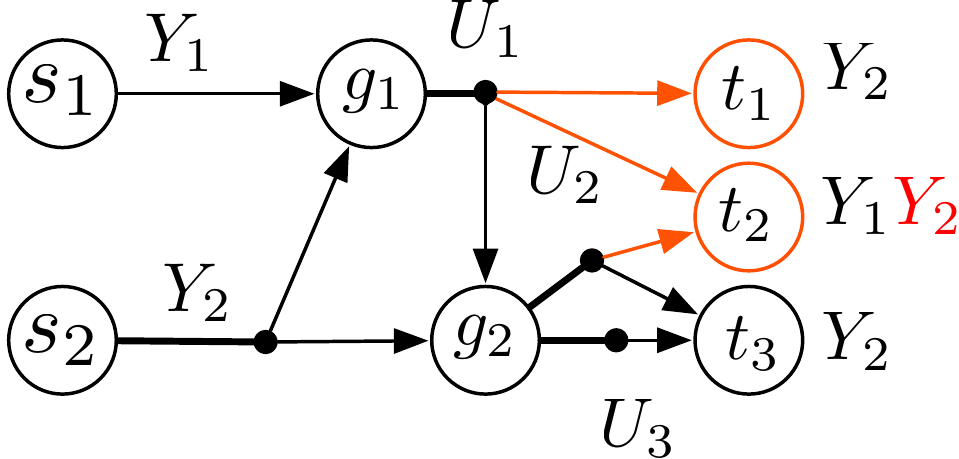} }\hspace{2mm}
\subfloat [\label{fig:C13exampleec}\ref{c13}: $t_1$ decodes $Y_2$ from $U_1$, thus $t_2$ also can decode $Y_2$, thus there is no need to keep direct access of $t_2$ to $Y_2$.]{\includegraphics[scale=0.4]{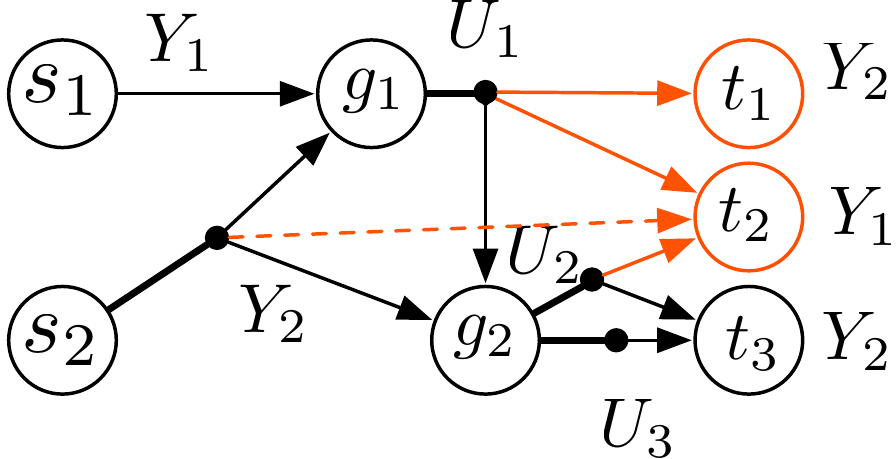} }\hspace{2mm}
\subfloat [\label{fig:C14exampleec}\ref{c14}: each connected component can be viewed as a separate network instance.]{\includegraphics[scale=0.35]{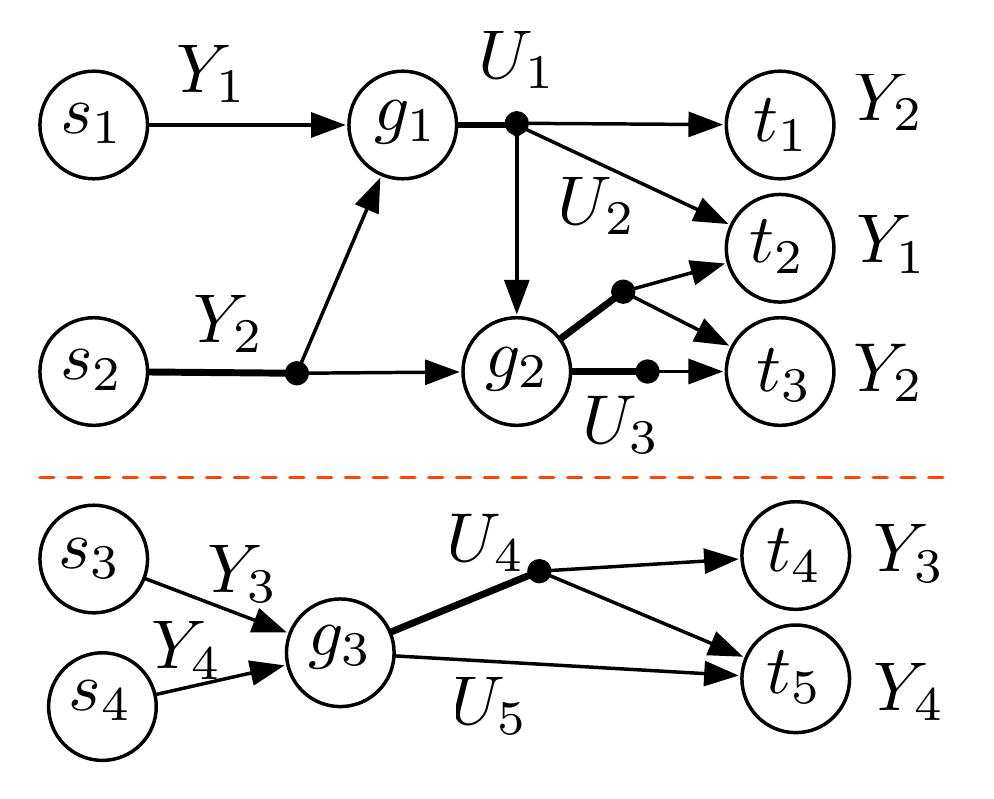} }\caption{Examples to demonstrate the minimality conditions \ref{c1}--\ref{c14}.}
\label{fig:minimalconditions} 
\end{figure}

\begin{theorem}\label{thm:minimality}
Suppose a network instance $\Asf=(\Smc,\Gmc,\Tmc,\Emc,\beta)$, with rate region and bounds $\Rmc_l(\Asf),\ l \in\{c,\ast,q,(s,q),o\}$, is a reduction from another network $\Asf'=(\Smc',\Gmc',\Tmc',\Emc',\beta')$, with rate region bounds $\Rmc_l(\Asf'), l \in \{c,\ast,q,(s,q),o\}$, by removing one of the redundancies specified in \ref{c1}--\ref{c14} in Def. \ref{def:minimal}.  Then, by defining $\Rbf_{\setminus \Amc} = \textrm{Proj}_{\setminus \Amc} \Rbf$ to be the projection of $\Rbf$ excluding coordinates associated with $\Amc$, and $\omega_s$ be the source rate of $s$, we have the following.
\begin{enumerate}
\item[] {\em Source minimality}:
\item [\namedlabel{D1}{(\textbf{D1})}] If $\exists s'\in\Smc',{\rm Hd}({\rm Out}(s')) \cap \Gmc' = \emptyset$, $\Asf$ will be $\Asf'$ with $s'$ removed and
\begin{equation}
\Rmc_{l}(\Asf') := \left\{\Rbf | \Rbf_{\setminus s'}\in \Rmc_l(\Asf),\ \omega_{s'} = 0  \right\} \quad \forall l \in\{c,\ast,q,(s,q),o \}
\end{equation}
if $\exists t' \in \mathcal{T}'$ such that $s \in \beta(t')$ and $s\notin \textrm{In}(t')$, while
\begin{equation}
\Rmc_{l}(\Asf') := \left\{\Rbf | \Rbf_{\setminus s}\in \Rmc_l(\Asf),\ \omega_s \geq 0 \right\} \quad \forall l \in\{c,\ast,q,(s,q),o \}
\end{equation}
otherwise.  Furthermore,
\begin{equation}
\Rmc_{l}(\Asf) = \textrm{Proj}_{\setminus s} \Rmc_{l}(\Asf') \quad \forall l \in\{c,\ast,q,(s,q),o \}.
\end{equation}

\item [\namedlabel{D2}{(\textbf{D2})}] if $\exists s'\in\Smc',t'\in\Tmc'$, such that $t'\in {\rm Hd}({\rm Out}(s'))$ and $s'\in \beta(t')$, $\Asf$ will be $\Asf'$ with ${\rm In}(t)={\rm In}(t')\setminus s'$ and $\beta(t)=\beta(t')\setminus s'$.   Further, $\Rmc_l(\Asf')=\Rmc_l(\Asf)$ for all $l \in\{c,\ast,q,(s,q),o \}$.

\item [\namedlabel{D3}{(\textbf{D3})}] if $\exists s'\in \Smc'$, such that $\forall\,t'\in \Tmc'$, $Y_{s'}\notin\beta(t')$, $\Asf$ will be $\Asf'$ with removal of the redundant source $s'$ and 
\begin{equation}
\Rmc_l(\Asf') := \left\{\Rbf | \Rbf_{\setminus s'} \in \Rmc_l(\Asf), H(Y_{s'}) \geq 0  \right\} \quad \forall l \in\{c,\ast,q,(s,q),o \}
\end{equation}
and
\begin{equation}
\Rmc_l(\Asf) = \textrm{Proj}_{\setminus s'} \Rmc_l(\Asf') \quad \forall l \in\{c,\ast,q,(s,q),o \}.
\end{equation}

\item [\namedlabel{D4}{(\textbf{D4})}] if $\exists s,s' \in\Smc'$ such that ${\rm Hd}({\rm Out}(s))={\rm Hd}({\rm Out}(s'))$ and $\gamma(s)=\gamma(s')$, $\Asf$ will be $\Asf'$ with sources $s,s'$ merged and 
\begin{equation}
\Rmc_l(\Asf')=  \left\{ \Rbf | [\Rbf_{\setminus\{s,s'\}}^T,H(Y_{s})+H(Y_{s'})]^T \in \Rmc_l(\Asf)  \right\} \quad \forall l \in\{c,\ast,q,o \}.
\end{equation}
i.e., replace $H(Y_{s})$ in $\Rmc_l(\Asf)$ with $H(Y_s)+H(Y_{s'})$ to get $\Rmc_l(\Asf')$, $l \in \{\ast,q,o\}$.  Furthermore,
\begin{equation}
\Rmc_l(\Asf) = \left\{ \Rbf_{\setminus\{s\}}\left| \Rbf \in \Rmc_l(\Asf'), \omega_{s'} = 0 \right. \right\} \quad \forall l \in\left\{c,\ast,q,(s,q),o \right\}.
\end{equation}

\item[] {\em Node minimality}:
\item [\namedlabel{D5}{(\textbf{D5})}] If $\exists\,k',j'\in \Gmc$ such that ${\rm In}(k')={\rm In}(j')$, $\Asf$ will be $\Asf'$ with $k',j'$ merged so that ${\rm In}(k)={\rm In}(k')={\rm In}(j'), {\rm Out}(k)={\rm Out}(k')\cup {\rm Out}(j')$, and $\Gmc=\Gmc'\setminus j'$.  Further, $\Rmc_l(\Asf)=\Rmc_l(\Asf')$ for all $l \in\{c,\ast,q,(s,q),o \}$.

\item [\namedlabel{D6}{(\textbf{D6})}] If $\exists g'\in\Gmc$ such that ${\rm In}(g)=\emptyset$, or ${\rm Out}(g)=\emptyset$, $\Asf$ will be $\Asf'$ with removal of the redundant node(s) $g'$ and $\Rmc_l(\Asf')=\Rmc_l(\Asf)$ for all $l \in\{c,\ast,q,(s,q),o \}$.

Similarly, if $\exists t'\in\Tmc$ such that ${\rm In}(t')=\emptyset$, $\Asf$ will be $\Asf'$ with removal of the redundant node(s) $t'$ and the deletion of any sources it demands, 
$\Rmc_l(\Asf')= \left\{ \Rbf | \Rbf_{\setminus \beta(t')} \in \Rmc_l(\Asf), \omega_s = 0 \forall s \in \beta(t') \right\}$, and 
$\Rmc_l(\Asf) = \textrm{Proj}_{\setminus \beta(t')} \Rmc_l(\Asf')
$
for all  $l \in\{c,\ast,q,(s,q),o \}$.

\item[] {\em Edge minimality}:
\item [\namedlabel{D7}{(\textbf{D7})}] If $\exists e'\in\Emc'$ such that $\text{Hd}(e')=\emptyset$, $\Asf$ will be $\Asf'$ with removal of edge $e'$,
\begin{equation}
\Rmc_l(\Asf')=\{\Rbf|\Rbf_{\setminus e'}\in \Rmc_l(\Asf),R_{e'}\geq 0\},
\end{equation}
and $\Rmc_l(\Asf) = \textrm{Proj}_{\setminus e'} \Rmc_l(\Asf')$ for all  $l \in\{c,\ast,q,(s,q),o \}$.

\item [\namedlabel{D8}{(\textbf{D8})}] If $\exists e,e'\in\Emc'$ with ${\rm Tl}(e)={\rm Tl}(e')$,  ${\rm Hd}(e)={\rm Hd}(e')$, $\Asf$ will be $\Asf'$ with edges $e,e'$ merged as $e$ and
\begin{equation}
\Rmc_l(\Asf')=  \left\{ \Rbf | [\Rbf_{\setminus\{e,e'\}}^T,R_e+R_{e'}]^T \in \Rmc_l(\Asf)  \right\}; \quad \forall l \in\{c,\ast,q,o \}
\end{equation}
i.e., replace $R_e$ in $\Rmc_*(\Asf)$ with $R_e +R_{e'}$ to get $\Rmc_*(\Asf')$.  Furthermore,
\begin{equation}
\Rmc_l(\Asf) = \left\{ [\Rbf_{\setminus\{e,e'\}}^T,R_e]^T | \Rbf \in \Rmc_l(\Asf'), R_{e'} = 0 \right\} \quad \forall l \in\{c,\ast,q,(s,q),o \}.
\end{equation}

\item [\namedlabel{D9}{(\textbf{D9})}]  If $\exists e,e'\in\Emc',g'\in\Gmc'$ such that ${\rm In}(g')=e$,  ${\rm Hd}(e)=g'$, ${\rm Out}(g')=e'$, then $\Asf$ will be $\Asf'$ with the node $g'$ removed and a new edge $e_{i}$ replacing $e,e'$ by directly connecting ${\rm Tl}(e)$ and ${\rm Hd}(e')$.  Further, 
\begin{equation}
\Rmc_l(\Asf')=  \left\{ \Rbf | [\Rbf_{\setminus\{e,e'\}}^T,\min \{R_e,R_{e'} \} ]^T \in \Rmc_l(\Asf)  \right\};
\end{equation}
i.e., replace $R_e$ in $\Rmc_l(\Asf)$ with $\min\{R_{e},R_{e'}\}$ to get $\Rmc_l(\Asf')$.  Accordingly,
\begin{equation}
\Rmc_l(\Asf ) = \left\{ [\Rbf_{\setminus\{e,e'\}}^T,\min \{R_e,R_{e'} \} ]^T \left| \Rbf \in \Rmc_l(\Asf') \right.\right\}.
\end{equation}
for all $l \in\{c,\ast,q,(s,q),o \}$.

\item[] {\em Sink minimality}:
\item [\namedlabel{D10}{(\textbf{D10})}] If $\exists t'\in\Tmc', s'\in\Smc'$, such that $s'\in\beta(t')$ but $s'\notin \sigma(t')$, then $\Asf$ will be $\Asf'$ with $s'$ deleted, 
\begin{equation}
\Rmc_l(\Asf')=\{\Rbf|\Rbf_{\setminus s'}\in\Rmc_l(\Asf), H(Y_{s'})=0\},
\end{equation}
and
\begin{equation}
\Rmc_l(\Asf) = \textrm{Proj}_{\setminus s'} \Rmc_l(\Asf')
\end{equation}
for all $l \in\{c,\ast,q,(s,q),o \}$.

\item [\namedlabel{D11}{(\textbf{D11})}] If $\exists t,t'\in\Tmc',t\neq t'$, such that $\mathrm{In}(t)=\mathrm{In}(t')$, then $\Asf$ will be $\Asf'$ with sinks $t,t'$ merged and $\Rmc_l(\Asf')=\Rmc_l(\Asf)$ for all $l \in\{c,\ast,q,(s,q),o \}$.

\item [\namedlabel{D12}{(\textbf{D12})}]  If $\exists t,t'$ such that $\mathrm{In}(t)\subseteq \mathrm{In}(t')$ and $\beta(t)\cap\beta(t')\neq\emptyset$, then $\Asf$ will be $\Asf'$ with removal of $\beta(t)\cap\beta(t')$ from $\beta(t')$ and $\Rmc_l (\Asf')=\Rmc_l(\Asf)$ for all $l \in\{c,\ast,q,(s,q),o \}$.
\item [\namedlabel{D13}{(\textbf{D13})}] If $\exists t,t',s'\in\beta(t)$ such that $\mathrm{In}(t)\subseteq \mathrm{In}(t')$ and $t'\in \textrm{Hd}(\textrm{Out}(s'))$, then $\Asf$ will be $\Asf'$ with removal of $s'$ from ${\rm In}(t')$ and $\Rmc_l(\Asf')=\Rmc_l(\Asf)$ for all $l \in\{c,\ast,q,(s,q),o \}$.

\item[] {\em Connectivity}:
\item [\namedlabel{D14}{(\textbf{D14})}] if $\Asf'$ is not weakly connected and $\Asf_1,\Asf_2$ are two weakly disconnected components, then $\Rmc_l(\Asf')=\Rmc_l(\Asf_1)\times \Rmc_l(\Asf_2)$ for all $l \in\{c,\ast,q,(s,q),o \}$.
\end{enumerate}
\end{theorem}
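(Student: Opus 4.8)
The plan is to reduce all fourteen cases to a single template built on the characterization of Theorem~\ref{thm:rateregion} and its corollaries. Every quantity $\Rmc_l(\Asf)$, $l\in\{c,\ast,q,(s,q),o\}$, has the form $\mathrm{Proj}_{\boldsymbol{r},\boldsymbol{\omega}}(\Gamma\cap\Lmc_{\Asf})$ with $\Gamma$ one of $\overline{\mathrm{con}(\Gamma_N^*\cap\Lmc_{13})}$ (together with the $\Lmc_{4'5}$ split), $\mathrm{con}(\Gamma_N^*)$, $\Gamma_{N,\infty}^q$, $\Gamma_N^q$, or $\Gamma_N$. These $\Gamma$-families share three structural properties that do all the work: (i) they live on the $2^N-1$ coordinates determined solely by the variable set $\Nmc$, never by the topology; (ii) they are closed under taking direct sums of vectors supported on disjoint variable sets; and (iii) together with any vector they contain every vector obtained by ``splitting'' one coordinate variable into two mutually independent sub-variables whose entropies are non-negative and sum to the original -- this last property holds for $\Gamma_N^*$ and $\mathrm{con}(\Gamma_N^*)$, for $\Gamma_N$, and for the vector-representable bound $\Gamma_{N,\infty}^q$, but \emph{not} in general for the scalar bound $\Gamma_N^q$. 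For each reduction I would first write down exactly how $\Nmc$, $N$, $L$, the projection coordinates $(\boldsymbol{r},\boldsymbol{\omega})$, and each of $\Lmc_1,\Lmc_3,\Lmc_{4'},\Lmc_5$ change when passing from $\Asf'$ to $\Asf$, and then read off the asserted identity between the projected sets; the closure and conic hull in the $\Rmc_c$ case cause no trouble because every operation used below (appending a free coordinate, splitting a coordinate, direct sums, intersecting with coordinate half-spaces) commutes with $\overline{\mathrm{con}(\cdot)}$ and with $\mathrm{Proj}$ on the relevant closed cones.

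The fourteen reductions then split into four groups. \emph{Constraint-redundancy reductions} (\ref{D2}, \ref{D5}, \ref{D6} intermediate-node part, \ref{D11}, \ref{D12}, \ref{D13}): here $\Nmc'=\Nmc$ and I would show $\Lmc_{\Asf'}=\Lmc_{\Asf}$ after applying elementary Shannon identities -- e.g.\ for \ref{D2} that a decoding constraint conditioned on a directly-available source $Y_{s'}$ that is itself demanded reduces to the same constraint with $s'$ struck from both the demand and the input set, using that $Y_{s'}$ appearing in the conditioning forces $h_{Y_{\beta(t)}\mid\cdot}=h_{Y_{\beta(t)\setminus s'}\mid\cdot}$ and that source independence ($\Lmc_1$) plus the functional structure ($\Lmc_3$) makes $Y_{s'}$ conditionally independent of the remaining demands given the remaining inputs; for \ref{D5}/\ref{D11} that two coincident encoder/decoder constraints collapse to one; for \ref{D12}/\ref{D13} that the struck demand/side-information is already implied by $\mathrm{In}(t)\subseteq\mathrm{In}(t')$. \emph{Free-coordinate reductions} (\ref{D1}, \ref{D3}, \ref{D6} sink part, \ref{D7}, \ref{D10}): a source or empty-head edge $x$ is deleted, and $x$'s coordinate participates in no constraint except possibly a forced $H(Y_x)=0$ (\ref{D10}, the sink-deletion of \ref{D6}) or is otherwise free; using (ii) I would show $\Gamma\cap\Lmc_{\Asf'}$ is exactly the cylinder over $\Gamma\cap\Lmc_{\Asf}$ extruded in the $x$-coordinate and intersected with the stated half-space or hyperplane, which yields the lift/projection formulas verbatim.

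\emph{Merge reductions} (\ref{D4}, \ref{D8}): two independent sources (resp.\ parallel edges) $x,x'$ are replaced by a single variable $z$, with every encoder/decoder constraint unchanged up to the relabeling $\{x,x'\}\mapsto z$; the ``$H(Y_z)\leftarrow H(Y_x)+H(Y_{x'})$'' substitution is justified by splitting $z$ into two independent parts via (iii), while the reverse restriction to $\Rmc_l(\Asf)$ is obtained by setting $\omega_{x'}=0$ (resp.\ $R_{e'}=0$) -- and because (iii) fails for $\Gamma_N^q$ the sum-substitution is asserted only for $l\in\{c,\ast,q,o\}$ whereas the zeroing direction works for all $l$, matching the statement. \emph{Relay reduction} \ref{D9} and \emph{product reduction} \ref{D14}: for \ref{D9} a degree-$(1,1)$ intermediate node with non-hyperedge in-edge $e$ and out-edge $e'$ is contracted to one edge; $\Lmc_3$ forces $U_{e'}$ to be a function of $U_e$, so on the feasible set $h_{U_{e'}}\le h_{U_e}$ and both capacity bounds are simultaneous bottlenecks, which I would show is equivalent to a single edge of capacity $\min\{R_e,R_{e'}\}$, the two directions being the $\min$-substitution and its inverse splitting; for \ref{D14}, source independence forces the two components' variable blocks to be mutually independent (each component's variables are functions of that component's sources only), so by (ii) $\Gamma\cap\Lmc_{\Asf'}$ is the direct sum of the two component regions and $\mathrm{Proj}_{\boldsymbol{r},\boldsymbol{\omega}}$ factors as the Cartesian product.

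The main obstacle I anticipate is the merge family \ref{D4}, \ref{D8} together with its interaction with the scalar bound: I must verify that splitting a representable matroid rank at the merged ground-set element into two elements with prescribed ranks is realizable by a \emph{vector} $\Fbb_q$-code but generally not a scalar one, which is precisely the reason the quantifier on $l$ differs between the two directions; getting these quantifiers to line up with what is actually provable is the delicate bookkeeping of the proof. A secondary obstacle is \ref{D2} (and to a lesser extent \ref{D13}): showing that deleting a directly-available-and-demanded source from a sink's inputs does not enlarge or shrink the region requires ruling out that this side information was aiding the sink's \emph{other} demands, which I would settle on any vector of $\Gamma\cap\Lmc_{13}$ by the conditional-independence argument sketched above. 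Everything else is a routine translation between a topological modification and a coordinate operation on the polyhedra.
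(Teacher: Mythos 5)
Your architecture is sound and genuinely different from the paper's proof: the paper disposes of twelve of the fourteen reductions in one or two operational sentences each and only spells out \ref{D4} and \ref{D8}, whereas you reduce everything to three closure properties of the bounding cones and four templates acting on $\Gamma\cap\Lmc_{\Asf}$. This buys a uniform treatment of all five superscripts $l$ at once, and it correctly isolates the one place where the scalar bound must be dropped: the failure of your splitting property (iii) for $\Gamma_N^q$ is exactly why the sum-substitution in \ref{D4} and \ref{D8} is asserted only for $l\in\{c,\ast,q,o\}$ while the zeroing direction holds for all $l$ -- the paper makes the same distinction only implicitly through its concatenation argument. Your reading of \ref{D9}, \ref{D14}, and the free-coordinate cases matches what the paper actually needs.

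There is, however, one concrete step that fails: \ref{D2}. Because the reduction removes $s'$ from $\mathrm{In}(t')$ \emph{as well as} from $\beta(t')$, the sink constraint of $\Asf$ is strictly stronger than that of $\Asf'$, and your plan is to recover equality by arguing that $\Lmc_1\cap\Lmc_3$ forces $h_{\Ybf_{\beta(t')\setminus s'}|\Ubf_{\mathrm{In}(t')}}=h_{\Ybf_{\beta(t')\setminus s'}|\Ubf_{\mathrm{In}(t')\setminus s'}}$ on every point of $\Gamma_N^*\cap\Lmc_{13}$. That conditional independence is false: for two independent one-bit sources $Y_{s'},Y_{s''}$ and $U_1=Y_{s'}\oplus Y_{s''}$ one has $I(Y_{s''};Y_{s'}\,|\,U_1)=1$, so direct access to $Y_{s'}$ genuinely lowers the conditional entropy of the remaining demands on a perfectly valid point of $\Gamma_N^*\cap\Lmc_{13}$. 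The easy inclusion $\Rmc_l(\Asf)\subseteq\Rmc_l(\Asf')$ survives (conditioning reduces entropy), but the converse cannot be obtained from a pointwise identity of constraint sets and needs a genuine re-coding or projection argument; note that the paper's own one-sentence justification of \ref{D2} does not address this either. By contrast, your identically phrased argument for \ref{D13} does go through, because there $\mathrm{In}(t)\subseteq\mathrm{In}(t')$ together with the decoding constraint at $t$ makes $Y_{s'}$ an actual function of $\Ubf_{\mathrm{In}(t')\setminus s'}$ on the feasible set, which is precisely the conditional independence you need; no such functional dependence is available in \ref{D2}. A secondary caution: you should not assert wholesale that your operations commute with $\overline{\mathrm{con}(\cdot)}$ and $\mathrm{Proj}$ in the $l=c$ case, since linear images of closed cones need not be closed; for the specific cylinder and coordinate-restriction operations you use this can be verified directly, but it must be verified rather than declared.
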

\begin{IEEEproof}
In the interest of conciseness, for all but \ref{D4} and \ref{D8} we will only briefly sketch the proof for the expressions determining $\Rmc_*(\Asf')$ from $\Rmc_*(\Asf)$, as the map in the opposite direction and the other rate region bounds follow directly from parallel arguments.

\ref{D1} holds because $s'$ is not communicating with any nodes other than possibly sinks.  If there is a sink that demands it that does not have direct access to it, then this sink can not successfully receive any information from it, since $s'$ does not communicate with any intermediate nodes.  Hence, in this case $\omega_{s'}=0$ and every other rate is constrained according to $\mathcal{R}_{\ast}(\Asf)$ because the remainder of the network has no interaction with $s'$.  Alternatively, if every sink that demands $s'$ has direct access to it, any non-negative source rate can be supported for $s'$, and the remainder of the network is constrained as by $\mathcal{R}_{\ast}(\Asf)$ because no other part of the network interacts with $s'$.

\ref{D2} holds because the demand of $s'$ at sink $t'$ is trivially satisfied if it has direct access to $s'$.  The constraint has no impact on the rate region of the network.

In \ref{D3} if a source is not demanded by anyone, it can trivially support any rate. 

When two sources have exactly the same connections and are demanded by same sinks as under \ref{D4}, they can be simply viewed as a combined source for $\Rmc_{l}$ with $l\in\{c,\ast,q,o\}$, since the exact region and these bounds enable simple concatenation of sources.  Since the source entropies are variables in the rate region expression, it is equivalent to make $s$
as the combined source, which since the previous sources were independent, will have an entropy which is the sum of their entropies.  Moving from $\Rmc_{l}(\Asf')$ to $\Rmc_{l}(\Asf)$ is then accomplished for any $l \in \{c,\ast,q,(s,q), o\}$ by observing that $\Asf$ can be viewed as $\Asf'$ with $\omega_{s'} = 0 $.

An intermediate node can only utilize its input hyperedges to produce its output hyperedges, hence when two intermediate nodes have the same input edges, their encoding capabilities are identical, and thus for pursuing minimality of representation of a network, these two nodes having the same input should be represented as one node.  Thus, \ref{D5} is necessary and the merge of nodes with same input does not impact the coding on edges or the rate region, as the associated constraints $\mathcal{L}_{\Asf}= \mathcal{L}_{\Asf'}$.

If the input or output of an intermediate node is empty, as in \ref{D6} it is incapable of affecting the capacity region.  If, as in the second case covered by \ref{D6} the input to an sink node is empty, any sources which it demands can only be reliably decoded if they have zero entropy.

\ref{D7} is clear because an edge to nowhere can not effect the rest of the capacity region and is effectively unconstrained itself.

\ref{D8} can be shown as follows.  If $\Rmc_*(\Asf)$ is known and when edge $e$ in $\Asf$ is represented as two parallel edges $e,e'$ so that the network becomes  $\Asf'$, then the constraint on $e,e'$ in $\Asf'$ is simply to make sure the total capacity $R_e+R_{e'}$ can allow the information to be transmitted from the tail node to head nodes.  Simple concatenation of the messages among the two edges will achieve this for those bounds $l \in\{c,\ast,q,o\}$ allowing such concatenation.  Therefore, replace the $R_e$ in $\Rmc_l(\Asf)$ with $R_e+R_{e'}$ will obtain the rate region $\Rmc_l(\Asf')$ for any $l \in\{c,\ast,q,o\}$.  Moving from $\Rmc_l(\Asf')$ to $\Rmc_l(\Asf)$ is accomplished by recognizing that $\Asf$ is effectively $\Asf'$ with $R_e' = 0$.

Under the condition in \ref{D9}, an intermediate node $g'$ has exactly one input edge $e$ and exactly one output hyperedge $e'$, and the input $e$ is an edge (i.e. $g'$ is its only destination).  The rate coming out of this node can be no larger than the rate coming in since the single output hyperedge must be a deterministic function of the input edge.  It suffices to treat these two edges as one hyperedge connecting the tail of $e$ to the head of $e'$ with the rate the minimum of the rates on the two links.

If a sink demands a source that it does not have access to, the only way to satisfy this network constraint is the source entropy is $0$.  Hence, \ref{D10} holds.  The removal of this redundant source does not impact the rate region of the network with remaining variables.

\ref{D11}, similar to \ref{D3}, observes that two sink nodes with same input yield the exact same constraints $\mathcal{L}_{\Asf'}$ as $\mathcal{L}_{\Asf}$ with the two sink nodes merged. 

\ref{D12} is easy to understand because the decoding ability of $\beta(t)$ at sink node $t$ is implied by sink $t'$.  The non-necessary repeated decoding constraints will not affect the rate region for this network.  

\ref{D13}, similar to \ref{D12}, observes that the ability of $t$ to decode $s'$ implies that $t'$ can decode it as well, and hence, adding or removing the direct access to $s'$ at $t'$ will not affect the rate region. 

\ref{D14} is obviously true since the weakly disconnected components can not influence each others rate regions. 
\end{IEEEproof}

Fig.\,\ref{fig:minimalconditions} contains examples illustrating these reductions.  In general, we can define a minimality operator $\Asf=\mathrm{minimal}(\Asf')$ on networks, which checks the minimality conditions \ref{c1}--\ref{c14} on $\Asf'$ one by one, in the order \ref{c1}, \ref{c2}, \ref{c6}, \ref{c5}, \ref{c3}, \ref{c4}, \ref{c7}--\ref{c14}.  If any of the conditions encountered is not satisfied, the network is immediately reduced it according to the associated reduction in Theorem \ref{thm:minimality}, and the resulting reduced network is checked again for minimality by starting again at condition \ref{c1}, if needed, until all minimality conditions are satisfied.  Furthermore, define the associated rate region operator $\Rmc_*(\Asf')=\mathrm{minimal}_{\Asf' \leftarrow \Asf}(\Rmc_*(\Asf))$ which moves through each of the reduction steps applied by $\mathrm{minimal}(\Asf')$ to the network $\Asf'$ in reverse order, utilizing the expression for the rate region change under each reduction, thereby obtaining the rate region of $\Asf'$ from $\Asf$.  Accordingly, let $\Rmc_*(\Asf)=\mathrm{minimal}_{\Asf' \rightarrow \Asf}(\Rmc_*(\Asf'))$ be the rate region operator which moves through each of the reduction steps applied by $\mathrm{minimal}(\Asf')$ to the network $\Asf'$ in order, utilizing the expression for the rate region change under each reduction, thereby obtaining the rate region of $\Asf$ from $\Asf'$.  This network minimality operator and its associated rate region operators will come in use later in the paper.  However, we next discuss the enumeration of minimal networks of a particular size.

\section{Enumeration of Non-isomorphic Minimal Networks}\label{sec:enumeration}

\begin{figure}
\centering
\captionsetup{justification=centering}
\includegraphics[scale=0.5]{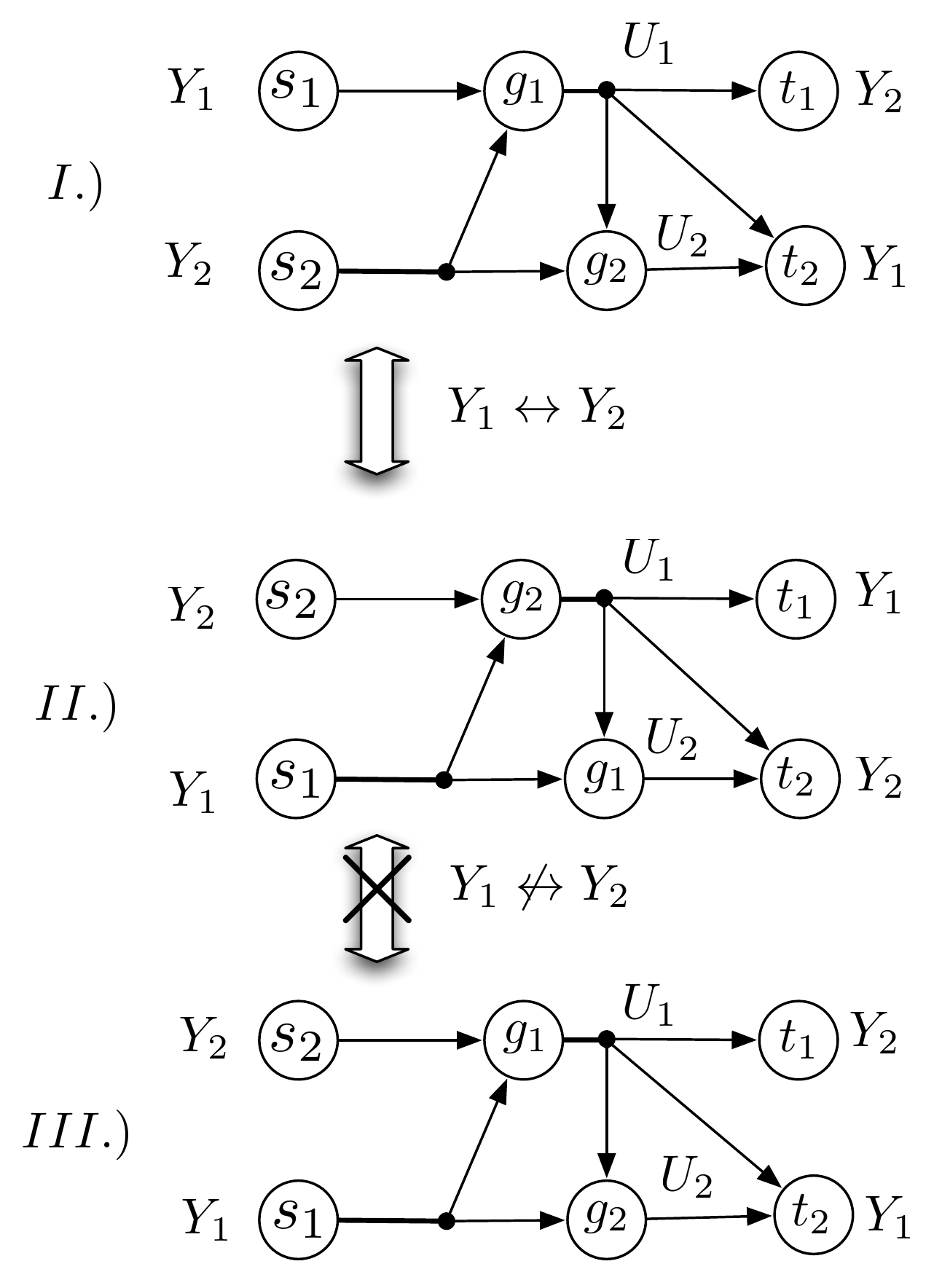}
\caption{\label{fig:isodemo} A demonstration of the equivalence between network coding problems via isomorphism: networks $I$ and $II$ are equivalent because $II$ can be obtained by permuting $Y_1,Y_2$.  However, network $III$ is not equivalent to $I$ or $II$, because the demands at the sinks do not reflect the same permutation of $Y_1,Y_2$ as is necessary on the source side.}
\end{figure}

Even though the notion of network minimality (\S \ref{sec:minimality}) reduces the set of network coding problem instances by removing parts of a network coding problem which are inessential, much more needs to be done to group network coding problem instances into appropriate equivalence classes.  Although we have to use label sets to describe the edges and sources in order to specify a network coding problem instance (identifying a certain source as source number one, another as source number two, and so on), it is clear that the essence of the underlying network coding problem is insensitive to these labels.  For instance, it is intuitively clear that the first two problems depicted in Fig. \ref{fig:isodemo} should be equivalent even though their labeled descriptions differ, while the third problem should not be considered equivalent to the first two.

In a certain sense, having to label network coding problems in order to completely specify them obstructs our ability to work efficiently with a class of problems.  This is because one unlabeled network coding problem equivalence class typically consists of many labeled network coding problems.  In principle, we could go about investigating the unlabeled problems by exhaustively listing labeled network coding problem obeying the minimality constraints, testing for equivalence under relabeling of the source and node or edge indices, and grouping them together into equivalence classes.   

However, listing networks by generating all variants of the labeled encoding  becomes infeasible rapidly as the problem grows because of the large number of labeled networks in each equivalence class.
As a more feasible alternative, it is desirable to find a method for directly  cataloguing all (unlabeled) network coding problem equivalence classes by generating exactly one representative from each equivalence class directly, without isomorphism (equivalence) testing.

In order to develop such a method, and to explain the connection between its solution and other isomorphism-free exhaustive generation problems of a similar ilk, in this section we first formalize a concise method of encoding a network coding problem instance in \S\ref{net:probInstance}.  With this encoding in hand, in \S \ref{sec:netEquivGroupAct}, the notion of equivalence classes for network coding problem instances can be made precise as orbits in this labeled problem space under an appropriate group action.  The generic algorithm \emph{Leiterspiel} \cite{Schmalz_Bayreuth_92,betten2006error}, for computing orbits within the power set of subsets of some set $\mathcal{X}$ on which a group $\groupG$ acts, can then be applied, together with some other standard orbit computation techniques in computational group theory \cite{Permlib,GAP}, in order to provide the desired non-isomorphic network coding problem list generation method in \S\ref{net:nonisoalg}.

\subsection{Encoding a Network Coding Problem}\label{net:probInstance}
Though, as is consistent with the network coding literature, we have thus far utilized a tuple $\Asf=(\Smc,\Gmc,\Tmc,\Emc,\beta)$ to represent a network instance, this encoding proves to be insufficiently parsimonious to enable easy identification of equivalence classes.  As will be discussed later, the commonly used node representation of a network, a key component of the $\Asf=(\Smc,\Gmc,\Tmc,\Emc,\beta)$ encoding, unnecessarily increases the complexity of enumeration.  
Hence, we represent a network instance in an alternate way for enumeration.  Specifically, a network instance with $K$ sources and $L$ edges that obeys the minimality conditions (\textbf{C1}-\textbf{C14}) is encoded as an ordered pair $(\edgeDefSet,\sinkDefSet)$ consisting of a set $\edgeDefSet$ of edge definitions $\edgeDefSet \subseteq \{(i,\mathcal{A}) | i \in \{K+1,\ldots,K+L\},$ $\ \mathcal{A} \subseteq \{1,\ldots,K+L\} \setminus \{i\}, \ |\mathcal{A} | > 0 \}$, and a set $\sinkDefSet$ of sink definitions $\sinkDefSet \subseteq \left\{ (i,\mathcal{A}) | i \in \{1,\ldots, K\}, \ \mathcal{A} \subseteq \{1,\ldots,K+L\} \setminus \{i\} \right\}$.  
Here, the sources are associated with labels $\{1,...,K\}$ and the edges are associated with labels $\{K+1,\ldots,K+L\}$.  Each $(i,\mathcal{A}) \in \edgeDefSet$ indicates that the edge $i\in \Emc_U$ is encoded exclusively from the sources and edges in $\mathcal{A}$, and hence represents the information that $\mathcal{A}=\textrm{In}(\textrm{Tl}(i))$.  Furthermore, each sink definition $(i,\mathcal{A}) \in \sinkDefSet$ represents the information that there is a sink node whose inputs are $\mathcal{A}$ and which decodes source $i$ as its output.  Note that there are $L$ non-source edges in the network, each of which must have some input according to condition \ref{c6}.  We additionally have the requirement that $|\edgeDefSet| = L$, and, to ensure that no edge is multiply defined, we must have that if $(i,\mathcal{A})$ and $(i',\mathcal{A'})$ are two different elements in $\edgeDefSet$, then $i\neq i'$.  As the same source may be decoded at multiple sinks, there is no such requirement for $\sinkDefSet$.

As is illustrated in Figures \ref{fig:isoexamplewosymmetry} and \ref{fig:isoexamplewsymmetry}, this edge-based definition of the directed hypergraph included in a network coding problem instance can provide a more parsimonious representation than a node-based representation, and as every edge in the network for a network coding problem is associated with a random variable, this representation maps more easily to the entropic constraints than the node representation of the directed acyclic hypergraph does.  Additionally it is beneficial because it is guaranteed to obey several of the key minimality constraints.  In particular, the representation ensures that there are no redundant nodes \ref{c5}, \ref{c11}, since the intermediate nodes are associated directly the elements of the set $\{ \mathcal{A} | \ \exists i,\ (i,\mathcal{A}) \in \edgeDefSet \}$ and the sink nodes are associated directly with $\{ \mathcal{A} | \ \exists i,\ (i,\mathcal{A}) \in \sinkDefSet \}$.  Representing $\edgeDefSet$ as a set (rather than a multi-set) also ensures that \ref{c8} is always obeyed, since such a parallel edge would be a repeated element in $\edgeDefSet$.

\subsection{Expressing Network Equivalence with a Group Action}  \label{sec:netEquivGroupAct}
Another benefit of the representation of the network coding problem as the ordered pair $(\edgeDefSet,\sinkDefSet)$ is that it enables the notion of network isomorphism to be appropriately defined.  In particular, let $\groupG := S_{\{1,2,\ldots,K\}} \times S_{\{K+1,\ldots,K+L\}}$ be the direct product of the symmetric group of all permutations of the set $\{1,2,\ldots,K\}$ of source indices and the symmetric group of all permutations of the set $\{K+1,\ldots,K+L\}$ of edge indices.  The group $\groupG$ acts in a natural manner on the elements of the sets $\edgeDefSet,\sinkDefSet$ of edge and sink definitions.  In particular, let $\pi \in \groupG$ be a permutation in $\groupG$, then the group action maps
\begin{equation}
\pi ( (i,\mathcal{A}) ) \mapsto (\pi(i),\pi(\mathcal{A}))
\end{equation}
with the usual interpretation that $\pi(\mathcal{A}) = \{ \pi (j) | j\in\mathcal{A}\}$.  This action extends to an action on the sets $\edgeDefSet$ and $\sinkDefSet$ in the natural manner
\begin{equation}
\pi( \edgeDefSet ) \mapsto \left\{ \pi( (i,\mathcal{A}) ) | (i,\mathcal{A}) \in \edgeDefSet \right\}.
\end{equation}
This action then extends further still to an action on the network $(\edgeDefSet,\sinkDefSet)$ via
\begin{equation}
\pi ( (\edgeDefSet,\sinkDefSet ) ) = (\pi (\edgeDefSet), \pi(\sinkDefSet) ).
\end{equation}

Two networks $(\edgeDefSet_1,\sinkDefSet_1)$ and $(\edgeDefSet_2,\sinkDefSet_2)$ are said to be \emph{isomorphic}, or in the same equivalence class, if there is some permutation of $ \pi \in \groupG$ such that $\pi((\edgeDefSet_1,\sinkDefSet_1)) = (\edgeDefSet_2,\sinkDefSet_2)$.  In the language of group actions, two such pairs are isomorphic if they are in the same orbit under the group action, i.e. if $(\edgeDefSet_2,\sinkDefSet_2) \in \left\{ \pi((\edgeDefSet_1,\sinkDefSet_1)) \left| \pi \in \groupG \right. \right\}=: \orbit_{(\edgeDefSet_1,\sinkDefSet_1)}$.  In other words, the equivalence classes of networks are identified with the orbits  in the set of all valid minimal problem description pairs $(\edgeDefSet,\sinkDefSet)$ under the action of $\groupG$. 

We elect to represent each equivalence class with its \emph{canonical network}, which is the element in each orbit that is least in a lexicographic sense.  Note that this lexicographic (i.e., dictionary) order is well-defined, as we can compare two subsets $\mathcal{A}$ and $\mathcal{A}'$ by viewing their members in increasing order (under the usual ordering of the integers $\{1,\ldots,L+K\}$) and lexicographically comparing them.  This then implies that we can lexicographically order the ordered pairs $(i,\mathcal{A})$ according to $(i,\mathcal{A}) > (j,\mathcal{A}')$ if $j< i$ or $i=j$ and $\mathcal{A}' < \mathcal{A}$ under this lexicographic ordering.  Since the elements of $\edgeDefSet$ and $\sinkDefSet$ are of the form $(i,\mathcal{A})$, this in turn means that they can be ordered in increasing order, and then also lexicographically compared, enabling comparison of two edge definition sets $\edgeDefSet$ and $\edgeDefSet'$ or two sink definition sets $\sinkDefSet$ and $\sinkDefSet'$.  Finally, one can then use these orderings to define the lexicographic order on the network ordered pairs $(\edgeDefSet,\sinkDefSet)$.  The element in an orbit $\orbit_{(\edgeDefSet,\sinkDefSet)}$ which is minimal under this lexicographic ordering will be the canonical representative for the orbit.

A key basic result in the theory of group actions, the \emph{Orbit Stabilizer Theorem}, states that the number of elements in an orbit, which in our problem is the number of networks that are isomorphic to a given network, is equal to the ratio of the size of the acting group $\groupG$ and its stabilizer subgroup $\groupG_{(\edgeDefSet,\sinkDefSet)}$ of any element selected from the orbit:
\begin{equation}
\left| \left\{\pi((\edgeDefSet,\sinkDefSet)) \left| \pi \in \groupG \right. \right\} \right| = \left| \orbit_{(\edgeDefSet,\sinkDefSet)} \right| =\frac{\left|\groupG\right|}{\left|\groupG_{(\edgeDefSet,\sinkDefSet)}\right|}, \quad 
\groupG_{(\edgeDefSet,\sinkDefSet)} := \left\{ \pi \in \groupG \left| \pi((\edgeDefSet,\sinkDefSet))=(\edgeDefSet,\sinkDefSet) \right. \right\}
\end{equation}
Note that, because it leaves the sets of edges, decoder demands, and topology constraints set-wise invariant, the elements of the stabilizer subgroup $\groupG_{(\edgeDefSet,\sinkDefSet)}$ also leave the set of rate region constraints (\ref{eq:rrcondef1}), (\ref{eq:rrcondef2}), (\ref{eq:Lratefree}), (\ref{eq:rrcondef4}) invariant.  Such a group of permutations on sources and edges is called the \emph{network symmetry group}, and is the subject of a separate investigation \cite{JayantISIT2015,JayantNetCod2015}.  This network symmetry group plays a role in the present study because, as depicted in Figures  \ref{fig:isoexamplewosymmetry} and \ref{fig:isoexamplewsymmetry}, by the orbit stabilizer theorem mentioned above, it determines the number of networks equivalent to a given canonical network (the representative we will select from the orbit).

In particular, Fig. \ref{fig:isoexamplewosymmetry} shows the orbit of a $(2,2)$ network $(\edgeDefSet,\sinkDefSet)$ whose stabilizer subgroup (i.e., network symmetry group) is simply the identity, and hence has only one element. In this instance, the number of isomorphic labeled network coding problems in this equivalence class is then $|\groupG|=\left| \symmGroup_{\{1,2\} } \times \symmGroup_{\{3,4\}} \right| = 4$ in the edge representation, as shown at the left.  Even this tiny example demonstrates well the benefits of encoding a network coding problem via the more parsimonious representation $(\edgeDefSet,\sinkDefSet)$ vs. the encoding via the node representation hypergraph $(\Vmc,\Emc)$ and the sink demands $\beta(\cdot)$.  Namely, because the size of the group acting on the node representation is $|\symmGroup_{\{a,b\}} \times \symmGroup_{\{c,d,e,f,g \} } | = 240$, and, as the stabilizer subgroup in the node representation has the same order ($1$), the number of isomorphic networks represented in the node based representation is $240$.

By contrast, Fig. \ref{fig:isoexamplewsymmetry} shows the orbit of a $(2,2)$ network $(\edgeDefSet,\sinkDefSet)$ whose stabilizer subgroup (i.e., network symmetry group) is the largest possible among $(2,2)$ networks, and has order $4$.  In this instance, the number of isomorphic labeled network coding problems in this equivalence class is $\frac{|\groupG|}{|\groupG_{(\edgeDefSet,\sinkDefSet)} |}= 1$ in the edge representation.  The stabilizer subgroup in the node representation has generators $\langle\{ (a,b)(d,f)(e,g)\}, \{(d,e)(f,g)\} \rangle$, which has the same order of $4$, and hence there are $\frac{240}{4} =60$ isomorphic network coding problems to this one in the node representation.

\begin{figure}
\centering
\captionsetup{justification=centering}
\subfloat [\label{fig:isoexamplewosymmetry} All isomorphisms of a $(2,2)$ network with empty symmetry group]{\includegraphics[width=.9\textwidth]{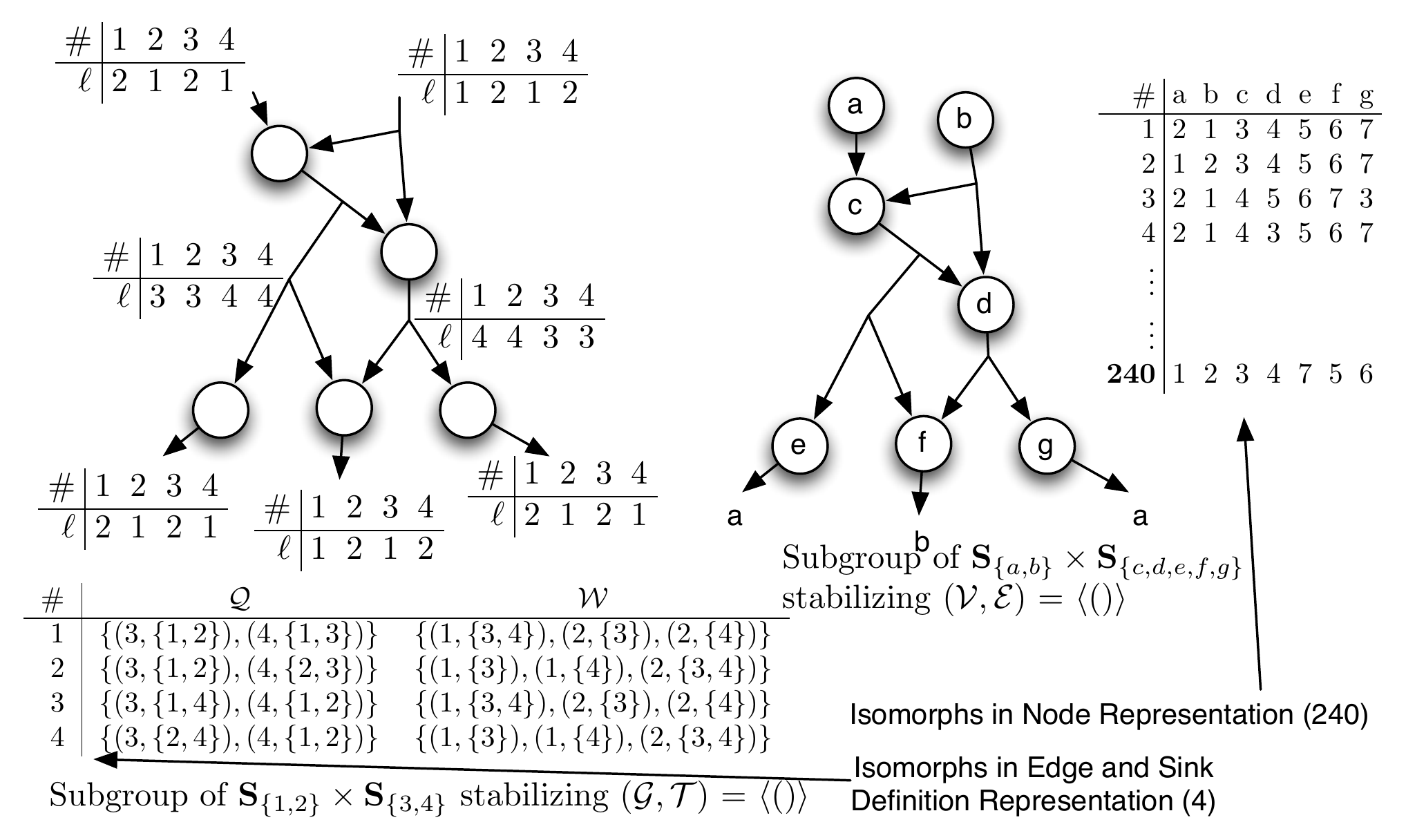}}

\subfloat [\label{fig:isoexamplewsymmetry} All isomorphisms of a $(2,2)$ network with full symmetry group]{\includegraphics[width=.9\textwidth]{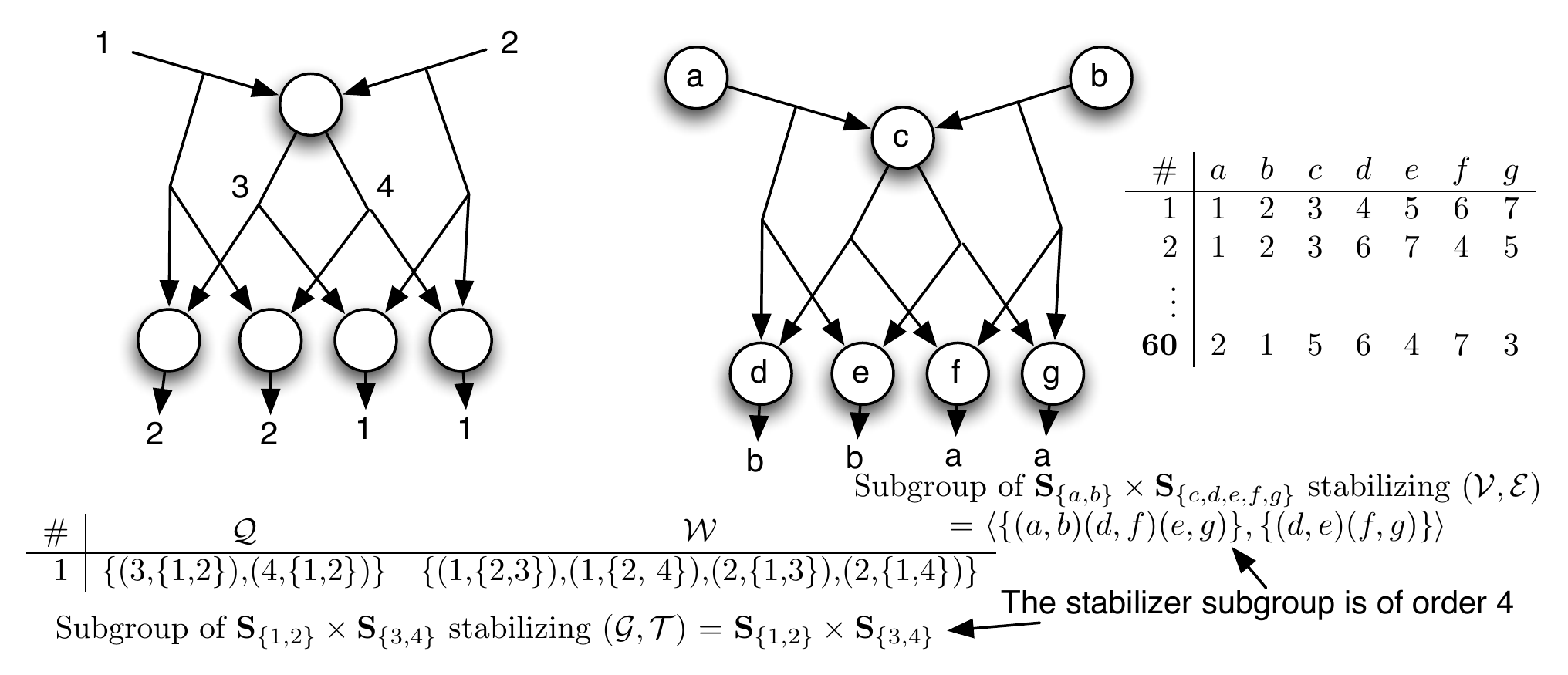}}
\caption{ Examples of $(2,2)$ networks with all edge isomorphisms (left) and all node isomorphisms (right).  The instance indices are marked by \# and the labels are marked by $l$.}
\end{figure}


\subsection{Network Enumeration/Listing Algorithm}\label{net:nonisoalg}
Formalizing the notion of a canonical network via group actions on the set of minimal $(\edgeDefSet,\sinkDefSet)$ pairs enables one to partly develop a method for directly listing canonical networks based on techniques from computational group theory.

To solve this problem we can harness the algorithm \emph{Leiterspiel}, loosely translated \emph{snakes and ladders} \cite{betten2006error,Schmalz_Bayreuth_92}, which, given an algorithm for computing canonical representatives of orbits, i.e., transversal, on some finite set $\mathcal{X}$ under a group $\groupG$ and its subgroups, provides a method for computing the orbits on the power set $\mathcal{P}_i(\mathcal{X}) = \left\{ \mathcal{B} \subseteq \mathcal{X} |\ |\mathcal{B}| = i \right\}$ of subsets from $\mathcal{X}$ of cardinality $i$, incrementally in $i$.  In fact, the algorithm can also list directly only those canonical representatives of orbits for which some test function $f$ returns $1$, provided that the test function has the property that any subset of a set with $f=1$ also has $f=1$.  This test function is useful for only listing those subsets in $\mathcal{P}_i(\Xmc)$ with a desired set of properties, provided these properties are inherited by subsets of a superset with that property.

To see how to apply and modify Leiterspiel for network coding problem enumeration, let $\mathcal{X}$ be the set of possible edge definitions
\begin{equation}
\mathcal{X} := \left\{ (i,\mathcal{A}) \left| i \in\{K+1,\ldots,K+L\}, \ \mathcal{A} \subseteq \{1,\ldots,K+L\} \setminus \{i\} \right.\right\}
\end{equation}

For small to moderately sized networks, the orbits in $\mathcal{X}$ from $\groupG$ and its subgroups can be readily computed with modern computational group theory packages such as GAP \cite{GAP} or PERMLIB \cite{Permlib}.  Leiterspiel can be applied to first calculate the non-isomorphic candidates for the edge definition set $\edgeDefSet$, as it is a subset of $\mathcal{X}$ with cardinality $L$ obeying certain conditions associated with the definition of a network coding problem and its minimality (c.f. \textbf{C1}--\textbf{C14}).  Next, for each non-isomorphic edge-definition $\edgeDefSet$, a list of non-isomorphic sink-definitions $\mathcal{A}$, also constrained to obey problem definition and minimality conditions (\textbf{C1}--\textbf{C14}), can be created with a second application of Leiterspiel.  The pseudo-code for the resulting generation/enumeration is provided in Alg. \ref{alg:networkEnumeration}

\begin{algorithm}
\SetAlgoLined
 \KwIn{number of sources $K$, number of non-source edges $L$} 
 \KwOut{All non-isomorphic network instances $\Zmc$}
 \textbf{Initialization:} $\Zmc=\emptyset$\;
Let $\Xmc:=\left\{ (i,\mathcal{A}) \left| i \in\{K+1,\ldots,K+L\}, \ \mathcal{A} \subseteq \{1,\ldots,K+L\} \setminus \{i\} \right.\right\}$\;
 Let $f_1$ be the condition that $\nexists (i,\Amc),(i',\Amc')$ such that $i=i'$\;
 Let $f_2$ be the condition of acyclicity\;
Let acting group $\groupG:=\symmGroup_{\{1,\ldots,K\}}\times \symmGroup_{\{K+1,\ldots,K+L\}}$\;
Call Leiterspiel algorithm to incrementally get all candidate transversal up to $L$:
 $T_{L}=Leiterspiel(\groupG,\Pmc_{L}^{f_1,f_2}(\Xmc))$\;
 \For{each $\edgeDefSet\in T_{L}$}{
 \If{$\edgeDefSet$ obeys (\textbf{C1})}{
 Let $\mathcal{X}' := \{ (i,\mathcal{A}) |i\in\{1,\ldots,K\}, \Amc\subseteq \{1,\ldots,K+L\}\setminus \{i\}, \exists$ a directed path in  $\edgeDefSet$  from $i$ to at least one edge in $\mathcal{A}\}$\;
 Let $f'_1$ be the condition (\textbf{C12})\;
 Let $f'_2$ be the condition (\textbf{C13})\;
Let acting group $\groupG:=\symmGroup_{\{1,\ldots,K\}}\times \symmGroup_{\{K+1,\ldots,K+L\}}$\;
Call Leiterspiel algorithm to incrementally get all candidate canonical representatives, i.e., transversals, up to no new element can be added obeying (\textbf{C12},\textbf{C13}):
 $T_{K}=Leiterspiel(\groupG,\Pmc_{K}^{f'_1,f'_2}(\Xmc'))$\;
 \For{each $\sinkDefSet\in T_{K}$}{
 \If{$(\edgeDefSet,\sinkDefSet)$ obeys (\textbf{C3}--\textbf{C7}) and (\textbf{C14}) }{
$\Zmc=\Zmc\cup (\edgeDefSet,\sinkDefSet)$\; 
 }
 }
 }
}

\caption{Enumerate all non-isomorphic $(K,L)$ networks using Leiterspiel algorithm.}
\label{alg:networkEnumeration}
\end{algorithm}

As outlined above, in the first stage of the enumeration/generation algorithm, Leiterspiel is applied to grow subsets from $\mathcal{X}$ of size $i$ incrementally in $i$ until $i=L$.  Some of the network conditions have the appropriate inheritance properties, and hence can be incorporated as constraints into the constraint function $f$ in the Leiterspiel process. These include
\begin{itemize}
\item \textbf{no repeated edge definitions}: If $\mathcal{B}\subseteq \mathcal{C} \subseteq \mathcal{X}$ and $\mathcal{C}$ has the property that no two of its edge definitions $(i,\mathcal{A})$ and $(i',\mathcal{A}')$ have $i=i'$, then so does $\mathcal{B}$.  Hence, the constraint function $f$ in the first application of Leiterspiel incorporates checks to ensure that no two edge definitions in the candidate subset define the same edge.
\item \textbf{acyclicity}: If $\mathcal{B}\subseteq \mathcal{C} \subseteq \mathcal{X}$ and $\mathcal{C}$ is associated with an acyclic hyper graph, then so is $\mathcal{B}$.  Hence, the constraint function $f$ in the first application of Leiterspiel checks to determine if the subset in question is acyclic.
\end{itemize}
At the end of this first Leiterspiel process, some more canonical edge definition sets $\edgeDefSet$ can be ruled as non-minimal owing to (\textbf{C1}), requiring that each source appears in the definition of at least one edge variable.

For each member of the resulting narrowed list of canonical edge definition sets $\edgeDefSet$, we must then build a list of canonical representative sink definitions $\sinkDefSet$.  This is done by first creating the ($\edgeDefSet$-dependent) set of \emph{valid} sink definitions
\begin{equation}
\mathcal{X}' := \left\{ (i,\mathcal{A}) \left| \exists \ \textrm{a directed path in } \edgeDefSet\ \textrm{ from}\ i \ \textrm{to at least one edge in }\ \mathcal{A} \right.\right\}
\end{equation}
which are crafted to obey the minimality conditions (\textbf{C10}) that the created sink (defined by its input which is the set of sources and edges in $\mathcal{A}$ in the sink definition $(i,\mathcal{A})$) must have at least one path in the hyper graph defined by $\edgeDefSet$ to the source $i$ it is demanding, and (\textbf{C2}) that is can not have a direct connection to the source it is demanding.

Leiterspiel is then applied to determine canonical (lexicographically minimal) representatives of sink definition sets $\sinkDefSet$, utilizing the associated stabilizer of the canonical edge definition set $\edgeDefSet$ being extended as the group, with the test function $f$ handling the minimality conditions \ref{c12} and \ref{c13}, during the iterations.

This second application of Leiterspiel to determine the list of canonical sink definition sets $\sinkDefSet$ for each canonical edge definition set $\edgeDefSet$ does not have a definite cap on the cardinality of each of the canonical sink definition sets $\sinkDefSet$.  Rather, subsets of all sizes are determined incrementally until there is no longer any canonical subset that can obey the constraint function associated with (\textbf{C12}) and (\textbf{C13}).  Each of the candidate canonical sink definition sets $\sinkDefSet$ (of all different cardinalities) are then tested together with $\edgeDefSet$ with the remaining conditions, which do not have the inheritance property necessary for incorporation as constraints earlier in the two stages of Leiterspiel processing.

Any pair of canonical $(\edgeDefSet,\sinkDefSet)$ surviving each of these checks is then added to the list of canonical minimal non-isomorphic network coding problem instances.

An additional pleasant side effect of the enumeration is that the stabilizer subgroups, i.e., the network symmetry groups \cite{JayantISIT2015}, are directly provided by the second Leiterspiel.  Harnessing these network symmetry groups provides a powerful technique to reduce the complex process of calculating the rate region for a network coding problem instance \cite{JayantNetCod2015}.

Although this method directly generates the canonical representatives from the network coding problem equivalence classes without ever listing other isomorphs within these classes, one can also use the stabilizer subgroups provided by Leiterspiel to directly enumerate the sizes of these equivalence classes of $(\edgeDefSet,\sinkDefSet)$ pairs, as described above via the orbit stabilizer theorem.  Experiments summarized in Table \ref{tab:networkEnum} show that the number of isomorphic cases is substantially larger than the number of canonical representives/equivalence classes, and hence the extra effort to directly list only canonical networks is worthwhile.  It is also worth noting that a node representation, utilizing a node based encoding of the hyper edges, would yield a substantially higher number of isomorphs.

\subsection{Modification to Other Problem Types}
A final point worth noting is that this algorithm is readily modified to handle listing canonical representatives of special network coding problem families contained within our general model, as described in \S \ref{sec:specialclasses}.  For instance, IDSC problems can be enumerated by simply defining $\edgeDefSet$ to have each edge access all of the sources and no other edges, then continuing with the subsequent sink enumeration process.  It is also easily adapted to enumerate only directed edges and match the more restrictive constraints described in the original Yan, Yeung, and Zhang \cite{YanYeungTranIT2012} rate region paper.

\subsection{Enumeration Results for Networks with Different Sizes}
By using our enumeration tool with an implementation of the algorithms above, we obtained the list of canonical minimal network instances for different network coding problem sizes with $N=K+L\leq 5$.
While the whole list is available \cite{CongduanTranIT2015data}, we give the numbers of network problem instances in Table \ref{tab:networkEnum}, where $|\Zmc|, |\hat{\Zmc}|,|\hat{\Zmc}_n|$ represent the number of canonical network coding problems (i.e., the number of equivalence classes), the number of edge descriptions of network coding problems including symmetries/equivalences, and the number of node descriptions of network coding problems including the symmetries/equivalences, respectively.  As we can see from the table, the number of possibilities in the node representation of the network coding problems explodes very quickly, with the more than 2 trillion labeled node network coding problems covered by the study only necessitating a list of consisting of roughly 750,000 equivalence classes of network coding problems.  That said, it is also important to note that the number of non-isomorphic network instances increases exponentially fast as network size grows.  For instance, the number of non-isomorphic general network instances grows from $333$ to $485,890$ (roughly, an increase of about $1500$ times), when the network size grows from $(2,2)$ to $(2,3)$.  To provide an illustration of the variety of networks that are encountered, Fig. \ref{fig:all22networks} depicts all $46$ of the $333$ canonical minimal network coding problems of size $(2,2)$ obeying the extra constraint that no sink has direct access to a source. 

\begin{figure}
\centering \includegraphics[width=\textwidth]{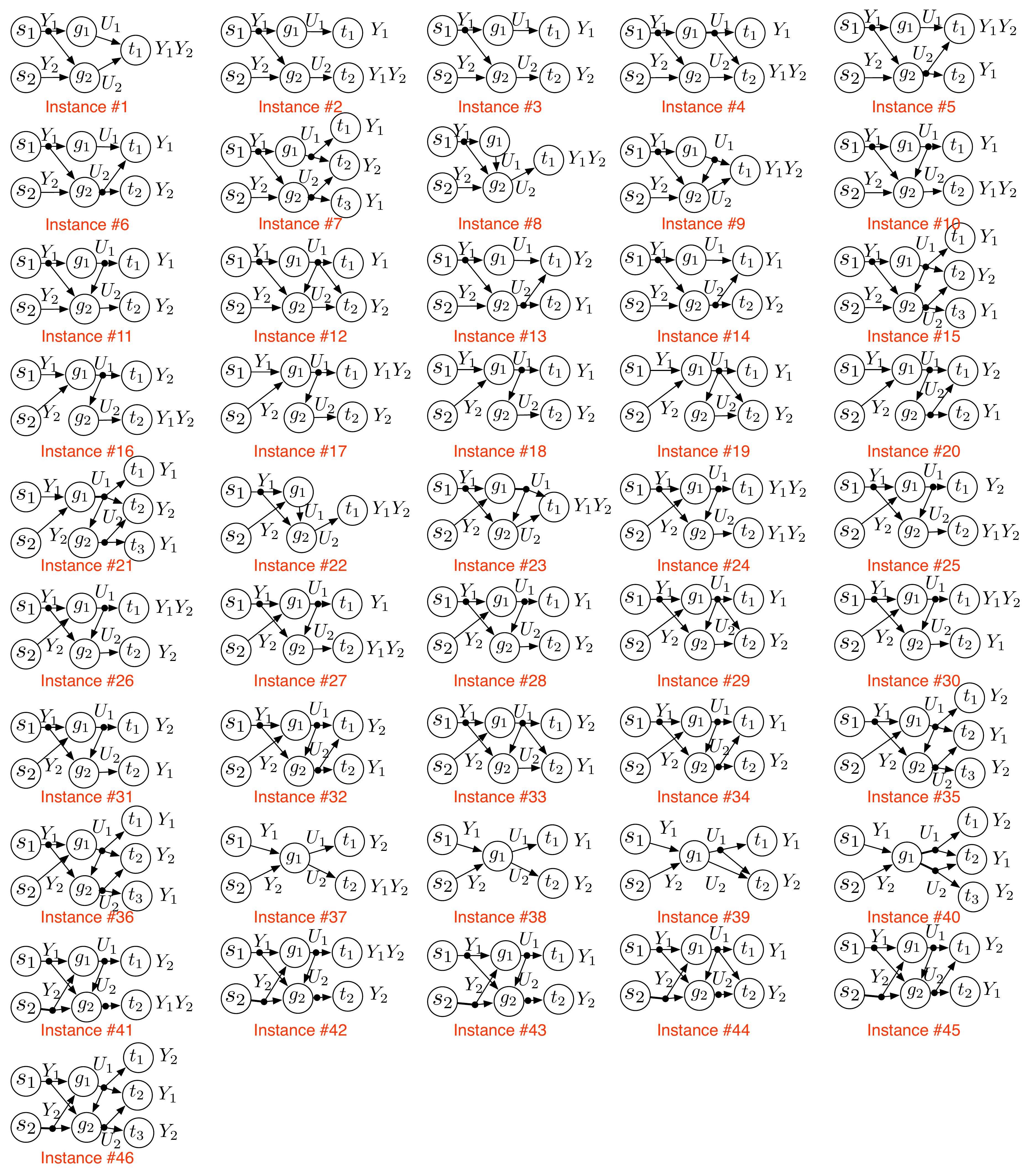} \caption{All 46 non-isomorphic network instances of $(2,2)$ networks with the constraint that sinks do not have direct access to sources.}
\label{fig:all22networks} 
\end{figure}

\begin{table}
\caption{Number of network coding problems of different sizes: $|\Zmc|$ represents the number of non-isomorphic networks, $|\hat{\Zmc}|$ represents the number of isomorphic networks with edge isomorphism, and $|\hat{\Zmc}_n|$ represents the number of isomorphic networks with node isomorphism.}\label{tab:networkEnum}
\centering
\begin{tabular}{|c|c|c|c|}
\hline
$(K,L)$ & $|\Zmc|$  & $|\hat{\Zmc}|$ &$|\hat{\Zmc}_n|$\\ \hline
(1,2) & 4 &7 &39  \\ \hline
(1,3) & 132 & 749 & 18\ 401 \\ \hline
(1,4)& 18027& 420948& 600\ 067\ 643\\ \hline
(2,1) & 1 &1 & 6 \\ \hline
(2,2) & 333 &1\ 270 & 163\ 800 \\ \hline
(2,3) & 485\ 890 & 5\ 787\ 074 & 2\ 204\ 574\ 267\ 764\\ \hline
(3,1) & 9  & 31  &582\\ \hline
 (3,2) & 239\ 187 & 2\ 829\ 932 &176\ 437\ 964\ 418\\ \hline
(4,1) &  536 & 10478& 12\ 149\ 472\\ \hline
Total & 744\ 119& 9\ 050\ 490 & 2\ 381\ 624\ 632\ 119\\ \hline
\end{tabular}

\end{table}

As a special class of hyperedge multi-source network coding problems, it is easier to enumerate IDSC networks, defined in Example \ref{ex:IDSC} in \S\ref{sec:model}.  Since we assume that all encoders in IDSC have access to all sources, we only need to consider the configurations at the decoders, which additionally are only afforded access to edges from intermediate nodes.  These extra constraints are easily incorporated into Algorithm \ref{alg:networkEnumeration} by removing the edge definitions, restricting to the unique one associated with the IDSC problems, and enumerating exclusively the sink definitions.

We give the enumeration results for $K=2,3$ and $L=2,3$ IDSC networks in Table \ref{IDSClist}, while the full list is available in \cite{CongduanIDSCfile}.  From the table we see that, even for this special type of network, the number of non-isomorphic instances grows very quickly.  For instance, the number of non-isomorphic IDSC instances grows from $33$ to $179$ (roughly, a factor of 6 increase), when the network size grows from $(2,3)$ to $(3,3)$.

\begin{table}
\caption{\label{IDSClist}List of numbers of IDSC configurations.  $|\Zmc_n'|$ is the number of configurations in the node representation including isomorphisms, $|\Zmc'|$ is the number of configurations in the edge representation including isomorphisms, and $|\Zmc|$ is the number of all non-isomorphic configurations.}\vspace{-.3cm}
\begin{center}
\begin{tabular}{|c|c|c|c|c|c|c|}
\hline
$(K,L)$& \multicolumn{3}{c|}{$2$} & \multicolumn{3}{c|}{$3$} \\ \cline{2-7}
 & $|\Zmc_n'|$ & $|\Zmc'|$ & $|\Zmc|$ &  $|\Zmc_n'|$& $|\Zmc'|$ & $|\Zmc|$\\ \hline
$2$ & 54 & 12 & 4 &4970 & 234& 33  \\ \hline
$3$ & 234 & 24& 3 & 443130 & 4752 & 179\\ \hline
\end{tabular}
\end{center}
\end{table}

\section{Rate Region Results for Small Networks}\label{sec:resultssmall}
With the list of minimal canonical network coding problems provided by the algorithm in the previous section in hand, the next step in our computational agenda was to determine each of their rate regions with computational tools.  In this section, we describe a database we have created which contains the exact regions of all general networks with sizes $N=K+L \leq 5$ and all IDSC networks with sizes $K=2,3$ and $L=2,3$.

\subsection{Database of Rate Regions for all networks of size $N=K+L\leq 5$}

\begin{table}
\caption{\label{tab:resultsgeneral}Sufficiency of codes for network instances: Columns 3--8 show the number of instances that the rate region inner bounds match with the Shannon outer bound. } 
\begin{center}
\begin{tabular}{|c|c|c|c|c|c|c|c|}
\hline
$(K,L)$ &$|\Zmc|$& $\Rmc_{s,2}(\Asf)$ & $\Rmc_{2}^{N+1}(\Asf)$ & $\Rmc_{2}^{N+2}(\Asf)$ & $\Rmc_{2}^{N+3}(\Asf)$& $\Rmc_{2}^{N+4}(\Asf)$ & $\Rmc_{{\rm linear}}^N$\\ \hline
$(1,2)$ & 4 & 4 & 4 & 4 & 4& 4 & 4\\ \hline
$(1,3)$ & 132 & 122 & 132 & 132 & 132 & 132 & 132\\ \hline
$(1,4)$ & 18027 & 13386 &16930 & 17697 & 17928 & 17928& 18027\\ \hline
$(2,1)$ & 1 & 1 & 1 & 1 & 1& 1 & 1\\ \hline
$(2,2)$ & 333 & 301 & 319  & 323 &323 & 333 & 333\\ \hline
$(2,3)$ & 485890 & 341406 & 403883 & 432872 & 434545 & -- & 485890 \\ \hline
$(3,1)$ & 9 & 4 & 4 & 9 & 9 & 9 & 9\\ \hline
$(3,2)$ & 239187 & 118133 & 168761 & 202130 & 211417& -- & 239187\\ \hline
$(4,1)$ & 536 & 99 & 230 & 235 & 476& 476& 536\\ \hline
Total: & 744119 & 473456 & 590264 & 653403 & 664835 & -- & 744119\\ \hline
\end{tabular}
\end{center}
\end{table}

We begin by describing the experimental results we obtained by running our rate region computation software on all general hyperedge network instances of size $N=K+L \leq 5$.   These problems consist of $744,119$ canonical minimal networks, representing $9,050,490$ networks in the edge $(\mathcal{Q},\mathcal{W})$ encoding and $2\ 381\ 624\ 632\ 119$ networks in the standard node representation, as indicated in Table \ref{tab:networkEnum}.  For each non-isomorphic network instance, we calculated several bounds on its rate region: the Shannon outer bound $\Rmc_o$,  the scalar binary representable matroid inner bound $\Rmc_{s,2}$, the vector binary representable matroid inner bounds $\Rmc_{2}^{N+1},\ldots,\Rmc_{2}^{N+4}$, and linear inner bound $\Rmc_{{\rm linear}}^N$.  As indicated in \S\ref{sec:model}, if the outer bound on the rate region matches with an inner bound, we not only obtain the exact rate region, but also know the codes that suffice to achieve any point in it.  The general code constructions from representable matroids follow a similar process in \cite{CongduanNetCod2013,CongduanTranIT2014}, where rate regions and achieving codes are investigated for MDCS.

Though it is infeasible to list each of the $744,119$  rate regions in this paper, a summary of results on the matches of various bounds is shown in Table \ref{tab:resultsgeneral}.  The full list of rate region bounds can be obtained at \cite{CongduanTranIT2015data} and can be re-derived using \cite{EntVecSoft}.

Several key observations kay be made from Table \ref{tab:resultsgeneral}.  First of all the Shannon outer bound is proved to be tight for all networks of size $N=K+L \leq 5$.  Additionally, the results show that linear codes are sufficient to exhaust the entire capacity region for all of them, as indicated in column $2$ and $8$ in Table \ref{tab:resultsgeneral}.  Furthermore, we investigate the number of networks whose rate regions are achievable by simple linear codes, e.g., binary codes (columns 3--7 in Table \ref{tab:resultsgeneral}), and find that simple binary codes are capable of exhausting most of the capacity regions.  

For all $(1,2)$ and $(2,1)$ networks, scalar binary codes suffice.  However, this is not true in general even when there are only one or two edge variables. For example, there are some instances in $(3,1)$, $(4,1)$, $(2,2)$ and $(3,2)$ networks for which scalar binary codes do not suffice.   As we can see from  Table \ref{tab:resultsgeneral}, as the vector binary inner bounds get tighter and tighter (i.e., as we move to the right in Table \ref{tab:resultsgeneral}), the exact rate region is established for more and more instances.  That is, with tighter and tighter binary inner bounds, more and more instances are found for which binary codes suffice.  

In order to provide a sample of the sorts of results available in the database \cite{CongduanTranIT2015data}, the following example shows the various inner bounds on the rate region of a representative $(3,3)$ problem. 
\begin{example} \label{ex:33network} 




A 3-source 3-encoder hyperedge network instance $\Asf$ with block diagram and rate region $\Rmc_*(\Asf)$ shown in Fig.\ \ref{fig:33example}.
\begin{figure}
\centering \includegraphics[scale=.6]{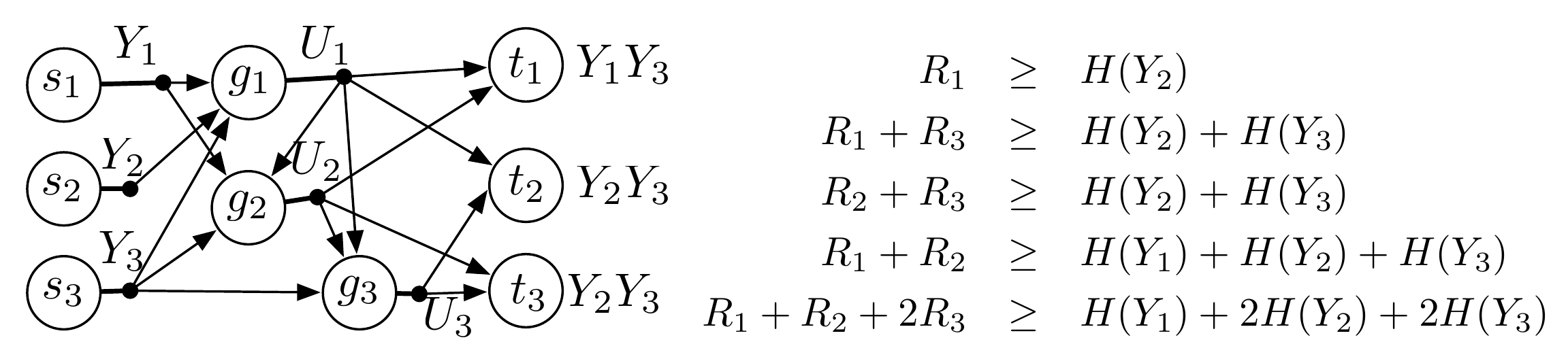} \caption{ Block diagram and rate region $\Rmc_*(\Asf)$ for the $(3,3)$ network instance $\Asf$ in Example \ref{ex:33network}.}
\label{fig:33example} 
\end{figure}

\begin{figure}
\centering \includegraphics[scale=.36]{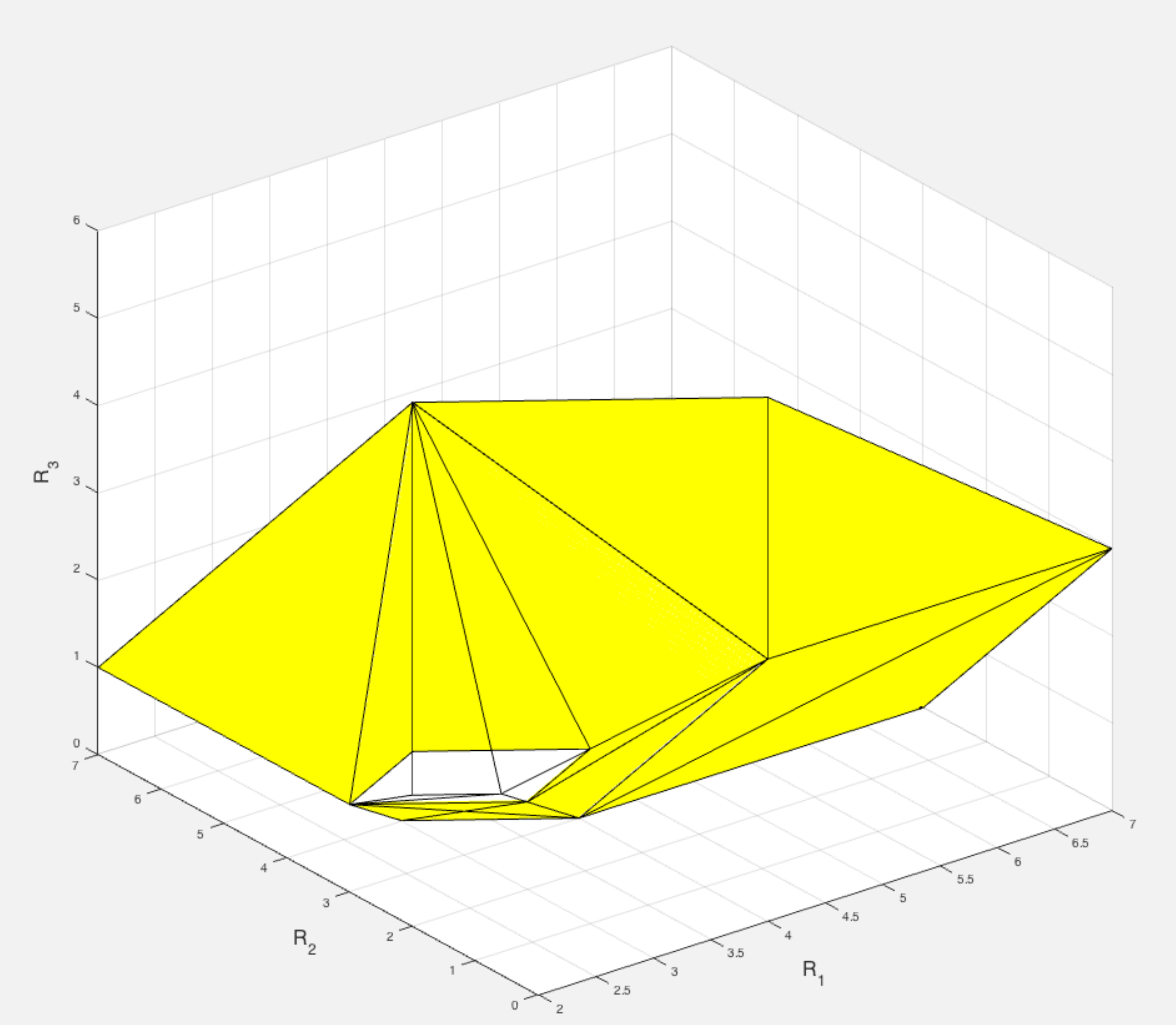} \caption{Comparison of rate regions $\Rmc_*(\Asf)$ (which equals to $\Rmc_o(\Asf)$) and $\Rmc_{2}^7(\Asf)$ for the $(3,3)$ network instance $\Asf$ in Example \ref{ex:33network}, when source entropies are $(H(Y_1),H(Y_2),H(Y_3))=(1,2,1)$ and the cone is capped by $R_1+R_2+R_3\leq 10$: the white part is the portion that scalar binary codes cannot achieve.  The ratio of  $\Rmc_{2}^7(\Asf)$ over $\Rmc_*(\Asf)$ is about $99.57\%$ for this choice of $(H(Y_1),H(Y_2),H(Y_3))$.}
\label{fig:33networkexample_sbo} 
\end{figure}

\begin{figure}
\centering \includegraphics[scale=.4]{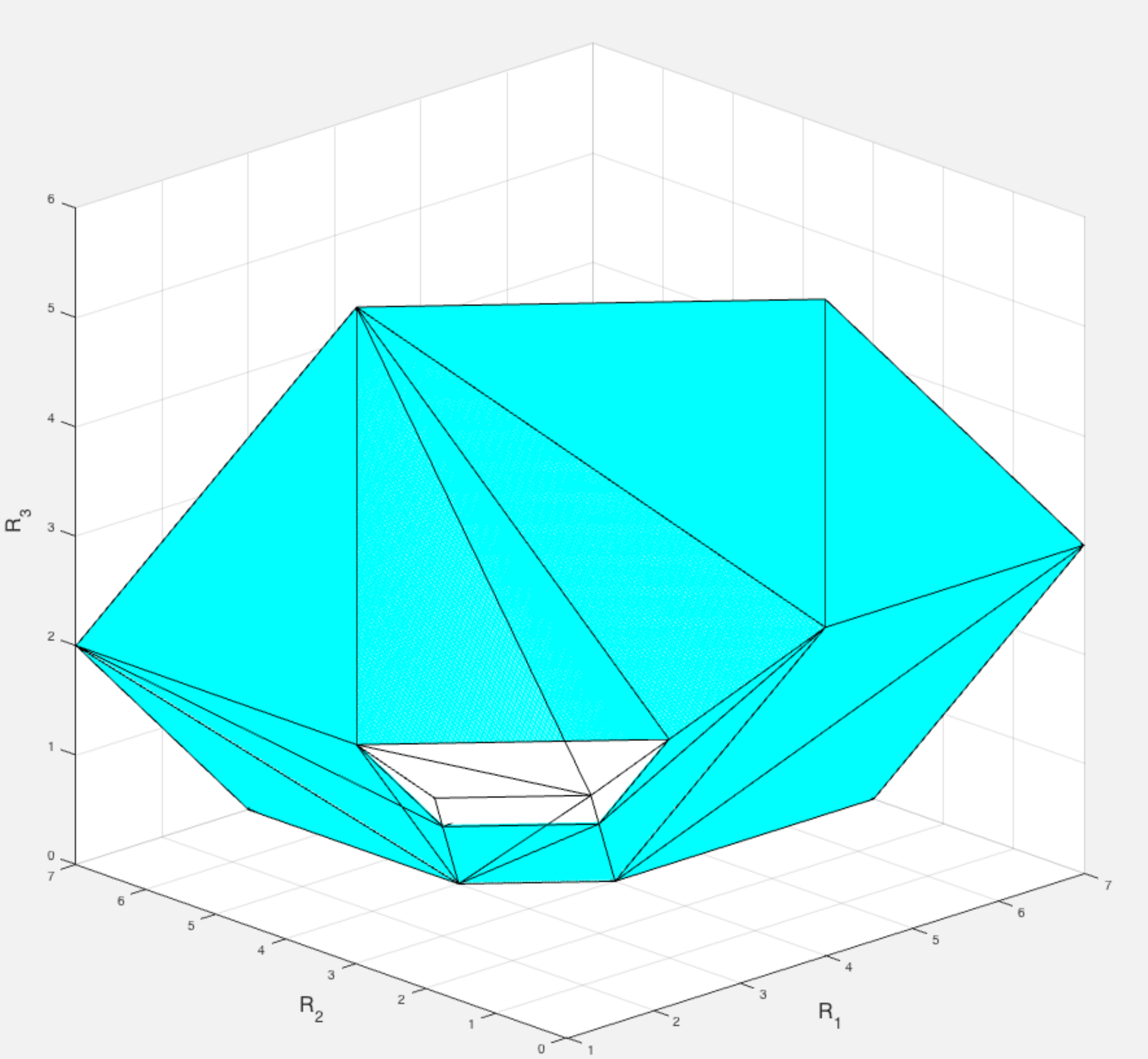} \caption{Comparison of rate regions $\Rmc_{2}^7(\Asf)$ and $\Rmc_{s,2}(\Asf)$ for the $(3,3)$ network instance $\Asf$ in Example \ref{ex:33network} when source entropies are $[H(Y_1),H(Y_2),H(Y_3)]=[1,1,2]$ and the cone is capped by $R_1+R_2+R_3\leq 10$: the white part is the portion that scalar binary codes cannot achieve.  The ratio of  $\Rmc_{s,2}(\Asf)$ over $\Rmc_{2}^7(\Asf)$ is about $99.41\%$ for this choice of $[H(Y_1),H(Y_2),H(Y_3)]$.}
\label{fig:33networkexample_sbvb} 
\end{figure}

First, scalar binary codes do not suffice for this network. The scalar binary coding rate region is
\begin{equation}
\Rmc_{s,2}=\Rmc_*(\Asf)\cap\left\{
\begin{array}{c}
R_1+R_2+R_3  \geq  H(Y_1)+2H(Y_2)+H(Y_3)\\
R_1+R_2+R_3  \geq  H(Y_1)+H(Y_2)+2H(Y_3)
\end{array}\right\}.
\end{equation}
One of the extreme rays in the Shannon outer bound on rate region is $[R_1,R_2,R_3,H(Y_1),H(Y_2),H(Y_3)]=[1,1,1,0,0,2]$.  This extreme ray cannot be achieved by scalar binary codes because no scalar code can encode a source with entropy of two into a variable with entropy of at most one.  Fig.\,\ref{fig:33networkexample_sbo} illustrates the gap between $\Rmc_*(\Asf)$ and $\Rmc_{s,2}(\Asf)$ with a particular source entropy assignment.  When source entropies are $[H(Y_1),H(Y_2),H(Y_3)]=[1,2,1]$ and the cone is capped by $R_1+R_2+R_3\leq 10$, there is a clear gap between the two polytopes, though the inner bound occupies more than $99\%$ of the exact rate region for this choice of $[H(Y_1),H(Y_2),H(Y_3)]$.  

Second, vector binary codes from $7$ bits do not suffice for this network either.  The vector binary coding rate region is 
\begin{equation}
\Rmc_{2}^7=\Rmc_*(\Asf)\cap\left\{
R_1+R_2+R_3  \geq  H(Y_1)+2H(Y_2)+H(Y_3)
\right\}.
\end{equation}
One of the extreme rays in the Shannon outer bound is $[R_1,R_2,R_3,H(Y_1),H(Y_2),H(Y_3)]=[2,1,1,1,2,0]$.  This extreme ray cannot be achieved by binary codes from $7$ bits because the empty source $Y_3$ takes one bit as well when we assign bits to variables in general.  Hence, at least $8$ bits are necessary (2+1+1+1+2+1=8), as will be shown later.  In our inner bound calculation, every variable in the network needs to have at least one associated element from the representable matroid, even though its entropy can be zero, like $Y_3$ in this case.  Though this inner bound is still loose in the sense of matching with the exact rate region, it is tighter than the scalar binary inner bound $\Rmc_{s,2}(\Asf)$.  This is illustrated in Fig.\,\ref{fig:33networkexample_sbvb} by choosing a particular source entropy tuple.  When source entropies are $[H(Y_1),H(Y_2),H(Y_3)]=[1,1,2]$ and the cone is capped by $R_1+R_2+R_3\leq 10$, there is a clear gap between the two polytopes, though the scalar inner bound takes more than $99\%$ space of the tighter vector binary inner bound for this choice of $[H(Y_1),H(Y_2),H(Y_3)]$.

However, vector binary codes from $8$ bits suffice for this network and thus $\Rmc_2^{8}(\Asf)=\Rmc_*(\Asf)$. One can construct vector binary codes to achieve all extreme rays in the Shannon outer bound on the rate region. For instance, the extreme ray $[R_1,R_2,R_3,H(X),H(Y),H(Z)]=[2,1,1,1,2,0]$ can be achieved by the vector binary code as follows: $U_1=[Y_1+Y_2^{(2)},\ Y_2^{(1)}],U_2=Y_2^{(1)}+Y_2^{(2)},U_3=Y_2^{(2)}$, where $Y_2^{(1)},Y_2^{(2)}$ are the two bits in source $Y_2$.

\end{example}

\subsection{Database of Rate Regions for small IDSC instances}
\begin{table}
\caption{\label{tab:IDSCresults}Sufficiency of codes for IDSC instances: Columns 3 and 4 show the number of instances that the rate region inner bounds match with the Shannon outer bound.}\vspace{-.3cm}
\begin{center}
\begin{tabular}{|c|c|c|c|c|}
\hline
$(K,L)$ &$|\Zmc|$& $\Rmc_{s,2}(\Asf)$ & $\Rmc_{2}^{N+1}(\Asf)$ \\ \hline
$(2,2)$ & 4 & 4 & 4 \\ \hline
$(2,3)$ & 33 & 26 & 33 \\ \hline
$(3,2)$ & 3 & 3 & 3 \\ \hline
$(3,3)$ & 179 & 143 & 179 \\ \hline
\end{tabular}
\end{center}
\end{table}

Here, experimental results on thousands of IDSC (defined in Example \ref{ex:IDSC} in \S\ref{sec:specialclasses}) instances are presented separately.  We investigated rate regions for $219$ non-isomorphic minimal IDSC instances representing $5130$ isomorphic ones.  These include the cases when $(K,L)=(2,2),(2,3),(3,2),(3,3)$.  Similarly, for the rate region of each non-isomorphic IDSC instance, we calculated its Shannon outer bound $\Rmc_o$,  scalar binary inner bound $\Rmc_{s,2}$, and the vector binary inner bounds $\Rmc_{2}^{N+1}$, where $N=K+L=K+|\Emc|$.  

A summary of results on the number of instances for which the various bounds agree is shown in Table \ref{tab:IDSCresults}.  The exact rate regions, their converses, and the codes that achieve them for all 219 non-isomorphic cases can be obtained at \cite{CongduanIDSCfile} and can be re-derived using \cite{EntVecSoft}.
For the non-isomorphic IDSC instances we considered, the Shannon outer bound is always tight on the rate regions, and the exact rate regions are obtained.
Scalar binary codes also only suffice for the instances with $L=2$ but not for all instances with $L=3$. 
However, vector binary codes from binary matroids on $N+1$ variables suffice for all the 219 instances.  Thus, for the IDSC problems up to $K\leq 3, L\leq 3$, vector binary codes suffice.

After obtaining these massive databases of all rate regions for small networks, our next question is how to learn from them, and further, how to use them to solve more (larger) networks.  For this purpose, we will develop in the following two sections notions of network hierarchy that enable us to relate networks of different sizes, their rate regions, and their properties with one another.

\section{Network Embedding Operations}\label{sec:embedding}
In this section, we propose a series of embedding operations relating smaller networks to larger networks in a manner such that one can directly obtain the rate region of the smaller network from the rate region of the larger network.  These operations will be selected in a manner that, due to this mapping, properties of the larger network can be considered to be inherited from small networks embedded within it.  In particular, we will show that if a certain class of codes is insufficient to exhaust the rate region of a small network embedded in a larger one, then this class of codes will be insufficient to exhaust the rate region of the larger one as well.  

\subsection{Definition of embedding operations}\label{subsec:embeddingdef}
The first operation is source deletion. When a source is deleted or removed, the source does not exist in the new network and the decoders that previously demanded it will no longer demand it after deletion.  Fig.\ \ref{fig:sd} illustrates the deletion of a source. When source $k$ is deleted, $t$ will no longer require $k$. A particular example is shown in Fig.\,\ref{fig:sourcedeletionexample}.
After deleting the source, the minimality conditions are checked to make sure the obtained network is minimal, and if not, an associated minimal network is found via a series of reductions according to Thm \ref{thm:minimality}.
\begin{definition}[Source Deletion ($\Asf\backslash k$)]\label{def:sourdel}
Fix network $\Asf=(\Smc,\Gmc,\Tmc,\Emc,\beta)$. If source $k\in\Smc$ is deleted, then, the new network is $\mathrm{minimal}(\Asf')$, where $\Asf'=(\Smc',\Gmc,\Tmc,\Emc,\beta')$ with $\Smc'=\Smc\setminus k$ and $\beta'=(\beta(t)\setminus k,t\in\Tmc)$.
\end{definition}

\begin{figure}
\centering 
\subfloat [\label{fig:sd}Source deletion: when source $k$ is deleted, it sends nothing to the network. Decoders that previously required $Y_k$ will no longer require it.]{\includegraphics[scale=0.5]{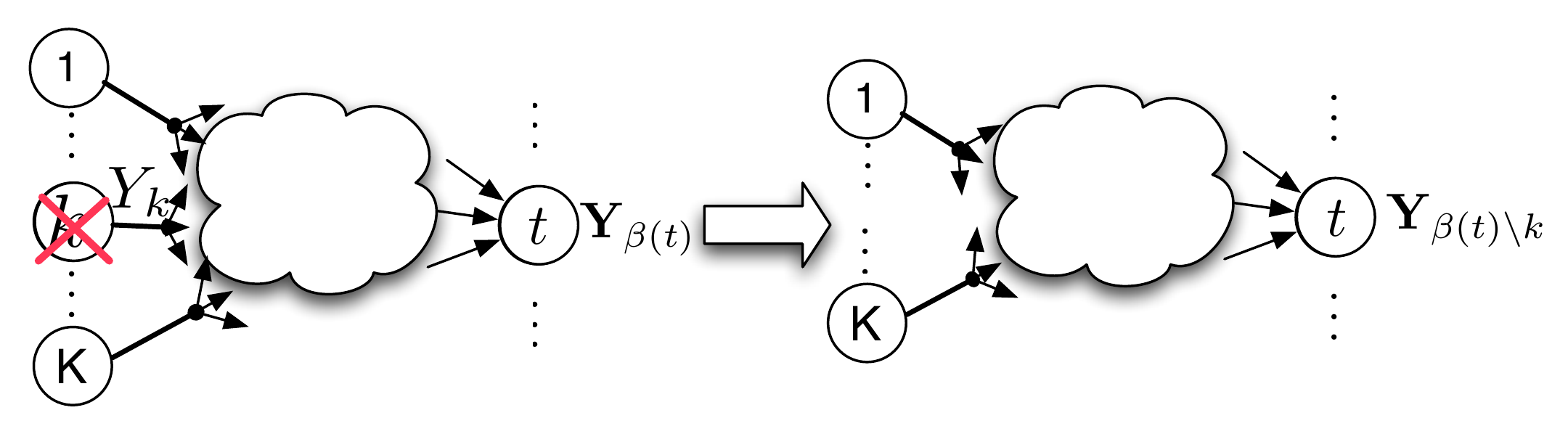} }\hspace{.2cm}
\subfloat [\label{fig:ec}Edge contraction: when $e$ is contracted, the head nodes directly have access to input of ${\rm Tl}(e)$.]{\includegraphics[scale=0.5]{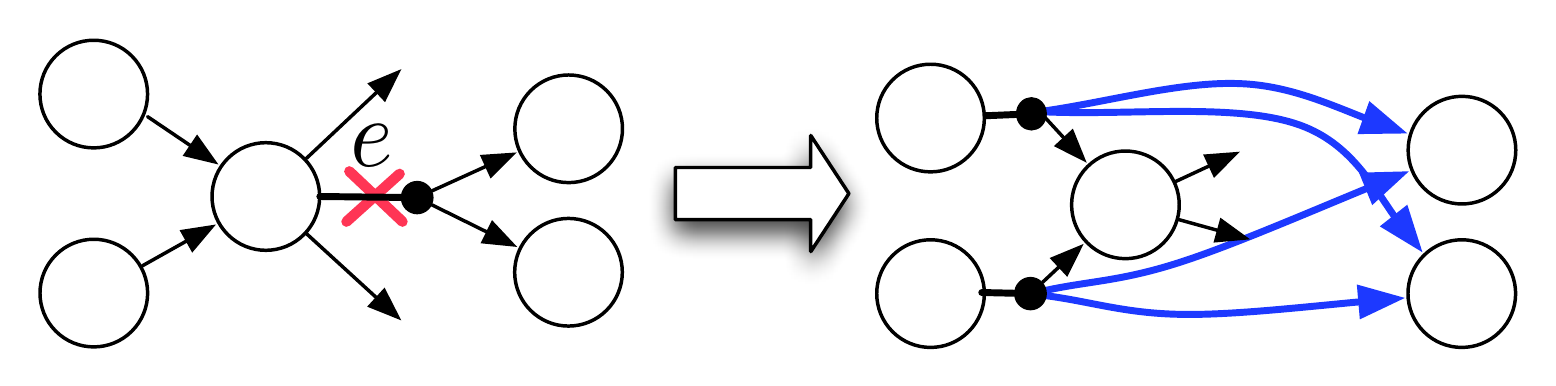} }\hspace{.2cm}
\subfloat [\label{fig:ed}Edge deletion: when delete $e$, its head nodes no longer receive information from $e$.]{\includegraphics[scale=0.5]{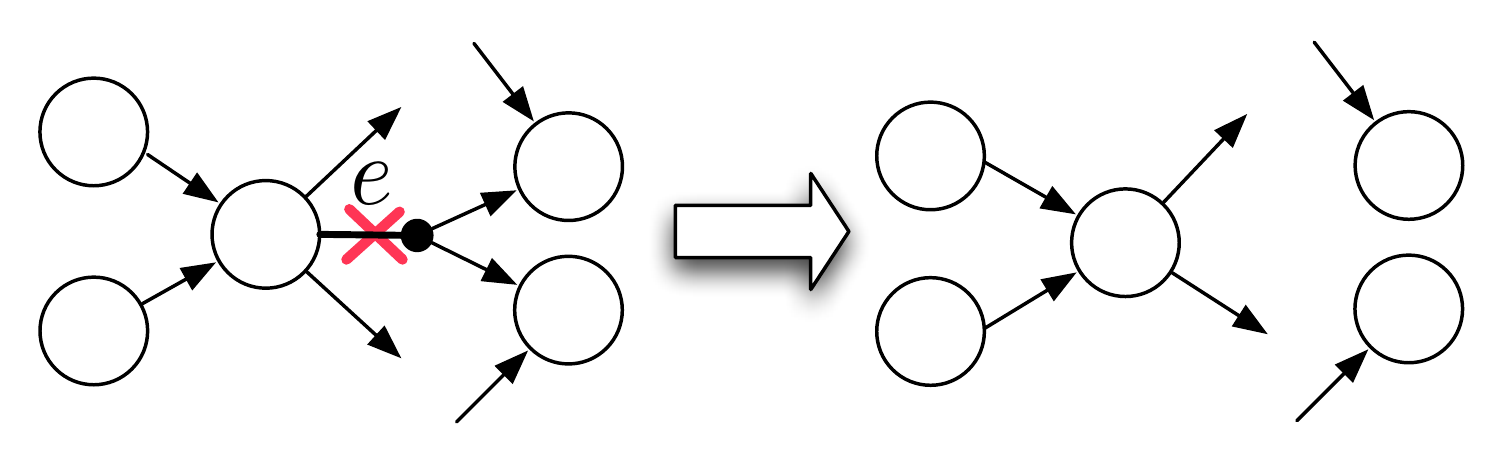} }
\caption{Definitions of embedding operations on a network}
\label{fig:embed} 
\end{figure}

\begin{figure}
\centering 
\subfloat [\label{fig:sourcedeletionexample}Source deletion example: when $s_1$ is deleted, its hyperedge is removed, and the sink $t_2$ which previously $Y_1,Y_2$ will now demand only $Y_2$.  When minimality is considered, it will be observed that the new sink $t_2$s ability to decode $Y_2$ has been implied by $t_1$.  Thus, $t_2$ is removed as well.  At this point $U_2$ and $U_3$ have become parallel edges, which are then merged.]{\includegraphics[scale=0.6]{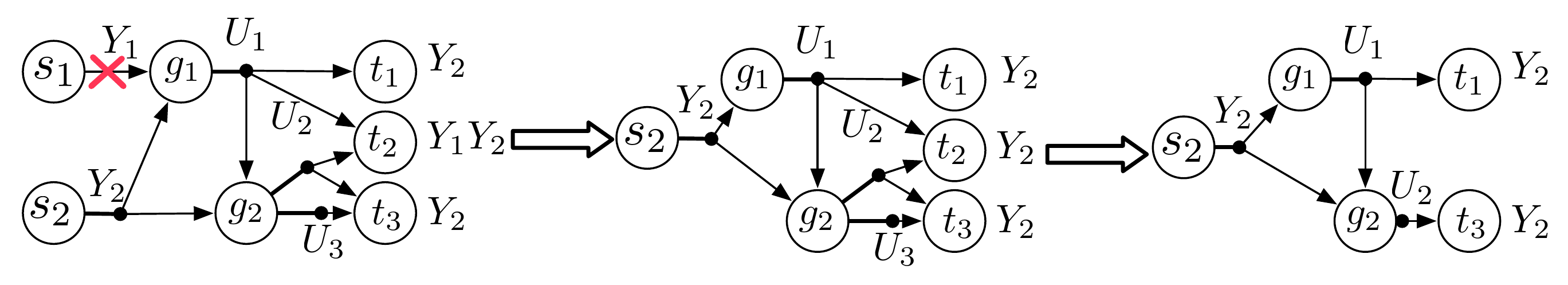} }

\subfloat[\label{fig:edgecontractionexample}Demonstration of edge contraction on a network: when $e_3$ is contracted, the input of $g_2$ will be directly available to $t_3$.  When minimality is considered, $Y_2$ is now trivially decoded at $t_3$ due to direct access to it, and thus $Y_2$ is removed from $\beta(t_3)$.  In addition, $g_2$ is removed.]{ \includegraphics[scale=0.6]{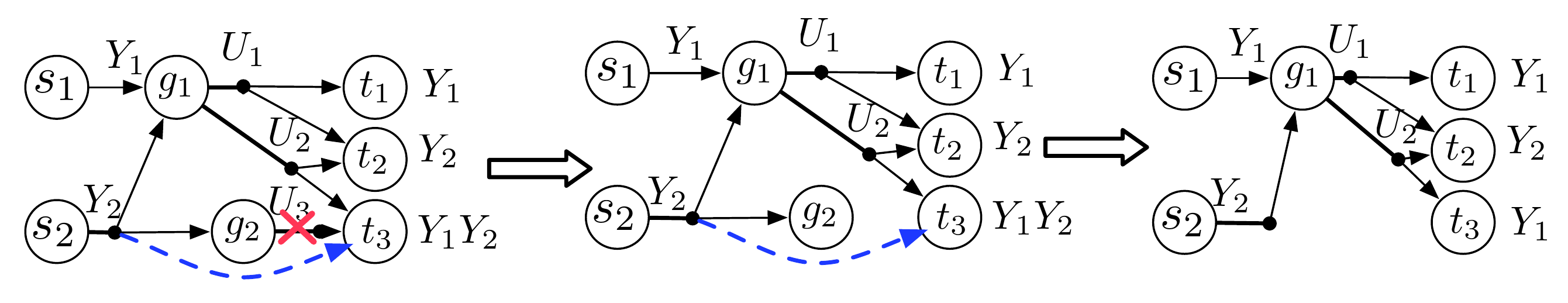} }

\subfloat[\label{fig:edgedeletionexample}Demonstration of edge deletion on a network: when $e_2$ is deleted, $t_2,t_3$ have no access to $U_2$.  Then $t_1,t_2$ are combined since they have the same input after deleting $U_2$.]{\includegraphics[scale=0.6]{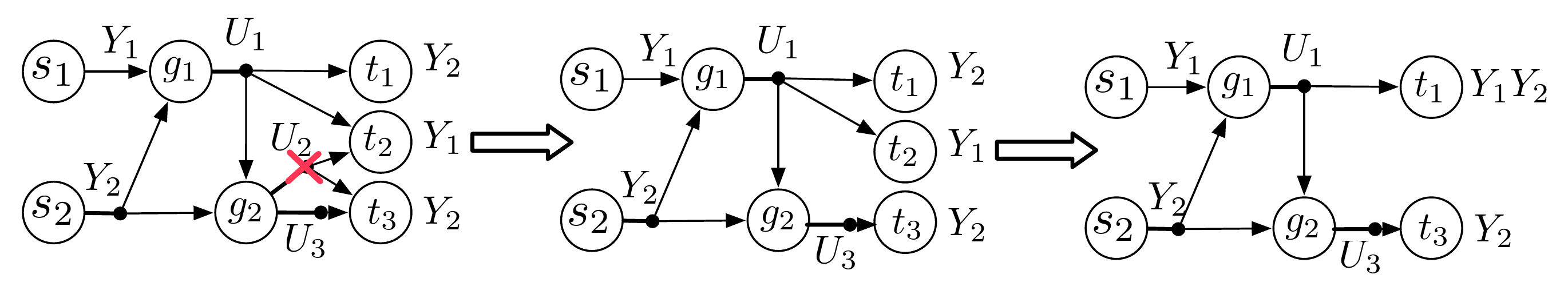} }
\caption{Examples to show the embedding operations on a network}
\end{figure}

The next operation we consider is edge contraction.  When an edge is contracted, the edge will be removed, and its head nodes are given direct access to all the inputs of the tail node.  Fig.\ \ref{fig:ec} demonstrates the contraction of an edge. As it shows, when edge $e$ is contracted, the head nodes it connects to will directly have access to all the input of its tail node.   Minimality conditions need to be checked after this operation.  A particular example is shown in Fig.\,\ref{fig:edgecontractionexample}.  

\begin{definition}[Edge Contraction $(\Asf\slash e)$]\label{def:edgcont}
Fix network $\mathsf{A}=(\Smc,\Gmc,\Tmc,\Emc,\beta)$.  If edge $e$ is contracted, then, the new network is $\mathrm{minimal}(\Asf')$, with $\Asf' = (\Smc,\Gmc,\Tmc,\Emc',\beta)$ where $\Emc'= \Emc\setminus \left( e \cup \textrm{In}(\textrm{Tl}(e))  \right) \bigcup_{e' \in \textrm{In}(\textrm{Tl}(e))} \{{\rm Tl}(e'), {\rm Hd}(e) \cup {\rm Hd}(e')\}$.

\end{definition}

Finally, we define edge deletion.  When an edge is deleted, it is simply removed from the graph, and the resulting graph is then checked and, if necessary, further reduced, for minimality.  Fig.\ \ref{fig:ed} demonstrates the deletion of an edge, and Fig. \ref{fig:edgedeletionexample} gives a particular example of the operation.  With consideration of minimality conditions, we formally define the edge deletion.

\begin{definition}[Edge Deletion $(\Asf \backslash e)$]\label{def:edgdel}
Fix network $\mathsf{A}=(\Smc,\Gmc,\Tmc,\Emc,\beta)$.  If edge $e$ is deleted, then, the smaller network instance $\mathrm{minimal}(\Asf')$, with $\Asf'=(\Smc,\Gmc,\Tmc,\Emc',\beta)$ where $\Emc'=\Emc\setminus e$.
\end{definition}

Based on these operations, we make precise the notion of an {\it embedded} network, or a network minor.
\begin{definition}[Embedded Network] \label{def:embedded}
A network $\Asf'$ is said to be {\it embedded} in another network $\Asf$, or is a {\it minor} of $\Asf$, denoted as $\Asf' \prec \Asf$, if $\Asf'$ can be obtained by a series of operations of source deletion, edge deletion/ contraction on $\Asf$.  Similarly, we say that $\Asf$ is an extension of $\Asf'$, denoted $\Asf \succ \Asf'$.
\end{definition}

With this definition in hand, we set out in the next subsection on determining the relationship between the rate region and properties of a large network and the rate region and properties of a small network embedded within it.

\subsection{Inheritance of Rate Regions \& their Properties Under Embedding Operations}
In this section we will prove a series of theorems that explain both how to obtain the rate region of an embedded network, under the operators defined in the previous subsection, from that of a larger extension network, as well as how certain properties of the rate region can be viewed as inherited under embedding operations.  A particularly interesting rate region property we will consider is the sufficiency of a class of linear codes to exhaust the entire capacity region.  Note that in each of the theorems below, the network $\Asf'$ will refer to the network in the definition of the associated operator (source deletion, edge contraction, and edge deletion) before the $\textrm{minimal}(\cdot)$ operator is applied.



\begin{theorem} \label{thm:SrcDel}
Suppose a network $\Asf''=\textrm{minimal}(\Asf')$ is a minimal form of a network $\Asf'$ created by deleting source $k$ from another network $\Asf=(\Smc,\Gmc,\Tmc,\Emc,\beta)$, i.e., $\Asf'=\Asf\setminus k$.  Then for every $ l\in\{*,q,(s,q),o\}$
\begin{equation}\label{eq:srcdeleq1}
\Rmc_l(\Asf'')=\textrm{minimal}_{\Asf' \rightarrow \Asf''} \left( {\rm Proj}_{\boldsymbol{\omega}\setminus H(Y_k),\boldsymbol{r}}\left(\left\{\Rbf\in \Rmc_l(\Asf)\left| H(Y_k)=0\right.\right\}\right) \right).
\end{equation}
\end{theorem}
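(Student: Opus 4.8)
The plan is to establish \eqref{eq:srcdeleq1} in two stages, matching the two-step structure of the source-deletion operator: first analyze what happens at the level of the intermediate network $\Asf'=\Asf\setminus k$ (before minimization), then invoke the rate-region change formulas of Theorem \ref{thm:minimality}, packaged as the operator $\textrm{minimal}_{\Asf' \rightarrow \Asf''}$, to pass from $\Asf'$ to $\Asf'' = \textrm{minimal}(\Asf')$. So the heart of the matter is to prove the intermediate claim
\begin{equation}
\Rmc_l(\Asf') = {\rm Proj}_{\boldsymbol{\omega}\setminus H(Y_k),\boldsymbol{r}}\left(\left\{\Rbf\in \Rmc_l(\Asf)\left| H(Y_k)=0\right.\right\}\right), \quad l\in\{*,q,(s,q),o\},
\end{equation}
after which \eqref{eq:srcdeleq1} follows by applying $\textrm{minimal}_{\Asf' \rightarrow \Asf''}$ to both sides and using the fact (established in the discussion following Theorem \ref{thm:minimality}) that this operator correctly transports rate regions through the reduction chain defining $\textrm{minimal}(\cdot)$.

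For the intermediate claim I would work directly with the characterizations $\Rmc_l(\Asf) = {\rm Proj}_{\boldsymbol{r},\boldsymbol{\omega}}(\Gamma \cap \Lmc_{\Asf})$ where $\Gamma$ is the appropriate (inner or outer bounding, or exact) cone on $N=K+L$ variables, and similarly $\Rmc_l(\Asf') = {\rm Proj}_{\boldsymbol{r},\boldsymbol{\omega}'}(\Gamma' \cap \Lmc_{\Asf'})$ on the $N-1$ variables of $\Asf'$ (the source $Y_k$ and its outgoing source-edge are both gone; no non-source edges are removed by the definition of $\Asf\setminus k$). The key observations are: (i) setting $H(Y_k)=0$ in an entropic vector on $\Nmc$, together with source independence $\Lmc_1$, forces $h_{\Amc \cup \{Y_k\}} = h_{\Amc}$ for all $\Amc$, so the surviving coordinates are exactly an entropic (resp.\ matroid-rank, resp.\ Shannon) vector on $\Nmc \setminus \{Y_k\}$, and conversely any such lower-dimensional vector extends by this rule to a point of $\Gamma \cap \{H(Y_k)=0\}$ --- this gives a bijection between $\Gamma \cap \{H(Y_k)=0\}$ and $\Gamma'$ that is compatible with conic hulls; (ii) under this identification the constraints match up: $\Lmc_3$ constraints at intermediate nodes are unchanged because $Y_k$ contributes nothing when $H(Y_k)=0$; the edge-capacity constraints $\Lmc_{4'}$ are untouched; and the decoding constraints $\Lmc_5$ for $\Asf'$ are precisely those of $\Asf$ with $Y_k$ struck from each $\beta(t)$, which is exactly the effect of $H(Y_k)=0$ on the constraint $h_{\Ybf_{\beta(t)}|\Ubf_{\textrm{In}(t)}}=0$ (a term that is identically zero drops out). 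Projecting away the redundant $H(Y_k)$ coordinate then yields the displayed equality; the same argument applies verbatim for $l\in\{q,(s,q),o\}$ since each only changes which cone $\Gamma$ plays the role of $\Gamma_N^*$, and the bijection in (i) respects the matroid/representable/Shannon structure (a matroid on $N$ elements with a loop at $k$ restricts to a matroid on the remaining $N-1$ elements, representably so if the original was representable).

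The step I expect to be the main obstacle is verifying (i) carefully in the inner-bound cases, i.e., that the conic-hull and closure operations commute with the "set $H(Y_k)=0$" slice and with the projection, so that extreme rays of $\Gamma' $ really do lift to extreme rays of $\Gamma \cap \{H(Y_k)=0\}$ and vice versa --- this is where the argument in the proof of the third corollary (that intersecting $\textrm{con}(\Amc)$ with a face defined by setting a nonnegative information quantity to zero equals $\textrm{con}(\Amc \cap \Lmc)$) gets reused, now with the hyperplane $H(Y_k)=0$; one must check $H(Y_k)$ is itself such a nonnegative quantity (it is). A secondary subtlety is bookkeeping the coordinates: the projection ${\rm Proj}_{\boldsymbol{\omega}\setminus H(Y_k),\boldsymbol{r}}$ must be applied \emph{after} the slice, and one should note that on the sliced set $H(Y_k)$ is pinned to $0$ so discarding it loses no information, which is what makes the projection (rather than an intersection) the right operation and also what makes the reverse direction of \eqref{eq:srcdeleq1} --- recovering $\Rmc_l(\Asf)$ from $\Rmc_l(\Asf'')$ --- follow by running $\textrm{minimal}_{\Asf'' \rightarrow \Asf'}$ backwards and then re-appending the pinned coordinate, exactly as in parts \ref{D1} and \ref{D3} of Theorem \ref{thm:minimality}.
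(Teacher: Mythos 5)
Your proposal is correct and follows essentially the same route as the paper: reduce to the intermediate identity $\Rmc_l(\Asf') = {\rm Proj}_{\boldsymbol{\omega}\setminus H(Y_k),\boldsymbol{r}}\left(\left\{\Rbf\in \Rmc_l(\Asf)\mid H(Y_k)=0\right\}\right)$, defer the passage to $\Asf''$ to the reduction formulas of Theorem \ref{thm:minimality}, and establish the intermediate identity by identifying the $H(Y_k)=0$ slice with the smaller network's region (empty source / loop extension in one direction, restriction and row deletion in the other), with the conic-hull-versus-face subtlety handled exactly as you describe. The paper phrases this point-wise via explicit conic combinations of entropic vectors for $l\in\{*,q,(s,q)\}$ and a separate polyhedral argument for $l=o$, whereas you state it uniformly at the cone level, but the underlying argument is the same.
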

\begin{IEEEproof}
We will prove $\Rmc_l(\Asf') = {\rm Proj}_{\boldsymbol{\omega}\setminus H(Y_k),\boldsymbol{r}}\left(\left\{\Rbf\in \Rmc_l(\Asf)\left| H(Y_k)=0\right.\right\}\right)$, since the remainder of the theorem holds from the minimality reductions in Thm. \ref{thm:minimality}.

Select any point $\Rbf'\in \Rmc_*(\Asf')$.  Then there exists a conic combination of some points in $\Rmc_*(\Asf')$ that are associated with entropic vectors in $\Gamma_{N'}^*$ such that $\Rbf'=\sum\limits_{\rbf'_j \in\Rmc_*(\Asf')}\alpha_j \rbf'_j$, where $\alpha_j\geq 0, \forall j$.  For each $\rbf'_j$, there exist random variables $\Ybf^{(j)}_{\setminus k},U^{(j)}_i , i\in \Emc$, where $\Ybf^{(j)}_{\setminus k}=\left[ Y_i^{(j)} \left| i\in \Smc \setminus k \right.\right]$, such that the entropy vector 
\begin{equation*}
\hbf^{(j)'}=\left[H(\Amc) \left| \Amc\subseteq \left\{Y_s^{(j)},U_e^{(j)} \left| s\in\Smc\setminus k,e\in\Emc\right.\right\}\right.\right]
\end{equation*}
 is in $\Gamma_{N'}^*$, where $N'=N-1$ is the number of variables in $\Asf'$.  Furthermore, their entropies satisfy all the constraints determined by $\Asf'$.  Define $Y^{(j)}_k$ to be the empty sources, $H(Y^{(j)}_k)=0$. Then the entropies of random variables $\{\Ybf^{(j)}_{\setminus k},U^{(j)}_i,i\in \Emc'\}\cup Y^{(j)}_k$ will satisfy the constraints in $\Asf$ with $H(Y^{(j)}_k)=0$ and the entropy vector $\hbf^{(j)}=\left[H(\Amc) \left| \Amc\subseteq \left\{Y_s^{(j)},U_e^{(j)} \left| s\in\Smc,e\in\Emc\right.\right\}\right.\right]$ will be in $\Gamma_N^*$ since adding an empty variable does not make an entropic vector to be non-entropic.  Denote $\rbf_j=[\rbf'_j, H(Y_k^{(j)})=0]$, then $\rbf_j\in \Rmc_*(\Asf)$. Hence, by using the same conic combination, we have an associated rate point $\Rbf=\sum\limits_{\rbf_j \in\Rmc_*(\Asf)}\alpha_j \rbf_j\in \left\{\Rbf\in \Rmc_*(\Asf)\left| H(Y_k)=0\right.\right\}$.  Thus, we have $\Rmc_*(\Asf')\subseteq {\rm Proj}_{\boldsymbol{\omega}\setminus H(Y_k),\boldsymbol{r}}(\{\Rbf\in \Rmc_*(\Asf)|H(Y_k)=0\})$.  If $\Rbf'$ is achievable by $\Fbb_q$ codes (scalar or vector), there exists a construction of some basic $\Fbb_q$ codes (scalar or vector) to achieve it.  Since letting $Y_s^{(j)}$ be empty does not affect the other sources and codes, the same construction of basic $\Fbb_q$ codes will also achieve the point $\Rbf$ with $H(Y_s)=0$.  Thus, $\Rmc_l(\Asf')\subseteq{\rm Proj}_{\boldsymbol{\omega}\setminus H(Y_k),\boldsymbol{r}}( \{\Rbf\in \Rmc_l(\Asf)|H(Y_k)=0\}),l\in\{q,(s,q)\}$.

On the other hand, if we select any point $\Rbf\in \{\Rbf\in \Rmc_*(\Asf)|H(Y_k)=0\}$, then, there exists a conic combination of some points in $\Rmc_*(\Asf)\cap \{H(Y_k)=0\}$ associated with entropic vectors in $\Gamma_N^*$, i.e., $\Rbf=\sum\limits_{\rbf_j \in\Rmc_*(\Asf)\cap \{H(Y_k)=0\}}\alpha_j \rbf_j, \ \alpha_j\geq 0,\ \forall j$.  For each $\rbf_j$, there exist random variables $\left\{\Ybf^{(j)}_{\Smc},\Ubf^{(j)}_{\Emc}\right\}$ such that their entropies satisfy all the constraints  determined by $\Asf$.   Furthermore, since $\alpha_j\geq 0$, the only conic combination makes $H(Y_k)=0$ is the case that $H(Y_k^{(j)})=0$.  We can drop $H(Y_k^{(j)})$ because  the entropies of $\left\{\Ybf^{(j)}_{\setminus k},\Ubf^{(j)}_{\Emc} \right\}$ satisfy all constraints determined by $\Asf'$ and the entropic vector projecting out $Y_k^{(j)}$ is still entropic.  Using the same conic combination, $\Rbf'={\rm Proj}_{\setminus H(Y_k)}\sum\limits_{\rbf_j \in\Rmc_*(\Asf)}\alpha_j \rbf_j ={\rm Proj}_{\boldsymbol{\omega}\setminus H(Y_k),\boldsymbol{r}} \ \Rbf\in \Rmc_*(\Asf')$.  Thus, we have ${\rm Proj}_{\boldsymbol{\omega}\setminus H(Y_k),\boldsymbol{r}}(\{\Rbf\in \Rmc_*(\Asf)|H(Y_k)=0\})\subseteq\Rmc_*(\Asf')$.  If $\Rbf$ is achievable by $\Fbb_q$ code $\Cbb$, then the code to achieve $\Rbf'$ could be the code $\Cbb$ with deletion of rows associated with source $Y_k$, i.e., $\Cbb'=\Cbb_{\setminus Y_k,:}$. Thus, ${\rm Proj}_{\boldsymbol{\omega}\setminus H(Y_k),\boldsymbol{r}}(\{\Rbf\in \Rmc_l(\Asf)|H(Y_k)=0\})\subseteq\Rmc_l(\Asf'),l\in\{q,(s,q)\}$.

Furthermore, for any point $\Rbf'\in\Rmc_o(\Asf')$, there exists an associated point $\hbf' \in\Gamma_{N'}$ and a rate vector $\rbf'=[R_e | e\in\Emc]$ such that $\Rbf'={\rm Proj}_{\boldsymbol{\omega}\setminus H(Y_k),\boldsymbol{r}}\  [\hbf',\rbf']\cap \Lmc_{\Asf'}$.  Clearly, if we increase the dimension of $\hbf'$ by adding a variable $Y_k$ with zero entropy, i.e., $H(Y_k)=0$, we have the new entropy vector in $\Gamma_N$.  That is, if we define $\hbf=\left[h'_{\Amc\cap \{Y_s,U_e|s\in\Smc',e\in\Emc\}}|\Amc\subseteq \{Y_s,U_e|s\in\Smc,e\in\Emc\}\right]$, then $\hbf \in\Gamma_N$.  Since $H(Y_k)=0$, the network constraints in $\Asf$ will be  satisfied given that the zero entropy does not break the conditional entropies associated with network constraints.  Hence, there exists an associated point $\Rbf\in\Rmc_o(\Asf)$ with $H(Y_k)=0$.  Therefore, we have $\Rmc_o(\Asf')\subseteq {\rm Proj}_{\boldsymbol{\omega}\setminus H(Y_k),\boldsymbol{r}} (\{\Rbf\in\Rmc_o(\Asf)|H(Y_k=0)\})$.  Reversely, suppose a point $\Rbf\in\Rmc_o(\Asf)$ is picked with $H(Y_k)=0$.  There exists a vector $\hbf\in\Gamma_N$ and a rate vector $\rbf=[R_e | e\in\Emc]$ such that $\Rbf={\rm Proj}_{\boldsymbol{\omega},\boldsymbol{r}}\ [\hbf,\rbf]\cap \Lmc_{\Asf}$.  Since the network constraints $\Lmc_\Asf$ with $H(Y_k)=0$ will be $\Lmc_{\Asf'}$, and ${\rm Proj}_{\boldsymbol{\omega}\setminus H(Y_k),\boldsymbol{r}} \ [\hbf,\rbf] \in\Gamma_{N'}\cap \Lmc_{\Amc'}$, we have ${\rm Proj}_{\boldsymbol{\omega}\setminus H(Y_k),\boldsymbol{r}} \Rbf\in\Rmc_o(\Amc')$.  Therefore, we have ${\rm Proj}_{\boldsymbol{\omega}\setminus H(Y_k),\boldsymbol{r}} (\{\Rbf\in\Rmc_o(\Asf)|H(Y_k=0)\})\subseteq \Rmc_o(\Asf')$.
\end{IEEEproof}

\begin{theorem} \label{thm:EncCon}
Suppose a network $\Asf''=\textrm{minimal}(\Asf')$ is a minimal form of a network $\Asf'$ obtained by contracting $e$ from another network $\Asf=(\Smc,\Gmc,\Tmc,\Emc,\beta)$, i.e., $\Asf'=\Asf\slash e$.  Then
\begin{eqnarray}
\Rmc_l(\Asf'')&=&\textrm{minimal}_{\Asf' \rightarrow \Asf''} \left( {\rm Proj}_{\boldsymbol{\omega},\boldsymbol{r}\setminus R_e}\Rmc_l(\Asf) \right),\  l\in\{*,q,o\} \label{eq:encConeq1}\\
\Rmc_{s,q}(\Asf'')&\supseteq&\textrm{minimal}_{\Asf' \rightarrow \Asf''} \left( {\rm Proj}_{\boldsymbol{\omega},\boldsymbol{r}\setminus R_e}\Rmc_{s,q}(\Asf) \right), \label{eq:encDeleq3}
\end{eqnarray}
\end{theorem}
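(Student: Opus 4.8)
The plan is to follow the template of the proof of Theorem~\ref{thm:SrcDel}. Since $\Asf'' = \textrm{minimal}(\Asf')$, Theorem~\ref{thm:minimality} already gives $\Rmc_l(\Asf'') = \textrm{minimal}_{\Asf' \rightarrow \Asf''}(\Rmc_l(\Asf'))$ for every $l$, so it suffices to establish the ``un-minimalized'' statements
\begin{equation}
\Rmc_l(\Asf') = {\rm Proj}_{\boldsymbol{\omega},\boldsymbol{r}\setminus R_e}\Rmc_l(\Asf),\ \ l\in\{*,q,o\},\qquad \Rmc_{s,q}(\Asf') \supseteq {\rm Proj}_{\boldsymbol{\omega},\boldsymbol{r}\setminus R_e}\Rmc_{s,q}(\Asf),
\end{equation}
and then push them through $\textrm{minimal}_{\Asf'\rightarrow\Asf''}$. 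The idea powering both directions is that, by Definition~\ref{def:edgcont}, contracting $e$ replaces, at every head of $e$ (hence at every sink reachable through $e$), the single received variable $U_e = f_e(\textrm{In}(\textrm{Tl}(e)))$ by direct access to the entire collection $\textrm{In}(\textrm{Tl}(e))$; equivalently, in $\Asf$ one is always free to take $U_e$ to be the concatenation of the variables entering $\textrm{Tl}(e)$, which carries exactly the information the heads of $e$ receive in $\Asf'$ and only costs the finite rate $R_e = H(U_e)$. Since $R_e$ is absent from the description of $\Asf'$ (note $\Emc'$ contains $U_e$ nowhere but keeps every $U_{e'}$, $e'\in\textrm{In}(\textrm{Tl}(e))$, so $N'=N-1$), it is projected out. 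The isolated copy of $\textrm{Tl}(e)$ left in $\Gmc$ of $\Asf'$ imposes only a vacuous $\Lmc_3$ equality and is removed by $\textrm{minimal}(\cdot)$.

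For the inclusion ${\rm Proj}_{\boldsymbol{\omega},\boldsymbol{r}\setminus R_e}\Rmc_l(\Asf) \subseteq \Rmc_l(\Asf')$, valid for all $l\in\{*,q,(s,q),o\}$, take $\Rbf\in\Rmc_l(\Asf)$. For $l=*$, write $\Rbf = \sum_j \alpha_j \rbf_j$ as a conic combination of rate points associated with entropic vectors $\hbf^{(j)}\in\Gamma_N^*$ on the variables $\{Y_s^{(j)},U_{e''}^{(j)} : s\in\Smc, e''\in\Emc\}$ satisfying $\Lmc_1,\Lmc_3,\Lmc_{4'},\Lmc_5$. In $\Asf'$ the input set of every head of $e$, and of every sink reachable through $e$, has its copy of $U_e$ replaced by $\textrm{In}(\textrm{Tl}(e))$; since $\Lmc_3$ at $\textrm{Tl}(e)$ forces $U_e^{(j)}$ to be a function of $\textrm{In}(\textrm{Tl}(e))$, the information available at these nodes only grows, so every $\Lmc_3$-encoding equality and every $\Lmc_5$-decoding equality demanded by $\Asf'$ follows from the one demanded by $\Asf$, while $\Lmc_1$ and the $\Lmc_{4'}$ constraints of the surviving edges are untouched. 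Deleting the $U_e^{(j)}$-coordinate from $\hbf^{(j)}$ leaves an entropic vector in $\Gamma_{N'}^*$, so ${\rm Proj}_{\setminus R_e}\rbf_j\in\Rmc_*(\Asf')$ and the same conic combination yields ${\rm Proj}_{\setminus R_e}\Rbf\in\Rmc_*(\Asf')$. For $l\in\{q,(s,q)\}$ one instead removes the column (block of columns) representing $U_e$ from each matroid representation $\Abb$; since the removed block is spanned by the columns representing $\textrm{In}(\textrm{Tl}(e))$, every sink still decodes and every intermediate encoding remains a linear combination of the available columns, so the shortened matrix is a valid (scalar or vector) code for $\Asf'$. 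For $l=o$, deleting the $U_e$-coordinate from a vector of $\Gamma_N$ gives a vector of $\Gamma_{N'}$ (monotonicity and submodularity survive restriction of the index set), which satisfies $\Lmc_{\Asf'}$ by the same implication of equalities. Finally, each reduction step composing $\textrm{minimal}_{\Asf'\rightarrow\Asf''}$ is a projection, an intersection with a coordinate hyperplane, or a continuous coordinate map, so that operator is monotone on sets; applying it to this inclusion for $l=(s,q)$ delivers~\eqref{eq:encDeleq3}.

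For the reverse inclusion $\Rmc_l(\Asf')\subseteq{\rm Proj}_{\boldsymbol{\omega},\boldsymbol{r}\setminus R_e}\Rmc_l(\Asf)$, needed only for $l\in\{*,q,o\}$, take $\Rbf'\in\Rmc_l(\Asf')$. For $l=*$, expand $\Rbf' = \sum_j\alpha_j\rbf'_j$ with each $\rbf'_j$ realized by variables $\{Y_s^{(j)},U_{e''}^{(j)} : e''\in\Emc'\}$ obeying the $\Asf'$-constraints, adjoin the new variable $U_e^{(j)} := (U_{e''}^{(j)} : e''\in\textrm{In}(\textrm{Tl}(e)))$ (a deterministic function of variables already present, so the augmented vector stays in $\Gamma_N^*$), and put $R_e^{(j)} := H(U_e^{(j)})<\infty$. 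One checks the $\Asf$-constraints: $\Lmc_3$ at $\textrm{Tl}(e)$ holds by construction; $\Lmc_3$ at any former head of $e$ and $\Lmc_5$ at any sink reachable through $e$ hold because $U_e^{(j)}$ determines all of $\textrm{In}(\textrm{Tl}(e))$, so in $\Asf$ these nodes see exactly the information they saw in $\Asf'$; $\Lmc_1$ is unchanged and $\Lmc_{4'}$ holds with the chosen $R_e^{(j)}$. Hence $[\rbf'_j,R_e^{(j)}]\in\Rmc_*(\Asf)$ and the same conic combination, with $R_e$ projected away, returns $\Rbf'$. For $l=q$ one builds the representation of $\Asf$ from that of $\Asf'$ by adjoining to $\Abb$ the block of columns for $U_e$ obtained by stacking the blocks of $\textrm{In}(\textrm{Tl}(e))$; for $l=o$ one extends a $\Gamma_{N'}$ vector by assigning to each subset $\mathcal{B}$ containing $U_e$ the value already carried by $(\mathcal{B}\setminus\{U_e\})\cup\textrm{In}(\textrm{Tl}(e))$, which stays in $\Gamma_N$ and satisfies $\Lmc_{\Asf}$. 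This step is precisely where a scalar code can fail: a single field symbol $U_e$ cannot in general reproduce the multidimensional access $\textrm{In}(\textrm{Tl}(e))$, which is why the conclusion for $\Rmc_{s,q}$ is only an inclusion. Combining the two directions gives~\eqref{eq:encConeq1}, and composing with Theorem~\ref{thm:minimality} finishes the proof. The step demanding the most care throughout is the bookkeeping of the modified input sets $\textrm{In}(\cdot)$ at the heads of $e$ and at the sinks reachable through $e$, together with verifying that the substitution $U_e\leftrightarrow\textrm{In}(\textrm{Tl}(e))$ preserves membership in $\Gamma_N^*$ (resp.\ $\Gamma_N$) and every equality of $\Lmc_{\Asf}$.
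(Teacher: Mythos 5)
Your proposal is correct and follows essentially the same route as the paper's own proof: both directions are handled by the substitution $U_e \leftrightarrow \Ubf_{{\rm In}({\rm Tl}(e))}$ (deleting the $U_e$ coordinate/column for one inclusion, adjoining $U_e$ as the concatenation of the tail's inputs for the other), with the same treatment of conic combinations for $\Rmc_*$, column operations on the representation for $\Rmc_q$, the coordinate-extension argument for $\Rmc_o$, and the same identification of why the concatenation step breaks down for scalar codes, yielding only the one-sided inclusion for $\Rmc_{s,q}$. No gaps.
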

\begin{IEEEproof}
We will prove $\Rmc_l(\Asf') = {\rm Proj}_{\boldsymbol{\omega},\boldsymbol{r}\setminus R_e}\left(\left\{\Rbf\in \Rmc_l(\Asf)\right\}\right)$ for $ l\in\{*,q,o\}$, and for the scalar case, $\Rmc_{s,q}(\Asf'')\supseteq \textrm{minimal}_{\Asf' \rightarrow \Asf''} \left( {\rm Proj}_{\boldsymbol{\omega},\boldsymbol{r}\setminus R_e}\Rmc_{s,q}(\Asf) \right)$, since the remainder of the theorem holds from the minimality reductions in Thm. \ref{thm:minimality}.

Select any point $\Rbf'\in \Rmc_*(\Asf')$.  Then there exists a conic combination of some points in $\Rmc_*(\Asf')$ that are associated with entropic vectors in $\Gamma_{N'}^*$ such that $\Rbf'=\sum\limits_{\rbf'_j \in\Rmc_*(\Asf')}\alpha_j \rbf'_j$, where $\alpha_j\geq 0, \forall j$.  For each $\rbf'_j$, there exist random variables $\Ybf^{(j)}_{\Smc},U^{(j)}_i , i\in \Emc\setminus e$, such that the entropy vector 
\begin{equation*}
\hbf^{(j)'}=\left[H(\Amc) \left| \Amc\subseteq \left\{Y_s^{(j)},U_i^{(j)} \left| s\in\Smc,i\in\Emc\setminus e\right.\right\}\right.\right]
\end{equation*}
 is in $\Gamma_{N'}^*$, where $N'=N-1$ is the number of variables in $\Asf'$.  Furthermore, their entropies satisfy all the constraints determined by $\Asf'$.  In the network $\Asf$, define $U^{(j)}_e$ to be the concatenation of all inputs to the tail node of $e$, $U^{(j)}_e=\Ubf^{(j)}_{{\rm In}({\rm Tl}(e))}$. Then the entropies of random variables $\left\{\Ybf^{(j)}_{\Smc},\Ubf^{(j)}_{\Emc}\right\}$ will satisfy the constraints in $\Asf$, and additionally obey $H(U^{(j)}_e)=H(\Ubf^{(j)}_{{\rm In}({\rm Tl}(e))})$. Hence, $\hbf^{(j)}=\left[H(\Amc)\left|\Amc\subseteq \left\{Y_s^{(j)},,U_i^{(j)}  \left| s\in\Smc,i\in\Emc\right.\right\}\right.\right]\in\Gamma_N^*$.  That is, $\rbf_j=[\rbf'_j,R_e\geq H(\Ubf^{(j)}_{{\rm In}({\rm Tl}(e))})]\in \Rmc_*(\Asf)$.  By using the same conic combination, we have an associated rate point $\Rbf=\sum\limits_{\rbf_j \in\Rmc_*(\Asf)}\alpha_j \rbf_j\in \left\{\Rbf\in \Rmc_*(\Asf)\left| R_e\geq H(\Ubf^{(j)}_{{\rm In}({\rm Tl}(e))})\right.\right\}$. Thus, we have 
\begin{equation} 
\Rmc(\Asf')\subseteq
{\rm Proj}_{\boldsymbol{\omega},\boldsymbol{r}\setminus R_e}( \{\Rbf\in \Rmc(\Asf)|R_e\geq H(\Ubf_{{\rm In}({\rm Tl}(e))})\})
\subseteq {\rm Proj}_{\boldsymbol{\omega},\boldsymbol{r}\setminus R_e}\Rmc(\Asf).\label{eq:pfeq1encDel1}
\end{equation}

If $\Rbf'$ is achievable by general $\Fbb_q$ codes, since concatenation of all input is a valid $\Fbb_q$ vector code, we have
\begin{equation}
\Rmc_q(\Asf')\subseteq
{\rm Proj}_{\boldsymbol{\omega},\boldsymbol{r}\setminus R_e}(\{\Rbf\in \Rmc_q(\Asf)|R_e\geq H(\Ubf_{{\rm In}({\rm Tl}(e))})\})\subseteq {\rm Proj}_{\boldsymbol{\omega},\boldsymbol{r}\setminus R_e}\Rmc_q(\Asf).\label{eq:pfeq1encDel2}
\end{equation}

However, we cannot establish same relationship when scalar $\Fbb_q$ codes are considered, because for the point $\Rbf'$, the associated $\Rbf$ with $H(U_e)$ may not be scalar $\Fbb_q$ achievable. 

On the other hand, if we select any point $\Rbf\in \{\Rbf\in \Rmc_*(\Asf)\}$, then, there exists a conic combination of some points in $\Rmc_*(\Asf)$ associated with entropic vectors in $\Gamma_N^*$, i.e., $\Rbf=\sum\limits_{\rbf_j \in\Rmc_*(\Asf)}\alpha_j \rbf_j, \ \alpha_j\geq 0,\ \forall j$.  For each $\rbf_j$, there exist random variables $\left\{\Ybf^{(j)}_{\Smc},\Ubf^{(j)}_{\Emc}\right\}$ such that their entropies satisfy all the constraints  determined by $\Asf$.  
 Since the entropies of $\left\{\Ybf^{(j)}_{\Smc},U^{(j)}_i | i\in \Emc\setminus e\right\}$  satisfy all constraints determined by $\Asf'$ (because they are a subset of the constraints from $\Asf$) and the entropic vector projecting out $U_e$ is still entropic.  Thus, by letting $R_e^{(j)}$ to be unconstrained, we have 
${\rm Proj}_{\boldsymbol{\omega},\boldsymbol{r}\setminus R_e}\rbf_j\in\Rmc_*(\Asf').$  Further, by using the same conic combination, $\Rbf'={\rm Proj}_{\setminus R_e}\sum\limits_{\rbf_j \in\Rmc_*(\Asf)}\alpha_j \rbf_j ={\rm Proj}_{\boldsymbol{\omega},\boldsymbol{r}\setminus R_e} \ \Rbf\in \Rmc_*(\Asf')$.  Thus, we have ${\rm Proj}_{\boldsymbol{\omega},\boldsymbol{r}\setminus R_e}(\{\Rbf\in \Rmc_*(\Asf)\})\subseteq\Rmc_*(\Asf')$.

If $\Rbf\in \Rmc_*(\Asf)$ is achievable by $\Fbb_q$ code $\Cbb$, either scalar or vector, then the code to achieve $\Rbf'={\rm Proj}_{\boldsymbol{\omega},\boldsymbol{r}\setminus R_e}\Rbf\in \Rmc_*(\Asf')$ could be the code $\Cbb$ with deletion of columns associated with edge $e$, i.e., $\Cbb'=\Cbb_{:,\setminus U_e}$, because the code on edge $e$ is not of interest.  Thus, we have 
${\rm Proj}_{\boldsymbol{\omega},\boldsymbol{r}\setminus R_e}\Rmc_l(\Asf)\subseteq\Rmc_l(\Asf'),\ l\in\{q,(s,q)\}.$

Furthermore, for any point $\Rbf'\in\Rmc_o(\Asf')$, there exists an associated point $\hbf'\in\Gamma_{N'}$ and a rate vector $\rbf'=[R_i|i\in\Emc\setminus e]$ such that  $\Rbf'={\rm Proj}_{\boldsymbol{\omega},\boldsymbol{r}\setminus R_e}\  [\hbf',\rbf']\cap \Lmc_{\Asf'}$.  Clearly, if we increase the dimension of $\hbf'$ by adding a variable $U_e$ which is the vector of all input variables to the tail node of $e$, i.e., $U_e=[U_i|i\in{\rm In}({\rm Tl}(e))]$ and $H(U_e)=H(U_i,i\in {\rm In}({\rm Tl}(e)))$, we have the new vector in $\Gamma_N$.  That is, if we define 
\begin{equation}
\hbf=\left\{\begin{array}{cc} h'_{\Amc\cap \{Y_s,U_i|s\in\Smc,i\in\Emc'\}}, & U_e \notin \Amc\\
h'_{\Amc\cap \{Y_s,U_i|s\in\Smc,i\in\Emc'\}\cup \{U_i | i\in {\rm In}({\rm Tl}(e))\}} & U_e \in \Amc
\end{array}
\right.
\end{equation}
for $\Amc\subseteq \{Y_s,U_i|s\in\Smc,i\in\Emc\}$, then $\hbf \in\Gamma_N$.  Further, we let $R_e$ to be unconstrained, i.e., $R_e=\infty$.  Since $H(U_e)\leq R_e$, the network constraints in $\Asf$ will be  satisfied given that the other constraints will not be affected.  Hence, there exists an associated point $\Rbf\in\Rmc_o(\Asf)$ with $H(U_e)\leq R_e$, where $R_e$ is unconstrainted.  Therefore, we have $\Rmc_o(\Asf')\subseteq {\rm Proj}_{\boldsymbol{\omega},\boldsymbol{r}\setminus R_e} (\{\Rbf\in\Rmc_o(\Asf)\})$.  Reversely, suppose a point $\Rbf\in\Rmc_o(\Asf)$ is picked with $R_e$ unconstrained.  There exists an associated vector $\hbf\in\Gamma_N$ and a rate vector $\rbf=[R_i | i\in\Emc]$ such that $\Rbf={\rm Proj}_{\boldsymbol{\omega},\boldsymbol{r}} \ [\hbf,\rbf]\cap \Lmc_{\Asf}$.  Since $R_e$ is unconstrained, we will have $H(U_e)$ unconstrained as well.  Since the network constraints $\Lmc_\Asf$ with $R_e$ unconstrained will be $\Lmc_{\Asf'}$, and ${\rm Proj}_{\boldsymbol{\omega},\boldsymbol{r}\setminus R_e} \ [\hbf,\rbf] \in\Gamma_{N'}\cap \Lmc_{\Asf'}$, we have ${\rm Proj}_{\boldsymbol{\omega},\boldsymbol{r}\setminus R_e} \Rbf\in\Rmc_o(\Asf')$.  Therefore, we have ${\rm Proj}_{\boldsymbol{\omega},\boldsymbol{r}\setminus R_e} (\{\Rbf\in\Rmc_o(\Asf)\})\subseteq \Rmc_o(\Asf')$.
\end{IEEEproof}

\begin{theorem} \label{thm:edgdel}
Suppose a network $\Asf''=\textrm{minimal}(\Asf')$ is a minimal form of a network $\Asf'$ obtained by deleting $e$ from another network $\Asf=(\Smc,\Gmc,\Tmc,\Emc,\beta)$, i.e., $\Asf'=\Asf\setminus e$.  Then
\begin{equation}
\Rmc_l(\Asf')=\textrm{minimal}_{\Asf' \rightarrow \Asf''} \left({\rm Proj}_{\boldsymbol{\omega},\boldsymbol{r}\setminus R_e}(\{\Rbf\in \Rmc_l(\Asf)|R_e=0\})\right),\ l\in\{*,q,(s,q),o\}
\label{eq:encConteq1}
\end{equation}
\end{theorem}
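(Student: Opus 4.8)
The plan is to mirror the arguments already used for Theorems~\ref{thm:SrcDel} and~\ref{thm:EncCon}: first reduce the claim to the \emph{un-minimized} identity $\Rmc_l(\Asf')={\rm Proj}_{\boldsymbol{\omega},\boldsymbol{r}\setminus R_e}(\{\Rbf\in \Rmc_l(\Asf)|R_e=0\})$ for each $l\in\{*,q,(s,q),o\}$, after which the passage from $\Asf'$ to $\Asf''=\textrm{minimal}(\Asf')$ and the wrapping operator $\textrm{minimal}_{\Asf'\rightarrow\Asf''}(\cdot)$ are supplied verbatim by Theorem~\ref{thm:minimality}. The one conceptual point driving everything is that deleting $e$ is rate-region-equivalent to retaining $e$ in the graph but forcing $R_e=0$: on the rate region $R_e\geq h_{U_e}\geq0$ holds in every one of these bounds, so $R_e=0$ forces $h_{U_e}=0$, i.e.\ $U_e$ is (almost surely) a constant, and a constant edge variable contributes nothing to any conditioning set, hence is invisible to every intermediate-node constraint in $\Lmc_3$ and every sink decoding constraint in $\Lmc_5$ --- which is precisely what distinguishes the constraints $\Lmc_{\Asf'}$ from $\Lmc_{\Asf}$.

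For the inclusion $\Rmc_l(\Asf')\subseteq{\rm Proj}_{\boldsymbol{\omega},\boldsymbol{r}\setminus R_e}(\{\Rbf\in \Rmc_l(\Asf)|R_e=0\})$, I would take $\Rbf'\in\Rmc_*(\Asf')$, write it as a conic combination $\Rbf'=\sum_j\alpha_j\rbf'_j$ of rate points each arising from an entropic vector $\hbf^{(j)'}\in\Gamma_{N'}^*$ on the variables of $\Asf'$ whose entropies obey all constraints of $\Asf'$, and then augment each collection of random variables by a deterministic $U_e^{(j)}$, so that $H(U_e^{(j)})=0$. Adjoining a constant keeps the vector entropic, so the enlarged $\hbf^{(j)}\in\Gamma_N^*$, and all constraints of $\Asf$ hold with $R_e^{(j)}=0$ because $U_e^{(j)}$ drops out of every conditional entropy. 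The same conic combination then places $\Rbf$ (with $R_e=0$) in $\Rmc_*(\Asf)$. For $l\in\{q,(s,q)\}$ one augments the matroid representation achieving $\Rbf'$ by a zero column (a loop) for $U_e$, which is again $\Fbb_q$-representable and achieves $\Rbf$; for $l=o$ one extends $\hbf'\in\Gamma_{N'}$ to $\hbf\in\Gamma_N$ by setting $h_{\Amc\cup U_e}:=h_{\Amc}$ for all $\Amc\not\ni U_e$ (preserving every Shannon inequality), with $R_e=0$ again satisfying $\Lmc_{\Asf}$.

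For the reverse inclusion I would pick $\Rbf\in\Rmc_*(\Asf)$ with $R_e=0$; since $R_e\geq0$ everywhere on the rate region, such a point is a boundary point whose conic decomposition into extreme rays uses only rays with $R_e^{(j)}=0$, hence, lifting back, each carries $h_{U_e}^{(j)}\leq R_e^{(j)}=0$ so $U_e^{(j)}$ is constant and may be deleted: the remaining variables obey $\Lmc_{\Asf'}$, the projected vector stays entropic, and ${\rm Proj}_{\setminus U_e}\rbf_j\in\Rmc_*(\Asf')$; reassembling with the same $\alpha_j$ gives $\Rbf'\in\Rmc_*(\Asf')$. The code-bearing cases follow by striking the (loop) column for $U_e$ from the matroid/code achieving $\rbf_j$ --- and because a zero edge imposes no obstruction on a \emph{scalar} code, we recover equality for $l=(s,q)$ here, unlike edge contraction; the $l=o$ case is the mirror of the argument above, using $h_{U_e}=0$ to make conditioning on $U_e$ vacuous and that restriction of an entropy vector to a coordinate subset preserves $\Gamma_{N'}$-membership.

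The step I expect to be the main obstacle is bookkeeping rather than mathematics: after removing $e$ the network $\Asf'$ may simultaneously violate several minimality conditions (parallel edges via \ref{c8}, coincident sinks via \ref{c11}, empty-output intermediate nodes via \ref{c6}, forced-zero sources when $e$ was the only path to a demanding sink via \ref{c10}, or even disconnection via \ref{c14}), and one must confirm that composing the induced chain of reductions from Theorem~\ref{thm:minimality} is exactly what $\textrm{minimal}_{\Asf'\rightarrow\Asf''}(\cdot)$ encodes, so that applying that operator to the un-minimized identity yields \eqref{eq:encConteq1} for every $l$ in the stated list.
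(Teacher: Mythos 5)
Your proposal is correct and follows essentially the same route as the paper's proof: reduce to the un-minimized identity, observe that $R_e=0$ together with $R_e\geq h_{U_e}\geq 0$ forces $U_e$ to be constant and hence vacuous in every $\Lmc_3$ and $\Lmc_5$ constraint, and then handle $l=*$ by augmenting/stripping a zero-entropy variable in the conic combination, $l\in\{q,(s,q)\}$ by adding/deleting an all-zero column of the code, and $l=o$ by the corresponding extension/restriction of a vector in $\Gamma_N$. The paper's argument for the reverse inclusion likewise uses that a conic combination with $\alpha_j\geq 0$ of nonnegative $R_e^{(j)}$ can vanish only if every $R_e^{(j)}=0$, exactly as you state.
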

\begin{IEEEproof}
We will prove $\Rmc_l(\Asf') = {\rm Proj}_{\boldsymbol{\omega},\boldsymbol{r}\setminus R_e}\left(\left\{\Rbf\in \Rmc_l(\Asf)\left| R_e=0\right.\right\}\right)$, since the remainder of the theorem holds from the minimality reductions in Thm. \ref{thm:minimality}.

Select any point $\Rbf'\in \Rmc_*(\Asf')$.  Then there exists a conic combination of some points in $\Rmc_*(\Asf')$ that are associated with entropic vectors in $\Gamma_{N'}^*$ such that $\Rbf'=\sum\limits_{\rbf'_j \in\Rmc_*(\Asf')}\alpha_j \rbf'_j$, where $\alpha_j\geq 0, \forall j$.  For each $\rbf'_j$, there exist random variables $\Ybf^{(j)}_{\Smc},U^{(j)}_i , i\in \Emc\setminus e$, such that the entropy vector 
\begin{equation*}
\hbf^{(j)'}=\left[H(\Amc) \left| \Amc\subseteq \left\{Y_s^{(j)},U_i^{(j)} \left| s\in\Smc,i\in\Emc\setminus e\right.\right\}\right.\right]
\end{equation*}
 is in $\Gamma_{N'}^*$, where $N'=N-1$ is the number of variables in $\Asf'$.  Furthermore, their entropies satisfy all the constraints determined by $\Asf'$.   Let $U^{(j)}_e$ be the empty set or encoding all input with the all-zero vector, $U^{(j)}_e=\emptyset$ and further let $R_e=0$. Then the entropies of random variables $\left\{\Ybf^{(j)}_{\Smc},\Ubf^{(j)}_{\Emc}\right\}$ will satisfy the constraints in $\Asf$, and additionally obey $H(U^{(j)}_e)\leq R_e=0$.  Furthermore, the vector $\hbf^{(j)}=\left[H(\Amc)\left| \Amc\subseteq \left\{Y_s^{(j)},U_i^{(j)}\left| s\in\Smc,i\in\Emc\right.\right\}\right.\right]\in\Gamma_{N}^*$.  That is, $\rbf_j=[\rbf'_j,R_e=0]\in \Rmc_*(\Asf)$.  By using the same conic combination, we have an associated rate point $\Rbf=\sum\limits_{\rbf_j \in\Rmc_*(\Asf)}\alpha_j \rbf_j\in \left\{\Rbf\in \Rmc_*(\Asf)\left| R_e=0\right.\right\}$.  Thus, we have $\Rmc_*(\Asf')\subseteq {\rm Proj}_{\boldsymbol{\omega},\boldsymbol{r}\setminus R_e}(\{\Rbf\in \Rmc_*(\Asf)|R_e=0\})$.  

If $\Rbf'$ is achievable by general $\Fbb_q$ linear vector or scalar codes, there exists a construction of basic linear codes to achieve it.  Since all-zero code is a valid $\Fbb_q$ vector and scalar linear code, we have 
$\Rmc_l(\Asf')\subseteq{\rm Proj}_{\boldsymbol{\omega},\boldsymbol{r}\setminus R_e}(\{\Rbf\in \Rmc_l(\Asf)|R_e=0\}),\ l\in\{q,(s,q)\}.
$

On the other hand, if we select any point $\Rbf\in \{\Rbf\in \Rmc_*(\Asf)|R_e=0\}$, then, there exists a conic combination of some points in $\Rmc_*(\Asf)\cap \{R_e=0\}$ associated with entropic vectors in $\Gamma_N^*$, i.e., $\Rbf=\sum\limits_{\rbf_j \in\Rmc_*(\Asf)\cap \{R_e=0\}}\alpha_j \rbf_j, \ \alpha_j\geq 0,\ \forall j$.  For each $\rbf_j$, there exist random variables $\left\{\Ybf^{(j)}_{\Smc},\Ubf^{(j)}_{\Emc}\right\}$ such that their entropies satisfy all the constraints  determined by $\Asf$.   Furthermore, since $\alpha_j\geq 0$, the only conic combination makes $R_e=0$ is the case that $R_e^{(j)}=0$ and further $H(U_e^{(j)})=0$.  We can drop $H(U_e^{(j)})$, i.e., $R_e$, because  the entropies of $\left\{\Ybf^{(j)}_{\Smc},\Ubf^{(j)}_{\setminus e} \right\}$ satisfy all constraints determined by $\Asf'$ and the entropic vector projecting out $U_e^{(j)}$ is still entropic.  Using the same conic combination, $\Rbf'={\rm Proj}_{\setminus R_e}\sum\limits_{\rbf_j \in\Rmc_*(\Asf)}\alpha_j \rbf_j ={\rm Proj}_{\boldsymbol{\omega},\boldsymbol{r}\setminus R_e} \ \Rbf\in \Rmc_*(\Asf')$.  Thus, we have ${\rm Proj}_{\boldsymbol{\omega},\boldsymbol{r}\setminus R_e}(\{\Rbf\in \Rmc_*(\Asf)|R_e=0\})\subseteq\Rmc_*(\Asf')$.  If $\Rbf$ is achievable by $\Fbb_q$ code $\Cbb$, then the code to achieve $\Rbf'$ could be the code $\Cbb$ with deletion of columns associated with edge $U_e$, i.e., $\Cbb'=\Cbb_{:,\setminus U_e}$. Thus, ${\rm Proj}_{\boldsymbol{\omega},\boldsymbol{r}\setminus R_e}(\{\Rbf\in \Rmc_l(\Asf)|R_e=0\})\subseteq\Rmc_l(\Asf'),l\in\{q,(s,q)\}$.

Furthermore, for any point $\Rbf'\in\Rmc_o(\Asf')$, there exists an associated point $\hbf'\in\Gamma_{N'}$ and a rate vector $\rbf'=[R_i | i\in\Emc\setminus e]$ such that $\Rbf'={\rm Proj}_{\boldsymbol{\omega},\boldsymbol{r}\setminus R_e}\ [\hbf',\rbf']\cap \Lmc_{\Asf'}$.  Clearly, if we increase the dimension of $\hbf'$ by adding a variable $U_e$ with zero entropy, i.e., $H(U_e)=0$, we have the new entropy vector in $\Gamma_N$.  That is, if we define $\hbf=\left[h'_{\Amc\cap \{Y_s,U_i|s\in\Smc,i\in\Emc'\}}|\Amc\subseteq \{Y_s,U_i|s\in\Smc,i\in\Emc\}\right]$, then $\hbf \in\Gamma_N$.  Further, we let $R_e=0$.  Since $H(U_e)=R_e=0$, the network constraints in $\Asf$ will be  satisfied given that the zero entropy (capacity) does not break the conditional entropies associated with network constraints.  Hence, there exists an associated point $\Rbf\in\Rmc_o(\Asf)$ with $H(U_e)=R_e=0$.  Therefore, we have $\Rmc_o(\Asf')\subseteq {\rm Proj}_{\boldsymbol{\omega},\boldsymbol{r}\setminus R_e} (\{\Rbf\in\Rmc_o(\Asf)|R_e=0\})$.  Reversely, suppose a point $\Rbf\in\Rmc_o(\Asf)$ is picked with $R_e=0$.  There exists a vector $\hbf\in\Gamma_N$ and a rate vector $\rbf=[R_i | i\in\Emc]$ such that $\Rbf={\rm Proj}_{\boldsymbol{\omega},\boldsymbol{r}}\ [\hbf,\rbf]\cap \Lmc_{\Asf}$.  Since $R_e=0$, we will have $H(U_e)=0$.  Since the network constraints $\Lmc_\Asf$ with $H(U_e)=R_e=0$ will be $\Lmc_{\Asf'}$, and ${\rm Proj}_{\boldsymbol{\omega},\boldsymbol{r}\setminus R_e} \ [\hbf,\rbf] \in\Gamma_{N'}\cap \Lmc_{\Asf'}$, we have ${\rm Proj}_{\boldsymbol{\omega},\boldsymbol{r}\setminus R_e} \Rbf\in\Rmc_o(\Asf')$.  Therefore, we have ${\rm Proj}_{\boldsymbol{\omega},\boldsymbol{r}\setminus R_e} (\{\Rbf\in\Rmc_o(\Asf)|R_e=0\})\subseteq \Rmc_o(\Asf')$.
\end{IEEEproof}

\begin{corollary} \label{cor:embedded}
Consider two networks $\Asf,\Asf'$, with rate regions $\Rmc_*(\Asf),\Rmc_*(\Asf')$, such that $\Asf'\prec\Asf$.  If $\Fbb_q$ vector (scalar) linear codes suffice, or Shannon outer bound is tight for $\Asf$, then same statements hold for $\Asf'$. Equivalently, if $\Fbb_q$ vector (scalar) linear codes do not suffice, or Shannon outer bound is not tight for $\Asf'$, then same statements hold for $\Asf$. Equivalently, if $\Rmc_l(\Asf)=\Rmc_*(\Asf)$, then $\Rmc_l(\Asf')=\Rmc_*(\Asf')$, for some $l\in\{o,q,(s,q)\}$.
\end{corollary}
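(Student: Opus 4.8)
The plan is to package all of the asserted implications into the single claim that, for each $l\in\{o,q,(s,q)\}$, the equality $\Rmc_l(\Asf)=\Rmc_*(\Asf)$ forces $\Rmc_l(\Asf')=\Rmc_*(\Asf')$ whenever $\Asf'\prec\Asf$. Indeed, $\Fbb_q$ vector (resp.\ scalar) linear codes suffice for a network precisely when $\Rmc_q=\Rmc_*$ (resp.\ $\Rmc_{s,q}=\Rmc_*$), and the Shannon outer bound is tight precisely when $\Rmc_o=\Rmc_*$, using the always-valid chain $\Rmc_{s,q},\Rmc_q\subseteq\Rmc_*\subseteq\Rmc_c\subseteq\Rmc_o$; so the three phrasings in the corollary are one and the same statement, and the ``do not suffice / not tight'' version is merely its contrapositive. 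By Definition \ref{def:embedded}, $\Asf'\prec\Asf$ means $\Asf'$ is reached from $\Asf$ by a finite chain $\Asf=\Asf_0\succ\Asf_1\succ\cdots\succ\Asf_m=\Asf'$ of elementary steps (source deletion, edge deletion, or edge contraction, each followed by a $\textrm{minimal}(\cdot)$ reduction), so it suffices to prove the propagation across one elementary step and then induct on $m$, with the hypothesis $\Rmc_l(\Asf_0)=\Rmc_*(\Asf_0)$ as the base case.

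For one elementary step I would invoke Theorems \ref{thm:SrcDel}, \ref{thm:EncCon}, and \ref{thm:edgdel}. The key structural point is that each of these expresses $\Rmc_l(\Asf'')$ as the image of $\Rmc_l(\Asf)$ under one fixed operator --- namely $\textrm{minimal}_{\Asf'\rightarrow\Asf''}$ composed with a restriction-and-projection (set $H(Y_k)=0$ and project it out, for source deletion; project out $R_e$, for edge contraction; set $R_e=0$ and project it out, for edge deletion) --- and this operator is the same for every $l\in\{*,q,(s,q),o\}$; moreover the $\textrm{minimal}(\cdot)$ reductions of Theorem \ref{thm:minimality} are themselves $l$-agnostic in the reduction direction (for instance \ref{D4} and \ref{D8} hold for all of $l\in\{c,\ast,q,(s,q),o\}$ in the direction that reduces toward minimality). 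Hence, applying this common operator to both sides of $\Rmc_l(\Asf)=\Rmc_*(\Asf)$ yields $\Rmc_l(\Asf'')=\Rmc_*(\Asf'')$ at once.

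The lone exception is $l=(s,q)$ under edge contraction, where Theorem \ref{thm:EncCon} supplies only the inclusion $\Rmc_{s,q}(\Asf'')\supseteq\textrm{minimal}_{\Asf'\rightarrow\Asf''}({\rm Proj}_{\boldsymbol{\omega},\boldsymbol{r}\setminus R_e}\Rmc_{s,q}(\Asf))$. Here I would substitute $\Rmc_{s,q}(\Asf)=\Rmc_*(\Asf)$ and then use the $l=*$ equality of the same theorem to get $\Rmc_{s,q}(\Asf'')\supseteq\textrm{minimal}_{\Asf'\rightarrow\Asf''}({\rm Proj}_{\boldsymbol{\omega},\boldsymbol{r}\setminus R_e}\Rmc_*(\Asf))=\Rmc_*(\Asf'')$, and close the sandwich with the universal inner-bound inclusion $\Rmc_{s,q}(\Asf'')\subseteq\Rmc_*(\Asf'')$. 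With the single-step claim established for all $l\in\{o,q,(s,q)\}$, the induction over the chain $\Asf_0\succ\cdots\succ\Asf_m$ is immediate, and negating the conclusion gives the ``codes do not suffice'' / ``Shannon bound not tight'' direction.

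I expect the only genuine subtlety to be exactly this scalar edge-contraction case; the rest is the observation that Theorems \ref{thm:SrcDel}--\ref{thm:edgdel} transport every $\Rmc_l$ through the elementary operations by one and the same $l$-independent map, together with the bookkeeping that ties ``sufficiency of a code class'' and ``tightness of the Shannon bound'' to the equalities $\Rmc_q=\Rmc_*$, $\Rmc_{s,q}=\Rmc_*$, and $\Rmc_o=\Rmc_*$.
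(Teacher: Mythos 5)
Your proposal is correct and follows essentially the same route as the paper's own proof: propagate the equality $\Rmc_l(\Asf)=\Rmc_*(\Asf)$ across each elementary operation using the $l$-independent maps of Theorems \ref{thm:SrcDel}--\ref{thm:edgdel}, and handle the one genuine exception --- scalar codes under edge contraction --- by combining the inclusion \eqref{eq:encDeleq3} with the $l=*$ equality \eqref{eq:encConeq1} and the trivial containment $\Rmc_{s,q}(\Asf'')\subseteq\Rmc_*(\Asf'')$ to close the sandwich. Your write-up is in fact slightly more explicit than the paper's about the induction over the chain of minors and about the equivalence of the three phrasings of the statement.
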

\begin{IEEEproof}
From Definition \ref{def:embedded} we know that $\Asf'$ is obtained by a series of operations of source deletion, edge deletion, edge contraction. Theorems \ref{thm:SrcDel}--\ref{thm:edgdel} indicate that sufficiency of linear codes, vector or scalar, and the tightness of Shannon outer bound are preserved for each single embedding operation.  For vector case, if $\Rmc_q(\Asf)=\Rmc_*(\Asf)$, \eqref{eq:srcdeleq1}, \eqref{eq:encConeq1}, \eqref{eq:encConteq1} directly give $\Rmc_q(\Asf')=\Rmc_*(\Asf')$ for source deletion, edge contraction, and edge deletion, respectively.  Similar arguments work for the tightness of the Shannon outer bound.  For scalar code sufficiency,  \eqref{eq:srcdeleq1} and \eqref{eq:encConteq1}
indicate the same preservation of sufficiency of scalar codes for source and edge deletion, respectively.  For edge contraction and assumption of if $\Rmc_{s,q}(\Asf)=\Rmc_*(\Asf)$, \eqref{eq:encConeq1} and \eqref{eq:encDeleq3} indicate $\Rmc_*(\Asf')\subseteq \Rmc_{s,q}(\Asf')$.  Together with the straightforward fact that $\Rmc_{s,q}(\Asf')\subseteq \Rmc_*(\Asf')$, since scalar $\Fbb_q$ codes achievable rate region must be subset of the entire rate region, we can see $\Rmc_{s,q}(\Asf')=\Rmc_*(\Asf')$ holds for edge contraction as well. 
\end{IEEEproof}

Having introduced the embedding operations, which give smaller networks from larger networks, we next introduce some combination operations to get larger networks from smaller ones.

\section{Network Combination Operations}\label{sec:combination}
In this section we propose a series of combination operations relating smaller networks with larger networks in a manner such that the rate region of the larger network can be easily derived from those of the smaller ones.  In addition, the sufficiency of a class of linear network codes is inherited in the larger network from the smaller one.   Throughout the following, the network $\Asf=(\Smc,\Gmc,\Tmc,\Emc,\beta)$ is a combination of two \emph{disjoint} networks $\Asf_i=(\Smc_i,\Gmc_i,\Tmc_i,\Emc_i,\beta_i)$, $i\in\{1,2\}$, meaning $\Smc_1\cap\Smc_2=\emptyset$, $\Gmc_1\cap\Gmc_2=\emptyset$, $\Tmc_1\cap\Tmc_2=\emptyset$, $\Emc_1\cap\Emc_2=\emptyset$, and $\beta_1(t_1)\cap \beta_2(t_2)=\emptyset,\forall t_1\in \Tmc_1,t_2\in\Tmc_2$.  

\subsection{Definition of Combination Operations}

\begin{figure}
\centering 
\subfloat [\label{fig:examplesi}Sources merge: the merged source will serve for the new larger network.]{\includegraphics[scale=0.5]{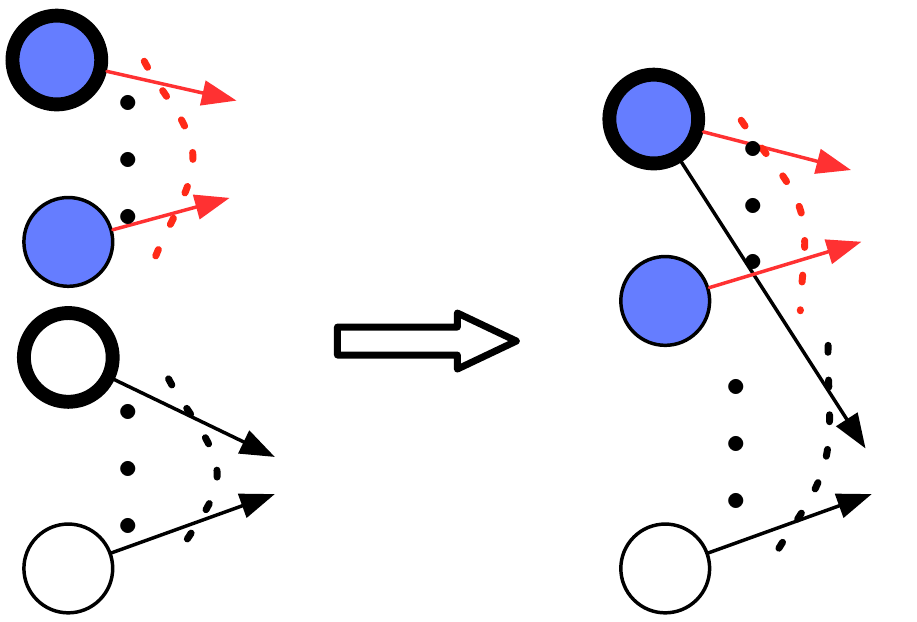} }\hspace{2mm}
\subfloat [\label{fig:examplesc}Sinks merge: input and output of the sinks are unioned, respectively.]{\includegraphics[scale=0.5]{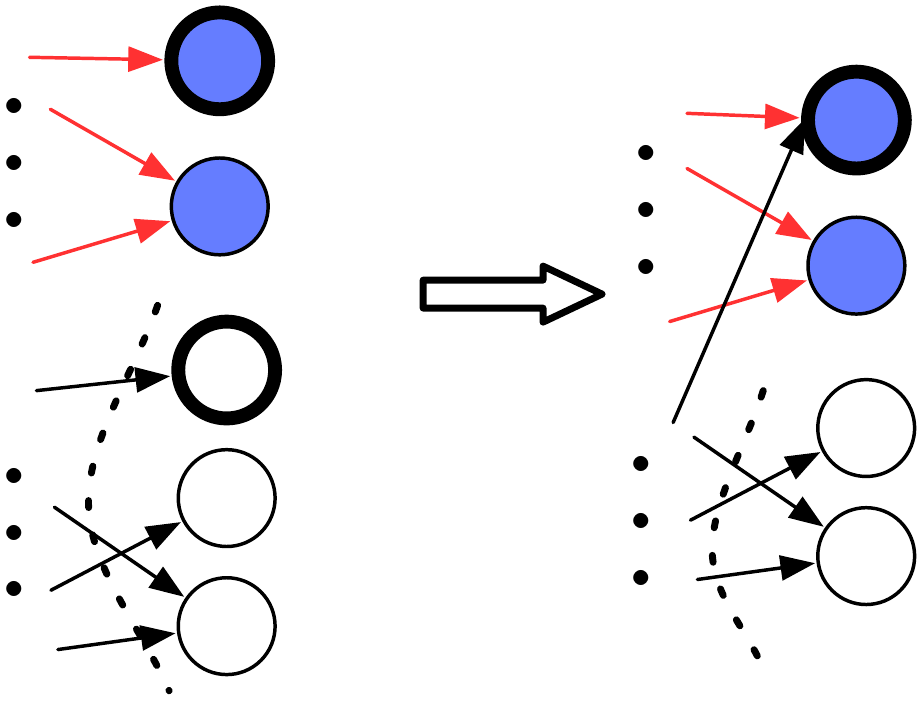} }

\subfloat [\label{fig:examplenc}Intermediate nodes merge: input and output of the nodes are unioned, respectively.]{\includegraphics[scale=0.5]{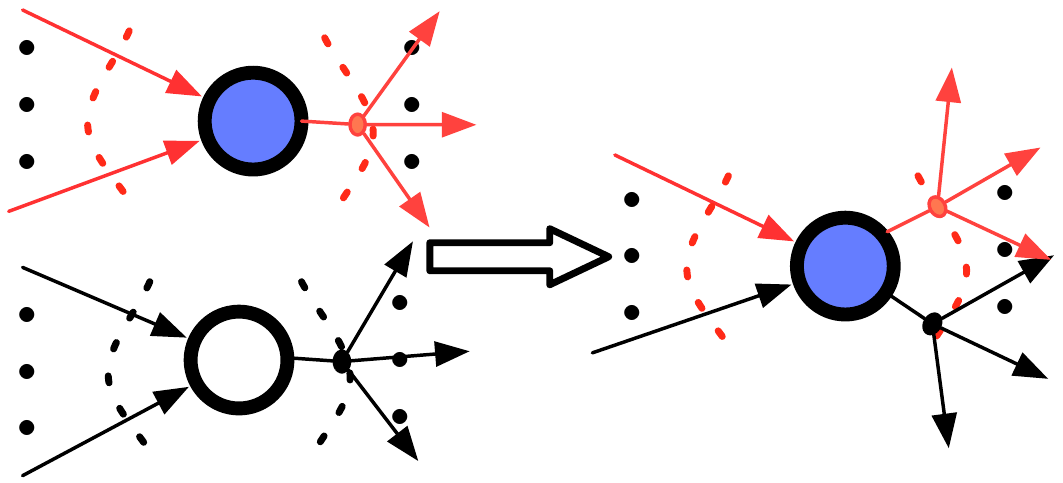} }\hspace{2mm}
\subfloat [\label{fig:exampleec}Edges merge: one extra node and four associated edges are added to replace the two edges.]{\includegraphics[scale=0.5]{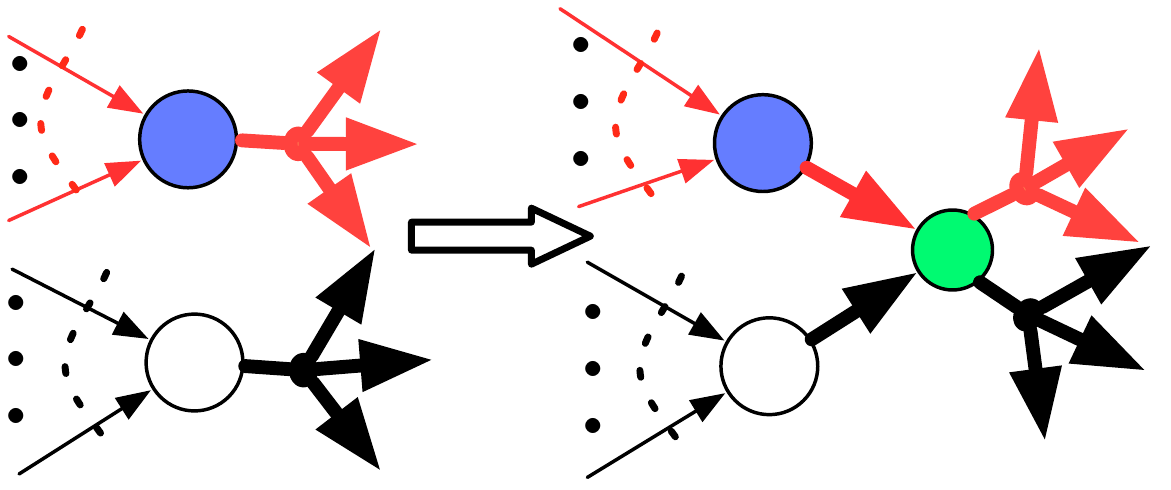} }
\caption{Combination operations on two smaller networks to form a larger network.  Thickly lined nodes (edges) are merged.}
\label{fig:embedexamplegeneral} 
\end{figure}


The operations we will define will merge network elements, i.e., sources, intermediate nodes, sink nodes, edges, etc, and are depicted in Fig. \ref{fig:embedexamplegeneral}. Since each merge will combine one or several pairs of elements, with each pair containing one element from $\Asf_1$ and the other from $\Asf_2$, each merge definition will involve a bijection $\pi$ indicating which element from the appropriate set of $\Asf_2$ is paired with its argument in $\Asf_1$.

We first consider the sources merge operation, in which the merged sources will function as identical sources for both sub-networks, as shown in Fig.\,\ref{fig:examplesi}.  A sink requiring sources involved in the merge will require the merged source instead.

\begin{figure}
\centering 
\subfloat[\label{fig:combinationexamplesrm}Demonstration of source merge on two networks: sources $s_1,s_3$ are merged to $s_1$, so $s_1$ will send information to both sub-networks.]{\includegraphics[scale=0.4]{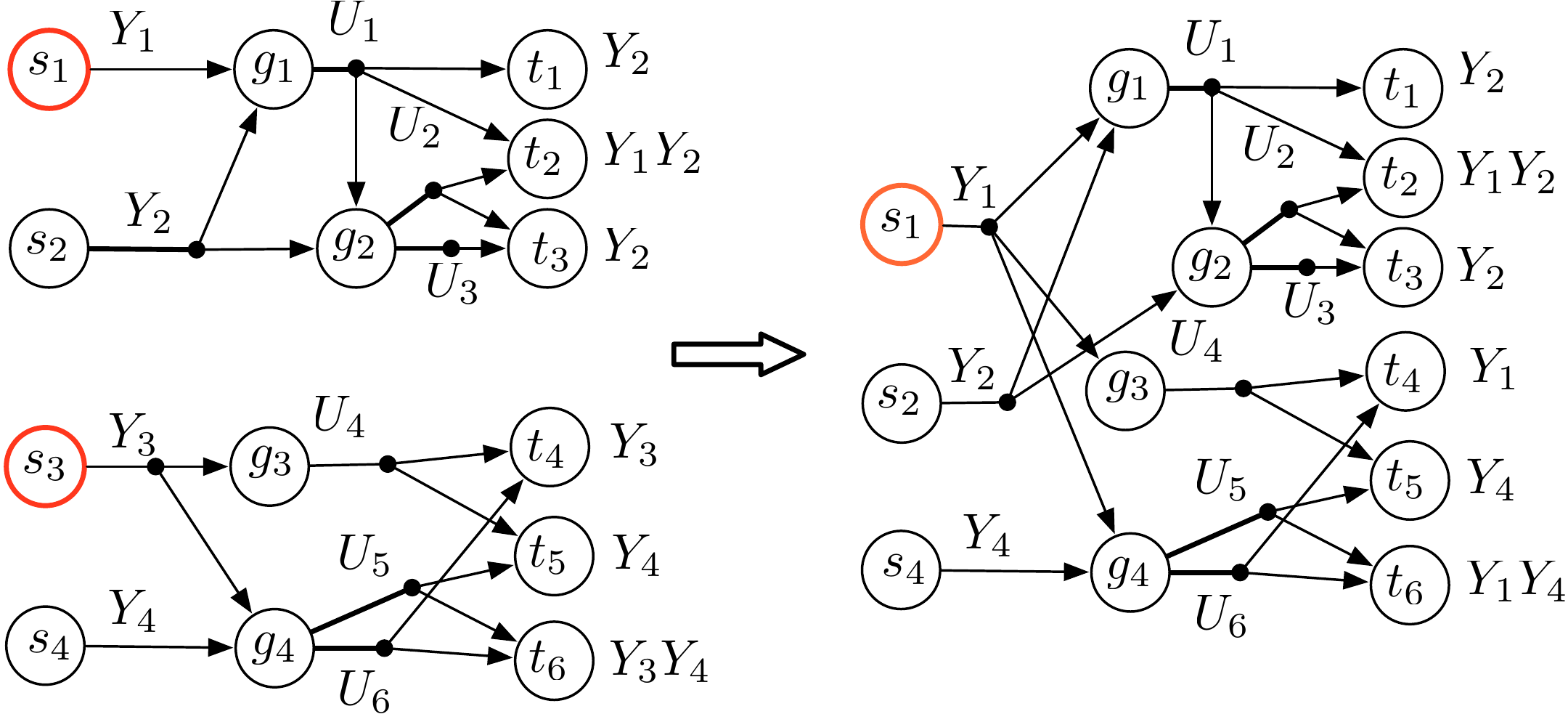} }

\subfloat[\label{fig:combinationexampleskm} Demonstration of sink merge on two networks: sinks $t_2,t_4$ are merged to $t_2$, so their input and demands are combined.]{\includegraphics[scale=0.4]{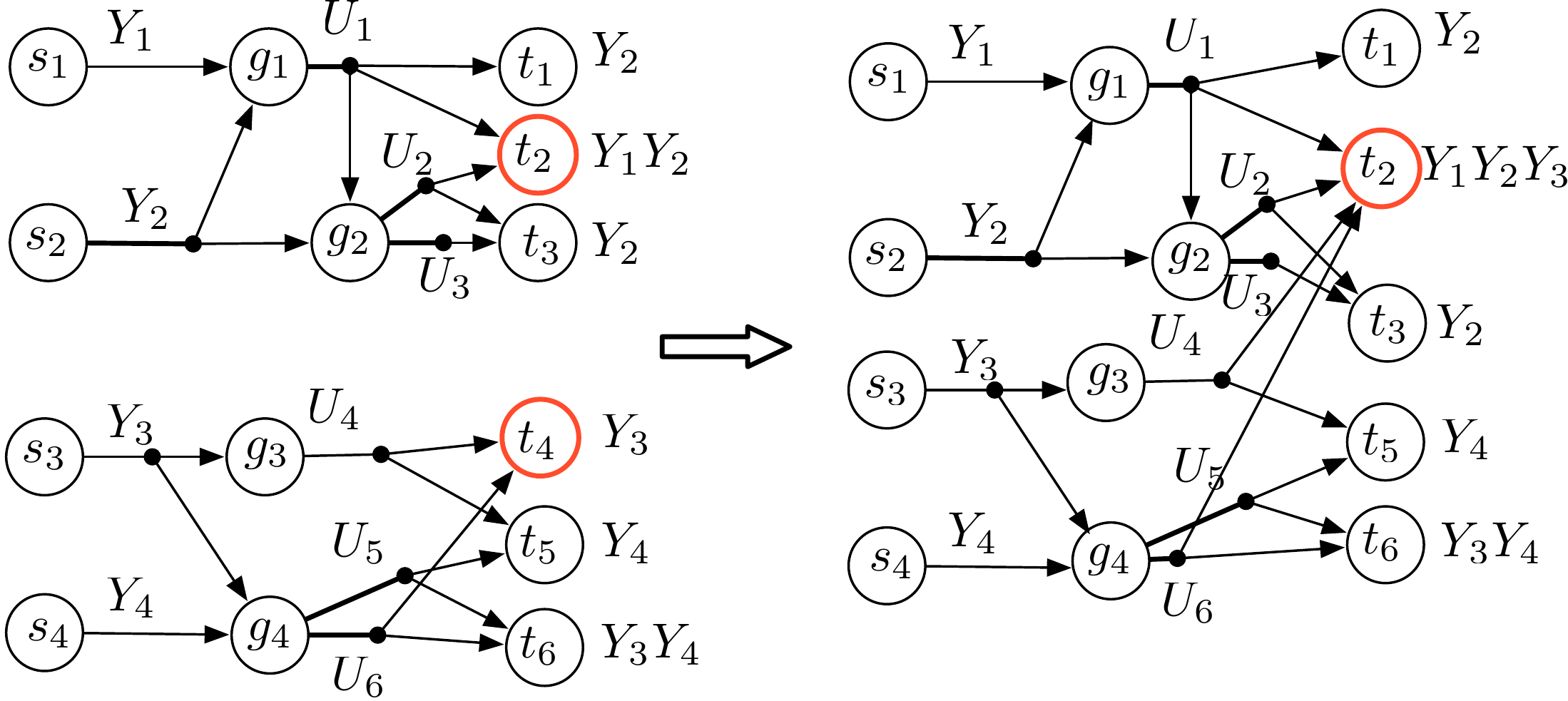} }

\subfloat[\label{fig:combinationexamplend} Demonstration of node merge on two networks: nodes $g_2$ and $g_4$ are merged to node $g_2$, so their input and output are also combined.]{\includegraphics[scale=0.4]{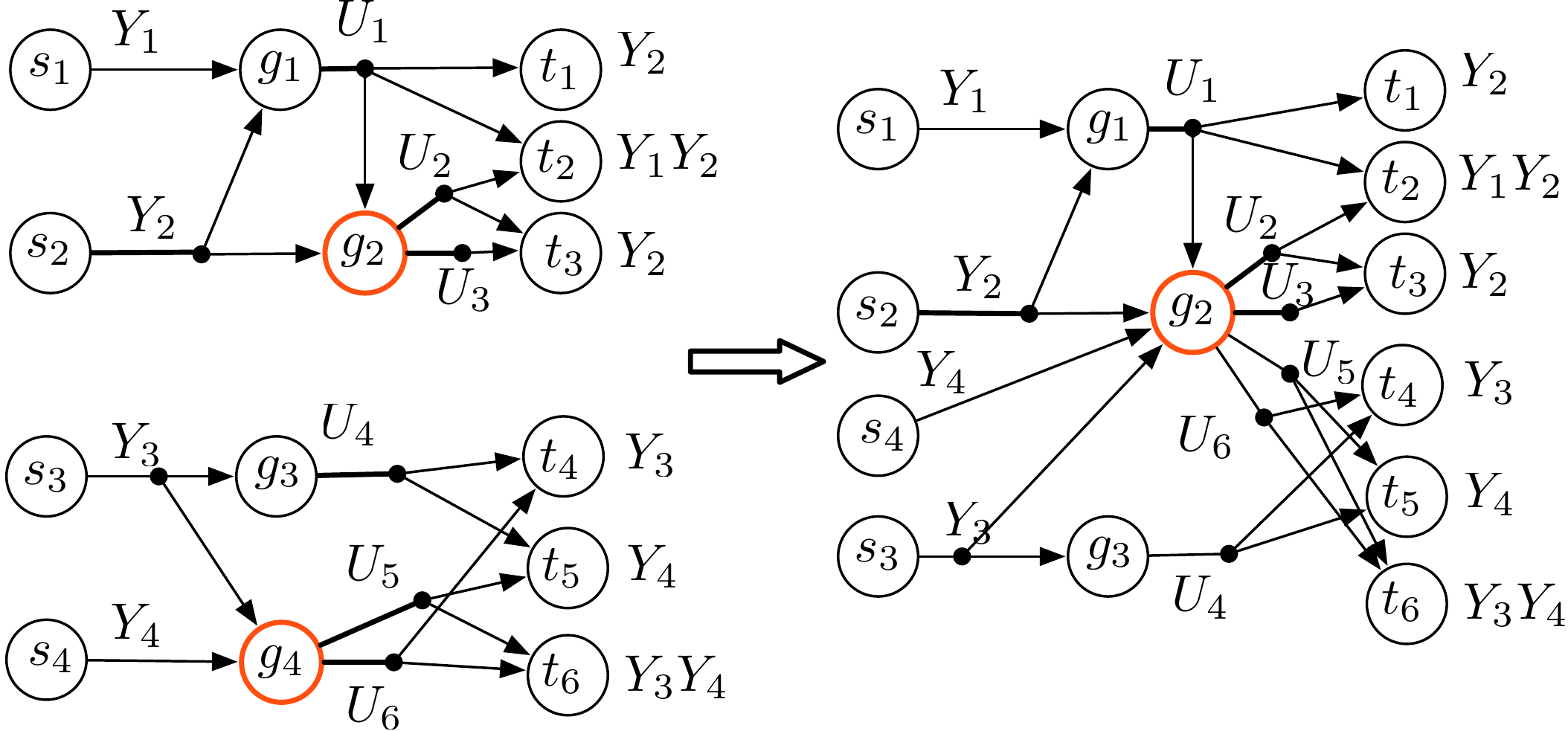} }

\subfloat[\label{fig:combinationexampleegm} Demonstration of edge merge on two networks: when $U_1,U_4$ are merged, one extra node and four edges are added to replace $U_1,U_4$ in the two networks, respectively.]{\includegraphics[scale=0.4]{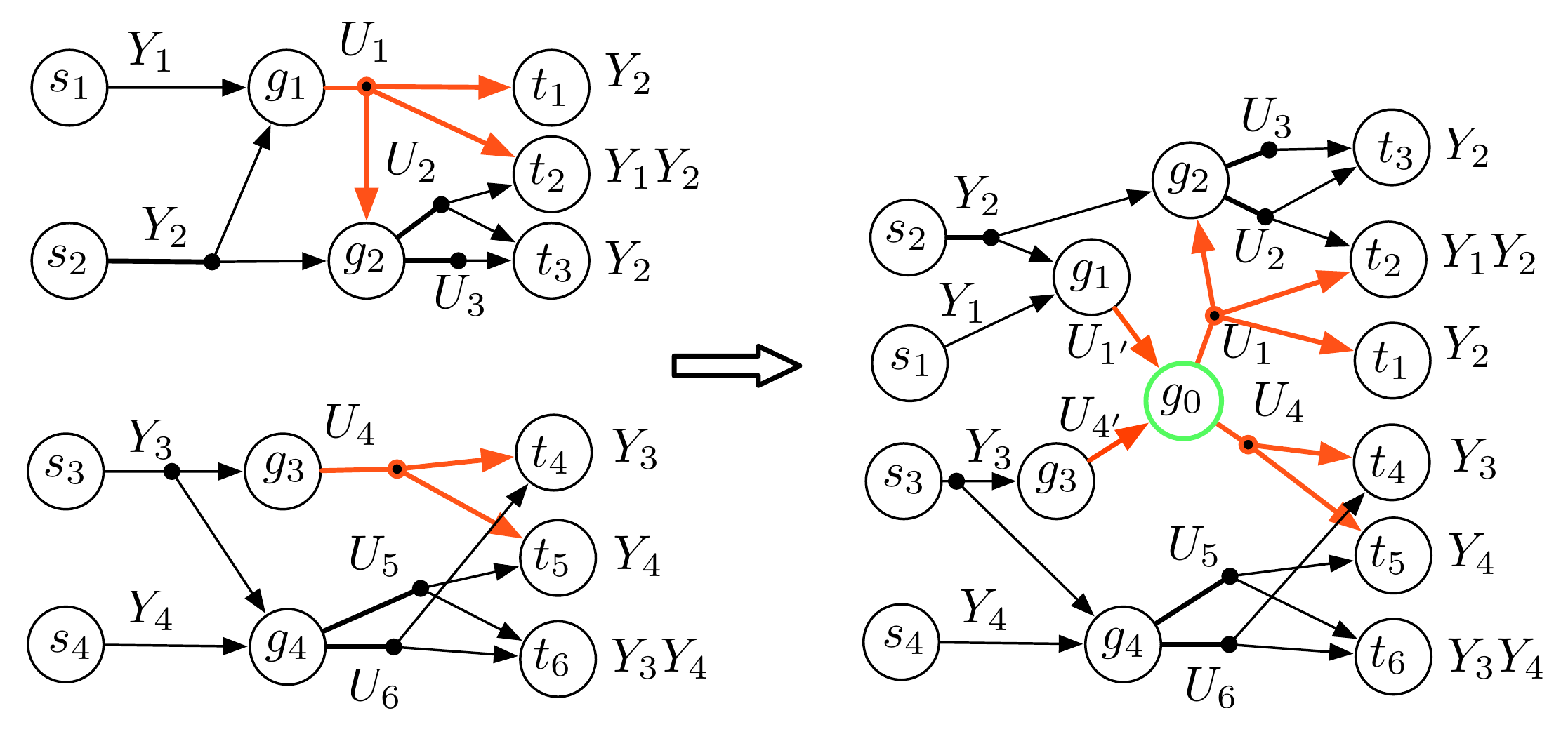} }
\caption{Example to demonstrate combinations of two networks.}
\end{figure}


\begin{definition}[Source Merge $(\Asf_1.\hat{\Smc}=\Asf_2.\pi(\hat{\Smc}))$ --  Fig.\,\ref{fig:examplesi}]
Merging the sources $\hat{\Smc}\subseteq \Smc_1$ from network $\Asf_1$ with the sources $\pi(\hat{\Smc})\subseteq \Smc_2$ from a disjoint network $\Asf_2$, will produce a network $\Asf$ with $i)$ merged sources $\Smc=\Smc_1\cup\Smc_2\setminus\pi(\hat{\Smc})$, $ii)$ $\Gmc=\Gmc_1\cup\Gmc_2$, $iii)$ $\Tmc=\Tmc_1\cup\Tmc_2$, $iv)$ $\Emc=(\Emc_1\cup\Emc_2\setminus \Amc) \cup \Bmc$, where $\Amc=\{e\in\Emc_1\cup\Emc_2|\text{Tl}(e)\in\hat{\Smc}\cup\pi(\hat{\Smc})\}$ includes the edges connected with the sources involved in the merge, $\Bmc=\{(s,\Fmc_1\cup\Fmc_2)|s\in\hat{\Smc},(s,\Fmc_1)\in\Emc_1,(\pi(s),\Fmc_2)\in\Emc_2\}$ includes the new edges connected with the merged sources, and $v)$ updated sink demands
\begin{equation*}
\beta(t) = \left\{ \begin{array}{cc} \beta_1(t) & t \in \mathcal{T}_1 \\
\left(\beta_2(t)\setminus \pi(\hat{S}) \right) \cup \pi^{-1}\left( \pi(\hat{S}) \cap \beta_2(t) \right)  & t\in\mathcal{T}_2 \end{array} \right. .
\end{equation*}
\end{definition}

Fig.\,\ref{fig:combinationexamplesrm} demonstrates the source merge in a network example.

Similar to source merge, we can merge sink nodes of two networks, as demonstrated in Fig.\ \ref{fig:examplesc}.  When two sinks are merged into one sink, we simply union their input and demands as the input and demands of the merged sink. 
\begin{definition}[Sink Merge $(\Asf_1.\hat{\Tmc}+\Asf_2.\pi(\hat{\Tmc}))$ -- Fig.\ \ref{fig:examplesc}.]
Merging the sinks $\hat{\Tmc}\subseteq \Tmc_1$ from network $\Asf_1$ with the sinks $\pi(\hat{\Tmc})\subseteq \Tmc_2$ from the disjoint network $\Asf_2$ will produce a network $\Asf$ with $i)$ $\Smc=\Smc_1\cup\Smc_2$; $\Gmc=\Gmc_1\cup\Gmc_2$, $ii)$ $\Tmc=\Tmc_1\cup\Tmc_2\setminus \pi(\hat{\Tmc})$, $iii)$ $\Emc=\Emc_1\cup\Emc_2\cup\Amc\setminus \Bmc$, where $\Amc=\{(g_2,\Fmc_1\cup\Fmc_2)|g_2\in\Gmc_2,\Fmc_1\subseteq\hat{\Tmc},\Fmc_2\subseteq \Tmc_2,(g_2,\pi(\Fmc_1)\cup\Fmc_2)\in\Emc_2\}$ updates the head nodes of edges in $\Asf_2$ with new merged sinks, $\Bmc=\{(g_2,\Fmc_2)\in\Emc_2|\Fmc_2\cap\pi(\hat{\Tmc})\neq\emptyset\}$ includes the edges connected to sinks in $\pi(\hat(\Tmc))$, and $v)$ updated sink demands
\begin{equation}
\beta(t) = \left\{ \begin{array}{cc} \beta_i(t)  & t\in \Tmc_i \setminus \hat{\Tmc}, i \in\{1,2\} \\
\beta_1(t) \cup \beta_2(\pi(t)) & t \in \hat{\Tmc} \end{array} \right. .
\end{equation}
\end{definition}

Fig.\,\ref{fig:combinationexampleskm}  demonstrates the sink merge in a network example. 

Next, we define intermediate nodes merge.  When two intermediate nodes are merged, we union their incoming and outgoing edges as the incoming and outgoing edges of the merged node, respectively, as illustrated in Fig.\,\ref{fig:examplenc}.  
\begin{definition}[Intermediate Node Merge $(\Asf_1.g+\Asf_2.\pi(g))$ -- Fig.\,\ref{fig:examplenc}]\label{def:NodCom}
Merging the intermediate node $g\in \Gmc_1$ from network $\Asf_1$ with the intermediate node $\pi(g) \in \Gmc_2$ from the disjoint network $\Asf_2$ will produce a network $\Asf$ with $i)$ $\Smc=\Smc_1\cup\Smc_2$, $ii)$ $\Gmc=\Gmc_1\cup\Gmc_2\setminus \pi(g)$, $iii)$ $\Tmc=\Tmc_1\cup\Tmc_2$, $iv)$ $\Emc=\Emc_1\cup\Emc_2\cup\Amc\cup\Bmc\setminus\Cmc\setminus\Dmc$, where $\Amc=\{(g_2,\Fmc_2\setminus \pi(g)\cup g)|g_2\in\Gmc_2, (g_2,\Fmc_2\cup \pi(g))\in\Emc_2\}$ updates the head nodes of edges in $\Asf_2$ that have $\pi(g)$ as head node, $\Bmc=\{(g,\Fmc_2)|(\pi(g),\Fmc_2)\in\Emc_2\}$ updates the tail node of edges in $\Asf_2$ that have $\pi(g)$ as tail node, $\Cmc=\{e\in\Emc_2|\text{Tl}(e)=\pi(g)\}$ includes the edges in $\Asf_2$ that have $\pi(g)$ as tail node, $\Dmc=\{e\in\Emc_2| \pi(g)\in\text{Hd}(e)\}$ includes the edges in $\Asf_2$ that have $\pi(g)$ as head node; and $v)$ updated sink demands
\begin{equation}\label{eq:refMe}
\beta(t) = \left\{ \begin{array}{cc} \beta_1(t) & t\in\Tmc_1 \\ \beta_2(t) & t\in\Tmc_2 \end{array} \right.
\end{equation}
\end{definition}
Fig.\,\ref{fig:combinationexamplend} demonstrates the node merge in a network example.

Finally, we define edge merge.   As demonstrated in Fig.\ \ref{fig:exampleec}, when two edges are merged, one new node and four new edges will be added to create a "cross" component so that the transmission will be in the new component instead of the two edges being merged. 

\begin{definition}[Edge Merge $(\Asf_1.e+\Asf_2.\pi(e))$ -- Fig.\ \ref{fig:exampleec}]\label{def:EdgCom}
Merging edge $e\in\Emc_1$ from network $\Asf_1$ with edge $\pi(e)\in\Emc_2$ from disjoint network $\Asf_2$ will produce a network $\Asf$ with $i)$  $\Smc=\Smc_1\cup\Smc_2$, $ii)$  $\Gmc=\Gmc_1\cup\Gmc_2\cup g_0$, where $g_0\notin \Gmc_1,g_0\notin\Gmc_2 $, $iii)$ $\Tmc=\Tmc_1\cup\Tmc_2$, $iv)$ $\Emc=(\Emc_1\setminus e ) \cup(\Emc_2\setminus\pi(e)) \cup\{(\text{Tl}(e), g_0),  (\text{Tl}(\pi(e)),g_0), (g_0,\text{Hd}(e)), (g_0,\text{Hd}(\pi(e)))\}$; and $v)$ updated sink demands given by (\ref{eq:refMe}).
\end{definition}

It is not difficult to see that this edge merge operation can be thought of as a special node merge operation. Suppose the edges being merged are $\Asf_1.e$, $\Asf_2.\pi(e)$.  If two virtual nodes $g_1,g_2$ are added on $e,\pi(e)$, respectively, splitting them each into two edges, so that $e,\pi(e)$ go into and flow out $g_1,g_2$, respectively, then, the merge of $g_1,g_2$ gives the same network as merging $e,\pi(e)$.  Fig.\,\ref{fig:combinationexampleegm} demonstrates the edge merge in a network example.


\subsection{Preservation Properties of Combination Operations}

Here we prove that the combination operations enable the rate regions of the small networks to be combined to produce the rate region of the resulting large network, and also preserve sufficiency of classes of codes and tightness of other bounds.


\begin{theorem} \label{thm:SrcIdn}
Suppose a network $\Asf$ is obtained by merging $\hat{\Smc}$ with $\pi(\hat{\Smc})$, i.e., $\Asf_1.\hat{\Smc}=\Asf_2.\pi(\hat{\Smc})$. Then
\begin{equation}
\Rmc_l(\Asf)={\rm Proj}( (\Rmc_l(\Asf_1)\times \Rmc_l(\Asf_2)) \cap \Lmc_0), \  l \in\{\ast,q,(s,q),o\}\label{eq:SrcIdneq1}
\end{equation}
with $
\Lmc_0=\left\{H(Y_{s})=H(Y_{\pi(s)}),\forall s\in\hat{\Smc}\right\}$,
and the dimensions kept in the projection are $(H(Y_s),s\in\Smc)$ and $(R_{e},e\in\Emc)$, where $\Smc,\Emc$ represent the source and edge sets of the merged network $\Asf$, respectively.
\end{theorem}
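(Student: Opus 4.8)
The plan is to prove the two inclusions of \eqref{eq:SrcIdneq1} by exploiting that, apart from the merged sources, $\Asf_1$ and $\Asf_2$ share no nodes or edges. Writing $\Nmc_i$ for the variable collection of $\Asf_i$ and $\Nmc$ for that of $\Asf$, the first step is to record a \emph{factorization} of the combined network's constraints: since a source edge carries the source variable itself and the merge only fuses the variables $\{Y_s\mid s\in\hat\Smc\}$, each node in $\Gmc=\Gmc_1\cup\Gmc_2$ and each sink in $\Tmc=\Tmc_1\cup\Tmc_2$ has, in $\Asf$, exactly the same set of incoming and outgoing variables and the same demands as in the sub-network it came from. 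Hence every constraint of $\Lmc_3$, $\Lmc_5$ and $\Lmc_{4'}$ for $\Asf$ is inherited verbatim from one of $\Asf_1,\Asf_2$ and refers only to $\Nmc_1$-coordinates or only to $\Nmc_2$-coordinates, while $\Lmc_1$ of $\Asf$ amounts to $\Lmc_1(\Asf_1)$, $\Lmc_1(\Asf_2)$, the equalities $\Lmc_0$, and the independence of the $\Asf_1$ sources from the unmerged $\Asf_2$ sources; thus $\Nmc=\Nmc_1\cup\Nmc_2$ with overlap $\{Y_s\mid s\in\hat\Smc\}$, and $\Lmc_\Asf$ ``splits'' along this overlap.

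For the inclusion $\subseteq$, I would take $\Rbf\in\Rmc_l(\Asf)$ and, in the style of the proofs of Theorems~\ref{thm:SrcDel}--\ref{thm:edgdel}, realize it as a conic combination of rate points arising from vectors $\hbf^{(j)}$ in the relevant bound ($\Gamma_N^*$ for $l=\ast$, $\Gamma_N$ for $l=o$, a scalar, respectively vector, representable matroid rank for $l=(s,q)$, respectively $l=q$) lying in $\Lmc_{13}(\Asf)$ and obeying $\Lmc_{4'5}(\Asf)$. Restricting each $\hbf^{(j)}$ to the subsets of $\Nmc_1$ gives a vector of the same type for $\Asf_1$ (entropy vectors, Shannon vectors, and matroid ranks are all closed under deleting ground-set elements; for $l=q,(s,q)$ one deletes the columns of the representing matrix outside $\Nmc_1$) that obeys $\Lmc_{13}(\Asf_1)$, and likewise for $\Nmc_2$; the two restrictions agree on the coordinates $h_{Y_s}$, $s\in\hat\Smc$. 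Reassembling with the original conic weights produces a point of $(\Rmc_l(\Asf_1)\times\Rmc_l(\Asf_2))\cap\Lmc_0$ projecting to $\Rbf$.

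For the reverse inclusion I would start from $(\Rbf_1,\Rbf_2)\in(\Rmc_l(\Asf_1)\times\Rmc_l(\Asf_2))\cap\Lmc_0$ and must build a single vector for $\Asf$ restricting to witnesses for $\Rbf_1$ and $\Rbf_2$. The idea is to \emph{glue} along the merged sources: realize the $\Asf_1$-part and the $\Asf_2$-part so that they use the \emph{same} copies of $\{Y_s\mid s\in\hat\Smc\}$, which is legitimate because these are mutually independent sources carrying equal rates (this is exactly $\Lmc_0$), and then render the two parts conditionally independent given $Y_{\hat\Smc}$. Concretely, for $l=o$ I extend the two Shannon vectors to $\Nmc$ by setting to zero the conditional mutual information between the $\Asf_1$-only and $\Asf_2$-only coordinates given $Y_{\hat\Smc}$; for $l=\ast$ I form the corresponding conditional-product distribution; for $l=q,(s,q)$ I take representing matrices for the two sub-networks that agree on the columns indexed by $\hat\Smc$ and are block-diagonal on the rest. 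Passing through conic decompositions of $\Rbf_1$ and $\Rbf_2$ arranged so that matched terms carry identical copies of the merged sources, applying this gluing term by term, using the factorization of paragraph one to see that each glued vector satisfies $\Lmc_{13}(\Asf)$ and $\Lmc_{4'5}(\Asf)$, and re-summing, yields the desired point of $\Rmc_l(\Asf)$.

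The main obstacle I expect is the gluing step: (i) arranging the two conic decompositions so that the merged source variables genuinely coincide across the paired terms --- the bookkeeping here is the delicate part, and it is where the equal-rate constraint $\Lmc_0$ and the mutual independence of the sources in $\hat\Smc$ are essential --- and (ii) checking that the ``conditionally independent given $Y_{\hat\Smc}$'' construction stays inside the relevant class and meets $\Lmc_\Asf$. For $l=o$, (ii) is a routine submodularity check that the conditional-independence extension preserves the Shannon inequalities; for $l=\ast$ it is the standard fact that a conditional product of distributions sharing a common margin is a valid joint law, so the glued vector is entropic; and for $l=q,(s,q)$ it is the observation that overlapping the $\hat\Smc$-columns and block-diagonalizing the remaining columns gives one matrix over $\Fbb_q$ whose rank function is the glued vector --- and, importantly, this keeps a \emph{scalar} representation scalar, which is why $(s,q)$ can appear here with equality, in contrast to the edge-contraction case of Theorem~\ref{thm:EncCon}.
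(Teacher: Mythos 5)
Your proposal is correct and follows essentially the same route as the paper: the inclusion $\subseteq$ by restricting (marginalizing/column-deleting) each witness to $\Nmc_1$ and $\Nmc_2$, and the inclusion $\supseteq$ by identifying the merged source variables and gluing the two witnesses conditionally independently given $Y_{\hat{\Smc}}$ — which for $l=o$ is exactly the paper's extension $h_{\Amc}=h^1_{\Amc\cap\Nmc_1}+h^2_{\Amc\cap\Nmc_2}-h^2_{\Amc\cap\pi(\hat{\Smc})}$, and for $l\in\{q,(s,q)\}$ is its ``use the same source bits in both sub-codes'' construction. The bookkeeping concern you flag about matching terms across the two conic decompositions is real, and the paper's own proof treats it no more carefully than you do.
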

\begin{remark}\label{rem1}
The inequality description of the polyhedral cone ${\rm Proj}( (\mathcal{P}_1\times \mathcal{P}_2) \cap \Lmc_0)$ for two polyhedral cones $\mathcal{P}_j,j\in\{1,2\}$ can be created by concatenating the inequality descriptions for $\mathcal{P}_1$ and $\mathcal{P}_2$, then replacing the variable $H(X_{\pi(s)})$ with the variable $H(X_s)$ for each $s\in\hat{\Smc}$.
\end{remark}

\noindent \emph{Proof:} Select any point $\Rbf\in \Rmc_*(\Asf)$.  Then there exists a conic combination of some points in $\Rmc_*(\Asf)$ that are associated with entropic vectors in $\Gamma_{N}^*$ such that $\Rbf=\sum\limits_{\rbf_j \in\Rmc_*(\Asf)}\alpha_j \rbf_j$, where $\alpha_j\geq 0, \forall j$.  For each $\rbf_j$, there exist random variables $\Ybf^{(j)}_{\Smc},U^{(j)}_i , i\in \Emc\setminus e$, such that the entropy vector 
\begin{equation*}
\hbf^{(j)}=\left[H(\Amc) \left| \Amc\subseteq \left\{Y_s^{(j)},U_i^{(j)} \left| s\in\Smc,i\in\Emc\right.\right\}\right.\right]
\end{equation*}
 is in $\Gamma_{N}^*$, where $N$ is the number of variables in $\Asf$.  Furthermore, their entropies satisfy all the constraints determined by $\Asf$. When decomposing $\Asf$ into $\Asf_1,\Asf_2$, let i.i.d. copies of variables $Y_s^{(j)},s\in\hat{\Smc}$ work as sources $\pi(\hat{\Smc})\subseteq\Smc_2$.  The associated edges connecting $\hat{\Smc}$ and nodes in $\Gmc_2$ will then connect $\pi(\hat{\Smc})$ and nodes in $\Gmc_2$.  Then the random variables $\{Y_s^{(j)},U_i^{(j)} | s\in\Smc_1,i\in\Emc_1\},\{Y_s^{(j)},U_i^{(j)}| s\in\Smc_2,i\in\Emc_2\}$ will satisfy the network constraints determined by $\Asf_1,\Asf_2$, and also $\mathcal{L}_0$.  Thus, $\Rbf\in{\rm Proj}( (\Rmc_*(\Asf_1)\times \Rmc_*(\Asf_2)) \cap \Lmc_0)$. 
Similarly, if $\Rbf$ is achievable by $\Fbb_q$ codes, vector or scalar, the same code applied to the part of $\Asf$ that is $\Asf_1,\Asf_2$ will achieve $\Rbf_1,\Rbf_2$, respectively.  Putting these together, we have
$\Rmc_l(\Asf)\subseteq
{\rm Proj}( (\Rmc_l(\Asf_1)\times \Rmc_l(\Asf_2)) \cap \Lmc_0),  l\in\{\ast,q,(s,q)\}.$ 

Next, if we select two points $\Rbf_1\in \Rmc_*(\Asf_1)$, $\Rbf_2\in \Rmc_*(\Asf_2)$ such that $H(Y_{s})=H(Y_{\pi(s)}),\forall s\in\hat{\Smc}$, then there exist conic combinations $\Rbf_1=\sum\limits_{\rbf_{i,j} \in\Rmc_*(\Asf_i)}\alpha_{i,j} \rbf_{i,j}$ for $i=1,2$, and for each $\rbf_{i,j}$ there exist a set of variables associated with sources and edges.  Since $H(Y_{s})=H(Y_{\pi(s)})$ and sources are independent and uniformly distributed, we can let the associated variables $Y_{s}^{(j)}$ and $Y_{\pi(s)}^{(j)}$ be the same variables.  Then, after combination, the entropy vector of all variables $\{Y_{s}^{(j)},U_e^{(j)}| s\in\Smc,e\in\Emc\}$ will be in $\Gamma_N^*$.  Furthermore, their entropies, together with the rate vectors from $\Rbf_1,\Rbf_2$, will satisfy all network constraints of $\Asf$, and there will be an associated point $\rbf=\rbf_1\times\rbf_2$ with $\Lmc_0$.  Using the same conic combination, we will find the associated point $\Rbf=\Rbf_1\times\Rbf_2 \cap \Lmc_0$.  Hence,  ${\rm Proj}( (\Rmc_*(\Asf_1)\times \Rmc_*(\Asf_2)) \cap \Lmc_0)\subseteq \Rmc_*(\Asf)$.  Now suppose there exists a sequence of network codes for $\Asf_1$ and $\Asf_2$ achieving $\Rbf_1,\Rbf_2$.  By using the same source bits as the source inputs for $s$ in $\Asf_1$ and $\pi(s)$ in $\Asf_2$ for each $s\in\hat{S}$, we have the same effect as using these source bits as the inputs for $s$ in the source merged $\Asf$ and achieving the associated rate vector $\Rbf$, implying $\Rbf \in \Rmc_l(\Asf),\ l\in\{q,(s,q)\}$, and hence  $\Rmc_{s,q}(\Asf)\supseteq{\rm Proj}( (\Rmc_{s,q}(\Asf_1)\times \Rmc_{s,q}(\Asf_2)) \cap \Lmc_0)$ and $\Rmc_{q}(\Asf)\supseteq{\rm Proj}( (\Rmc_{q}(\Asf_1)\times \Rmc_{q}(\Asf_2)) \cap \Lmc_0)$.  Together with the statements above, this proves (\ref{eq:SrcIdneq1}) for $l\in\{\ast,q,(s,q)\}$.

Furthermore, by (\ref{eq:regionoutfree}), any point $\Rbf \in \mathcal{R}_o(\Asf)$, is the projection of some point $[\hbf,\rbf] \in \Gamma_N \cap \mathcal{L}_{\Asf}$, where $\hbf\in\Gamma_N$ and $\rbf=[R_e | e\in\Emc]$. Because the Shannon inequalities and network constraints in $\Gamma_N \cap \mathcal{L}_{\Asf}$ form a superset (i.e., include all of) of the network constraints in $\Gamma_N \cap \mathcal{L}(\Asf_i)$, the subvectors $[\hbf^i,\rbf^i]$ of $[\hbf,\rbf]$  associated only with the variables in $\Asf_i$ (with $Y_{\pi(s)}$ being recognized as $Y_{s}$ for all $s\in\hat{S}$) are in $\Gamma_{N_i}\cap\mathcal{L}(\Asf_i)$ and obey $\mathcal{L}_0$, implying $\Rbf \in{\rm Proj}( (\Rmc_o(\Asf_1)\times \Rmc_o(\Asf_2)) \cap \Lmc_0)$, and hence 
$\Rmc_{o}(\Asf)\subseteq {\rm Proj}( (\Rmc_o(\Asf_1)\times \Rmc_o(\Asf_2)) \cap \Lmc_0)$.

Next, if we select two points $\Rbf_1\in \Rmc_o(\Asf_1)$, $\Rbf_2\in \Rmc_o(\Asf_2)$ such that $H(Y_{s})=H(Y_{\pi(s)}),\forall s\in\hat{\Smc}$, then there exists $[\hbf^{i},\rbf^{i}] \in\Gamma_{N_i} \cap \mathcal{L}(\Asf_i)$, where $\hbf^i\in\Gamma_{N_i}$ and $\rbf^i=[R_e | e\in\Emc_i]$, such that $\Rbf_i = {\rm Proj}_{\boldsymbol{r}_i,\boldsymbol{\omega}_i} [\hbf^i,\rbf^i]$, $i\in\{1,2\}$ with $h^1_{X_s} = h^2_{X_{\pi(s)}}$ for all $s\in\hat{S}$.  Define $\hbf$ whose element associated with the subset $\mathcal{A}$ of $\mathcal{N} = \Smc\cup \Emc $ is
$h_{\mathcal{A}} = h^1_{\mathcal{A} \cap \mathcal{N}_1} + h^2_{\mathcal{A}\cap \mathcal{N}_2} - h^2_{\mathcal{A}\cap \pi(\hat{\Smc})}$
where $\mathcal{N}_i = \Smc_i \cup \Emc_i$, $i\in\{1,2\}$.  By virtue of its creation this way, this function is submodular and $\hbf \in\Gamma_N$.
Since the two networks are disjoint, the list of equalities in $\mathcal{L}_3(\Asf)$ is simply the concatenation of the lists in $\mathcal{L}_3(\Asf_1)$ and $\mathcal{L}_3(\Asf_2)$, each of which involved inequalities in disjoint variables $\Nmc_1$ and $\Nmc_2$, and the same thing holds for $\mathcal{L}_{4'}$ with consideration of $\rbf^i$.  Furthermore, since $\hbf^i \in \Lmc_2(\Asf_i)$ and $h^1_{Y_s}=h^2_{Y_{\pi(s)}}, s\in\hat{\Smc}$, $\hbf$ obeys $\mathcal{L}_2(\Asf)$.
The definition of $\hbf$, together with $\hbf^i\in\mathcal{L}_1(\Asf_i), \ i\in\{1,2\}$ and $h^1_{Y_s} = h^2_{Y_{\pi(s)}}, \ s\in\hat{\Smc}$, implies that $\hbf\in\mathcal{L}_1(\Asf)$.  Finally $\hbf^1\in\mathcal{L}(\Asf_1)$ and $\hbf^2 \in\Lmc(\Asf_2)$ imply $\hbf\in\Lmc_5(\Asf)$.  Putting these facts together we observe that 
$[\hbf,\rbf]\in\Gamma_N \cap \mathcal{L}_{\Asf}$, so $\Rbf \in \Rmc_o(\Asf)$, implying $\Rmc_{o}(\Asf)\supseteq {\rm Proj}( (\Rmc_o(\Asf_1)\times \Rmc_o(\Asf_2)) \cap \Lmc_0)$.
\hfill $\blacksquare$

\begin{theorem} \label{thm:SnkCom}
Suppose a network $\Asf$ is obtained by merging sink nodes $\hat{\Tmc}$ with $\pi(\hat{\Tmc})$, i.e., $(\Asf_1.\hat{\Tmc}+\Asf_2.\pi(\hat{\Tmc}))$.  Then
\begin{equation}
\Rmc_l(\Asf)= \Rmc_l(\Asf_1)\times \Rmc_l(\Asf_2) , \ l \in\{\ast,q,(s,q),o\} \label{eq:SnkComeq1}
\end{equation}
with the index on the dimensions mapping from $\{e\in\Emc_2| \text{Hd}(e)\in\pi(\hat{\Tmc})\}$ to $\{e\in\Emc| \text{Hd}(e)\in\hat{\Tmc},\text{Tl}(e)\in \Gmc_2\}$.
\end{theorem}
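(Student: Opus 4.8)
The plan is to follow the structure of the proof of Theorem~\ref{thm:SrcIdn}, but the situation here is simpler: since no sources are identified, there is no coupling constraint $\Lmc_0$, and the claim is a genuine Cartesian product, modulo the relabeling of the edge indices in $\{e\in\Emc_2\mid \textrm{Hd}(e)\in\pi(\hat{\Tmc})\}$ to those in $\{e\in\Emc\mid \textrm{Hd}(e)\in\hat{\Tmc},\ \textrm{Tl}(e)\in\Gmc_2\}$, which I fix once and for all and suppress below. The workhorse, which I would establish first, is a \emph{direct--sum / independence lemma}: writing $\Nmc_i=\Smc_i\cup\Emc_i$, every vector $\hbf$ feasible for $\Asf$ in the relevant sense (an entropic vector, a representable--matroid rank / subspace--dimension vector, or a point of $\Gamma_N$ obeying $\Lmc_1\cap\Lmc_3$) satisfies $h_{\mathcal{A}\cup\mathcal{B}}=h_{\mathcal{A}}+h_{\mathcal{B}}$ for all $\mathcal{A}\subseteq\Nmc_1$, $\mathcal{B}\subseteq\Nmc_2$. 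The point is that the sink merge leaves the in--edge set of every intermediate node of $\Asf_1$ and of $\Asf_2$ unchanged (only head labels of $\Asf_2$'s edges into $\pi(\hat{\Tmc})$ are rewritten), so $\Lmc_3$ forces $h_{\Nmc_i}=h_{\Ybf_{\Smc_i}}$, while $\Lmc_1$ gives $h_{\Nmc}=h_{\Ybf_{\Smc}}=h_{\Ybf_{\Smc_1}}+h_{\Ybf_{\Smc_2}}=h_{\Nmc_1}+h_{\Nmc_2}$; the product form on arbitrary subsets then follows from monotonicity and submodularity by a standard polymatroid argument. For genuinely entropic vectors and matroid ranks this is just independence of the $\Nmc_1$--variables from the $\Nmc_2$--variables.

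Granting the lemma, the decoding constraints $\Lmc_5$ decouple. A merged sink $t$ in $\Asf$ has $\textrm{In}(t)=\textrm{In}_1(t)\cup\textrm{In}_2(\pi(t))$ and $\beta(t)=\beta_1(t)\cup\beta_2(\pi(t))$, with $\beta_1(t)\cup\textrm{In}_1(t)\subseteq\Nmc_1$ and $\beta_2(\pi(t))\cup\textrm{In}_2(\pi(t))\subseteq\Nmc_2$. Using $h_{\mathcal{A}\cup\mathcal{B}}=h_{\mathcal{A}}+h_{\mathcal{B}}$ one checks that conditioning on the other network's edges changes no conditional entropy, so $h_{\beta(t)\mid \textrm{In}(t)}=0$ holds for $\Asf$ if and only if $h_{\beta_1(t)\mid \textrm{In}_1(t)}=0$ and $h_{\beta_2(\pi(t))\mid \textrm{In}_2(\pi(t))}=0$, i.e.\ the decoding constraints of $\Asf_1$ and $\Asf_2$. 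The non--merged sinks, the encoders $\Lmc_3$, the source--independence constraints $\Lmc_1$, and the capacity constraints $\Lmc_{4'}$ all split trivially between $\Asf_1$ and $\Asf_2$ because the two networks are disjoint. Hence $\hbf$ lies in the relevant cone intersected with $\Lmc_{\Asf}$ if and only if its restrictions to $\Nmc_1$ and $\Nmc_2$ lie in the corresponding objects for $\Asf_1$ and $\Asf_2$; projecting onto $(\boldsymbol{r},\boldsymbol{\omega})$ then yields \eqref{eq:SnkComeq1} for $l=o$ directly (using $\Gamma_N$), and for $l\in\{*,q\}$ after the same conic--hull / closure manipulations used in Theorem~\ref{thm:SrcIdn} (so that elements of $\mathrm{con}(\Gamma^*_N)\cap\Lmc_{13}$, resp.\ of $\Gamma_{N,\infty}^q\cap\Lmc_{13}$, really are conic combinations of vectors each already possessing the direct--sum property).

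For the scalar case $l=(s,q)$ I would argue at the code level exactly as in the corresponding part of Theorem~\ref{thm:SrcIdn}: a pair of scalar $\Fbb_q$ codes achieving $\Rbf_1\in\Rmc_{s,q}(\Asf_1)$ and $\Rbf_2\in\Rmc_{s,q}(\Asf_2)$, run in parallel on $\Asf$ with the merged sink simply applying $\Asf_1$'s decoder to $\textrm{In}_1(t)$ and $\Asf_2$'s decoder to $\textrm{In}_2(\pi(t))$, achieves $\Rbf_1\times\Rbf_2$; conversely a scalar $\Fbb_q$ code for $\Asf$ restricts, by deleting the representation--matrix columns indexed by $\Emc_2$ (resp.\ $\Emc_1$), to scalar $\Fbb_q$ codes for $\Asf_1$ and $\Asf_2$, and decoding at the merged sink of $\Asf$ implies decoding at the corresponding sinks of the two parts since those decoders ignore the deleted columns (the direct--sum lemma, applied to the matroid rank, guarantees the restricted rank functions still obey $\Lmc_1$ of the parts). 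An analogue of Remark~\ref{rem1} also holds: an inequality description of $\Rmc_l(\Asf)$ is obtained by concatenating those of $\Rmc_l(\Asf_1)$ and $\Rmc_l(\Asf_2)$ under the edge relabeling.

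I expect the main obstacle to be stating the direct--sum lemma uniformly across the four bounds $l\in\{*,q,(s,q),o\}$ --- in particular, verifying that $\Lmc_1\cap\Lmc_3$ already forces the product structure on all of $\Gamma_N$ (not merely on genuinely entropic vectors), which is precisely what is needed to decouple $\Lmc_5$ inside the Shannon outer bound --- together with making the $\mathrm{con}(\cdot)$ and closure bookkeeping for $l\in\{*,q\}$ precise, as in Theorem~\ref{thm:SrcIdn}. The remaining steps --- the disjointness--driven splitting of $\Lmc_1,\Lmc_3,\Lmc_{4'}$ and of the non--merged sinks, and the parallel--code constructions --- are routine.
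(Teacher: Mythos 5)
Your proposal is correct and follows essentially the same route as the paper's proof: the paper's forward direction likewise rests on the observation that source independence ($\Lmc_1$), the functional constraints ($\Lmc_3$), disjointness, and Shannon-type inequalities force the Markov structure $h_{\Ybf_{\Smc_i}|\Ubf_{\textrm{In}(t)\cap\Emc_1},\Ubf_{\textrm{In}(t)\cap\Emc_2}}=h_{\Ybf_{\Smc_i}|\Ubf_{\textrm{In}(t)\cap\Emc_i}}$ even for vectors only in $\Gamma_N$, which is exactly the content of your direct-sum lemma, and the reverse direction is the same direct-sum construction $h_{\Amc}=h^1_{\Amc\cap\Nmc_1}+h^2_{\Amc\cap\Nmc_2}$ together with running the two codes in parallel. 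Your packaging of the decoupling step as an explicit polymatroid lemma (derived from $h_{\Nmc}=h_{\Nmc_1}+h_{\Nmc_2}$ via submodularity) is a clean restatement of what the paper asserts directly, not a different argument.
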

\noindent \emph{Proof:}  Consider a point $\Rbf\in\Rmc_l(\Asf)$ with conic combination of $\Rbf=\sum\limits_{\rbf_j\in \Rmc_l(\Asf)}\alpha_{l,j} \rbf_{l,j}$, where $\alpha_{l,j}\geq 0$ for any $j$ and $l\in\{\ast,q,(s,q),o\}$.  Each $\rbf_{l,j}$ has associated  random variables or the associated codes.  Due to the independence of sources in networks $\Asf_1,\Asf_2$, and the fact that their sources and intermediate nodes are disjoint, the variables arriving at a merged sink node from $\Asf_1$ will be independent of the sources in $\Asf_2$ and the variables arriving at a merged sink node from $\Asf_2$ will be independent of the sources in $\Asf_1$.  In particular, Shannon type inequalities imply the Markov chains
$H(\Ybf_{\Smc_1}|\Ubf_{\text{In}(t)\cap\Emc_1},\Ubf_{\text{In}(t)\cap\Emc_2}) =H(\Ybf_{\Smc_1}|\Ubf_{\text{In}(t)\cap\Emc_1})$
and $H(\Ybf_{\Smc_2}|\Ubf_{\text{In}(t)\cap\Emc_1},\Ubf_{\text{In}(t)\cap\Emc_2}) =H(\Ybf_{\Smc_2}|\Ubf_{\text{In}(t)\cap\Emc_2})$ for all $t\in\Tmc$ (even if the associated ``entropies'' are only in $\Gamma_N$ and not necessarily $\bar{\Gamma}^*_N$).  This then implies, together with the independence of the sources, that $H(\Ybf_{\beta(t)}|\Ubf_{\text{In}(t)}) = H(\Ybf_{\beta(t)\cap\Smc_1}|\Ubf_{\text{In}(t)\cap\Emc_1})+H(\Ybf_{\beta(t)\cap\Smc_2}|\Ubf_{\text{In}(t)\cap\Emc_2})$, showing that the constraints in $\mathcal{L}_5(\Asf)$ imply the constraints in $\mathcal{L}_5(\Asf_1)$ and $\mathcal{L}_5(\Asf_2)$. Furthermore, given the disjoint nature of $\Asf_1$ and $\Asf_2$, the constraints in $\mathcal{L}_i(\Asf)$, are simply the concatenation of the constraints in $\mathcal{L}_i(\Asf_1)$ and $\mathcal{L}_i(\Asf_2)$, for $i\in\{2,3,4'\}$.  Furthermore, the joint independence of all of $\Ybf_{\Smc_1},\Ybf_{\Smc_2}$ imply the marginal independence of the collections of variables $\Ybf_{\Smc_1}$ and $\Ybf_{\Smc_2}$, so that $\mathcal{L}_1(\Asf)$ implies $\mathcal{L}_1(\Asf_i),i\in\{1,2\}$.  This shows that $\rbf_{l,j}\in \rbf_{l,j}^{1}\times \rbf_{l,j}^{2}$ and further $\Rbf \in \Rmc_l(\Asf_1)\times\Rmc_l(\Asf_2)$, and hence $\Rmc_l(\Asf) \subseteq \Rmc_l(\Asf_1)\times\Rmc_l(\Asf_2), l \in\{\ast,q,(s,q),o\}$.

Next, consider two points $\Rbf_i \in\Rmc_l(\Asf_i),\ i\in\{1,2\}$ for any $l\in\{q,(s,q),o\}$.  By definition these are projections of $[\hbf^i,\rbf^i] \in \Gamma^q_{N_i,\infty}\cap \Lmc(\Asf_i)$, $[\hbf^i,\rbf^i]  \in \Gamma^q_{N_i}\cap \Lmc(\Asf_i)$, $[\hbf^i,\rbf^i] \in \Gamma_{N_i}\cap \Lmc(\Asf_i)$, respectively, for $i\in\{1,2\}$, where $\hbf^i\in\Gamma_{N_i}$ and $\rbf^i=[R_e | e\in\Emc_i]$.  Define $\hbf$ with value associated with subset $\Amc\subseteq \Nmc$ of $
h_{\Amc} = h^1_{\Amc\cap \Nmc_1} + h^2_{\Amc\cap \Nmc_2}$, then it is easily verified that the resulting $[\hbf, \rbf^1,\rbf^2]\in \Gamma^q_{N,\infty}\cap \Lmc_{\Asf}$, $[\hbf, \rbf^1,\rbf^2] \in \Gamma^q_{N}\cap \Lmc_{\Asf}$, $[\hbf, \rbf^1,\rbf^2] \in \Gamma_{N}\cap \Lmc_{\Asf}$, respectively, (simply use the same codes from $\Asf_1$ and $\Asf_2$ on the corresponding parts of $\Asf$).  Since $\Rbf = {\rm Proj}_{\boldsymbol{\omega},\boldsymbol{r}} [\hbf,\rbf^1,\rbf^2]$, we have proven $\Rbf\in\Rmc_l(\Asf)$, and hence that $\Rmc_l(\Asf) \supseteq \Rmc_l(\Asf_1)\times \Rmc_l(\Asf_2)$.  Further, for two points $\Rbf_i \in\Rmc_*(\Asf_i),\ i\in\{1,2\}$, there exist a conic combination of $\rbf^i_j$, $\Rbf_i=\sum\limits_{\rbf^i_j\in\Rmc_*(\Asf_i)}\alpha_j^i \rbf^i_j$, with associated random variables $\left\{Y_s^{(j)},U_i^{(j)} | s\in\Smc_1,i\in\Emc_1\right\},\left\{Y_s^{(j)},U_i^{(j)} | s\in\Smc_2,i\in\Emc_2\right\}$ satisfying the network constraints determined by $\Asf_1,\Asf_2$.  Due to the independence of sources and disjoint of edge variables, the union of variables in $\Asf_1,\Asf_2$ will satisfy the network constraints in the merged $\Asf$.  With the same conic combinations, we have $\Rbf=\sum\limits_{\rbf^i_j\in\Rmc_*(\Asf_i)} [\alpha_j^1 \rbf^1_j,\alpha_j^2 \rbf^2_j]\in\Rmc_*(\Asf)$.  Thus, $\Rmc_*(\Asf) \supseteq \Rmc_*(\Asf_1)\times \Rmc_*(\Asf_2)$. \hfill $\blacksquare$

\begin{theorem} \label{thm:NodCom}
Suppose a network $\Asf$ is obtained by merging $g$ and $\pi(g)$, i.e., $\Asf_1.g+\Asf_2.\pi(g)$.  Then
\begin{equation}
\Rmc_l(\Asf)= \Rmc_l(\Asf_1)\times\Rmc_l(\Asf_2),\ l\in\{\ast,q,(s,q),o\} \label{eq:NodComeq1}
\end{equation}
with dimensions/ indices mapping from $\{e\in\Emc_2| \text{Hd}(e)=\pi(g)\}$ to $\{e\in\Emc| \text{Hd}(e)=g,\text{Tl}(e)\in \Gmc_2\}$ and from 
$\{e\in\Emc_2|\text{Tl}(e)=\pi(g)\}$ to $\{e\in\Emc| \text{Tl}(e)=g,\text{Hd}(e)\in \Gmc_2\}$.
\end{theorem}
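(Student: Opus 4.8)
Our plan is to prove the two inclusions in \eqref{eq:NodComeq1} separately, following the same template that handled source merge (Theorem~\ref{thm:SrcIdn}) and sink merge (Theorem~\ref{thm:SnkCom}); the identification of the coordinates $R_e$ of $\Asf$ for edges incident to $g$ with the coordinates $R_e$ of $\Asf_2$ for the corresponding edges incident to $\pi(g)$ is bookkeeping that we carry along implicitly. Throughout write $\Nmc_i=\Smc_i\cup\Emc_i$, and let ${\rm In}_1(g),{\rm Out}_1(g)$ and ${\rm In}_2(\pi(g)),{\rm Out}_2(\pi(g))$ denote the edges of $\Asf_1$ and of $\Asf_2$, respectively, incident to the node being merged, so that in $\Asf$ the merged node has ${\rm In}(g)={\rm In}_1(g)\cup{\rm In}_2(\pi(g))$ and ${\rm Out}(g)={\rm Out}_1(g)\cup{\rm Out}_2(\pi(g))$.

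For ``$\supseteq$'' the argument is a routine transcription of the corresponding step in Theorem~\ref{thm:SnkCom}. Given $\Rbf_i\in\Rmc_l(\Asf_i)$, lift each to a certificate $[\hbf^i,\rbf^i]$ --- entropic, or in $\Gamma_{N_i}^{q}$, $\Gamma_{N_i,\infty}^{q}$, or $\Gamma_{N_i}$ according to $l$, and lying in $\Lmc_{\Asf_i}$ --- and define $h_{\Amc}=h^1_{\Amc\cap\Nmc_1}+h^2_{\Amc\cap\Nmc_2}$, which is submodular and in the corresponding cone because $\Asf_1$ and $\Asf_2$ are disjoint (this is the product/independence construction already used in Theorems~\ref{thm:SrcIdn} and~\ref{thm:SnkCom}). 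One then verifies $[\hbf,\rbf^1,\rbf^2]\in\Lmc_{\Asf}$ constraint by constraint; every constraint is the concatenation of the corresponding constraints of $\Asf_1$ and $\Asf_2$ except $\Lmc_3$ at the merged node $g$, which still holds because in this construction $\Ubf_{{\rm Out}_1(g)}$ is a deterministic function of $\Ubf_{{\rm In}_1(g)}$ and $\Ubf_{{\rm Out}_2(\pi(g))}$ of $\Ubf_{{\rm In}_2(\pi(g))}$, so $\Ubf_{{\rm Out}(g)}$ is a function of $\Ubf_{{\rm In}(g)}$. For $\Rmc_q$ and $\Rmc_{s,q}$ the same conclusion follows operationally: run the $\Asf_1$ code and the $\Asf_2$ code on the respective halves of $\Asf$, with $g$ performing both encoding jobs independently.

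The inclusion ``$\subseteq$'' is the substantive part, and here node merge genuinely differs from sink merge: since $g$ now also sees the inputs ${\rm In}_2(\pi(g))$, the edge variables of $\Asf_1$ lying downstream of $g$ need not be functions of $\Ybf_{\Smc_1}$ alone, so the $\Asf$ constraints do not split ``for free'' through Markov chains as they do in Theorem~\ref{thm:SnkCom}. The additional observation to supply is that relaying cross-network information through the merged node is never beneficial. Concretely, write a point of $\Rmc_\ast(\Asf)$ as a conic combination of rate points $\rbf_j$ whose associated entropic vectors $\hbf^{(j)}$ satisfy $\Lmc_{13}$ (and also $\Lmc_5$, since the conic combination sets each nonnegative conditional-entropy term of $\Lmc_5$ to zero). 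For each $j$ I would condition on $\Ubf_{{\rm In}_2(\pi(g))}$: for a fixed value $\hat u$ in its support, the conditional law of the $\Asf_1$-variables keeps $\Ybf_{\Smc_1}$ mutually independent with unchanged entropies (because $\Ybf_{\Smc_1}\perp\Ubf_{{\rm In}_2(\pi(g))}$), makes $\Ubf_{{\rm Out}_1(g)}$ a deterministic function of $\Ubf_{{\rm In}_1(g)}$ so that $\Lmc_3(\Asf_1)$ at $g$ holds, and preserves every decoding equality of $\Lmc_5(\Asf_1)$ (further conditioning cannot lift a conditional entropy that was already $0$). The one constraint not yet in force is the edge capacity $R_e\ge h_{U_e}$; rather than committing to a single $\hat u$ I would average these conditional entropic vectors over $\hat u$ with weights $\Pbb[\Ubf_{{\rm In}_2(\pi(g))}=\hat u]$ --- a conic combination, hence again inside $\textrm{con}(\Gamma_{N_1}^\ast\cap\Lmc_{13}(\Asf_1))$ --- after which the $U_e$-coordinate equals $h_{U_e\mid\Ubf_{{\rm In}_2(\pi(g))}}\le h_{U_e}$, so the original $R_e$ still dominates it. Reassembling over $j$ produces a certificate for $(\boldsymbol{\omega}_1,\boldsymbol{r}_1)\in\Rmc_\ast(\Asf_1)$, and the symmetric argument conditioning on $\Ubf_{{\rm In}_1(g)}$ gives $(\boldsymbol{\omega}_2,\boldsymbol{r}_2)\in\Rmc_\ast(\Asf_2)$, which together yield $\Rbf\in\Rmc_\ast(\Asf_1)\times\Rmc_\ast(\Asf_2)$. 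For $\Rmc_c$ the same move is cleaner operationally --- clamp the symbols on ${\rm In}_2(\pi(g))$ at $g$ to a fixed value, which leaves every edge alphabet, hence every rate, untouched and preserves decoding --- while for $\Rmc_q$, $\Rmc_{s,q}$ one deletes from the matrix representation the columns carrying the cross inputs, and for $\Rmc_o$ one reruns the argument with submodularity of the Shannon vector in place of conditioning.

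I expect the main obstacle to be making this conditioning/averaging step airtight uniformly over the four bound types while correctly threading the closure $\overline{\textrm{con}(\cdot)}$ in the definition of $\Rmc_\ast$ --- the same technical nuisance the paper flags just after Theorem~\ref{thm:rateregion}. The remaining ingredients (the ``$\supseteq$'' direction, the constraint-by-constraint checks, and the edge relabeling induced by Definition~\ref{def:NodCom}) are straightforward adaptations of the sink- and source-merge proofs.
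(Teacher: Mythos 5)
Your proposal is correct, and on the one direction that carries real content it takes a genuinely different route from the paper's. Your ``$\supseteq$'' direction and your treatment of $\Rmc_o$ coincide with the paper's proof (sum the two certificates coordinatewise over $\Amc\cap\Nmc_1$ and $\Amc\cap\Nmc_2$; for the outer bound, pass to the formal conditional entropies $h^i_{\Amc}=h_{\Amc\cup\Smc_{3-i}}-h_{\Smc_{3-i}}$). For $\Rmc_l(\Asf)\subseteq\Rmc_l(\Asf_1)\times\Rmc_l(\Asf_2)$ with $l\in\{\ast,q,(s,q)\}$, however, the paper performs code surgery: it replaces each encoding function $f_e$ at the merged node by $f_e'$ obtained by clamping the cross-network inputs to a fixed admissible value $\boldsymbol{0}$, and argues the sinks still decode because that value is one the cross inputs could legitimately have taken; this single construction preserves scalar and vector linearity and splits the code into separate codes for $\Asf_1$ and $\Asf_2$ in one stroke. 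You instead condition the joint distribution on $\Ubf_{{\rm In}_2(\pi(g))}$ and average over its values. That buys you the edge-capacity constraint essentially for free, since the averaged $U_e$-coordinate is $h_{U_e\mid\Ubf_{{\rm In}_2(\pi(g))}}\leq h_{U_e}$ (whereas clamping at a single value does not by itself guarantee the output entropy cannot grow), and it makes the preservation of the equality constraints in $\Lmc_{13}\cap\Lmc_5$ transparent. The price is that you must handle the linear-code regions by a separate column-deletion argument (which is in effect the paper's clamping specialized to linear maps) and must verify that the average over the support of the conditioning variable stays inside the unclosed conic hull defining $\Rmc_\ast$ --- the technicality you yourself flag. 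Both arguments ultimately rest on the same structural facts the paper records: acyclicity and disjointness force $\Ubf_{{\rm In}_i}$ of the merged node to be a function of $\Ybf_{\Smc_i}$ alone, so cross-network information relayed through $g$ is useless.
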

\vspace{-2mm}
\noindent \emph{Proof:} 
Consider a point $\Rbf\in\Rmc_l(\Asf)$ for any $l\in\{*,q,(s,q)\}$ and all random variables associated with each component $\rbf_{l,j}$ in the conic combinations $\Rbf=\sum\limits_{\rbf_j\in \Rmc_l(\Asf)}\alpha_{l,j} \rbf_{l,j}$, where $\alpha_{l,j}\geq 0$ for any $j$ and $l\in\{\ast,q,(s,q),o\}$.   The associated variables satisfy $\Lmc_i(\Asf),i=1,3,4',5$.   Partition the incoming edges of the merged node $g$ in $\Asf$, $\textrm{In}(g)$, up into $\textrm{In}_1(g) = \textrm{In}(g)\cap \Emc_1$ the edges from $\Asf_1$, and $\textrm{In}_2(g)=\textrm{In}(g) \setminus \text{In}_1(g)$, the new incoming edges resulting from the merge.  Similarly, partition the outgoing edges $\textrm{Out}(g)$ up into $\textrm{Out}_1(g) = \textrm{Out}(g)\cap \Emc_1$ and $\textrm{Out}_2(g) = \textrm{Out}(g) \setminus \textrm{Out}_1(g)$.  The $\Lmc_3$ constraints dictate that there exist functions $f_e$ such that for each $e\in\textrm{Out}(g)$, $U_e=f_e(U_{\textrm{In}_1(g)},U_{\textrm{In}_2(g)}) $.  Define the new functions $f_e'$ via
\begin{equation}\label{eq:code}
f_e'(U_{\textrm{In}_1(g)},U_{\textrm{In}_2(g)}) = \left\{ \begin{array}{cc} 
f_e(U_{\textrm{In}_1(g)},\boldsymbol{0}) & e \in \textrm{Out}_1(g) \\
f_e(\boldsymbol{0},U_{\textrm{In}_2(g)}) & e \in \textrm{Out}_2(g) 
\end{array} \right.
\end{equation}
i.e., set the possible value for the incoming edges from the other part of the network (possibly erroneously) to a particular constant value among their possible values -- let's label it $\boldsymbol{0}$.  The network code using these new functions $f_e'$ will utilize the same rates as before.  The constraints and the topology of the merged network further dictated that $U_{\textrm{In}_i(g)}$ were expressible as a function of $\Smc_i$, $i\in\{1,2\}$.   In the remainder of the network (moving toward the sink nodes) after the merged nodes, at no other point is any information from the sources in the other part of the network encountered, and the decoders at the sink nodes in $\Tmc_2$ need to work equally well decoding subsets of $\Smc_2$, regardless of the value of $\Smc_1$.  Since the erroneous value for the $U_{\textrm{In}_1(g)}$ used for $f_e', e \in \textrm{Out}_2(g)$ was still a valid possibility for some (possible other) value(s) of the sources in $\Smc_1$, the sinks must still produce the correct values for their subsets of $\Smc_2$.  A parallel argument for $\Tmc_1$ shows that they still correctly decode their sources, which were subsets of $\Smc_1$, even though the $f_e$s were changed to $f_{e}'$s.  Note further that (\ref{eq:code}) will still be scalar/vector linear if the original $f_e$s were as well.

However, since the $f_{e'}$s no longer depend on the other half of the network, the resulting code can be used as separate codes for $\Asf_1$ and $\Asf_2$, given the associated rate points $\Rbf_i$ by keeping the elements in $\Rbf$ associated with $\Asf_i$, $i\in\{1,2\}$ (or the associated rate points $\rbf_{l,j}^i$ by keeping elements in $\rbf_{l,j}$) in the natural way, implying that $\Rbf \in \Rmc_l(\Asf_1)\times\Rmc_l(\Asf_2)$.  This then implies that $\Rmc_l(\Asf) \subseteq \Rmc_l(\Asf_1)\times\Rmc_l(\Asf_2)$ for all $l\in\{\ast,q,(s,q)\}$. The opposite containment is obvious, since any rate points or codes for the two networks can be utilized in the trivial manner for the merged network.  This proves (\ref{eq:NodComeq1}) for $l\in\{\ast,q,(s,q)\}$. 

Next, consider any pair $\Rbf_i \in \Rmc_o(\Asf_i)$ $i\in\{1,2\}$, which are, by definition, projections of some $[\hbf^i,\rbf^i] \in \Gamma_{N_i} \cap \Lmc(\Asf_i)$, where $\hbf^i\in\Gamma_{N_i}$ and $\rbf^i=[R_e | e\in\Emc_i]$, $\ i\in\{1,2\}$.  Defining $\hbf$ whose element associated with the subset $\Amc \subset \Nmc$ is $h_{\Amc} = h^1_{\Amc \cap \Nmc_1} + h^2_{\Amc \cap \Nmc_2}$, where the intersection respects the remapping of edges under the intermediate node merge, we observe that $[\hbf,\rbf^1,\rbf^2]\in\Gamma_N \cap \Lmc_{\Asf}$, and hence its projection $\Rbf \in \Rmc_o(\Asf)$, proving $\Rmc_o(\Asf) \supseteq \Rmc_o(\Asf_1)\times \Rmc_o(\Asf_2)$.  

Finally, consider a point $\Rbf \in \Rmc_{o}(\Asf)$, which is a projection of some $[\hbf,\rbf] \in \Gamma_N \cap \Lmc_{\Asf}$, where $\hbf\in\Gamma_{N}$ and $\rbf=[R_e | e\in\Emc]$.  For every $\Amc \subseteq \Nmc_i$, define $h^i_{\Amc} = h_{\Amc \cup \Smc_{3-i}} - h_{\Smc_{3-i}}$, and define $\hbf'$ with $h'_{\Amc} = h^1_{\Amc\cap\Nmc_1} + h^2_{\Amc \cap \Nmc_2}$ and $\Rbf' = \textrm{proj}_{\boldsymbol{\omega},\boldsymbol{r}}\hbf'$.  We see that $\hbf^i \in \Lmc(\Asf_i), i \in\{1,2\}$, because conditioning reduces entropy and entropy is non-negative, but all of the conditional entropies at nodes other than $g$ were already zero, while at $g$, the conditioning on the sources from the other network will enable the same conditional entropy of zero since the incoming edges from the other network were functions of them.  This shows that $\Rbf' \in \Rmc_o(\Asf_1)\times \Rmc_o(\Asf_2)$.  Owing to the independence of the sources $\textrm{proj}_{\boldsymbol{\omega}} \hbf=\textrm{proj}_{\boldsymbol{\omega}} \hbf'$, while $\textrm{proj}_{\boldsymbol{r}} \hbf \geq \textrm{proj}_{\boldsymbol{r}} \hbf'$ due to the fact that conditioning reduces entropy.  The coordinate convex nature then implies that $\Rbf \in \Rmc_o(\Asf_1)\times \Rmc_o(\Asf_2)$ showing that $\Rmc_o(\Asf) \subseteq \Rmc_o(\Asf_1) \times \Rmc_o(\Asf_2)$ and completing the proof.
\hfill $\blacksquare$

\begin{theorem} \label{thm:EdgCom}
Suppose a network $\Asf$ is obtained by merging $e$ and $\pi(e)$, i.e., $\Asf_1.e+\Asf_2.\pi(e)$.  Then
\begin{equation}
\Rmc_l(\Asf)={\rm Proj}_{\setminus\{e,\pi(e)\}}((\Rmc_l(\Asf_1)\times\Rmc_l(\Asf_2))\cap\Lmc'_0), \ l\in\{*,q,(s,q),o\} \label{eq:EdgComeq1}
\end{equation}
where $\Lmc'_0=\left\{R_{({\rm Tl}(j),g_0)}\geq R_j,R_{(g_0,{\rm Hd}(j))}\geq R_j,j\in\{e,\pi(e)\}\right\}$, and projection dimension $\setminus\{e,\pi(e)\}$ means projecting out dimensions associated with $e,\pi(e)$.  Furthermore,  $\Rmc(\Asf_q)\times \Rmc(\Asf_q)$ and $\Lmc'_{0}$ are viewed in the dimension of $|\Nmc_1|+|\Nmc_2|+4$ with assumption that all dimensions not shown are unconstrained.
\end{theorem}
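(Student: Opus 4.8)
The plan is to prove \eqref{eq:EdgComeq1} by exploiting the observation made just before the theorem statement, namely that an edge merge is a disguised intermediate-node merge. This reduces the claim to a combination of the relay-node reduction \ref{D9} of Theorem~\ref{thm:minimality} and the node-merge result Theorem~\ref{thm:NodCom}, both of which are already established for all $l\in\{*,q,(s,q),o\}$.

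First I would \emph{subdivide} each of the two merged edges. Let $\tilde{\Asf}_1$ be the network obtained from $\Asf_1$ by inserting a fresh intermediate node $g_0$ on the edge $e$, i.e.\ replacing $e$ by the in-edge $e_{\rm in}:=({\rm Tl}(e),g_0)$ and the out-edge $e_{\rm out}:=(g_0,{\rm Hd}(e))$, and define $\tilde{\Asf}_2$ analogously from $\Asf_2$ by inserting a fresh node $\pi(g_0)$ on $\pi(e)$, with edges $\pi(e)_{\rm in}, \pi(e)_{\rm out}$. The relay $g_0$ in $\tilde{\Asf}_1$ has unit in- and out-degree and a non-hyperedge in-edge, so it matches the pattern in \ref{c9}; applying reduction \ref{D9} with $\Asf_1$ as the reduced network and $\tilde{\Asf}_1$ as the un-reduced one gives
\begin{equation*}
\Rmc_l(\tilde{\Asf}_1)=\left\{\Rbf \,\left|\, \left[\Rbf_{\setminus\{e_{\rm in},e_{\rm out}\}}^T,\ \min\{R_{e_{\rm in}},R_{e_{\rm out}}\}\right]^T\in\Rmc_l(\Asf_1)\right.\right\},\qquad l\in\{*,q,(s,q),o\},
\end{equation*}
and similarly for $\tilde{\Asf}_2$. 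Next I would invoke monotonicity of the rate region in the edge-capacity coordinates: by \eqref{eq:generalrateregionfree} and \eqref{eq:Lratefree} the only place $R_e$ enters is the constraint $R_e\ge h_{U_e}$, so $\Rmc_l(\Asf_1)$ is an up-set in $R_e$. Hence the $\min$-substitution above is equivalent to introducing $R_{e_{\rm in}},R_{e_{\rm out}}$ as free coordinates, imposing $R_{e_{\rm in}}\ge R_e$ and $R_{e_{\rm out}}\ge R_e$, and then projecting out $R_e$; doing the same for $\pi(e)$ produces exactly the inequalities collected in $\Lmc'_0$.

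Then I would verify, by a direct comparison of the edge bookkeeping in Definitions~\ref{def:NodCom} and~\ref{def:EdgCom}, that the intermediate-node merge $\tilde{\Asf}_1.g_0+\tilde{\Asf}_2.\pi(g_0)$ yields precisely the edge-merged network $\Asf$: the merged node has incoming edges $\{({\rm Tl}(e),g_0),({\rm Tl}(\pi(e)),g_0)\}$ and outgoing edges $\{(g_0,{\rm Hd}(e)),(g_0,{\rm Hd}(\pi(e)))\}$, all other edges are $(\Emc_1\setminus e)\cup(\Emc_2\setminus \pi(e))$, and the sink demands are unchanged, matching Definition~\ref{def:EdgCom} term by term. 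Since $\Asf_1,\Asf_2$ are disjoint and the inserted relays are fresh, $\tilde{\Asf}_1$ and $\tilde{\Asf}_2$ are disjoint with disjoint demands, so Theorem~\ref{thm:NodCom} applies and gives $\Rmc_l(\Asf)=\Rmc_l(\tilde{\Asf}_1)\times\Rmc_l(\tilde{\Asf}_2)$ under the stated identification of the $\pi(g_0)$-incident edges with the $\Asf_2$-side edges of $g_0$. Substituting the Step-1 expressions for $\Rmc_l(\tilde{\Asf}_1)$ and $\Rmc_l(\tilde{\Asf}_2)$ into this product and merging the two projections and the two pairs of inequalities collapses everything to ${\rm Proj}_{\setminus\{e,\pi(e)\}}\left((\Rmc_l(\Asf_1)\times\Rmc_l(\Asf_2))\cap\Lmc'_0\right)$, which is \eqref{eq:EdgComeq1}; the range $l\in\{*,q,(s,q),o\}$ is the intersection of the ranges for which \ref{D9} and Theorem~\ref{thm:NodCom} hold.

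I expect the main obstacle to be not conceptual but the careful matching in the third step: Definition~\ref{def:NodCom} rewrites several auxiliary edge sets ($\Amc,\Bmc,\Cmc,\Dmc$) under the merge, and one must check that after the subdivision these rewrites reproduce exactly the four ``cross'' edges of Definition~\ref{def:EdgCom} and nothing more, that no parallel edge is accidentally created and acyclicity is preserved, and that the edge-capacity coordinate labels line up so the $\min$-to-$\ge$ conversion matches the names appearing in $\Lmc'_0$. A secondary point requiring attention is that the monotonicity/projection bookkeeping must also be carried through for the Shannon outer bound $\Rmc_o$ and the matroid inner bounds $\Rmc_q,\Rmc_{s,q}$, but since both \ref{D9} and Theorem~\ref{thm:NodCom} are already stated for all these $l$, invoking them does that work for us.
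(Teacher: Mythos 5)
Your proposal is correct and follows essentially the same route as the paper: view the edge merge as subdividing $e$ and $\pi(e)$ with fresh relay nodes and then performing an intermediate-node merge, with $\Lmc'_0$ capturing the effect of the subdivision on each rate region, and Theorem~\ref{thm:NodCom} supplying the product structure. The only difference is that where the paper simply asserts that $\Lmc'_0$ describes the subdivision's effect, you justify it explicitly via reduction \ref{D9} together with monotonicity of the rate region in the edge-capacity coordinate, which is a welcome tightening of the same argument.
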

\noindent \emph{Proof:} As observed after the definition of edge merge, one can think of edge merge as the concatenation of two operations: $i)$ split $e$ in $\Asf_1$ and $\pi(e)$ in $\Asf_2$ each up into two edges with a new intermediate node ($g$ and $\pi(g)$, respectively) in between them, forming $\Asf_1'$ and $\Asf_2'$, respectively, followed by $ii)$ intermediate node merge of $\Asf_1'.g+\Asf_2'.\pi(g) $.  It is clear that $\Lmc'_0$ describes the operation that must happen to the rate region of $\Asf_i$, $i\in\{1,2\}$ to get the rate region of $\Asf_i'$, because the contents of the old edge $e$ or $\pi(e)$ must now be carried by both new edges after the introduction of the new intermediate node.  Applying Thm. \ref{thm:NodCom} to $\Asf'_1$ and $\Asf_2'$ yields (\ref{eq:EdgComeq1}).
\hfill $\blacksquare$

With Theorems \ref{thm:SrcIdn} -- \ref{thm:EdgCom}, one can easily derive the following corollary regarding the preservation of sufficiency of linear network codes and tightness of Shannon outer bound.
\begin{corollary} \label{cor:General}
Let network $\Asf$ be a combination of networks $\Asf_1,\Asf_2$ via one of the operations defined in \S\ref{sec:embedding}.  If $\Fbb_q$ vector (scalar) linear codes suffice or the Shannon outer bound is tight for both $\Asf_1,\Asf_2$, then the same will be true for $\Asf$. Equivalently, if $\Rmc_l(\Asf_i)=\Rmc_{\ast}(\Amc_i), i\in\{1,2\}$ for some $l\in\{o,q,(s,q)\}$ then also $\Rmc_l(\Asf) = \Rmc_{\ast}(\Asf)$.
\end{corollary}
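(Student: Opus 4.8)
The plan is to obtain the corollary as an immediate consequence of Theorems~\ref{thm:SrcIdn}--\ref{thm:EdgCom}, exploiting the fact that each of those four theorems expresses $\Rmc_l(\Asf)$ as the \emph{same} set-theoretic operation applied to the pair $(\Rmc_l(\Asf_1),\Rmc_l(\Asf_2))$, uniformly over $l\in\{\ast,q,(s,q),o\}$. First I would fix notation: for the particular combination operation used to build $\Asf$, let $\mathcal{F}$ denote this common functional. In each case $\mathcal{F}$ is a Cartesian product, optionally intersected with a fixed coordinate subspace ($\Lmc_0$ for source merge, $\Lmc'_0$ for edge merge) and then projected onto a fixed set of coordinates, together with a fixed relabelling of the edge-rate coordinates for the sink, node, and edge merges. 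The content of Theorems~\ref{thm:SrcIdn}--\ref{thm:EdgCom} is then exactly that $\Rmc_l(\Asf)=\mathcal{F}(\Rmc_l(\Asf_1),\Rmc_l(\Asf_2))$ holds simultaneously for every $l\in\{\ast,q,(s,q),o\}$, and the crucial observation I would record is that $\mathcal{F}$ itself does not depend on $l$, since $\Lmc_0$, $\Lmc'_0$, the projections, and the index maps are all specified purely from the network topology.

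With this in hand the argument is pure substitution. I would translate each hypothesis of the corollary into the statement that $\Rmc_l(\Asf_i)=\Rmc_\ast(\Asf_i)$ for $i\in\{1,2\}$ and the appropriate $l$: ``$\Fbb_q$ vector linear codes suffice'' corresponds to $l=q$, ``scalar $\Fbb_q$ linear codes suffice'' to $l=(s,q)$, and ``the Shannon outer bound is tight'' to $l=o$. Applying the relevant one of Theorems~\ref{thm:SrcIdn}--\ref{thm:EdgCom} once with the chosen $l$ and once with $l=\ast$, I would conclude
\[
\Rmc_l(\Asf)=\mathcal{F}(\Rmc_l(\Asf_1),\Rmc_l(\Asf_2))=\mathcal{F}(\Rmc_\ast(\Asf_1),\Rmc_\ast(\Asf_2))=\Rmc_\ast(\Asf),
\]
which is precisely the desired conclusion for that $l$. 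I would note in passing that, in contrast to the embedding case in Corollary~\ref{cor:embedded} (where edge contraction yields only an inclusion for the scalar bound), all four combination theorems give equalities in the $(s,q)$ case, so no extra inclusion argument is needed here.

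There is not much of a genuine obstacle remaining once Theorems~\ref{thm:SrcIdn}--\ref{thm:EdgCom} are established; the one step I would be careful to state cleanly is the $l$-independence of $\mathcal{F}$ asserted above, which amounts to checking, operation by operation, that the auxiliary constraint sets $\Lmc_0,\Lmc'_0$ and the reindexings appearing in those theorem statements never reference $\Gamma_N^*$, its closure, or any of its polyhedral bounds. Finally, although the corollary is phrased for a single combination operation, I would close by observing that closure under composition is immediate by induction: each combination step preserves the property by the single-operation case, so any network assembled from finitely many combinations of networks all having the property also has it.
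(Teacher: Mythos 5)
Your proposal is correct and follows exactly the route the paper intends: the paper offers no separate proof for this corollary beyond the remark that it follows from Theorems~\ref{thm:SrcIdn}--\ref{thm:EdgCom}, and your substitution argument, together with the observation that the combining functional (product, auxiliary constraints $\Lmc_0$ or $\Lmc'_0$, projection, reindexing) is determined by the topology alone and is independent of $l$, is precisely the "easy derivation" being invoked. Your side remark that all four combination theorems give genuine equalities in the $(s,q)$ case, so that no extra inclusion argument is needed as in Corollary~\ref{cor:embedded}, is also accurate.
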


Now we have defined both embedding and combination operators, and have demonstrated methods to obtain the rate regions of networks after applying the operators.  Next, we would like to discuss how to use them to do network analysis and solve large networks.

\section{Results with Operators}\label{sec:resultsoperators}
In this section, we demonstrate the use of the operators defined in \S\ref{sec:embedding} and \S\ref{sec:combination}.  We will first discuss the use of network embedding operations (\S\ref{sec:embedding}) to obtain the forbidden network minors and to predict code sufficiency for a larger network given the sufficiency is known for the smaller embedded network.  Then, we discuss the use of combination operations (\S\ref{sec:combination}) to solve large networks.  Finally, we discuss how to use both combination and embedding operations together to obtain even more  solvable networks.

\subsection{Use Embedding Operations to Obtain Network Forbidden Minors}

\begin{figure}
\centering
\captionsetup{justification=centering}
\includegraphics[scale=0.5]{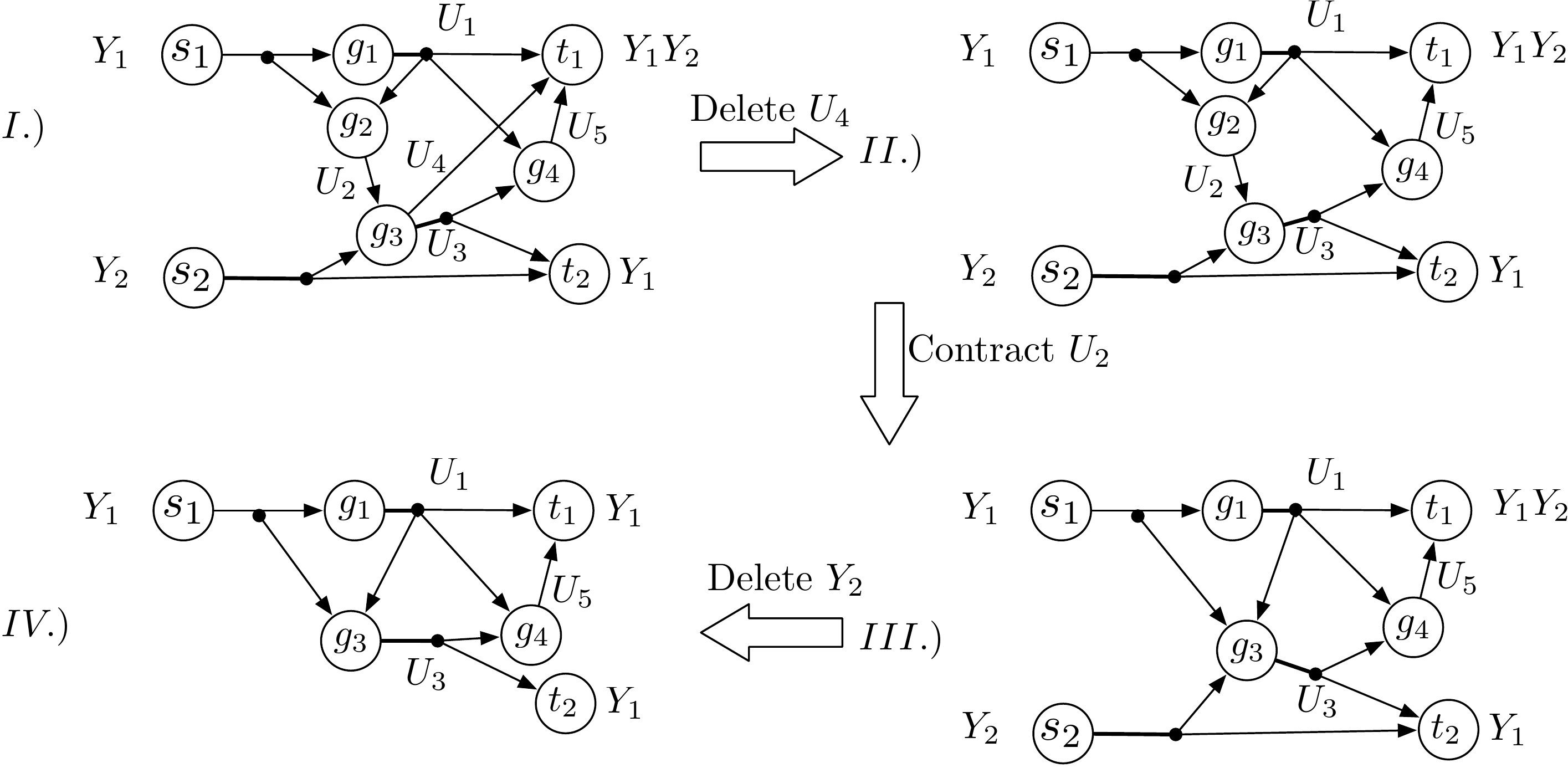}
\caption{\label{fig:embeddingexample} Using embedding operations to predict the insufficiency of scalar binary codes for a large network: since the large network $I$ and intermediate stage networks $II,III$ contain the small network $IV$ as a minor, the insufficiency of scalar binary codes for network $IV$, which is easier than network $I$ to see, predicts the same property for networks $III,II$ and $I$.}
\end{figure}

One natural use of embedding operations is to obtain rate regions for smaller embedded networks given the rate region for a larger network, as shown in \S\ref{sec:embedding}, since the rate regions of embedded networks are projections of the rate region of the larger network with some constraints. 

We can use the embedding operators in a reverse manner.  From Corollary \ref{cor:embedded}, we observe that if a class of linear codes suffice for a larger network, then it will suffice for the smaller networks embedded in it as well.  Equivalently, the insufficiency of a class of codes for a network is inherited by   the larger networks that have this network embedded inside.  This is similar to the forbidden minor property in matroid theory, which states that if a matroid is not $\Fbb_q$ representable, then neither are its extensions.  Therefore, if we know a small network has the property that a class of linear codes does not suffice, then we can predict that all networks containing it as a {\em network minor} through some embedding operations, will have the same property, without any calculations.  For instance, in Fig.\,\ref{fig:embeddingexample}, the small network $IV$ is a $(1,3)$ network for which scalar binary codes do not suffice, because when $H(Y_1)=2,H(U_1)=H(U_3)=H(U_5)=1$, there is no scalar solution (but there is a vector solution).  Since networks $I,II,III$ contain it as a minor (through the operations in the figure), we  know that scalar binary codes will not suffice for them, either.  This is verified by our computations in \S\ref{sec:resultssmall}.  

\begin{figure}
\centering
\includegraphics[scale=0.6]{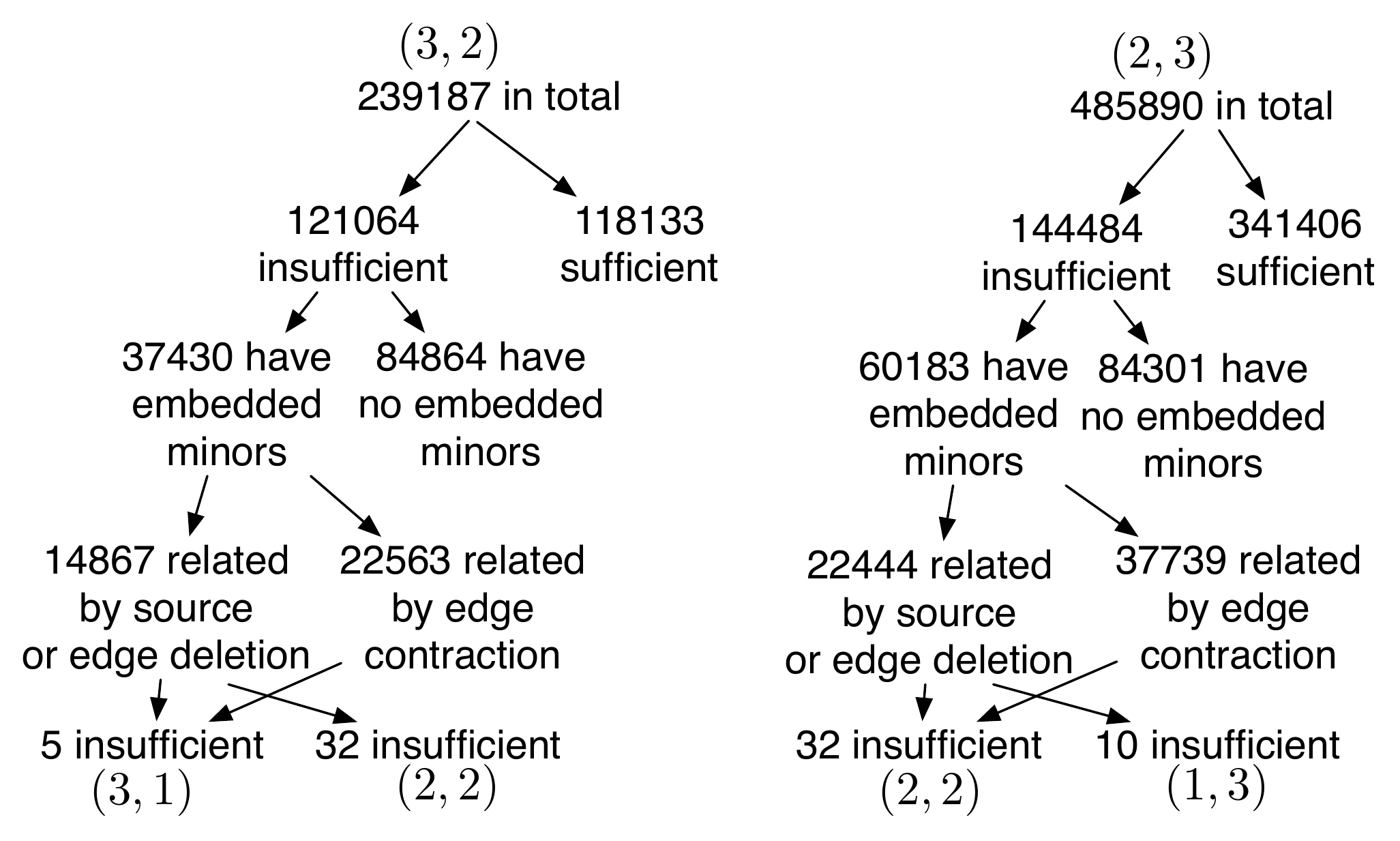}
\caption{Relations between networks of different sizes that scalar binary codes do not suffice.  The deletion operation considers both source and edge deletion, while the contraction operation only considers edge contraction.}\label{fig:forbiddenminor}
\end{figure}

As in matroid theory, we may have a collection of small networks that we know should be "forbidden" as a minor in larger networks, in the sense of ensuring the sufficiency of a class of linear codes.  For instance, in \cite{CongduanTranIT2014}, it was shown that for the thousands of MDCS networks for which scalar binary codes do not suffice, there are actually only 12 forbidden network minors.  For general hyperedge MSNC problems, we also built similar relationships as shown in Fig.\,\ref{fig:forbiddenminor}.  As Table \ref{tab:resultsgeneral} shows, the numbers of instances that scalar binary codes do not suffice for $(1,3), (3,1), (2,2), (3,2), (2,3)$ networks are $5,10,32,121064,144484$, respectively.  Those insufficient instances should not be a minor of any larger network that scalar binary codes suffice, and hence are forbidden minors.  To obtain a minimal list of network forbidden minors, we build a hierarchy among them.  As Fig.\,\ref{fig:forbiddenminor} shows, there are $37430$ out of $121064$ insufficient $(3,2)$ networks actually containing smaller forbidden network minors, of which $14867$ can be related by source or edge deletion and $22563$ can be related by edge contraction.  Similarly,  there are $37739$ out of $144484$ insufficient $(2,3)$ networks actually containing smaller forbidden network minors, of which $22444$ can be related by source or edge deletion and $37739$ can be related by edge contraction.  Those networks that do not have smaller forbidden network minors are new ones.

\vspace{-2mm}
\subsection{Use Combination Operations to Solve Large Networks}
As shown in \S\ref{sec:combination}, the rate region of a combined network can be directly obtained from the rate regions of the networks involved in the combination.  We show that the sufficiency of a class of linear codes are preserved after combination.  Actually, if we know one of the networks involved in the combination has the property that a class of linear codes does not suffice, the combined network will have the same property, since that small network is embedded in the combined network and it can be obtained by deleting the other networks.


Fig.\,\ref{fig:big} demonstrates the idea of solving large networks obtained by combination operations.  It shows that the rate region of the large combined network can be obtained directly from the rate regions of smaller network.  In addition,  sufficiency of linear codes is preserved.

\begin{figure}
\centering
\includegraphics[width=.95\textwidth]{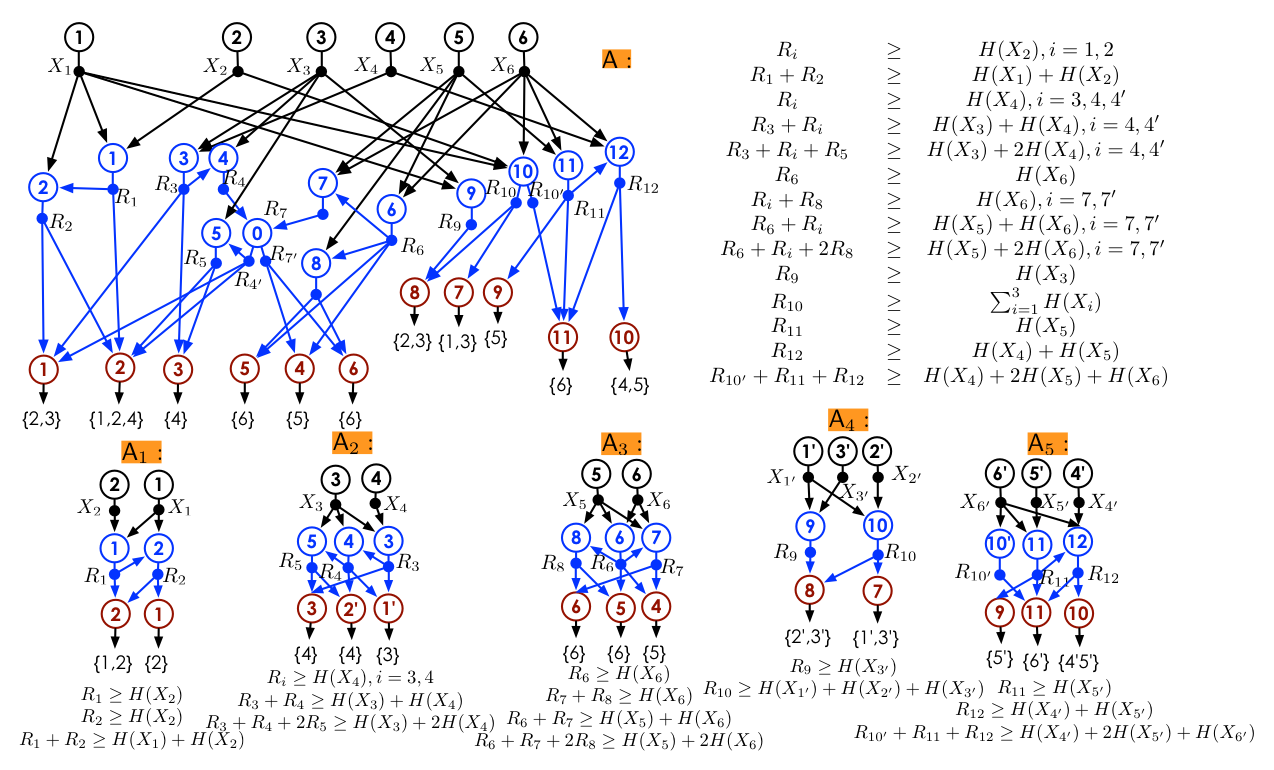}
\caption{A large network and its rate region created with the operations in this paper from the 5 networks below it.}\label{fig:big}
\end{figure}

\begin{example}\label{example1}
A $(6,15)$ network instance $\Asf$ can be obtained by combining five smaller networks $\Asf_1,\ldots,\Asf_5$, of which the representations are shown in Fig. \ref{fig:big}.  The combination process is I) $\Asf_{12}=\Asf_1.\{t_1,t_2\}+\Asf_2.\{t_{1'},t_{2'}\}$; II) $\Asf_{123}=\Asf_{12}.e_4+\Asf_3.e_7$ with extra node $g_0$ and edges $e_{4'},e_{7'}$; III) $\Asf_{45}=\Asf_4.g_{10}+\Asf_5.g_{10'}$; IV) $\Asf=\Asf_{123}.\{X_1,\ldots,X_6\}=\Asf_{45}.\{X_{1'},\ldots,X_{6'}\}$.  From the software calculations and analysis \cite{EntVecSoft, CongduanNetworkEnumerationfile}, 
one obtains the rate regions below the 5 small networks.
According to the theorems in \S\ref{sec:combination}, the rate region $\Rmc_*(\Asf)$ for $\Asf$ obtained from $\Rmc_*(\Asf_1),\ldots,\Rmc_*(\Asf_5)$, is depicted next to it.
Additionally, since calculations showed binary codes and the Shannon outer bound suffice for $\Asf_i,\ i\in\{1,\ldots,5\}$, Corollary \ref{cor:General} dictates the same for network $\Asf$.
\end{example}

\subsection{Use combination and embedding operations to generate solvable networks}

\begin{figure}
\centering
\includegraphics[width=.89\textwidth]{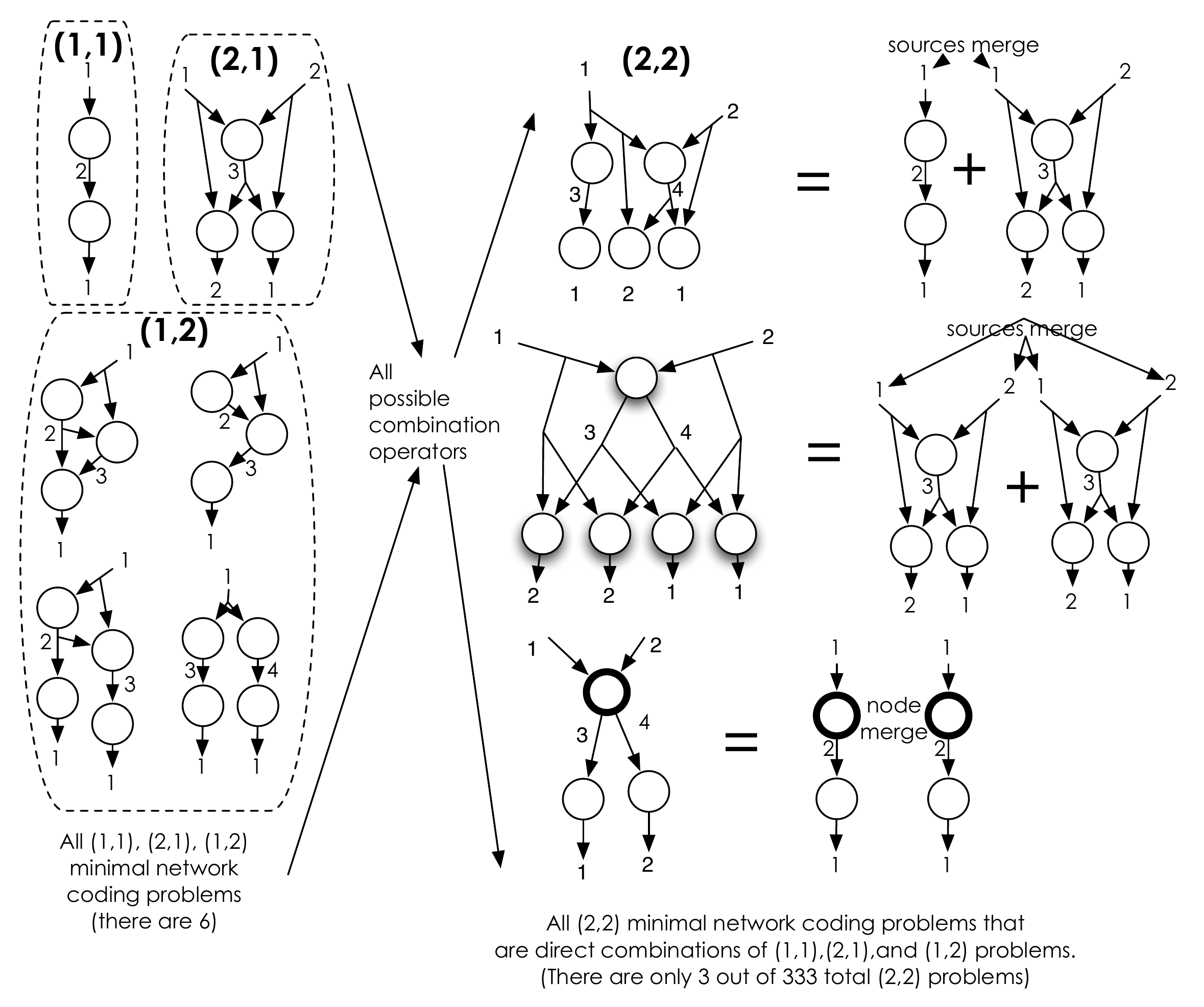}
\caption{There are a total of $3$ minimal $(2,2)$ network coding problems directly resulting from combinations of the $6$ small network coding problems with sizes $(1,1)$, $(1,2)$, and $(2,1)$.  However, as shown in Fig. \ref{fig:path}, by utilizing \emph{both} combinations \emph{and} embeddings operators, far more $(2,2)$ cases can be reached by iteratively combining and embedding the pool of networks starting from these 6 $(1,1)$, $(1,2)$, and $(2,1)$ networks via Algorithm \ref{alg:closure}.}\label{fig:all22combs}
\end{figure}

\begin{algorithm}
\SetAlgoLined
 \KwIn{Seed list of networks $seedList$, size limits on number of sources and edges} 
 \KwOut{All network instances generated by combination and/or embedding operations on the seed list}
 \BlankLine
 \textbf{Initialization:} network list for previous round $prevList=\emptyset$, new networks from previous round $prevAdd=seedList$, current list of networks $curList=\emptyset$, new networks generated in current round $curAdd=\emptyset$\;
\While{$size(prevAdd)>0$}{
 \For{every pair $\Imc\times\Jmc\in prevAdd\times prevAdd \cup prevAdd\times curList$}{
\If{prediction of network size after merge does not exceed size limits}{
 consider source, sink, node, edge merge on $\Imc,\Jmc$\;
 convert the new network to its canonical form $newNet$ \;
 \If{$newNet\notin curList$}{
 $curAdd=curAdd\cup newNet$\;
 } 
}
}
\For{every  $\Imc\in prevAdd$}{
consider source deletion, edge deletion and edge contraction on $\Imc$\;
 convert the new network to its canonical form $newNet$ \;
 \If{$newNet\notin curList$}{
 $curAdd=curAdd\cup newNet$\;
}
}
$prevAdd=curAdd$\;
$prevList=curList$\;
$curList=curList\cup curAdd$\;
}
\caption{Generate all networks from a seed list of small networks using combination and embedding operations.}
\label{alg:closure}
\end{algorithm}

\begin{figure}
\centering
\includegraphics[width=.95\textwidth]{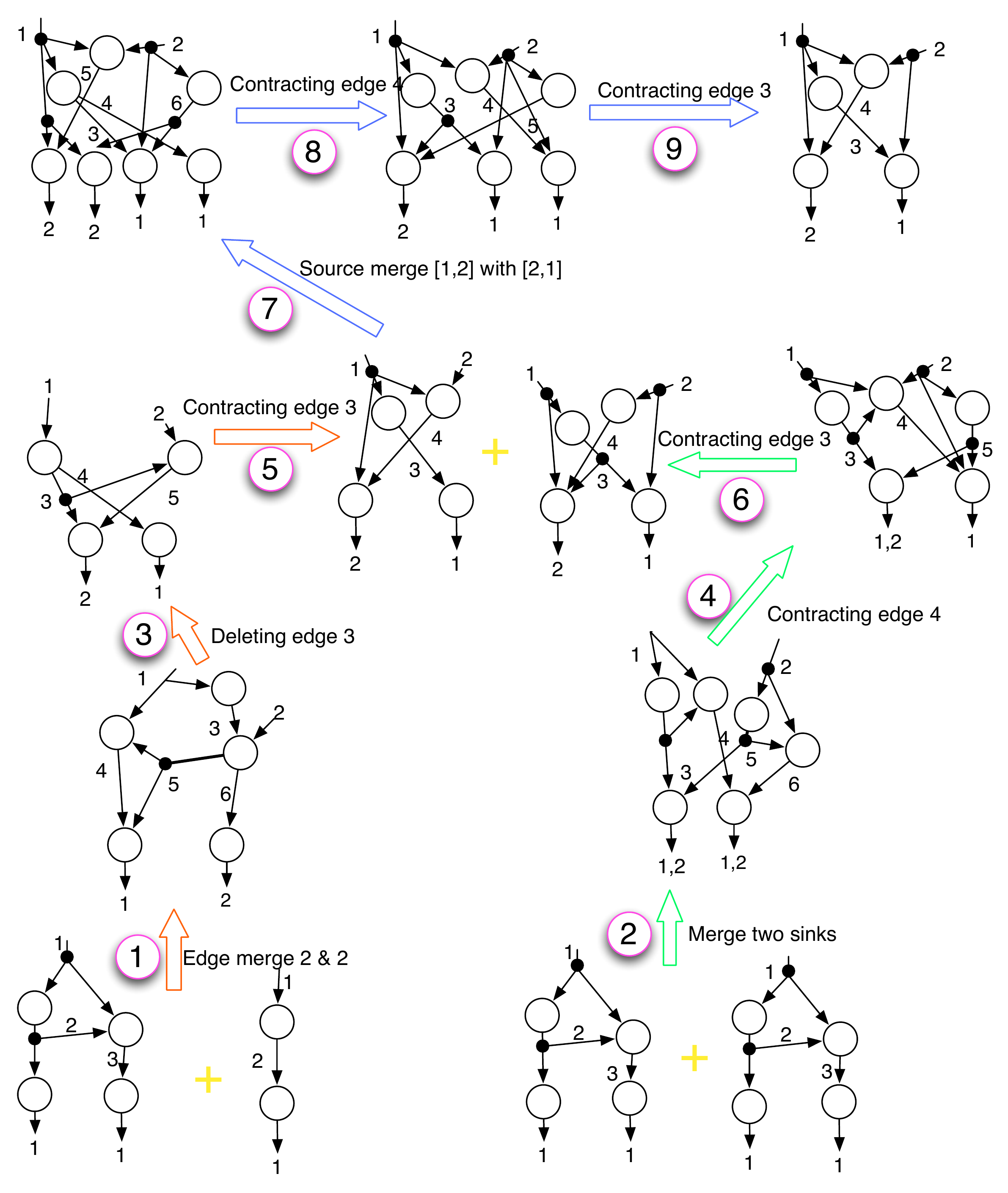}
\caption{The path of operations on a seed list of small networks to get a $(2,2)$ network that cannot be directly obtained by simple combination.  The size limits on networks involved in the operation process is $K\leq 3,L\leq 4$.}\label{fig:path}
\end{figure}

The combination operators provide a method for building large networks directly from smaller networks in such a way that the rate region of the large network can be directly obtained from those of the small networks.  The embedding operators provide a method for obtaining a small network from a large network in such a way that the rate region of the small network can be expressed in terms of the rate region of the large network.  These facts indicate that we can obtain networks by integrating combination and embedding operations.  If we start with a list of solved networks, all networks obtained in the following combination and/or embedding process will be solvable.  Note that one can apply these operations, especially the combination operations, an infinite number of times to obtain an infinite number of networks.  For demonstration purposes, we would like to limit the size of networks involved in the process.  As \S\ref{sec:combination} shows, it is not difficult to predict the worst case network size after combination.   Since the combined network may have redundancies, the worst case here means there is no redundancy after combination so that the network size is easy to predict.  For instance, after merging $k$ sources of a $(K,L)$ with $k$ sources of a $K',L'$ network, the network size after merging will be $K+K'-k+L+L'$ in the worst case.  We define the \emph{worst case partial closure of networks} as the networks obtained in the combination and embedding process such that no network involved exceeds the size limit.  An algorithm to generate the worst case partial closure of networks is shown in Algorithm \ref{alg:closure}.  As it shows, as long as the predicted network does not exceed the preset size limit, it will be counted as a new network and will be used as a seed, as long as it is not isomorphic to the existing ones.  We start with a seed list of networks ($seedList=prevAdd$), then through combinations and embeddings of these networks, a list ($curAdd$) of new networks ($newNet$) will be generated, which will, in turn, be used as seeds again.  The list of networks ($curList$) will be updated after each iteration.  The process stops when there is no new network that would not exceed the size limitations (size cap) found, in the sense of closure under these operations.  This tool is able to generate a large number of network from even small seed lists.  For instance, if we use as a seed list the single $(1,1)$ and single $(2,1)$, together with the four $(1,2)$ networks, and set the size limit for intermediate networks to $K\leq 4, L\leq 4$, there will be $11635$ new networks generated.  Even when embedding operations are not allowed, there will be $568$ new networks.  The details on the number of networks generated for different size limits is shown in Table \ref{tab:closuredata}.

\begin{table}
\caption{\label{tab:closuredata}The number of new canonical minimal network coding problems that can be generated from the 6 smallest canonical minimal network coding problems (the single $(1,1)$ network, the single $(2,1)$ network, and the four $(1,2)$ networks), by using combination operators (left), and both combination and embedding operators (right), in a partial closure operation where the largest network involved in a chain of operations never exceeds the ``cap'' (different columns).}
\centering
\includegraphics[width=0.7\textwidth]{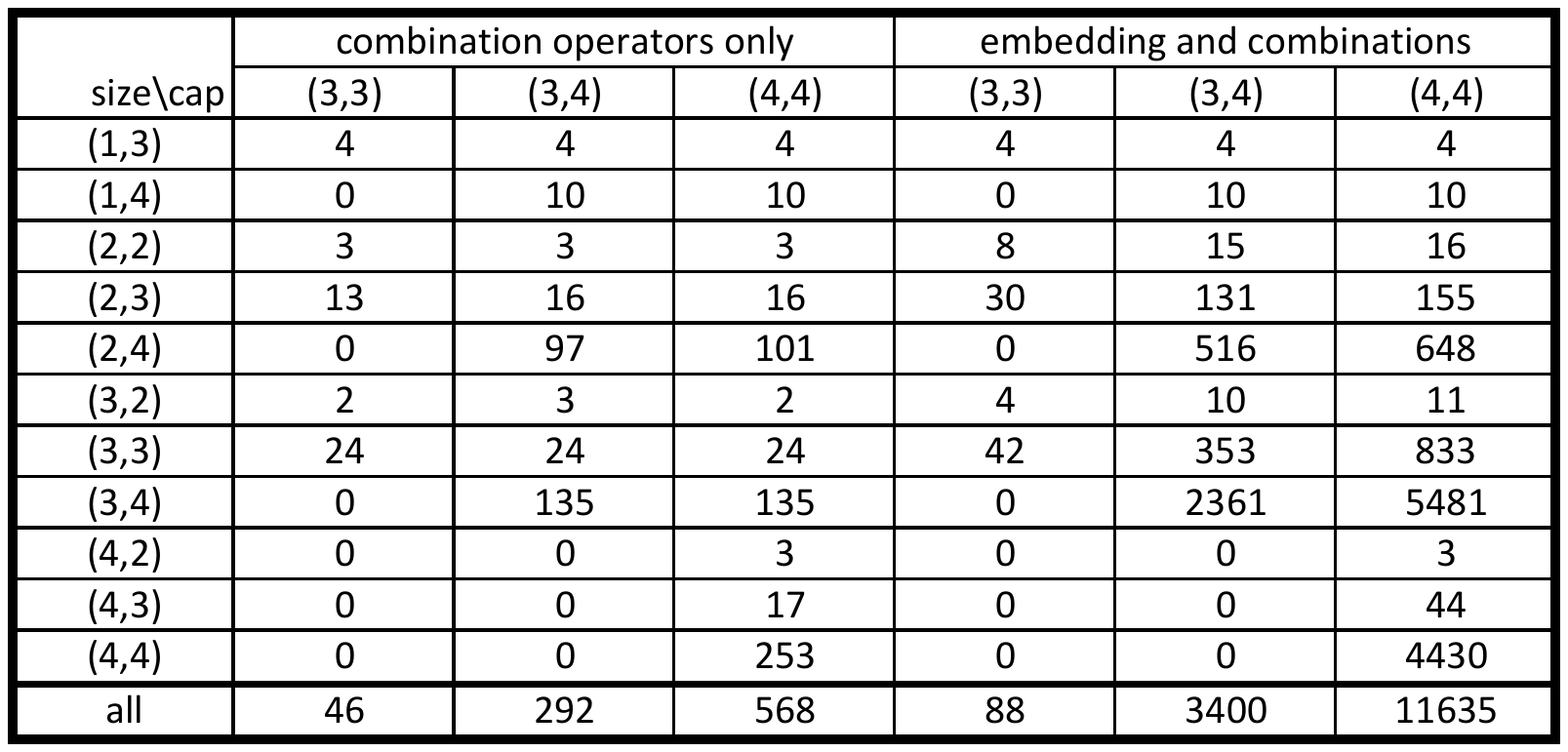}
\end{table}

From Table \ref{tab:closuredata} we first see that these six networks can generate a very large number of larger networks (see the bottom row of the table).   We also see that the number of networks generated grows, even for small target network sizes, rapidly as the cap on the largest network size is increased. 
Furthermore, it is important to note that, when trying to calculate the rate regions of larger networks from a list of rate regions for smaller networks, \emph{both combinations \underline{and} embeddings} are useful.  As Table \ref{tab:closuredata} shows, when no embedding operations are allowed in the generation process, the number of reachable networks are much less than those with embedding operations.   To further see a demonstration of this fact, consider all 6 canonical minimal network coding problems of dimensions $(1,1)$, $(1,2)$, $(2,1)$ as depicted in Fig. \ref{fig:all22combs}.  There are only 3 out of 333 networks with size $K=2,L=2$ can be reached by combination of smaller networks using only combination operations.  The three networks and their smaller components are shown in Fig. \ref{fig:all22combs}.  However, if we are allowed to use both combination and embedding operation together, we will reach more networks that are not reachable by merely combinations.  As Fig. \ref{fig:path} shows, though we still use the same network pool as in Fig. \ref{fig:all22combs}, another $(2,2)$ network is obtained by first combining smaller networks to a larger size and then using embedding operations to decrease the network size.  Several steps of combination and embedding operations are necessary to get this network, and Fig. \ref{fig:all22combs} shows the path through the operations from the initial seed list and the intermediately obtained network to reach it.  As Table \ref{tab:closuredata} shows, there are at least 12 $(2,2)$ networks which can be obtained in this manner.

\section{Conclusions and Future Work}\label{sec:conclusion}

This paper investigated the enumeration, rate region computation, and hierarchy of general multi-source multi-sink hyperedge networks.  The network model includes several special ones such as independent distributed storage systems and index coding problems.  
This definition is further refined to a notion of minimal networks, containing no redundant sources, edges, or nodes, whose presence directly determines the rate regions from minimal networks.  Furthermore, since  networks related to one another through a permutation of the edge labels are derivable from one another, a notion of network equivalence or isomorphism under group action is defined.  By harnessing the Leiterspiel algorithm, which calculates the orbits of subsets incrementally in subset size, an efficient enumeration algorithm is presented for enumerating non-isomorphic networks, i.e., canonical representatives of the equivalence classes under this isomorphism, directly.  Using this algorithm, millions of non-isomorphic networks are obtained that represent trillions of network coding problems.  Then by applying computation tools, exact rate regions of most of them are obtained, leaving the rest with outer bound and simple code achieving inner bounds.  Only binary codes are considered here, and binary codes are shown to suffice for most of the networks under consideration.  
In order to better understand and analyze the huge repository of rate regions, a notion of network hierarchy through embedding and combination operators is created.  These operations are defined in a manner such that the rate region of the network after each operation can be derived from the rate region of each of the networks involved in the operation.  
The embedding operations enable us to obtain a list of forbidden network minors for the sufficiency of a class of linear codes.  It is shown that for many networks that scalar binary codes do not suffice, they contain a smaller network, for which scalar binary codes also do not suffice, as a minor under the embedding operations.  The combination operations enable us to solve large networks that can be obtained by combining some solvable networks.  The integration of both embedding and combination operations is able to generate rate regions for even more networks than can be solved directly with combination alone.  These operations open a door to many new avenues of network coding research.  Some of the pressing future problems for investigation include: I) assessing the coverage of the operators in the space of all problems;  II) if necessary, the creation of more powerful combination operations, such as node and edge merge, source and sink merge, etc;  III) a notion of forbidden minors which can harness both combination and embedding operators.  

\section*{Acknowledgment}
\noindent Support from NSF under CCF 1016588 \& 1421828 is gratefully acknowledged.
\bibliographystyle{IEEEtran}
\bibliography{CLbib}

\newpage
\begin{appendices}
\section{Complete proof of Theorem \ref{thm:rateregion}}\label{app:rateregionproof}
\label{app:thm1proof}
\subsection{Converse}

We need to prove that for any achievable rate tuple $\Rbf\in\Rmc_c(\Asf)$, we have $\Rbf\in \mathrm{Proj}_{\rbf,\boldsymbol{\omega}}(\overline{\text{con}(\Gamma_{N}^{*}\cap\mathcal{L}_{13})}\cap\mathcal{L}'_{4}\cap\mathcal{L}_{5})$.

Pick a point $\Rbf=[H(Y_1),\ldots,H(Y_K),R_1,\ldots,R_{L}]$ that is in $\Rmc_c(\Asf)$.  For convenience in comparing with the notations in \cite{YanYeungTranIT2012}, we let $\omega_s=H(Y_s)$ be the source rate to achieve.   Let an arbitrarily small $\epsilon>0$ be given. Since $\mathbf{R}$ is achievable, for all sufficiently
large $n$, there exists a block $n$ code such that
\begin{eqnarray}
\frac{\log\eta_{e}}{n}\leq R_{e}+\epsilon,\ e\in \Emc_U\label{eq:d1}\\
\omega_{s}\geq\tau_{s}\geq \omega_{s}-\epsilon,\ s\in \Smc\label{eq:d2}\\
p^{(n),err}\leq\epsilon,\label{eq:d3}
\end{eqnarray}
where $\eta_e$ is the index set of messages sent on edge $e$ and $\tau_s$ is the transmitted source rate at source $s$ under this block code.  For all $s\in\Smc$, we let $Y_s^{(n)}$ be the block source variable which takes value from the set $\{1,\ldots,\lceil \tau_s\rceil\}$.  For all $e\in\Emc_U$, we let $U_e^{(n)}$ be the codeword sent on edge $e$ and denote the alphabet of $U_e^{(n)}$ as $\Umc_e=\{0,1,\ldots,\eta_e-1\}$.

From (\ref{eq:d1}), we know for all $e\in \Emc_U$,
\begin{IEEEeqnarray}{rCl}
H(U_{e}^{(n)}) & \leq & \log |\mathcal{U} _{e}| \nonumber \\
& = & \log (\eta _{e} ) \nonumber \\
& \leq & n(R_{e} + \epsilon). \label{eq:convrate}
\end{IEEEeqnarray}

For all source $s\in\Smc$, from (\ref{eq:d2}) we have
\begin{equation}
n\omega_{s}\geq H(Y_s^{(n)})=\log\left\lceil 2^{n\tau_{s}}\right\rceil \geq n(\omega_{s}-\epsilon).
\end{equation}

Since it is assumed that $\Rbf$ is achievable, there exist random variables $Y_s^{(n)},U_e^{(n)}$ such that
\begin{eqnarray}
H(\Ybf_{\Smc}^{(n)}) & = & \sum_{s\in\Smc}H(Y_{s}^{(n)})\label{eq:indepSources1}\\
H(U_{e}^{(n)}|\Ubf_{{\rm In}({\rm Tl}(e))}^{(n)}) & = & 0,e\in \Emc_U\label{eq:EncFunc1}
\end{eqnarray}

Following Lemma 1 in \cite{YanYeungTranIT2012} (note that the input of a sink $t$ may include some other sources that not in $\beta(t)$), which applies the Fano's inequality, we have
\begin{equation}
H(\Ybf_{\beta(t)}^{(n)}|\Ubf_{{\rm In}(t)}^{(n)})\leq n\phi_t (n,\epsilon), \label{eq:convdec}
\end{equation}
where $\phi_{t}(n,\epsilon)$ has the following properties:
\begin{enumerate}
\item $\phi_{t}(n,\epsilon)$ is bounded;
\item $\phi_{t}(n,\epsilon)\rightarrow0$ as $n\rightarrow \infty$ and
$\epsilon\rightarrow0$;
\item $\phi_{t}(n,\epsilon)$ is monotonically decreasing with increase of $n$ and
decrease of $\epsilon$.
\end{enumerate}

From \eqref{eq:convrate} -- \eqref{eq:convdec} we get the existence of entropic vector such that
\begin{eqnarray}
h_{\Ybf_{\Smc}} & = & \sum_{s\in\Smc}h_{Y_{s}}\label{eq:indepSources2}\\
h_{U_{e}|\Ubf_{{\rm In}({\rm Tl}(e))}} & = & 0,e\in \Emc_U\label{eq:EncFunc2}\\
h_{Y_{s}} & \geq & n(\omega_s-\epsilon),s\in \Smc\label{eq:sourceEnt2}\\
h_{\Ybf_{\beta(t)}|\Ubf_{{\rm In}(t)}} & \leq& n\phi_t(n,\epsilon),\, \forall t\in \Tmc\label{eq:DecFunc2}\\
h_{U_e} & \leq & n(R_e+\epsilon),e\in\Emc\label{eq:RateCons2}
\end{eqnarray}

Now define the following two regions in $\Rbb^{2^N-1}$
that depend on $n$, and view the $[R_e | e\in\Emc_U]$ as fixed vector:
\begin{equation}
\mathcal{L}_{4,\epsilon}^{n}=\{\mathbf{h}\in\Rbb^{2^N-1}:h_{U_e}\leq n(R_e+\epsilon),e\in\Emc_U\},
\end{equation}
\begin{equation}
\mathcal{L}_{5,\epsilon}^{n}=\{\mathbf{h}\in\Rbb^{2^N-1}:h_{\Ybf_{\beta(t)}|\Ubf_{{\rm In}(t)}}\leq n\phi_{t}(n,\epsilon),\forall t\in \Tmc\}.
\end{equation}

Then from \eqref{eq:indepSources2}-- \eqref{eq:RateCons2} we see that there exists
\begin{equation}
\mathbf{h}\in\Gamma_N^*\label{eq:h_tau}
\end{equation}
such that
\begin{equation}
\mathbf{h}\in\mathcal{L}_{13}\cap\mathcal{L}_{4,\epsilon}^{n}\cap\mathcal{L}_{5,\epsilon}^{n}\label{eq:h_cons}
\end{equation}
and $\forall s\in \Smc$
\begin{equation}
h_{Y_{s}}\geq n(\omega_{s}-\epsilon),\label{eq:hxe}
\end{equation}
i.e.
\begin{equation}
\frac{h_{Y_{s}}}{n}\geq \omega_{s}-\epsilon.
\end{equation}
From (\ref{eq:h_tau}) and (\ref{eq:h_cons}) we obtain
\begin{equation}
\mathbf{h}\in\Gamma_N^*\cap\mathcal{L}_{13}\cap\mathcal{L}_{4,\epsilon}^{n}\cap\mathcal{L}_{5,\epsilon}^{n}.
\end{equation}
Since $\Gamma_N^*\cap\mathcal{L}_{13}$ contains
the origin, we get that
\begin{equation}
n^{-1}\mathbf{h}\in\overline{\text{con}(\Gamma_N^*\cap\mathcal{L}_{13})}\cap\mathcal{L}_{4,\epsilon}\cap\mathcal{L}_{5,\epsilon},\label{eq:h/n}
\end{equation}
where
\begin{equation}
\mathcal{L}_{4,\epsilon}=\{\mathbf{h}\in\Rbb^{2^N-1}:h_{U_e}\leq R_e+\epsilon\}
\end{equation}
and
\begin{equation}
\mathcal{L}_{5,\epsilon}=\{\mathbf{h}\in\Rbb^{2^N-1}:h_{\beta(t)|\Ubf_{{\rm In}(t)}}\leq \phi_{t}(n,\epsilon),t\in \Dmc\}.
\end{equation}

For all $n$ and $\epsilon$, define the set
\begin{equation}
\Bmc^{(n,\epsilon)}= 
\{\mathbf{h}\in\overline{\text{con}(\Gamma_N^*\cap\mathcal{L}_{13})}\cap\mathcal{L}_{4,\epsilon}\cap\mathcal{L}_{5,\epsilon}:
\omega_s\geq h_{Y_{s}}\geq \omega_s-\epsilon,\forall s\in \Smc\}.
\end{equation}

The fact that $\Bmc^{(n,\epsilon)}$ is closed and bounded follows immediately
from a similar proof in \cite{YanYeungTranIT2012}.  Then, we conclude that $\Bmc^{(n,\epsilon)}$ is compact.

Now from the fact that $\phi_{t}(n,\epsilon)$ is monotonically
decreasing in both $n$ and $\epsilon$, so for all $\epsilon'<\epsilon$
and $n$,
\begin{equation}
\Bmc^{(n+1,\epsilon)}\subset\Bmc^{(n,\epsilon)}
\end{equation}
and
\begin{equation}
\Bmc^{(n,\epsilon')}\subset\Bmc^{(n,\epsilon)}
\end{equation}

Note that from (\ref{eq:hxe}) and (\ref{eq:h/n}) we see that for
any $\epsilon>0$, $\Bmc^{(n,\epsilon)}$ is nonempty. Since
$\Bmc^{(n,\epsilon)}$ is compact and nonempty,
\begin{equation}
\lim_{\epsilon\rightarrow0}\lim_{n\rightarrow\infty}\Bmc^{(n,\epsilon)}=\bigcap_{\epsilon}^{0}\bigcap_{n=1}^{\infty}\Bmc^{(n,\epsilon)}
\end{equation}
is also nonempty and compact, which equals to
\begin{equation}
\{\mathbf{h}\in\overline{\text{con}(\Gamma_N^*\cap\mathcal{L}_{13})}\cap\mathcal{L}_{4}\cap\mathcal{L}_{5}:h_{Y_{s}}=\omega_s,\forall s\in \Smc\},
\end{equation}
where $\Lmc_4=\{\mathbf{h}\in\Rbb^{2^N-1}:h_{U_e}\leq R_e\}$ as defined in \cite{YanYeungTranIT2012} with $[R_e | e\in\Emc_U]$ as constants.

Hence, if we let the $R_e,e\in\Emc_U$ be unconstrained variables, we conclude that
\begin{equation}
\mathbf{R}\in\mathrm{Proj}_{\boldsymbol{r},\boldsymbol{\omega}}(\overline{\text{con}(\Gamma_N^*\cap\mathcal{L}_{13})}\cap\mathcal{L}_{4'}\cap\mathcal{L}_{5}),
\end{equation}
where $\boldsymbol{r} = \left( R_e, e\in\Emc_U\right)$ and $\boldsymbol{\omega} = \left(H(Y_s),s\in\Smc\right)$.
\subsection{Achievability}

We need to prove that for any point $\mathbf{R}\in\mathrm{Proj}_{\boldsymbol{r},\boldsymbol{\omega}}(\overline{\text{con}(\Gamma_N^*\cap\mathcal{L}_{13})}\cap\mathcal{L}_{4'}\cap\mathcal{L}_{5})$, there exists a code such that this
rate is achievable. 

Similar as Lemma 3 in \cite{YanYeungTranIT2012}, if we define
\begin{equation}
\Amc_1=\overline{{\rm con}(\Gamma_N^*\cap \Lmc_{13})}
\end{equation}
and
\begin{equation}
\Amc_2=\overline{D(\Gamma_N^*\cap \Lmc_{13})},
\end{equation}
where $D(\Amc')=\{\alpha \hbf: \hbf\in \Amc' \ \&\  0\leq \alpha\leq 1\}$, we have 
\begin{equation}
\Amc_1=\Amc_2. \label{eq:lemma3}
\end{equation}

The proof of \eqref{eq:lemma3} is identical to the proof of Lemma 3 in \cite{YanYeungTranIT2012} except that we only have $\Lmc_{13}$ but \cite{YanYeungTranIT2012} has $\Lmc_{123}$.

Let $\mathbf{R}$ be the point picked, we also have $\mathbf{R}\in\mathrm{Proj}_{\boldsymbol{r},\boldsymbol{\omega}}(\overline{D(\Gamma_N^*\cap\mathcal{L}_{13})}\cap\mathcal{L}_{4'}\cap\mathcal{L}_{5})$, 
then there exists an $\mathbf{h}\in\overline{D(\Gamma_N^*\cap\mathcal{L}_{13})}\cap\mathcal{L}_{4'}\cap\mathcal{L}_{5}$
such that $\mathbf{R}=\mathrm{Proj}_{\boldsymbol{r},\boldsymbol{\omega}}(\mathbf{h})$. Furthermore, there exists an entropic vector $\hat{\hbf}\in \Gamma_N^*\cap \Lmc_{13}$ and an $\alpha$ such that $\hbf=\alpha \hat{\hbf}$. 

Since $\hat{\hbf}\in \Gamma_N^*\cap \Lmc_{13}$, there exists a collection of random variables
$\mathcal{N}=\left\{Y_{s},U_{e}| s\in \Smc,e\in \Emc_U\right\}$ such that
\begin{IEEEeqnarray}{rcl}
\alpha \hat{\hbf}_{Y_s} &=& \omega_s,s\in \Smc\\
\hat{\hbf}_{\Ybf_{\Smc}}&=&\sum_{s\in\Smc}\hat{\hbf}_{Y_s}\\
\hat{\hbf}_{U_e|\Ubf_{{\rm In}({\rm Tl}(e))}}&=&0,e\in \Emc
\end{IEEEeqnarray}
where $\omega_s=H(Y_s)$ is the source rate to achieve at source $s$.

Furthermore, since $\hbf\in \Lmc_4'\cap\Lmc_5$ and we only need to show $\hbf$ is asymptotically achievable, we have
\begin{eqnarray}
\alpha \hat{\hbf}_{U_e}\leq R_e+\mu,e\in \Emc\\
\alpha \hat{\hbf}_{\Ybf_{\beta(t)}|\Ubf_{{\rm In}(t)}}\leq \gamma,t\in \Tmc
\end{eqnarray}
where $\mu$ and $\gamma$ are arbitrarily small positive numbers.

Our next step is to show that the rate vector $\Rbf'=[\omega_s,R_e+\mu| s\in\Smc, e\in \Emc]$ is achievable. Then as $\gamma, \mu \rightarrow 0$, we know that $\Rbf$ is achievable. 

Use the similar code construction and performance analysis as in the proof of achievability in \cite{YanYeungTranIT2012}, we can show that $[\omega_s|s\in \Smc]$ is achievable if we set $[R_e+\mu| e\in\Emc]$ as the capacities on edges.  Encoding functions at sources are simply identity functions.  Equivalently, the similar construction shows that $\Rbf'$ is achievable. Therefore, $\Rbf$ is achievable as $\gamma, \mu\rightarrow 0$.

\end{appendices}

\end{document}